\documentclass[10pt,a4paper,draft]{article}

\usepackage[utf8]{inputenc}
\usepackage[T1]{fontenc}
\usepackage[english]{babel}
\usepackage{lmodern}
\usepackage[normalem]{ulem} 
\DeclareFontFamily{OT1}{mathc}{}
\DeclareFontShape{OT1}{mathc}{m}{it}{<-> mathc10}{}
\DeclareMathAlphabet{\mathcalx}{OT1}{mathc}{m}{it}

\usepackage{suffix} 

\usepackage[a4paper, left=20mm, right=20mm, top=2cm, bottom=2cm]{geometry}
\usepackage{setspace}
\addto\extrasenglish{%
}
\newcommand{\appendixnumberline}[1]{Appendix #1\quad}
\usepackage{amsmath,amsthm,mathdots,amssymb,bm,bbm,mathrsfs,mathtools,stmaryrd}
\usepackage{nicefrac}
\usepackage[final]{graphicx} 
\usepackage{subcaption}
\usepackage{tikz,pgfplots}\pgfplotsset{compat=1.16}
\usepackage{booktabs}

\usepackage{thm-restate}
\newtheorem{theorem}{Theorem}
\newtheorem{corollary}{Corollary}
\newtheorem{lemma}{Lemma}

\newtheorem{fact}{Fact}[section]

\newtheorem{rmk}{Remark}
\newtheorem{defn}{Definition}

\usepackage{comment}

\usepackage[colorlinks=true, urlcolor=violet, linkcolor=blue, citecolor=red, hyperindex=true, linktocpage=true, pagebackref=true, draft=false]{hyperref}
\usepackage{mleftright}\mleftright 

\usepackage{enumitem}
\usepackage{microtype,color,graphicx}
\usepackage{tikz-cd}
\usepackage{xcolor}
\usepackage{multirow}
\usepackage{makecell}
\usepackage{tablefootnote}

\usepackage[capitalise,nameinlink]{cleveref} 
\makeatletter
\AddToHook{cmd/appendix/before}{\def\cref@section@alias{appendix}}
\makeatother

\crefname{defn}{Definition}{Definitions}
\Crefname{defn}{Definition}{Definitions}
\crefname{hyp}{Hypothesis}{Hypotheses}
\Crefname{hyp}{Hypothesis}{Hypotheses}
\crefname{conj}{Conjecture}{Conjectures}
\Crefname{conj}{Conjecture}{Conjectures}
\crefname{quest}{Question}{Questions}
\Crefname{quest}{Question}{Questions}
\crefname{fact}{Fact}{Facts}
\Crefname{fact}{Fact}{Facts}
\crefname{eg}{Example}{Examples}
\Crefname{eg}{Example}{Examples}
\crefname{claim}{Claim}{Claims}
\Crefname{claim}{Claim}{Claims}
\crefname{obs}{Observation}{Observations}
\Crefname{obs}{Observation}{Observations}
\crefname{rmk}{Remark}{Remarks}
\Crefname{rmk}{Remark}{Remarks}

\newcommand{\CB}{\mathcal{B}}
\newcommand{\CC}{\mathcal{C}}
\newcommand{\BC}{\mathbb{C}}
\newcommand{\CD}{\mathcal{D}}
\newcommand{\CE}{\mathcal{E}}

\newcommand{\CG}{\mathcal{G}}
\newcommand{\CH}{\mathcal{H}}
\newcommand{\CI}{\mathcal{I}}

\newcommand{\CK}{\mathcal{K}}
\newcommand{\CL}{\mathcal{L}}
\newcommand{\CM}{\mathcal{M}}

\newcommand{\CP}{\mathcal{P}}

\newcommand{\BR}{\mathbb{R}}
\newcommand{\BS}{\mathbb{S}}
\newcommand{\CS}{\mathcal{S}}
\newcommand{\CT}{\mathcal{T}}
\newcommand{\CU}{\mathcal{U}}

\newcommand{\BZ}{\mathbb{Z}}

\newcommand{\vA}{\bm{A}}

\newcommand{\vB}{\bm{B}}

\newcommand{\vC}{\bm{C}}
\newcommand{\vD}{\bm{D}}

\newcommand{\ve}{\bm{e}}

\newcommand{\vg}{\bm{g}}

\newcommand{\vH}{\bm{H}}
\newcommand{\vI}{\bm{I}}
\newcommand{\vJ}{\bm{J}}
\newcommand{\vK}{\bm{K}}

\newcommand{\vL}{\bm{L}}

\newcommand{\vM}{\bm{M}}
\newcommand{\vn}{\bm{n}}

\newcommand{\vO}{\bm{O}}
\newcommand{\vP}{\bm{P}}
\newcommand{\vPi}{\bm{\Pi}}

\newcommand{\vR}{\bm{R}}
\newcommand{\vS}{\bm{S}}
\newcommand{\vT}{\bm{T}}

\newcommand{\vU}{\bm{U}}

\newcommand{\vW}{\bm{W}}
\newcommand{\vX}{\bm{X}}
\newcommand{\vY}{\bm{Y}}
\newcommand{\vZ}{\bm{Z}}
\newcommand{\vsigma}{\bm{ \sigma}}

\newcommand{\vrho}{\bm{ \rho}}

\newcommand{\vertiii}[1]{{\left\vert\kern-0.25ex\left\vert\kern-0.25ex\left\vert #1 \right\vert\kern-0.25ex\right\vert\kern-0.25ex\right\vert}}

\newcommand{\labs}[1]{\left\vert {#1} \right\vert}

\newcommand{\e}{\mathrm{e}}

\newcommand{\ri}{\mathrm{i}}
\newcommand{\rd}{\mathrm{d}}

\newcommand{\nrm}[1]{\left\| #1 \right\|}
\newcommand{\ipc}[2]{\left\langle#1,#2\right\rangle}

\newcommand{\abs}[1]{\left\vert #1 \right\vert}
\WithSuffix\newcommand\abs*[1]{\vert #1 \vert}

\DeclarePairedDelimiterX{\braket}[1]{\langle}{\rangle}{#1}
\DeclarePairedDelimiterX\ketbra[2]{| }{|}{#1 \delimsize\rangle\!\delimsize\langle #2}	
\DeclarePairedDelimiterX\dotp[2]{\langle}{\rangle}{#1, #2}
\newcommand{\bigO}[1]{\mathcal{O}\left( #1 \right)}

\newcommand{\bigOm}[1]{\Omega\left( #1 \right)}
\newcommand{\bigTh}[1]{\Theta\left( #1 \right)}
\DeclareMathAlphabet{\dutchcal}{U}{dutchcal}{m}{n}
\SetMathAlphabet{\dutchcal}{bold}{U}{dutchcal}{b}{n}
\DeclareMathAlphabet{\dutchbcal}{U}{dutchcal}{b}{n}
\newcommand{\comm}[1]{\mathrm{comm}\left(#1\right)}

\newcommand{\Tr}{\operatorname{Tr}}
\newcommand{\ad}{\mathcalx{ad}}

\newcommand{\gap}{\operatorname{gap}}
\newcommand{\Var}{\operatorname{Var}}
\newcommand{\SU}{\mathrm{SU}(2)}
\newcommand{\Hn}{\CH_n}
\newcommand{\Pn}{P_n}
\newcommand{\Rn}{R_n}

\newcommand{\EC}{\CE_{\mathrm{C}}}
\newcommand{\ED}{\CE_{\mathrm{D}}}
\newcommand{\EDM}{\CE_{\mathrm{DM}}}
\newcommand{\EF}{\CE_{\mathrm{F}}}
\newcommand{\Stot}{\bm{S}_{\mathrm{tot}}}
\newcommand{\nr}[1]{\lVert #1\rVert_{\vrho}}
\newcommand{\ip}[2]{\langle #1,#2\rangle_{\vrho}}
\newcommand{\op}[1]{\lVert #1\rVert_{\mathrm{op}}}

\newcommand{\LD}{\CL_{\mathrm{D}}}
\newcommand{\LDM}{\CL_{\mathrm{DM}}}
\newcommand{\LC}{\CL_{\mathrm{C}}}

\newcommand{\LFs}{\CL_{\mathrm{F},\sigma}}

\newcommand{\DD}{\CD_{\mathrm{D}}}

\newcommand{\DC}{\CD_{\mathrm{C}}}

\newcommand{\DFs}{\CD_{\mathrm{F},\sigma}}

\newcommand{\TD}{\CT_{\mathrm{D}}}

\newcommand{\TC}{\CT_{\mathrm{C}}}

\newcommand{\TFs}{\CT_{\mathrm{F},\sigma}}

\newcommand{\Cad}{\CC_{\mathrm{ad}}}

\newcommand{\rmd}{\mathrm{d}}
\newcommand{\rmT}{\mathrm{T}}
\newcommand{\rmx}{\mathrm{x}}
\newcommand{\rmy}{\mathrm{y}}
\newcommand{\rmz}{\mathrm{z}}

\usetikzlibrary{quantikz2}

\begin{document}

\title{The complete phase diagram of detailed-balanced Gibbs samplers\\ for the mean-field Heisenberg model
}
	\author{
		Levente Raikovich\thanks{Alfréd Rényi Institute of Mathematics, Budapest, Hungary. 
			{\tt raikovich.levente@renyi.hu}}
		\and 
		József Mák\thanks{Wigner Institute of Physics, Budapest, Hungary. 
			{\tt mak.jozsef@wigner.hu}}
		\and
        Csaba Czabán\thanks{Alfréd Rényi Institute of Mathematics, Budapest, Hungary. 
			{\tt czaban.csaba@renyi.hu}}
		\and 
		András Gilyén\thanks{Alfréd Rényi Institute of Mathematics, Budapest, Hungary. 
		{\tt gilyen.andras@renyi.hu}}
	}
\maketitle

\begin{abstract}
We study the spectral gap of recently discovered families of KMS-detailed-balanced Gibbs-sampler Lindbladians for the mean-field Heisenberg model on $n$ spins. We show that the spectral gaps of these natural families of Lindbladians admit the same phase diagram: the gap is constant above the critical temperature (i.e., in the disordered phase), scales as $1/\sqrt{n}$ at the critical temperature, and as $1/n$ below the critical temperature (i.e., in the ordered phase). Our result builds on and extends the recent work of Basso et al.~(2026) which characterized this phase diagram for the GNS-detailed-balanced Davies generator, but only partially resolved the gap scaling at the critical temperature. 
\end{abstract}

\setcounter{tocdepth}{2}
\tableofcontents

\section{Introduction}

Markov chain Monte Carlo methods such as the Metropolis algorithm~\cite{metropolis1953StateCalculationsByComputers} and Glauber dynamics~\cite{glauber1963TimeDepIsing} are workhorses of classical statistical physics. A recent class of analogous algorithms prepare quantum Gibbs states~\cite{temme2009QuantumMetropolis,rall2023ThermalStatePrepRounding,chen2025EfficientQThermSim,chen2023ExactQGibbsSampler,jiang2024QMetropolisWeakMeas,ding2024GibbsSamplingViaKMS,gilyen2024QGeneralGlauberMetropolis}; most recent proposals implement a Lindbladian whose fixed point is the Gibbs state. Essentially, they generalize the Davies generator~\cite{davies1974MasterEquationI,davies1976MasterEquationII}, a physical Lindbladian describing a system weakly coupled to a thermal bath. Since the Davies generator requires resolving the Bohr frequencies exactly, efficiently implementable KMS-detailed-balanced alternatives have been developed: the operator Fourier transform (OFT) construction~\cite{chen2025EfficientQThermSim,chen2023ExactQGibbsSampler} resolves the transferred energy only up to a Gaussian uncertainty, while the coherent reweighting scheme~\cite{gilyen2024QGeneralGlauberMetropolis,ding2024GibbsSamplingViaKMS} filters the jump operators without estimating the energy at all. These Gibbs samplers can be viewed as quantum generalizations of the classical Glauber and Metropolis algorithms~\cite{gilyen2024QGeneralGlauberMetropolis}.

Rigorous mixing-time bounds for quantum Gibbs samplers are known for commuting Hamiltonians~\cite{temme2013LowerBoundDaviesGap} and at high temperatures~\cite{rouze2024EffThermalizationGibbsSamp,rouze2024OptimalQAlgorithmGibbs}, but these results avoid thermal phase transitions. A tractable non-commuting model with a genuine phase transition is the mean-field Heisenberg ferromagnet, whose Gibbs states break the $\SU$ symmetry below $\beta=2$~\cite{basso2026SpecGapDaviesMeanFieldHeisenberg}. In a recent paper Basso \textit{et al.}~\cite{basso2026SpecGapDaviesMeanFieldHeisenberg} showed that the gap of the Davies generator with single-site Pauli jumps is $\Theta(1)$ for $\beta<2$ and $\Theta(n^{-1})$ for $\beta>2$, leaving only $\Omega(n^{-1})\leq\gap\leq\bigO{n^{-1/2}}$ at $\beta=2$.

We show that this dynamical phase diagram is universal: for the coherent Lindbladian with any normalized KMS filter, the OFT Lindbladian with any fixed width $\sigma>0$, and Davies generators with arbitrary KMS weights, the gap is $\Theta(1)$ for $\beta<2$, $\Theta(n^{-1/2})$ at $\beta=2$, and $\Theta(n^{-1})$ for $\beta\in(2,\infty]$ (\Cref{thm:allT}); at $\beta=\infty$ the mixing time is moreover $\Theta(n\log(1/\varepsilon))$. This resolves the critical case also for the Davies generator, and shows that the quantum model mirrors the classical mean-field models with continuous symmetry. Following~\cite{basso2026SpecGapDaviesMeanFieldHeisenberg}, we decompose the observables according to the $\SU\times S_n$ symmetry. On the operators commuting with $\vH_n$ all considered Dirichlet forms are equivalent to the Davies one, while on the complements we compare the coherent Lindbladian with the interchange process, using Aldous' spectral gap theorem~\cite{caputo2010AldousSpectralGapConj}, and with the collective Casimir operator. Finally, we transfer all bounds to the OFT Lindbladians by writing their Dirichlet form as a mixture of coherent ones, a decomposition valid for arbitrary Hamiltonians.

For the readers' convenience we summarize the notation and abbreviations used in this paper in \Cref{sec:recap_notation}.

Let $n\ge2$; the Hamiltonian $\vH_n$ of the mean-field Heisenberg model on the $n$-qubit Hibert space $\CH_n=(\mathbb C^2)^{\otimes n}$ is
\begin{equation}
    \label{eq:mean-field-heisenberg-model}
    \vH_n=-\frac1n\sum_{\alpha\in\{\rmx,\rmy,\rmz\}}(\Stot^\alpha)^2, \quad\text{ where }\quad
    \Stot^\alpha=\frac12\sum_{i=1}^n\vsigma^{(i,\alpha)}, 
\end{equation}
and $\vsigma^{(i,\alpha)}$ denotes the Pauli $\alpha\in\{\rmx,\rmy,\rmz\}$ operator on the $i$-th qubit.

\subsection{Our main result about KMS-detailed-balanced Gibbs Samplers}
We study dynamical means of preparing the Gibbs state for inverse temperature $\beta\in[0,\infty]$ \begin{equation}
    \label{eq:Gibbs-state}
    \vrho=\vrho_{n,\beta}=\frac{e^{-\beta \vH_n}}{\Tr(e^{-\beta \vH_n})} = \frac{1}{Z_{n,\beta}} e^{-\beta \vH_n};
\end{equation} we denote and define the partition function as $Z_{n,\beta} = \Tr(e^{-\beta \vH_n})$. The $\beta = \infty$ (zero-temperature) state is understood as the maximally mixed state on the ground space of $\vH_n$.

We will use various superoperators $\CM\colon \CB(\CH_n)\to\CB(\CH_n)$ to describe and analyse the evolution of the quantum states throughout the processes. A key tool for ensuring convergence towards the Gibbs state is KMS detailed balance, which is defined as self-adjointness of $\CM$ with respect to the KMS inner product
\[
\ip{\vX}{\vY}=\ipc{\vrho^{1/2}\vX\vrho^{1/2}}{\vY}=\Tr(\vrho^{1/2}\vX^\dagger\vrho^{1/2}\vY),\qquad
\nr{\vX}=\|\vrho^{1/4}\vX\vrho^{1/4}\|_2,
\]
where we also used the Hilber-Schmidt inner product 
\[
\ipc{\vX}{\vY}=\Tr(\vX^\dagger\vY),\qquad
\|\vX\|_2^2=\Tr(\vX^\dagger \vX).
\]
We define $\CM^\dagger$ as the adjoint of a superoperator $\CM$ with respect the Hilbert-Schmidt inner product.

The natural KMS-detailed-balanced Lindbladians\footnote{Note that here we work with Lindbladians in the Heisenberg picture, similarly to~\cite{basso2026SpecGapDaviesMeanFieldHeisenberg}, therefore the Lindbladians in~\cite{gilyen2024QGeneralGlauberMetropolis,chen2025EfficientQThermSim,ding2024GibbsSamplingViaKMS} correspond to the adjoint of the $\CL$ defined in \autoref{eq:KMS-Lindbladian-general-form}.} \begin{equation}
    \label{eq:KMS-Lindbladian-general-form}
    \CL[\cdot]=\CT'[\cdot]-\frac12\{\CT'[\vI],\cdot\}
\end{equation}
that we study originate from the completely positive single-site jump process
\begin{equation*}
\CT[\cdot]=\sum_{i\in[n]}\sum_{\alpha\in\{\rmx,\rmy,\rmz\}}\vsigma^{(i,\alpha)}[\cdot]\vsigma^{(i,\alpha)},
\end{equation*}
where 
$$\CT'=\BS^{(f)}\llbracket\CT\rrbracket$$
is a KMS-detailed-balanced completely positive superoperator constructied via a specific, carefully chosen linear super-superoperator $\BS^{(f)}\llbracket\cdot\rrbracket$ corresponding to some continuous function $f\colon \BR \to \BC$
satisfying
\begin{equation}
    \label{eq:f-KMS}
    \begin{aligned}
        &f(\nu) = e^{-\frac{\nu}{2}}f^*(-\nu) \quad \text{(KMS condition)}\\
        &|f| \leq 1, \quad c_\star:=\inf_{\nu\leq 0}|f(\nu)|^2>0 \quad \text{(normalization).}
    \end{aligned}    
\end{equation}

Here we study three different types of Lindbladians of the form \eqref{eq:KMS-Lindbladian-general-form}. We highlight for the reader that here we give the Lindbladians in the Heisenberg picture, as opposed to the usual Schrödigner picture literature, e.g., \cite{gilyen2024QGeneralGlauberMetropolis,chen2025EfficientQThermSim}. Stated differently, here we work with the Hilbert--Schmidt adjoints of the Lindbladians described in the previous papers. Another key observation is that due to the rich symmetries of $\vH_n$ and the jump operators the below Lindbladians need no coherent correction term to achieve KMS detailed balance~\cite{chen2023ExactQGibbsSampler,gilyen2024QGeneralGlauberMetropolis}, see \autoref{apx:coherent-term-zero} for a detailed argument.  \begin{itemize}
    \item \textbf{The Davies generator} at invese temeperature $\beta$ is constructed so that for a single jump operator
\begin{align}\label{eq:davies-generator-transition}
    \TD^{(i,\alpha)}:=  \BS^{(f)}_{\mathrm{D}}\llbracket\vsigma^{(i,\alpha)}[\cdot]\vsigma^{(i,\alpha)}\rrbracket=\sum_{\nu\in B(\vH_n)}(f(\beta\nu)\vsigma^{(i,\alpha)}_\nu)^\dagger[\cdot](f(\beta\nu)\vsigma^{(i,\alpha)}_\nu),
\end{align}
where we use the decomposition according to the set of Bohr frequencies $B(\vH_n):=\mathrm{spec}(\vH_n)-\mathrm{spec}(\vH_n)$: \begin{equation}
    \label{eq:Bohr-decomposition}
    \begin{aligned}
        \vA_\nu &:=\sum_{E'-E=\nu}\ketbra{\psi_{E'}}{\psi_{E'}}\vA \ketbra{\psi_{E}}{\psi_{E}},\text{ corresponding to the eigendecomposition }\\
        \vH_n &= \sum_E E\ketbra{\psi_E}{\psi_E}.
    \end{aligned}
\end{equation}
Thus, the Davies Lindbladian is defined as 
\begin{align}\label{eq:LindbladTermsDef}
    \LD[\cdot]:=\sum_{\substack{i\\\alpha\in\{\rmx,\rmy,\rmz\}}}\LD^{(i,\alpha)}[\cdot]
    \quad\text{ where }\quad
    \LD^{(i,\alpha)}[\cdot]:=\TD^{(i,\alpha)}[\cdot]-\frac12\{\TD^{(i,\alpha)}[\vI],\cdot\}.
\end{align} In the special case of Metropolis weights
\begin{align}\label{eq:Metropolis}
f_M(\nu):=\left\{\begin{array}{cc}
     1 & \text{for } \nu\leq 0,  \\
     e^{-\frac{\nu}{2}} & \text{for } \nu\geq 0.
\end{array}\right.
\end{align}
we denote the corresponding Davies-Metropolis generator as $\CL_{\mathrm{DM}}$, which is studied in~\cite{basso2026SpecGapDaviesMeanFieldHeisenberg}.\footnote{Our $\CL_{\mathrm{DM}}$ is a constant multiple of the Lindbladian studied in~\cite{basso2026SpecGapDaviesMeanFieldHeisenberg}.}

\item \textbf{The ``coherent'' Lindbladian generator} \cite{gilyen2024QGeneralGlauberMetropolis,ding2024GibbsSamplingViaKMS} at inverse temperature $\beta$ is constructed similarly so that for a single jump operator
\begin{align}
    \label{eq:coherent-generator-transition}
    \TC^{(i,\alpha)}:=\BS^{(f)}_{\mathrm{C}}\llbracket\vsigma^{(i,\alpha)}[\cdot]\vsigma^{(i,\alpha)}\rrbracket=\Bigg(\sum_{\nu\in B(\vH_n)}f(\beta\nu)\vsigma^{(i,\alpha)}_\nu\Bigg)^{\!\!\dagger}[\cdot]\Bigg(\sum_{\nu\in B(\vH_n)}f(\beta\nu)\vsigma^{(i,\alpha)}_\nu\Bigg),
\end{align}
and the ``coherent'' Lindbladians $\LC$ and $\LC^{(i,\alpha)}$ are defined analogously to \autoref{eq:LindbladTermsDef}.

\item \textbf{The operator Fourier transform (OFT)} construction \cite{chen2025EfficientQThermSim,gilyen2024QGeneralGlauberMetropolis} uses transformed jumps \begin{align}\label{eq:oft-S-transform}
        \TFs^{(i,\alpha)}:=\BS^{(f)}_{\mathrm{F},\sigma}\llbracket\vsigma^{(i,\alpha)}[\cdot]\vsigma^{(i,\alpha)}\rrbracket
        &= \int_\BR \left(f\Big(\beta\omega+\frac{\sigma^2}{2}\Big)\widehat{\vsigma^{(i,\alpha)}}(\omega)\right)^{\!\dagger} [\cdot] \left(f\Big(\beta\omega+\frac{\sigma^2}{2}\Big)\widehat{\vsigma^{(i,\alpha)}}(\omega)\right) \rmd \omega, \text{ where }\\
        \widehat\vK(\omega) =& \frac{1}{\sqrt{2 \pi}} \int_\BR e^{\ri \vH_n t} \vK e^{-\ri \vH_n t} g_{\sigma}(t) e^{-\ri \omega t} \rmd t = \sum_{\nu \in B(\vH_n)} \hat{g}_{\sigma}(\omega - \nu) \vK_\nu,\,\sigma>0,\text{ and}\label{eq:OFT}\\
        g_{\sigma}(t) =
        \sqrt{\frac{\sigma}{\beta} \sqrt{\frac2\pi}}
        e^{-\frac{\sigma^2 t^2}{\beta^2}} &\overset{\mathrm{FT}}{\longleftrightarrow}
        \hat{g}_{\sigma}(\omega) =
        \sqrt{\frac{\beta}{\sigma \sqrt{2 \pi}}}
        e^{-\frac{\beta^2\omega^2}{4 \sigma^2}}=\frac{1}{\sqrt{2 \pi}} \int_\BR g_{\sigma}(t) e^{-\ri \omega t}\rmd t,\nonumber
\end{align} 
where $\frac{\sigma}{\beta}$ (not to be confused with the Pauli operators $\vsigma$) is the ``standard deviation'' of the operator Fourier transform in \autoref{eq:oft-S-transform}, expressing how accurately the label $\omega$ of $\widehat\vK(\omega)$ approximates the Bohr frequency components $\vK_\nu$ for $\nu\approx \omega$.
The OFT Lindbladians $\LFs$ and $\LFs^{(i,\alpha)}$ are defined analogously to \autoref{eq:LindbladTermsDef}.

One can relate this construction to the Davies generator by observing that 
\begin{align}\label{eq:DaviesAsLimit}
    \lim_{\sigma\rightarrow 0}\LFs^{(i,\alpha)}=\LD^{(i,\alpha)},
\end{align}
so we can think about this family of Lindbladians as a generalisation or extension of the Davies generator.
\end{itemize}

In the rest of the paper when we write $\CL$ we mean that the statement holds with respect to each of the above constructions $\LC,\LD,$ and $\LFs$. The KMS-detailed-balance property of the Lindbladians $\LC$ is proven in~\cite{gilyen2024QGeneralGlauberMetropolis,ding2024GibbsSamplingViaKMS}, while that of $\LFs$ in~\cite{chen2025EfficientQThermSim,chen2023ExactQGibbsSampler}, while that of Davies follows for example from~\eqref{eq:DaviesAsLimit}.

An important consequence of the KMS-detailed-balance property is that $\vrho$ is a fixed point of the Lindbladian $\CL^\dagger$ and that the corresponding discriminant 
\begin{align}\label{eq:DiscriminantDef}
    \CD[\cdot]=\vrho^{\frac{1}{4}}\CL[\vrho^{-\frac{1}{4}} \cdot \vrho^{-\frac{1}{4}}]\vrho^{\frac{1}{4}},
\end{align}
is self-adjoint with respect to the Hilbert-Schmidt inner product. We define the discriminants $\DC,\DD,\DFs$ and $\DC^{(i,\alpha)},\DD^{(i,\alpha)},\DFs^{(i,\alpha)}$ analogously to \eqref{eq:DiscriminantDef} with respect to the Lindbladians $\LC,\LD,\LFs$ and $\LC^{(i,\alpha)},\LD^{(i,\alpha)},\LFs^{(i,\alpha)}$.

Now we are ready to state our main result that the spectral gaps of the above families of KMS-detailed-balnced Lindbaldians admit the following common phase diagram:
\begin{restatable}[Main theorem]{theorem}{allT}\label{thm:allT}
	For fixed $\beta > 0$ the Lindbladians $\CL\in\{\LC,\LD,\LFs\}$ have the gap phase diagram
	\[
	\gap(-\CL)=\left\{
	\begin{array}{lr}
		\Theta(1),&\beta\in(0,2),\\
		\Theta(n^{-1/2}),&\beta=2,\\
		\Theta(n^{-1}),&\beta\in(2,\infty].
	\end{array}\right.
	\]
\end{restatable}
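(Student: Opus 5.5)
We follow the strategy outlined in the introduction: decompose $\CB(\CH_n)$ according to the $\SU\times S_n$ symmetry and compare the Dirichlet forms $\CE(\vX)=-\ip{\vX}{\CL[\vX]}$ of the three constructions to reference Dirichlet forms whose gaps are known. Since $\vH_n=-\frac1n\Stot^2+\mathrm{const}$ is a function of the total spin, $\CH_n=\bigoplus_{S}\CH^{(S)}\otimes\mathbb C^{m_S}$. Both $\CL$ and $\vrho$ commute with the adjoint action of $\SU$ and with permutations of sites. Therefore $\CL$ is block diagonal with respect to the decomposition $\CB(\CH_n)=\comm{\vH_n}\oplus\comm{\vH_n}^{\perp}$, where orthogonality is taken in $\ip{\cdot}{\cdot}$. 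Here $\comm{\vH_n}$ is spanned by operators that are block diagonal in total spin $S$. The gap is the minimum of the gaps on the two pieces, and we bound each piece separately.

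\textbf{Step 1 (diagonal sector).} On $\comm{\vH_n}$ the Bohr components of $\vX$ vanish except at $\nu=0$. We will show that for $\vX\in\comm{\vH_n}$ each single-site Dirichlet form reduces to a sum of terms of the form $|f(\beta\nu)|^2$, weighted by $\|\,(\vsigma^{(i,\alpha)}_\nu)\text{-transitions}\|^2$, with cross terms between different $\nu$ vanishing. This holds for the coherent, Davies and (after integrating the Gaussian in $\omega$) OFT forms alike. The normalization $c_\star\le|f|^2\le1$, combined with the KMS condition for $\nu>0$, then gives $\CE\asymp\EDM$ on $\comm{\vH_n}$, with constants depending only on $\beta,\sigma,c_\star$. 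The restriction of $\EDM$ to functions of the total spin $S$ is a birth--death chain on $S\in\{0,\ldots,n/2\}$ with reversible measure $\propto m_S(2S+1)e^{\beta S(S+1)/n}$. Basso et al.\ analysed this chain, but only partially resolved the critical case. For the critical case we intend to redo the analysis with Hardy/Muckenhoupt-type weighted inequalities for birth--death chains. At $\beta=2$ the measure in the scaling variable $s=S/\sqrt n$ has density $\propto s^2e^{-cs^4}$, up to a Gaussian factor that cancels at criticality. The natural diffusive scale is therefore $\sqrt n$ states, with jump rates of order one. This should give a gap of order $n^{-1/2}$: the chain moves $\Theta(1)$ per unit time and must cover distance $\sqrt n$, giving relaxation time $\Theta(\sqrt n)$ with the above normalization. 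For the upper bound we use the test function $S^2$ or $S$, and for the lower bound the Muckenhoupt bound. For $\beta<2$ the measure is Gaussian around $S\sim\sqrt n$, giving gap $\Theta(1)$. For $\beta>2$ it is concentrated at $S\approx m^*n$; there the rates are of order one but the energy increments scale as $1/n$, and the Davies analysis of Basso et al.\ gives $\Theta(1/n)$.

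\textbf{Step 2 (off-diagonal sector).} On $\comm{\vH_n}^\perp$ we lower bound $\CE$ by comparison with two classical objects. The first is the interchange process on the complete graph, whose gap equals that of the random walk by Aldous' theorem \cite{caputo2010AldousSpectralGapConj}. This controls the $S_n$-nontrivial part, since $\sum_\alpha\vsigma^{(i,\alpha)}\otimes\vsigma^{(j,\alpha)}$ generates swaps. The second is the collective Casimir $\sum_\alpha[\Stot^\alpha,[\Stot^\alpha,\cdot]]$, which controls the $\SU$-nontrivial part with gap of order $n$ in natural units. For the coherent Lindbladian we will use $\sum_\nu f(\beta\nu)\vsigma_\nu=f(\beta\,\ad_{\vH_n})[\vsigma]$ together with $\|\ad_{\vH_n}\|=O(1)$ on single-site jumps. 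Applying $|f|^2\ge c_\star$ then bounds the single-site Dirichlet forms from below by a constant times the $\beta=0$ form. That form is $\sum_{i,\alpha}\|[\vsigma^{(i,\alpha)},\vX]\|^2$, taken in the $\vrho$-weighted norm. The $\vrho$-weighting costs at most a constant factor because $\|\vH_n\|$-differences along one jump are $O(1)$. The expected outcome is a gap of $\Omega(1)$, hence at least $\Omega(n^{-1/2})$ in every regime, so the off-diagonal sector never dominates. The matching upper bounds come from Step 1 test functions, which lie in $\comm{\vH_n}$. At $\beta=\infty$ the whole problem reduces to the ground space $S=n/2$, and the same test function $S$ gives the $1/n$ rate.

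\textbf{Step 3 (OFT).} Writing the Gaussian filter as $\hat g_\sigma(\omega-\nu)f(\beta\omega+\sigma^2/2)$ and integrating over $\omega$ expresses $\EF$ as $\int\EC^{(f_\omega)}\,\rmd\omega$. Each $f_\omega$ is a KMS filter, and the mixture inherits two-sided normalization bounds depending on $\sigma$. Steps 1 and 2 then apply uniformly. We expect the main obstacle to be the critical birth--death estimate in Step 1. It requires precise control of $m_S(2S+1)e^{2S(S+1)/n}$ in the window $S\sim n^{3/4}$ and of the transition rates there, which we would compute via Clebsch--Gordan coefficients for $\vsigma^{(i,\alpha)}$ between spin-$S$ and spin-$(S\pm1)$ blocks.
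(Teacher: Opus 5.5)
\textbf{The sector decomposition and Step~2.} The splitting $\CB(\CH_n)=\comm{\vH_n}\oplus\comm{\vH_n}^\perp$ is not invariant under $\LC$, so Step~2 is built on a decomposition that does not exist for the coherent generator.

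\begin{itemize}
\item The $\SU\times S_n$ covariance only makes the fixed-point algebras $\Pn,\Rn$ and their complements invariant. For $n\ge3$, $\comm{\vH_n}=\bigoplus_s\CB(V_s\otimes W_s)$ is strictly larger than $\Pn+\Rn$.
\item The Davies generator preserves $\comm{\vH_n}$ because it is covariant under the time evolution. The coherent jumps $\vL=\sum_\nu f(\beta\nu)\vsigma_\nu$ lack this covariance: $\vL^\dagger\vX\vL$ contains cross terms $\vsigma_\nu^\dagger\vX\vsigma_{\nu'}$ with $\nu\neq\nu'$. The OFT jumps mix frequencies in the same way.
\item Concretely, take $n=3$ and $\vX=|{\uparrow\uparrow\uparrow}\rangle\langle\phi|$ with $\phi=|{\uparrow}\rangle\otimes\tfrac1{\sqrt2}\bigl(|{\uparrow\downarrow}\rangle-|{\downarrow\uparrow}\rangle\bigr)$. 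Then $\vX\in\comm{\vH_3}^\perp$, yet $\vPi_{1/2}\LC[\vX]\vPi_{1/2}=f^*(-\beta)f(0)\,\vPi_{1/2}\CT[\vX]\vPi_{1/2}\neq0$.
\end{itemize}

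So ``the gap is the minimum over the two pieces'' is unavailable. You must split along $\Pn\oplus\Pn^\perp$ for $\beta\le2$ and along $\Rn\oplus\Rn^\perp$ for $\beta>2$, as the paper does. Your Bohr diagonalisation in Step~1 remains valid on $\Pn,\Rn\subseteq\comm{\vH_n}$.

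After this fix the complement is not uniformly $\Omega(1)$. $\Rn^\perp$ contains $\Stot^z$, so its gap is $\Theta(n^{-1})$ for $\beta>2$. The Casimir comparison gives exactly $\EC(\vX)\ge\frac{4f(0)^2}{n}\nr{\vX}^2$ there, which is what is needed.

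Separately, your Step~2 claim that $|f|^2\ge c_\star$ bounds the coherent form below by $\sum_{i,\alpha}\nr{[\vsigma^{(i,\alpha)},\vX]}^2$ is unjustified. Off $\comm{\vH_n}$ the coherent form carries Bohr cross terms with kernel $\widetilde f^*(\beta\nu)\widetilde f(\beta\nu')/[2\cosh(\beta(\nu-\nu')/4)]$. Dominating the unfiltered commutator would require inverting the filter, i.e.\ writing $\vsigma$ as a $\mu_\beta$-controlled time average of $\vB(t)$, and a pointwise bound on $|f|$ does not provide that.

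The paper never unfilters.
\begin{itemize}
\item On $\Pn^\perp$ it uses $\vK_{ij}=\sum_\alpha(\vB^\alpha_{ij,+})^2$. This operator commutes with $\vH_n$, so it is time-independent and absorbs the $\mu_\beta$-integral.
\item $\vK_{ij}$ annihilates the pair singlet and is $\ge4m_f^2$ on the swap-symmetric sector, giving $\nr{[\vK_{ij},\vX]}^2\ge16m_f^4\nr{\vX^-_{ij}}^2$.
\item Its commutator with $\vX$ is bounded by $\CE_i(\vX)+\CE_j(\vX)$ via the KMS H\"older inequality.
\item On $\Rn^\perp$ the paper uses the exact identity $\sum_i\vB^{(i,\alpha)}=2f(0)\Stot^\alpha$ together with a unitary mixing of the jumps.
\end{itemize}

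\textbf{The critical and ordered phases.} Your Step~1 treats the commuting sector as the birth--death chain in $S$. But the scalar critical Poincar\'e inequality $\Var_\varrho(f)\le Cn^{1/2}\mathcal E_0(f)$ is already in \cite[Theorem~3.9]{basso2026SpecGapDaviesMeanFieldHeisenberg}. What was open is the $\SU$-nontrivial part of $\Pn$ (tensor ranks $\ell\ge1$), which contains $\Stot^z$ and is not a chain in $S$. A Muckenhoupt bound for the chain in $S$ does not reach this sector.

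The paper reduces all positive ranks to the rank-one family $\vX_o=\Stot^z\sum_s o_s\vPi_s$ and substitutes $g_s=so_s$.
\begin{itemize}
\item The scalar Poincar\'e inequality bounds $\Var_\varrho(g)$, because $\mathcal E_0(g)\le6n\,\mathfrak B(o)\lesssim\EDM(\vX_o)$.
\item The mean of $g$ need not vanish. It is controlled by $|\mathbb E_\varrho g|^2\le\mathbb E_\varrho s^2\sum_s\varrho(s)|o_s|^2$, using $\mathbb E_\varrho s^2=\Theta(n^{3/2})$ and the $cn\sum_s\varrho(s)|o_s|^2$ term of the rank-one energy.
\end{itemize}

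For $\beta>2$ the $n^{-1}$ likewise comes from the rotational sector, not from the chain in $S$. At $\beta=\infty$ your test function $S$ is constant on the ground space, so the witness must be $\Stot^z$. The lower bound there needs the paper's separate argument: the split $\CL^\dagger=\kappa\CG+\CL_{\mathrm R}^\dagger$ plus cooling through the spin sectors.

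Finally, in Step~3 the filters $h_{\omega,\pm}$ produced by the OFT are not uniformly normalized: on $[-\beta,0]$ they decay like a Gaussian in $\omega$. So the coherent sector bounds must be homogeneous in $\min_{[-\beta,0]}|h|^2$ and integrated against $w(\omega)$. The ratio-type $\Pn^\perp$ constant $m_h^4/N_h^2$ additionally requires truncating to $|\omega|\le1$.
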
 The $\beta=0$ infinite-temperature point is easily incorporated into the phase diagram of the Davies--Metropolis and the coherent generators, and the gap there is also $\gamma(-\LC) = \bigTh(1)$, see \cite{basso2026SpecGapDaviesMeanFieldHeisenberg} and \Cref{eq:beta-zero-gap}. The operator Fourier transform construction, however, is badly behaved at the $\beta=0$ point ($\LFs \equiv 0$), and we hence exclude that point from the general statement here. We do note, however, that taking a limit $\beta \to 0,~\sigma \to 0$ so that $\beta / \sigma = \mathrm{const.}$ is possible, and would also yield a constant gap.

\subsection{Proof sketch}
Here we provide a proof sketch of our main result by breaking it down to a sequence of four key propositions that we prove in their own sections in the rest of the paper. Assuming these statements, we show how the proof of our main result could be assembled. 

The proof heavily relies on the observation~\cite{basso2026SpecGapDaviesMeanFieldHeisenberg} that the considered Lindbaldians inherit the symmetries of the mean-field Heisenberg model. For the symmetry $(\vg,\pi)\in \SU\times S_n$ let us define the conjugation 
\begin{equation}
    \Delta_{\vg,\pi}[\cdot]:= \vg^{\otimes n}\vU_{\pi}[\cdot](\vg^{\otimes n}\vU_{\pi})^\dagger,
\end{equation}
where $\vU_\pi$ is the unitary which permutes the sites as
\[
\vU_\pi
\ket{\psi_1}\otimes\cdots\otimes \ket{\psi_n}
=
\ket{\psi_{\pi^{-1}(1)}}\otimes\cdots\otimes \ket{\psi_{\pi^{-1}(n)}}.
\]
Then, as shown by~\cite[Theorem 3.2, Remark B.3]{basso2026SpecGapDaviesMeanFieldHeisenberg}
we have
\begin{equation}\label{eq:intertwining}
    \CL\circ \Delta_{\vg,\pi} =\Delta_{\vg,\pi}  \circ \CL
\end{equation}
for the Lindbladians we consider here $\CL\in\{\LC,\LD,\LFs\}$. In fact \cite[Theorem 3.2, Remark B.3]{basso2026SpecGapDaviesMeanFieldHeisenberg} shows this for $\LDM$, but it is not difficult to extend the theorem to all Lindbladians considered in this work, see \Cref{cor:covariance}.

Our approach is to break down the spectral analysis to various invariant subspaces of the operators $\CB(\CH_n)$ which may be defined as the \emph{commutant}, a unital C$^*$ sub-algebra, defined as
\begin{equation}\label{eq:CommDef}
    \begin{aligned}
        \comm{O} &:= \{ \vC\in \CB(\CH_n): \vC \vO=\vO \vC \quad \text{ for all } \vO\in O\} \quad \text{or}\\
        \comm{\vO} &:= \{ \vC\in \CB(\CH_n): \vC \vO=\vO \vC \},
    \end{aligned}
\end{equation}
with respect to certain (collections of) self-adjoint operators $O\subseteq \CB(\CH_n)$ related to the symmetries of $\vH_n$.

We will separately consider the $\SU$ and permutation symmetry sectors defined below. 
\begin{defn}[Collective-spin-rotation-invariant algebra]
\label{def:Rn}
The collective-spin-rotation-invariant algebra is
\[
\Rn
:=\comm{\{\Stot^x,\Stot^y,\Stot^z\}}.
\]
That is, $\Rn$ consists of the operators commuting with every collective spin rotation.
\end{defn}

An important observation is that all operators in $\Rn$ commute with $\vH_n$ as we show below:
\begin{align}\label{eq:RnInCHn}
   \Rn 
   =\comm{\{\Stot^x,\Stot^y,\Stot^z\}}
   \subseteq \comm{\{(\Stot^x)^2+(\Stot^y)^2+(\Stot^z)^2\}}
   =\comm{\{\vH_n\}}.
\end{align}

\begin{defn}[Permutation-invariant algebra]
\label{def:Pn}
The permutation-invariant algebra is
\[
\Pn
:=\comm{\{\vU_{(ij)}\colon 1\le i< j\le n\}}.
\]
That is, $\Pn$ consists of the operators commuting with every site permutation.
\end{defn}

Once again, we show below that all operators in $\Pn$ commute with $\vH_n$:
\begin{align}
   \Pn &=\comm{\{\vU_{(ij)}\colon 1\le i< j\le n\}}\nonumber\\
   &=\comm{\left\{\frac{1}{2}\left(\vI+\vsigma^{(i,\rmx)}\vsigma^{(j,\rmx)}+\vsigma^{(i,\rmy)}\vsigma^{(j,\rmy)}+\vsigma^{(i,\rmz)}\vsigma^{(j,\rmz)}\right)\colon 1\le i< j\le n\right\}}\nonumber\\
   &\subseteq \comm{\left\{\sum_{i<j}\left(\vsigma^{(i,\rmx)}\vsigma^{(j,\rmx)}+\vsigma^{(i,\rmy)}\vsigma^{(j,\rmy)}+\vsigma^{(i,\rmz)}\vsigma^{(j,\rmz)}\right)\right\}}\nonumber\\
   &=\comm{\{\vH_n\}},\label{eq:PnInCHn}
\end{align}
where we used that 
\begin{align}\label{eq:UijPauliDec}
   \vU_{(ij)}=\frac{1}{2}\left(\vI+\vsigma^{(i,\rmx)}\vsigma^{(j,\rmx)}+\vsigma^{(i,\rmy)}\vsigma^{(j,\rmy)}+\vsigma^{(i,\rmz)}\vsigma^{(j,\rmz)}\right)
\end{align} is the $i \leftrightarrow j$ swap operator.

In particular, \autoref{eq:intertwining} then implies that $\Pn$ is invariant under 
the Lindbladians $\CL$ we consider here
\begin{align}\label{eq:PInv}
    &\vX &\CL^\dagger[&\vX]&\CL^\dagger[&\vX]\nonumber\\
    \vX\in \Pn \quad \Longrightarrow \quad \forall \pi \in S_n\colon\quad\qquad &\,\,\rotatebox{90}{$=$} \qquad\Longrightarrow &&\,\,\rotatebox{90}{$=$} \qquad\underset{\eqref{eq:intertwining}}{\overset{by}{\Longrightarrow}} &&\,\,\rotatebox{90}{$=$}\qquad\Longrightarrow && \CL^\dagger[\vX]\in \Pn,\\
    \Delta_{\bm{e},\pi}[&\vX] &\CL^\dagger[\Delta_{\bm{e},\pi}[&\vX]]&\Delta_{\bm{e},\pi}[\CL^\dagger[&\vX]]\nonumber
\end{align}
analogously to~\cite[Observation~3.2, Eq.~(3.12)]{basso2026SpecGapDaviesMeanFieldHeisenberg},
and similarly $\Rn$ is also invariant  
\begin{align}\label{eq:RInv}
    &\vX &\CL^\dagger[&\vX]&\CL^\dagger[&\vX]\nonumber\\
    \vX\in \Rn \quad \Longrightarrow \quad \forall \vg \in \SU\colon\!\!\quad\qquad &\,\,\rotatebox{90}{$=$} \qquad\Longrightarrow\!\! &&\,\,\rotatebox{90}{$=$} \qquad\underset{\eqref{eq:intertwining}}{\overset{by}{\Longrightarrow}} &&\,\,\rotatebox{90}{$=$}\qquad\Longrightarrow && \CL^\dagger[\vX]\in \Rn.\\
    \Delta_{\vg,\mathrm{id}}[&\vX] &\CL^\dagger[\Delta_{\vg,\mathrm{id}}[&\vX]]&\Delta_{\vg,\mathrm{id}}[\CL^\dagger[&\vX]]\nonumber
\end{align}

\begin{corollary}\label{cor:reductions}
    $\!$Both $\Pn,\Rn$ and their Hilbert-Schmidt orthogonal complements $\Pn^\perp\!,\Rn^\perp\!$ are invariant
    under $\CL,\CD,\CL^\dagger\!\!$.
\end{corollary}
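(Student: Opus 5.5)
Our plan is to obtain all twelve invariance claims (two algebras, their two complements, three superoperators) from three ingredients. The first is the invariance of $\Pn,\Rn$ under $\CL^\dagger$ established in \eqref{eq:PInv} and \eqref{eq:RInv}. The second is that $\vrho$ commutes with both algebras. The third is that an orthogonal complement of an invariant subspace is invariant under the adjoint map. The same intertwining argument as in \eqref{eq:PInv}, \eqref{eq:RInv}, now applied to $\CL$ instead of $\CL^\dagger$, gives directly that $\Pn$ and $\Rn$ are $\CL$-invariant. This uses \eqref{eq:intertwining}, which holds for $\CL$ itself, and the characterization of $\Pn$ (resp.\ $\Rn$) as the fixed points of $\Delta_{\bm e,\pi}$ for all $\pi$ (resp.\ of $\Delta_{\vg,\mathrm{id}}$ for all $\vg$). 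The latter follows from \eqref{eq:CommDef}: commuting with all $\vU_{(ij)}$ means commuting with all $\vU_\pi$, and commuting with all $\Stot^\alpha$ means commuting with every $\vg^{\otimes n}=e^{-\ri\sum_\alpha\theta_\alpha\Stot^\alpha}$, by the exponential map of $\SU$.

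Next comes the duality step. If a subspace $V$ is invariant under a superoperator $\CM$, then $V^\perp$ (Hilbert--Schmidt complement) is invariant under $\CM^\dagger$. Indeed, for $\vY\in V^\perp$ and $\vX\in V$ we have $\ipc{\vX}{\CM^\dagger[\vY]}=\ipc{\CM[\vX]}{\vY}=0$. Applying this with $\CM=\CL^\dagger$ gives that $\Pn^\perp,\Rn^\perp$ are $\CL$-invariant, and with $\CM=\CL$ that they are $\CL^\dagger$-invariant.

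For the discriminant $\CD$, we use \eqref{eq:RnInCHn} and \eqref{eq:PnInCHn}. Both $\Pn$ and $\Rn$ are contained in $\comm{\vH_n}$. More importantly, $\vH_n$ itself lies in both algebras: it is a polynomial in the $\Stot^\alpha$, so it commutes with the permutations, and it is a function of the Casimir, so it commutes with the $\Stot^\alpha$. Hence $\vrho^{\pm1/4}$, being a function of $\vH_n$, belongs to $\Pn\cap\Rn$. Since these are algebras, the map $\vX\mapsto\vrho^{\pm1/4}\vX\vrho^{\pm1/4}$ preserves each of them. Therefore $\CD=\Gamma\circ\CL\circ\Gamma^{-1}$, with $\Gamma[\cdot]=\vrho^{1/4}[\cdot]\vrho^{1/4}$, preserves $\Pn$ and $\Rn$.

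For the complements, note that $\Gamma$ is Hilbert--Schmidt self-adjoint and invertible (for $\beta<\infty$). It preserves $V\in\{\Pn,\Rn\}$ bijectively, and hence by the duality step it also preserves $V^\perp$. So $\CD$ preserves $V^\perp$; alternatively, this follows from the self-adjointness of $\CD$ together with the duality step. The main subtlety is the case $\beta=\infty$, where $\vrho$ is a projector-type state and $\vrho^{-1/4}$ must be read on the support, e.g.\ via a pseudo-inverse. There we would argue via the limit $\beta\to\infty$, or note that the ground-space projector is again a function of $\vH_n$ and hence lies in both algebras, so the same conjugation argument applies verbatim.
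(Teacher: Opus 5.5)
Your proof is correct, but it is organized differently from the paper's.

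\textbf{The paper's route.} It starts only from the $\CL^\dagger$-invariance of $\Pn,\Rn$ and transports everything with the detailed-balance similarities. Because $\vrho^{c}\in\Pn\cap\Rn$, the identities $\CL^\dagger[\cdot]=\vrho^{1/2}\CL[\vrho^{-1/2}\cdot\vrho^{-1/2}]\vrho^{1/2}$ and $\CD[\cdot]=\vrho^{1/4}\CL[\vrho^{-1/4}\cdot\vrho^{-1/4}]\vrho^{1/4}$ give invariance under $\CL$ and $\CD$. Hilbert--Schmidt self-adjointness of $\CD$ then gives invariance of $\Pn^\perp,\Rn^\perp$ under $\CD$. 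This is carried back to $\CL,\CL^\dagger$ by the same conjugations, using that they preserve the complements (cf.\ the footnote on KMS versus Hilbert--Schmidt complements).

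\textbf{Your route.} You read off $\CL$-invariance directly from the covariance $\CL\circ\Delta_{\vg,\pi}=\Delta_{\vg,\pi}\circ\CL$. You then obtain all four complement statements for $\CL$ and $\CL^\dagger$ from the elementary fact that $\CM$-invariance of $V$ implies $\CM^\dagger$-invariance of $V^\perp$. Detailed balance and $\vrho$ enter only in the $\CD$ step.

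\textbf{Comparison.} The advantage of your route is that the claims about $\CL,\CL^\dagger$ use only symmetry and never invert $\vrho$. They therefore hold verbatim at $\beta=\infty$, and for any covariant generator, detailed balanced or not. The paper's route is more compact, since the detailed-balance identities are already at hand.

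\textbf{A small remark on your last paragraph.} Invertibility of $\Gamma$ is not what makes $V^\perp$ invariant. Any Hilbert--Schmidt self-adjoint map preserving $V$ preserves $V^\perp$ by your duality lemma, and this applies equally to $\vX\mapsto\vrho^{-1/4}\vX\vrho^{-1/4}$. So at $\beta=\infty$ the only issue is whether $\CD$ is meaningfully defined, not the invariance argument itself.
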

\begin{proof}
    For $\CL^\dagger\!$ and $\Pn,\Rn$ this is proven by \eqref{eq:PInv}-\eqref{eq:RInv}.
    Due to the $\SU$ and $S_n$ symmetry of $\vH_n$ we have $\vH_n\in \Rn\cap \Pn$ and consequently also $\exp(c \vH_n)\in \Rn\cap \Pn$ and $\vrho^c\in \Rn\cap \Pn$ for all $c\in\BC$, thus, e.g., $\Pn=\vrho^c\Pn\vrho^c$.
    Due to KMS detailed balance
    $\CD[\cdot]=\vrho^{\frac14}\CL[\vrho^{-\frac14}\cdot\vrho^{-\frac14}]\vrho^{\frac14}$ and $\CL^\dagger[\cdot]=\vrho^{\frac12}\CL[\vrho^{-\frac12}\cdot\vrho^{-\frac12}]\vrho^{\frac12}$, 
    implying their $\Pn,\Rn$ invariance. Self-adjointness of $\CD$ also implies the invariance of\footnote{As $\vrho^{\pm\frac12}$ is in the sub-algebras $\Pn,\Rn$ the orthogonal complements coincide for the KMS and Hilbert-Schmidt inner products.\label{footnote:kms-and-hilbert-coincide}} $\Pn^\perp,\Rn^\perp$ under $\CD$ and thus also under $\CL,\CL^\dagger$. 
\end{proof}

Let us denote by $\gamma(\CL)$ the spectral gap of $-\CL$, and by $\gamma(\CL,O)$ the spectral gap of $-\CL$ on the invariant subspace $O \in \{\Pn,\Rn,\Pn^\perp,\Rn^\perp\}$ if $\sqrt{\vrho} \in O \iff \vI \in O$, and otherwise the
smallest eigenvalue of $-\CL$ on $O$. 
By the above corollary we have both
\begin{align}
    \gamma(\CL)&=\min\{\gamma(\CL,\Pn),\gamma(\CL,\Pn^{\perp})\}, \text{ and }\label{eq:PDec}\\
    \gamma(\CL)&=\min\{\gamma(\CL,\Rn),\gamma(\CL,\Rn^{\perp})\}\label{eq:RDec}.
\end{align}
Our first key observation is that the magnitude of the Dirichlet forms 
corresponding to the Davies and ``coherent'' constructions
\begin{equation}
    \label{eq:Dirichlet-def}
    \CE(\vX) := \ip{\vX}{-\CL[\vX]}
\end{equation}
match on the subspace of operators commuting with $\vH_n$.
\begin{restatable}[Gap equivalence for commuting operators]{proposition}{commutantComparison}
\label{prop:commutant-comparison}
For every $n\ge2$ and $\beta\ge0$ we have
\[
    c_\star\EDM(\vX)
    \le \EC(\vX)
    \le \EDM(\vX) \qquad\text{ for all } \vX\in\comm{\{\vH_n\}}.
\]
Consequently $\gamma(\LC,\Rn)=\Theta(\gamma(\LDM,\Rn))$ and $\gamma(\LC,\Pn)=\Theta(\gamma(\LDM,\Pn))$.
\end{restatable}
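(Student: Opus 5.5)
The plan is to show that on $\comm{\{\vH_n\}}$ the Davies and coherent Dirichlet forms with the same filter $f$ coincide exactly. Both then take a ``classical'' form, a sum over pairs of energy levels with symmetric nonnegative weights, and $|f|^2$ enters only through those weights. The comparison with Metropolis then follows weight by weight from the KMS condition and the normalization in \eqref{eq:f-KMS}.

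\textbf{Step 1 (block form of $\EC$).} Fix a jump $\vsigma=\vsigma^{(i,\alpha)}$. Let $\vP_E$ be the spectral projector of $\vH_n$ onto energy $E$, and write $\vsigma_{E'E}:=\vP_{E'}\vsigma\vP_E$. Let $\vX\in\comm{\{\vH_n\}}$, so $\vX=\sum_E\vX_E$ with $\vX_E=\vP_E\vX\vP_E$, and write $\rho_E:=e^{-\beta E}/Z_{n,\beta}$. Since $\vX$ commutes with $\vrho$, we have $\vrho^{1/2}\vX^\dagger\vrho^{1/2}=\vrho\vX^\dagger$. Writing $\vL=\sum_\nu f(\beta\nu)\vsigma_\nu$, this gives
\[
\EC^{(i,\alpha)}(\vX)=-\Tr(\vrho\vX^\dagger\vL^\dagger\vX\vL)+\tfrac12\Tr\big(\vrho\vX^\dagger(\vL^\dagger\vL\vX+\vX\vL^\dagger\vL)\big).
\]
Expand everything in energy blocks. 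In $\Tr(\vrho\vX^\dagger\vsigma_\nu^\dagger\vX\vsigma_{\nu'})$, a block starting at $E$ is sent to $E+\nu'-\nu$, while $\vrho\vX^\dagger$ is block diagonal. Hence only $\nu=\nu'$ survives, and likewise in $\vL^\dagger\vL$ sandwiched between block-diagonal operators. So the cross terms that distinguish the coherent construction from the Davies one vanish, and $\EC(\vX)=\ED(\vX)$ for the same $f$. Explicitly,
\[
\EC^{(i,\alpha)}(\vX)=\sum_{E,E'}\rho_E|f(\beta(E'-E))|^2\Big[\tfrac12\Tr(\vX_E^\dagger\vsigma_{E'E}^\dagger\vsigma_{E'E}\vX_E)+\tfrac12\Tr(\vX_E^\dagger\vX_E\vsigma_{E'E}^\dagger\vsigma_{E'E})-\Tr(\vX_E^\dagger\vsigma_{E'E}^\dagger\vX_{E'}\vsigma_{E'E})\Big].
\]

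\textbf{Step 2 (symmetrization).} Set $w_{EE'}:=\rho_E|f(\beta(E'-E))|^2$. The KMS condition gives $|f(\nu)|^2=e^{-\nu}|f(-\nu)|^2$, so $w_{EE'}=\rho_{E'}|f(\beta(E-E'))|^2=w_{E'E}$. Using this symmetry together with $\vsigma_{EE'}=\vsigma_{E'E}^\dagger$, I pair the $(E,E')$ and $(E',E)$ terms. The bracket then completes to a square:
\[
\EC^{(i,\alpha)}(\vX)=\tfrac12\sum_{E,E'}w_{EE'}\,\big\|\vX_{E'}\vsigma_{E'E}-\vsigma_{E'E}\vX_E\big\|_2^2 .
\]
The only real work is checking this bookkeeping carefully, in particular the relabeling of the second anticommutator term. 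It is the step I expect to be the main obstacle, though it is routine.

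\textbf{Step 3 (comparison of weights).} For each unordered pair, take $E'\ge E$, so $\nu=E'-E\ge0$. Then $w_{EE'}=\rho_{E'}|f(-\beta\nu)|^2$. By the normalization $|f|\le1$ and $c_\star=\inf_{\nu\le0}|f(\nu)|^2$, this lies in $[c_\star\rho_{E'},\rho_{E'}]$. For Metropolis, $f_M\equiv1$ on $\nu\le0$, so $w^{\mathrm{M}}_{EE'}=\rho_{E'}=\min(\rho_E,\rho_{E'})$. The squared norms are the same for all constructions, so summing over pairs and over all $(i,\alpha)$ gives $c_\star\EDM(\vX)\le\EC(\vX)\le\EDM(\vX)$.

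\textbf{Step 4 (consequences for the gaps).} $\Rn,\Pn\subseteq\comm{\{\vH_n\}}$ by \eqref{eq:RnInCHn} and \eqref{eq:PnInCHn}, and both are invariant subspaces by \Cref{cor:reductions}. On each of them the gap (or smallest eigenvalue) is a Rayleigh quotient $\min\EC(\vX)/\nr{\vX}^2$ over $\vX$ KMS-orthogonal to $\vI$. The norm and the orthogonality constraint do not depend on the Lindbladian. The two-sided bound therefore transfers directly: $c_\star\gamma(\LDM,O)\le\gamma(\LC,O)\le\gamma(\LDM,O)$ for $O\in\{\Rn,\Pn\}$, with $c_\star$ a constant independent of $n$.
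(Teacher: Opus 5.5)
Your proof is correct. It follows the paper's skeleton: on $\comm{\{\vH_n\}}$ the Bohr cross terms vanish, both forms reduce to the same nonnegative squared terms with different scalar weights, and those weights are compared using the KMS condition and the normalization. Your $w_{EE'}=\rho_E|f(\beta(E'-E))|^2$ is exactly the paper's $|\widetilde f(\beta\omega)|^2$ multiplied by $\sqrt{\rho_E\rho_{E'}}$, the factor that the KMS norm assigns to the $(E',E)$ block of $[\vsigma_\omega,\vX]$.

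The difference lies in how you reach the diagonal sum-of-squares form. The paper starts from the time-integral representation of \Cref{lem:form}, which it imports from Leng et al. It removes the cross terms inside $\nr{[\vB^{(i,\alpha)}(t),\vX]}^2$ by KMS-orthogonality of Bohr subspaces and integrates out $\mu_\beta$ (\Cref{lem:bohr-diagonalization}). It then cites Basso et al.\ for the Davies form \eqref{eq:davies-bohr-form}.

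You instead expand $\ip{\vX}{-\LC[\vX]}$ directly in energy blocks and complete the square using the detailed-balance symmetry $w_{EE'}=w_{E'E}$. This also derives the Davies form, and it shows explicitly that $\EC=\ED$ on $\comm{\{\vH_n\}}$ for the same filter; the paper has this only implicitly.

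Your route is self-contained and elementary, and it covers $\beta=0$ without the weak-limit convention $\mu_0=\frac12\delta_0$. The paper's route costs nothing extra in context, because \Cref{lem:form} is needed anyway for the bounds on $\Pn^\perp$ and $\Rn^\perp$, where $\vX$ does not commute with $\vH_n$.
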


We prove the gap bound in the low-temperature regime by studying the dynamics separately on global-spin-rotation-invariant operators ($\Rn$) and on their complement ($\Rn^\perp$).

\begin{restatable}[Low-temperature bound]{proposition}{SUtwoComplement}
\label{prop:R-complement}
For $\beta\ge0$ we have $\gamma(\LC,\Rn^{\perp}) = \Omega\left(\frac1n\right)$.
\end{restatable}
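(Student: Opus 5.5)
The plan is to reduce the statement to a \emph{Casimir (collective-spin) inequality}: operators in $\Rn^\perp$ carry nonzero angular momentum under the adjoint action of $\Stot$, which forces a lower bound on a sum of single-site commutators. I then compare the coherent Dirichlet form $\EC$ with that sum of commutators. I expect the comparison step, not the representation theory, to be the real work.

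\textbf{Step 1 (Casimir bound on $\Rn^\perp$).} Let $\ad_{\alpha}:=[\Stot^\alpha,\cdot]$. Since $\vrho\in\Rn$, the maps $\ad_\alpha$ commute with $\vX\mapsto\vrho^{1/4}\vX\vrho^{1/4}$. Hence the decomposition of $\CB(\CH_n)$ into adjoint $\SU$-isotypic components of spin $k$ is orthogonal for both the Hilbert--Schmidt and the KMS inner product. By definition $\Rn$ is exactly the $k=0$ component. For $\vX\in\Rn^\perp$, every component has integer spin $k\ge1$, so
\[
\sum_\alpha \nr{[\Stot^\alpha,\vX]}^2=\sum_k k(k+1)\nr{\vX_k}^2\ge 2\nr{\vX}^2 .
\]
Writing $[\Stot^\alpha,\vX]=\frac12\sum_i[\vsigma^{(i,\alpha)},\vX]$ and applying Cauchy--Schwarz over the $n$ sites gives
\[
\sum_{i,\alpha}\nr{[\vsigma^{(i,\alpha)},\vX]}^2\ \ge\ \frac{8}{n}\nr{\vX}^2 .
\]
This is the only place the factor $1/n$ enters.

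\textbf{Step 2 (bounded Bohr frequencies).} $\vsigma^{(i,\alpha)}$ changes the total spin $S$ by at most one, and $\vH_n=-\frac1n S(S+1)$ on each total-spin sector. Therefore every Bohr component $\vsigma^{(i,\alpha)}_\nu$ with nonzero contribution has $|\nu|\le C$ for an absolute constant $C$, roughly $(2S+2)/n\le 3$. For fixed finite $\beta$, the KMS condition \eqref{eq:f-KMS} then gives
\[
|f(\beta\nu)|^2\ge c_\star e^{-\beta C}=:c'>0
\]
on all relevant frequencies. Together with $|f|\le1$, the coherent jump $\vA_{i,\alpha}:=\sum_\nu f(\beta\nu)\vsigma^{(i,\alpha)}_\nu$ is thus an invertible, bounded-distortion reweighting of $\vsigma^{(i,\alpha)}$ in the Bohr basis.

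\textbf{Step 3 (comparison, main obstacle).} Since there is no coherent term, $\EC$ has the standard KMS ``carré du champ'' form. Concretely, $\EC(\vX)$ is a sum over $(i,\alpha)$ of squared KMS-norms of twisted commutators of $\vX$ with $\vA_{i,\alpha}$, each expressed in the Bohr basis as $|f(\beta\nu)|^2$ times a combination of matrix elements of $\vX$ with weights $e^{\pm\beta\nu/4}$. I would expand both $\EC(\vX)$ and $\nr{[\vsigma^{(i,\alpha)},\vX]}^2$ blockwise in the eigenbasis of $\vH_n$ and compare term by term. The frequencies are bounded by Step 2, so all the exponential weights $e^{\beta\nu/4}$ and filter values lie in $[c'',1/c'']$ for some constant $c''>0$. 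This should yield
\[
\EC(\vX)\ge c\sum_{i,\alpha}\nr{[\vsigma^{(i,\alpha)},\vX]}^2
\]
with $c=c(\beta,c_\star)>0$.

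The delicate point is that $\vA_{i,\alpha}$ is a \emph{coherent} sum over $\nu$, so cross terms between different frequencies appear. I would try to control them by writing the coherent filter as the bounded invertible superoperator $f(\beta\,[\vH_n,\cdot])$ acting on $\vsigma^{(i,\alpha)}$. The $\SU\times S_n$ structure should make the relevant operators block-diagonal in the total-spin sectors, with only three frequency shifts $\Delta S\in\{-1,0,1\}$. Alternatively, if this fails, I would first compare with the Davies form via the fact that $\vsigma^{(i,\alpha)}$ connects only neighbouring spin sectors. Combining Step 3 with Step 1 gives $\EC(\vX)\ge \frac{8c}{n}\nr{\vX}^2$ on $\Rn^\perp$, which is $\gamma(\LC,\Rn^\perp)=\Omega(1/n)$. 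For $\beta=\infty$, I would handle the ground-space restriction separately, with the same argument restricted to the support of $\vrho$.
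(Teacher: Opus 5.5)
Your Steps~1 and~2 are fine; Step~1 is exactly the paper's Casimir bound, \Cref{lem:collective-casimir}. The gap is Step~3: it carries the whole difficulty, and the mechanism you propose for it does not work.

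By \Cref{lem:gt-dirichlet}, each jump contributes
\[
\int_{\BR}\nr{[\vB^{(i,\alpha)}(t),\vX]}^2\,\rmd\mu_\beta(t)=\sum_{\nu,\nu'}\frac{\widetilde f^*(\beta\nu)\,\widetilde f(\beta\nu')}{2\cosh(\beta(\nu-\nu')/4)}\,\ip{[\vsigma^{(i,\alpha)}_\nu,\vX]}{[\vsigma^{(i,\alpha)}_{\nu'},\vX]},
\]
whereas $\nr{[\vsigma^{(i,\alpha)},\vX]}^2$ is the same double sum with all weights equal to $1$. The off-diagonal terms have no sign, so knowing that all weights lie in $[c'',1/c'']$ gives nothing. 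What you would actually need is that, on the frequencies meeting each KMS-orthogonal block $\vPi_a(\cdot)\vPi_b$, the weight matrix dominates $c\,\mathbf{1}\mathbf{1}^{\dagger}$.

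This fails uniformly in $n$:
\begin{itemize}
\item For $a=b=s=O(1)$ these frequencies include $0$ and $\pm\epsilon$ with $\epsilon=2(s+1)/n$. On such a cluster the kernel $1/(2\cosh(\beta x/4))$ is nearly rank one.
\item For the Metropolis filter, $\widetilde f_M(\nu)=e^{-|\nu|/4}$ has a kink at $0$. Take coefficients $y\propto(1,-2,1)/\widetilde f_M(\beta\nu)$ on $\{-\epsilon,0,\epsilon\}$. The weighted form is then $O(\beta^4\epsilon^4)$, while $|\sum_\nu y_\nu|^2\ge\beta^2\epsilon^2/4$.
\item Hence the best blockwise constant is $O(n^{-2})$, and no block-by-block comparison of bounded weights gives an $n$-independent $c$.
\end{itemize}

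Your fallbacks fail for the same reason. The coherent/Davies comparison of \Cref{sec:davies-vs-coherent-commuting-operators} rests on \Cref{lem:bohr-diagonalization}, which needs $[\vH_n,\vX]=0$. But $\Rn^\perp\not\subseteq\comm{\vH_n}$; for example, $\vsigma^{(1,\rmz)}\in\Rn^\perp$. Invertibility of $\widetilde f(\beta\,\ad_{\vH_n})$ does not help either, because it commutes with $[\cdot,\vX]$ only when $\vX$ commutes with $\vH_n$.

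The missing idea is to apply your Cauchy--Schwarz step to the filtered jumps rather than to the bare Paulis. The operator $\sum_i\vsigma^{(i,\alpha)}=2\Stot^\alpha$ has only the zero Bohr frequency, so the filter and the time evolution act trivially on it: $\sum_i\vB^{(i,\alpha)}(t)=2f(0)\Stot^\alpha$ for all $t$ (\Cref{lem:collective-jump}). Hence, pointwise in $t$,
\[
\sum_{i}\nr{[\vB^{(i,\alpha)}(t),\vX]}^2\ \ge\ \frac1n\nr{\Bigl[\sum_i\vB^{(i,\alpha)}(t),\vX\Bigr]}^2=\frac{4f(0)^2}{n}\nr{[\Stot^\alpha,\vX]}^2 .
\]
The paper phrases this as a unitary mixing of the jumps whose first row is uniform (\Cref{lem:unitary-mixing}). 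Integrating against $\mu_\beta$ in \Cref{lem:form}, with $\mu_\beta(\BR)=\frac12$, gives $\EC(\vX)\ge\frac{2f(0)^2}{n}\sum_\alpha\nr{[\Stot^\alpha,\vX]}^2$. Your Step~1 then yields $\EC(\vX)\ge\frac{4f(0)^2}{n}\nr{\vX}^2$ on $\Rn^\perp$, with $f(0)^2\ge c_\star$.

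This route needs no comparison of single-site commutators and no Bohr-frequency bound. Its constant also does not degrade like $e^{-\beta}$.
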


We prove the gap bound in the high-temperature regime by studying the dynamics separately on site-permutation-invariant operators and on their complement. 

\begin{restatable}[High-temperature bound]{proposition}{CperpHighT}\label{prop:P-complement}
	For all $\beta\in[0,2]$ we have $\gamma(\LC,\Pn^{\perp})=\Omega(1)$.
\end{restatable}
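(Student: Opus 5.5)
The plan is to compare the Dirichlet form $\EC$ on $\Pn^\perp$ with the Dirichlet form of the \emph{interchange process} on the complete graph, acting on operators by conjugation. Its generator is
\[
\CL_{\mathrm{IP}}[\vX]=\frac1n\sum_{i<j}\bigl(\vU_{(ij)}\vX\vU_{(ij)}-\vX\bigr),
\]
and we then invoke Aldous' spectral gap theorem~\cite{caputo2010AldousSpectralGapConj}.

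\textbf{Step 1: the interchange process has gap $\Omega(1)$ on $\Pn^\perp$.}
Identify $\CB(\CH_n)\cong(\BC^4)^{\otimes n}$ via a local operator basis. Conjugation by $\vU_{(ij)}$ is then the permutation action of $S_n$ on $(\BC^4)^{\otimes n}$. The fixed vectors of this action are exactly $\Pn$.

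Decompose into occupation-number sectors. Each sector is a quotient of a multi-type exclusion or interchange process on $K_n$. By Aldous' theorem, its gap equals that of the single random walk on $K_n$ with rates $1/n$, which is $\Theta(1)$. Hence $\ipc{\vX}{-\CL_{\mathrm{IP}}[\vX]}\ge c\|\vX\|_2^2$ for all $\vX\in\Pn^\perp$.

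Since $\vrho\in\Pn$ commutes with every $\vU_{(ij)}$, the map $\vX\mapsto\vrho^{1/4}\vX\vrho^{1/4}$ commutes with $\CL_{\mathrm{IP}}$ and preserves $\Pn^\perp$. Applying it to the whole statement gives the same bound in the KMS inner product: $\ip{\vX}{-\CL_{\mathrm{IP}}[\vX]}\ge c\nr{\vX}^2$ on $\Pn^\perp$.

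\textbf{Step 2: reduce swap terms to single-site Pauli commutators.}
By \eqref{eq:UijPauliDec}, $\vU_{(ij)}\vX\vU_{(ij)}-\vX$ is controlled by the commutators $[\vsigma^{(i,\alpha)}\vsigma^{(j,\alpha)},\vX]$. Each of these splits as
\[
[\vsigma^{(i,\alpha)}\vsigma^{(j,\alpha)},\vX]=\vsigma^{(i,\alpha)}[\vsigma^{(j,\alpha)},\vX]+[\vsigma^{(i,\alpha)},\vX]\vsigma^{(j,\alpha)}.
\]
We then want, in the KMS norm,
\[
\frac1n\sum_{i<j}\bigl\|[\vsigma^{(i,\alpha)}\vsigma^{(j,\alpha)},\vX]\bigr\|_{\vrho}^2\lesssim \sum_i\bigl\|\text{(weighted)}\,[\vsigma^{(i,\alpha)},\vX]\bigr\|^2 .
\]
Multiplying by $\vsigma^{(i,\alpha)}$ inside the KMS norm costs a factor $\|\vrho^{\pm1/4}\vsigma^{(i,\alpha)}\vrho^{\mp1/4}\|$. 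This is $e^{O(\beta)}$, because the Bohr frequencies of single-site operators are $O(1)$: the total spin changes by at most $1$, so $|\nu|\le 1+O(1/n)$. The $1/n$ prefactor then absorbs the sum over $j$.

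\textbf{Step 3: lower-bound $\EC$ by the single-site terms.}
Write $\EC$ via the discriminant as a sum over $(i,\alpha)$ of squared norms of filtered commutators. These involve the filtered jump $\vK^{(i,\alpha)}=\sum_\nu f(\beta\nu)\vsigma^{(i,\alpha)}_\nu$, and I expect each term to be comparable to $\|[\vsigma^{(i,\alpha)},\vX]\|$ in a suitable weighted norm. Again because $|\nu|=O(1)$, the filter is bounded below on the relevant frequencies. For $\nu\le0$ we have $|f(\beta\nu)|^2\ge c_\star$. For $\nu>0$, the KMS condition gives $|f(\beta\nu)|=e^{-\beta\nu/2}|f(-\beta\nu)|\ge e^{-\beta\nu/2}\sqrt{c_\star}$, which is bounded below since $\beta\nu=O(1)$. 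Thus $\vK^{(i,\alpha)}$ equals $\vsigma^{(i,\alpha)}$ composed with an invertible, frequency-diagonal map whose norms are bounded above and below by $e^{O(\beta)}$.

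Chaining Steps 1–3 gives $\EC(\vX)\ge C(\beta)\,\nr{\vX}^2$ on $\Pn^\perp$, with $C(\beta)$ bounded below uniformly on the compact interval $[0,2]$.

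\textbf{Main obstacle.}
I expect the hardest step to be Step 3. It requires the exact form of the coherent Dirichlet form, and it must convert filtered, $\vrho$-weighted commutators into plain commutators without losing powers of $n$. The cross terms between different Bohr components need care, since they are not diagonal in the Hilbert--Schmidt inner product.

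I would first try to write the filtered jump as $\vK^{(i,\alpha)}=\Phi[\vsigma^{(i,\alpha)}]$ with $\Phi=f(\beta\,\mathrm{ad}_{\vH_n})$. Then I would use that $\mathrm{ad}_{\vH_n}$ has $O(1)$ spectrum on the $4$-dimensional single-site span of each site $i$, so that $\Phi$ and $\Phi^{-1}$ have norm $e^{O(\beta)}$ there.

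If this direct comparison loses control, the fallback is the route suggested in the introduction. Split $\Pn^\perp$ further using $\Rn$, handle the commuting part through \Cref{prop:commutant-comparison}, and control the remainder through the collective Casimir operator. This fallback is where the restriction $\beta\le2$ would genuinely enter.
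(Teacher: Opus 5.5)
\textbf{There is a genuine gap in Step 3.} Steps 1 and 2 are sound. Together they give a correct intermediate statement: $\sum_{i,\alpha}\nr{[\vsigma^{(i,\alpha)},\vX]}^2\ge c\,e^{-O(\beta)}\nr{\vX}^2$ on $\Pn^\perp$. This follows from Aldous' theorem for the interchange process, $\vU_{(ij)}\vX\vU_{(ij)}-\vX=\vU_{(ij)}[\vX,\vU_{(ij)}]$, the Leibniz split, and \Cref{fact:kms-holder}. The whole difficulty sits in Step 3, and the reason you give for it does not work.

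By \Cref{lem:gt-dirichlet}, the coherent form of one jump is
\[
\int_{\BR}\nr{[\vB^{(i,\alpha)}(t),\vX]}^2\,\rmd\mu_\beta(t)=\sum_{\nu,\nu'}\frac{\widetilde f^*(\beta\nu)\,\widetilde f(\beta\nu')}{2\cosh(\beta(\nu-\nu')/4)}\,\ip{[\vsigma^{(i,\alpha)}_\nu,\vX]}{[\vsigma^{(i,\alpha)}_{\nu'},\vX]},
\]
whereas $\nr{[\vsigma^{(i,\alpha)},\vX]}^2$ is the same double sum with every weight equal to $1$.

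\begin{itemize}
\item Invertibility of the frequency-diagonal map $f(\beta\,\ad_{\vH_n})$ says nothing about these commutators. The map $\vA\mapsto[\vA,\vX]$ does not preserve Bohr frequencies unless $[\vX,\vH_n]=0$, and that is exactly the case you cannot assume on $\Pn^\perp$ (compare the remark on the role of the commutation assumption).
\item Your proposed fix also fails: the single-site span is not $\ad_{\vH_n}$-invariant. Indeed $[\vH_n,\vsigma^{(i,\alpha)}]$ is two-body, and $\vB^{(i,\alpha)}$ is highly nonlocal.
\item A termwise comparison would require the kernel matrix above to dominate $c$ times the all-ones matrix. This has to hold on the $\Theta(n)$ Bohr frequencies of $\vsigma^{(i,\alpha)}$, which are spaced $O(1/n)$ apart. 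The kernel $1/\cosh(\beta(\nu-\nu')/4)$ is analytic, so this matrix becomes severely ill-conditioned as the frequencies accumulate. Meanwhile $1/\widetilde f(\beta\nu)$ need not be analytic; for Metropolis it has a kink at $\nu=0$. So no $n$-independent constant can be extracted from this structure.
\end{itemize}

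\textbf{The missing idea is to never unfilter a commutator.} The paper builds $\vK_{ij}=\sum_\alpha\bigl((\vB^{(i,\alpha)}+\vB^{(j,\alpha)})/\sqrt2\bigr)^2$, a filtered version of $4\vPi^+_{ij}$. Since $[\vK_{ij},\vH_n]=0$, one has $\vK_{ij}=\sum_\alpha\vB^\alpha_{ij,+}(t)^2$ for every $t$. The Leibniz rule and KMS H\"older then bound $\nr{[\vK_{ij},\vX]}^2$ pointwise in $t$ by the time-$t$ commutators with $\vB^{(i,\alpha)}(t)$ and $\vB^{(j,\alpha)}(t)$. Integrating against $\mu_\beta$ gives at most a constant times $\CE_i(\vX)+\CE_j(\vX)$, with no cross-frequency issue.

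On the other side, $\vK_{ij}$ still annihilates the pair singlet. It is at least $4m_f^2$ on the swap-symmetric sector, because the filter acts diagonally on fixed-energy vectors (\Cref{lem:fixed-energy}). Hence $\nr{[\vK_{ij},\vX]}^2\ge16m_f^4\nr{\vX^-_{ij}}^2$, which is directly the interchange Dirichlet form.

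Your fallback would not rescue the argument. The Casimir route only gives $\Omega(1/n)$ on $\Rn^\perp$. Also, $\beta\le2$ plays no role on $\Pn^\perp$: the paper's bound holds for every fixed $\beta$.
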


Finally, we close the remaining gap in the analysis of~\cite{basso2026SpecGapDaviesMeanFieldHeisenberg} for the critical temperature.
\begin{restatable}[Critical-temperature bound]{proposition}{criticalBound}
\label{prop:critical-bound}
For $\beta=2$ we have $\gamma(\LDM)=\Omega(n^{-1/2})$, and $\ipc{\Stot^z}{\Stot^z}_{\vrho}=\Theta(n^{\frac32})$.
\end{restatable}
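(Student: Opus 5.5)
The second claim is a moment computation and the first is a gap bound; I would split the gap bound along the $\SU$ symmetry using \eqref{eq:RDec}, writing $\gamma(\LDM)=\min\{\gamma(\LDM,\Rn),\gamma(\LDM,\Rn^\perp)\}$, and treat each sector separately.

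\textbf{The moment.} Since $\vH_n$ is a function of the Casimir $\Stot^2=\sum_\alpha(\Stot^\alpha)^2$, the operator $\Stot^z$ commutes with $\vrho$. Hence
\[
\ipc{\Stot^z}{\Stot^z}_{\vrho}=\Tr(\vrho(\Stot^z)^2)=\tfrac13\mathbb E_\pi[J(J+1)],
\]
where $\pi(J)\propto (2J+1)\,m_J\,e^{\beta J(J+1)/n}$ is the law of the total spin and $m_J$ is its Schur--Weyl multiplicity. Stirling's formula gives $\log\bigl((2J+1)m_J\bigr)=\mathrm{const}+2\log J-\tfrac{2J^2}{n}-\tfrac{4J^4}{3n^3}+O(\cdot)$ for $J\ll n$. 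At $\beta=2$ the quadratic terms cancel, so
\[
\pi(J)\approx J^2\exp\bigl(-cJ^4/n^3\bigr)\quad\text{on the window } J\asymp n^{3/4}.
\]
The region $J\gtrsim n^{1-\epsilon}$ carries only exponentially small mass. A Laplace-type estimate then gives $\mathbb E_\pi[J^2]=\Theta(n^{3/2})$, which proves the second claim.

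\textbf{The sector $\Rn$.} By Schur--Weyl duality, $\Pn\cap\Rn$ consists of functions of $J$. On this subspace $\LDM$ acts as a birth--death chain on $J\in\{J_{\min},\dots,n/2\}$, reversible with respect to $\pi$. Each single-site Pauli jump changes $J$ by at most one. The rates $J\to J\pm1$ are of order $n$ in the bulk window, because $n$ sites each flip at rate $\Theta(1)$ and the Metropolis weights are $\Theta(1)$ when energy changes are $O(J/n)=o(1)$. I would lower bound the gap with the Muckenhoupt/Hardy criterion for birth--death chains (Chen's variational formula would also do). The heuristic is that the width squared is $(n^{3/4})^2$ and the jump rate is $n$, giving gap $n/n^{3/2}=n^{-1/2}$. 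I expect the Hardy constants $\sup_J \pi([J,\infty))\sum_{K\le J}\frac{1}{\pi(K)q_K}$ to evaluate to $O(n^{1/2})$, using the quartic tail together with the $J^2$ prefactor near the origin.

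The parts of $\Rn$ outside $\Pn$ are not functions of $J$ alone. For these I would combine the $\Pn/\Pn^\perp$ splitting inside $\Rn$: on $\Rn\cap\Pn^\perp$, \Cref{prop:P-complement} together with \Cref{prop:commutant-comparison} (every element of $\Rn$ commutes with $\vH_n$ by \eqref{eq:RnInCHn}) transfers the $\Omega(1)$ bound at $\beta=2$ from $\LC$ to $\LDM$.

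\textbf{The sector $\Rn^\perp$.} This is the main obstacle, because \Cref{prop:R-complement} only gives $\Omega(1/n)$. I would revisit its proof: the mechanism is rotation of the collective spin direction, with relaxation rate of order $n/J^2$ on the block of total spin $J$. In the proof of \Cref{prop:R-complement} I would try to keep the estimate in the $J$-resolved form
\[
\CE(\vX)\gtrsim\sum_J \frac{n}{J(J+1)}\,\|\vX_J\|^2,
\]
where $\vX_J$ is the component of $\vX$ on the spin-$J$ block, rather than bounding $n/J(J+1)$ crudely by its worst case $J\sim n/2$.

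At $\beta=2$ most of the mass sits at $J\lesssim n^{3/4}$, where $n/J^2\gtrsim n^{-1/2}$. For the rare large-$J$ blocks I would add the radial birth--death mixing from the $\Rn$ step, which moves weight into the typical window at rate $n^{-1/2}$. The resulting bound would be $\Omega(n/\mathbb E J^2)=\Omega(n^{-1/2})$ via a two-scale (radial plus angular) decomposition argument. I expect the delicate part to be making this radial/angular coupling rigorous for non-diagonal operators.
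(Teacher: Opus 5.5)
\textbf{Where the proposal breaks.} Your moment computation is essentially the paper's argument: Stirling, cancellation of the quadratic term at $\beta=2$, and the quartic free energy on the scale $s\asymp n^{3/4}$. Your treatment of $\Rn$ is sound in outline, with one correction. The one-sided Hardy constant you wrote, pinned at the origin, is of order $n^{5/4}$, not $n^{1/2}$, because $\varrho(s)\asymp n^{-9/4}$ for $s=O(1)$ while the rates there are $\Theta(n)$. You must centre the Muckenhoupt bound at a median $m\asymp n^{3/4}$, after which the left piece is $\asymp\sup_{J\le m}J^2/n\lesssim n^{1/2}$; alternatively, quote the scalar Poincaré inequality $\Var_\varrho(f)\le Cn^{1/2}\mathcal E_0(f)$.

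The genuine gap is the sector $\Rn^\perp$, which is where the whole difficulty lies, since the near-extremal observable $\Stot^z$ belongs to $\Rn^\perp\cap\Pn$. Making the proof of Proposition~\ref{prop:R-complement} ``$J$-resolved'' cannot work. That proof keeps only the collective mode $\vC_0^\alpha\propto\Stot^\alpha/\sqrt n$ of the jumps. What survives is $\frac{2f(0)^2}{n}\sum_\alpha\nr{[\Stot^\alpha,\vX]}^2=\frac{2f(0)^2}{n}\sum_\ell\ell(\ell+1)\nr{\vX_\ell}^2$, whose ratio to $\nr{\vX}^2$ depends only on the tensor ranks present, not on the spin blocks. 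For $\vX=\Stot^z$ this retained term is $\frac{4f(0)^2}{n}\nr{\Stot^z}^2=\Theta(n^{1/2})$, while the full form is $\Theta(n)$. The missing factor $n/s^2$ comes from the non-collective within-block pieces $\vPi_s\vsigma^{(i,\alpha)}\vPi_s$, which this proof discards. Beyond that, you leave the coupling of angular and radial relaxation entirely open. Also, importing coherent bounds into $\EDM$ needs $\EC\le\EDM$, which holds only on $\comm{\vH_n}$, and $\Rn^\perp$ is not contained in it.

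\textbf{How the paper closes this.} It splits along $\Pn/\Pn^\perp$ instead. On $\Pn^\perp$ it uses the uniform Davies--Metropolis bound of \cite[Theorem~3.5]{basso2026SpecGapDaviesMeanFieldHeisenberg}. On $\Pn=\bigoplus_s\CB(V_s)\otimes\vI_{W_s}$ every observable is automatically block diagonal in $s$, so your ``non-diagonal'' coupling never arises. The rank decomposition (Fact~\ref{fact:critical-davies-ranks}) then reduces all positive ranks to the single family $\vX_o=\Stot^z\sum_s o_s\vPi_s$.

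For this family, $\EDM(\vX_o)\ge cn\,\mathfrak B(o)$ with $\mathfrak B(o)=\sum_s\varrho(s)|o_s|^2+\sum_s\varrho(s)s^2|o_s-o_{s-1}|^2$. The first term is exactly your angular rate $n/s^2$ relative to $\nr{\vX_o}^2\asymp\sum_s\varrho(s)s^2|o_s|^2$; the second is the radial gradient. Setting $g_s=so_s$ gives $\nr{\vX_o}^2\le\Var_\varrho(g)+|\mathbb E_\varrho g|^2$. The variance is controlled by the scalar Poincaré inequality, using $|g_s-g_{s-1}|^2\le2s^2|o_s-o_{s-1}|^2+2|o_s|^2$. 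The mean is controlled by Cauchy--Schwarz, $|\mathbb E_\varrho g|^2\le\mathbb E_\varrho s^2\sum_s\varrho(s)|o_s|^2\le Cn^{3/2}\mathfrak B(o)$.

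This is the rigorous form of your ``$\Omega(n/\mathbb E J^2)$'' two-scale heuristic. Without the reduction to $\vX_o$ and the energy bound $\EDM(\vX_o)\ge cn\,\mathfrak B(o)$, your proposal does not establish $\gamma(\LDM)=\Omega(n^{-1/2})$.
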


The above four propositions enable a quick proof of our main result as follows.

\begin{proof}[Proof of our main result: \autoref{thm:allT}]
	The main result of \cite[Theorem~1.1]{basso2026SpecGapDaviesMeanFieldHeisenberg} combined with our \autoref{prop:critical-bound} directly yields
	\begin{equation}\label{eq:Davies-gaps}
		\gamma(\LDM)=
		\begin{cases}
			\Theta(1),&0\le\beta<2,\\
            \Theta(n^{-1/2}),&\beta=2,\\
			\Theta(n^{-1}),&\beta>2.
		\end{cases}
	\end{equation}

    For the other KMS-detailed-balanced Lindbladians we proceed as follows:
    
\paragraph{Gap lower bounds.}
	For $0\le\beta\le 2$ using the permutation-symmetric decomposition we get
    \begin{align}
        \gamma(\CL) &=\min\{\gamma(\CL,\Pn),\gamma(\CL,\Pn^{\perp})\}\tag{by \eqref{eq:PDec}}\\
                    &=\Omega(\min\{\gamma(\LDM,\Pn),1\})\tag{by \autoref{prop:commutant-comparison}-\ref{prop:P-complement}}\\
                    &=		
                    \begin{cases}
            			\Omega(1),&0\le\beta<2,\\
                        \Omega(n^{-1/2}),&\beta=2.
            		\end{cases}\tag{by \eqref{eq:Davies-gaps}}
    \end{align}
    Similarly, for $\beta>2$ using the rotation-symmetric decomposition we get
    \begin{align}
        \gamma(\CL) &=\min\{\gamma(\CL,\Rn),\gamma(\CL,\Rn^{\perp})\}\tag{by \eqref{eq:RDec}}\\
                    &=\Omega(\min\{\gamma(\LDM,\Rn),1/n\})\tag{by \autoref{prop:commutant-comparison}-\ref{prop:R-complement}}\\
                    &=\Omega(1/n).\tag{by \eqref{eq:Davies-gaps}}
    \end{align}
    The lower bound for the OFT version follows from our gap transfer result~\Cref{cor:gt-sectors} proved in \Cref{sec:gap-transfer}.
	
	\paragraph{Gap upper bounds.}
	Let $\vM_z=\Stot^z$.  Since $[\vM_z,\vH_n]=0$ and the magnetic eigenvalues are
	symmetric in every spin block,
	\begin{equation}\label{eq:magnetization-variance}
		\Tr(\vrho \vM_z)=0,
		\quad\text{ i.e., }\quad
        \ipc{\vI}{\vM_z}_{\vrho}=0.
	\end{equation}
    We use the characterisation of the spectral gap as the extrema of the Rayleigh quotient 
    \begin{align}\label{eq:gap-rayleigh-quotient}
    \gap(-\CD)=\gap(-\CL)=
    \inf_{\substack{\vX\ne0\\ \Tr(\vrho \vX)=0}}
    \frac{\ip{\vX}{-\CL[\vX]}}{\nr{\vX}^2}
    =\inf_{\Var_{\vrho}(\vX)>0}\frac{\CE(\vX)}{\Var_{\vrho}(\vX)},
    \end{align}
    where
    \begin{align}\label{eq:VarDef}
        \Var_{\vrho}(\vX):=\nr{\vX-\Tr(\vrho \vX)\vI}^{2}.
    \end{align}
    The spectral gap is therefore
    \begin{align}
        \gamma(\LC)&\leq\frac{\EC(\vM_z)}{\Var_{\vrho}(\vM_z)}\nonumber\\
        &\leq\frac{\EDM(\vM_z)}{\Var_{\vrho}(\vM_z)}\tag{by \autoref{prop:commutant-comparison}}\\
        &\leq\frac{4n}{\Var_{\vrho}(\vM_z)}.\tag{by \cite[Lemma~3.22]{basso2026SpecGapDaviesMeanFieldHeisenberg}}\\
        &=
        \left\{
        \renewcommand{\arraystretch}{1.2}
        \begin{array}{@{}l@{\quad}r@{}l@{\qquad}l@{}}
            \bigO{1},        & 0\le{} & \beta<2, & (\text{by \cite[Lemma~3.23]{basso2026SpecGapDaviesMeanFieldHeisenberg}})\\
            \bigO{n^{-1/2}}, &        & \beta=2, & (\text{by \autoref{prop:critical-bound}})\\
            \bigO{n^{-1}},   &        & \beta>2. & (\text{by \cite[Lemma~3.23]{basso2026SpecGapDaviesMeanFieldHeisenberg}})
        \end{array}
        \right.
    \end{align}
    The $\beta = \infty$ case of the coherent generator is covered separately in \Cref{sec:zero-temperature}, whereas the OFT construction is covered in \Cref{sec:gap-transfer}.
\end{proof}

\begin{restatable}[Gap of a Davies generator with general KMS-symmetric \texorpdfstring{$f$}{f}]{rmk}{DaviesGeneralf}\label{prop:davies-general-f}

The gap $\gamma(\LD)$ of a Davies generator with a general KMS-symmetric weight function \eqref{eq:f-KMS} is of the same order as the gap $\gamma(\LDM)$ of the Davies generator with Metropolis weights, i.e., \begin{equation*}
    \gamma(\LD) = \Theta(\gamma(\LDM)).
\end{equation*}
\end{restatable} As we will see, the discussion in \Cref{sec:davies-vs-coherent-commuting-operators} almost immediately implies this result. Hence, although above we wrote the propositions and theorem for the Metropolis--Davies generator $\LDM$, they actually hold for generic Davies generators $\LD$.

\section{Preliminaries}

\subsection{Schur--Weyl decomposition}

Let $s$ and $m=-s,-s+1,\ldots,s$ label the eigenvalues of $\Stot^2 = \sum_\alpha (\Stot^\alpha)^2$ and $\Stot^\rmz$, respectively, and $k$ an integer index labeling degeneracies. It is well known from elementary quantum theory \cite[Chapter 3]{Sakurai2020ModernQuantumMechanics} that the states $\ket{s,m,k}$ form an orthonormal basis for the space $\Hn = (\BC^2)^{\otimes n}$ of a set of $n$ spin-1/2s. Collective spin rotations generated by the $\Stot^\alpha$ may mix states with different $m$ but not with different $k$ or $s$, whereas site permutations may mix states with different $k$ but not $m$ or $s$. This can be compactly expressed by the Schur--Weyl decomposition (\cite[Theorem~2.2]{basso2026SpecGapDaviesMeanFieldHeisenberg}),
\begin{align}
    \Hn & \cong\bigoplus_{s\in\CS_n}V_s\otimes W_s,
    \qquad
    \CS_n=
    \begin{cases}
        \{0,1,\ldots,n/2\},&n\text{ even},\\
        \{1/2,3/2,\ldots,n/2\},&n\text{ odd}.
    \end{cases}\label{eq:schur-weyl}\\
    \dim V_s &= 2s + 1\nonumber\\
    m_{n,s} &:= \dim W_s =\binom n{n/2-s}-\binom n{n/2-s-1}
    =\frac{2s+1}{n/2+s+1}\binom n{n/2-s}\label{eq:multiplicity}
\end{align}
Here $V_s$ is the irreducible spin-$s$ representation of $\SU$, and $W_s$ is the
corresponding irreducible representation of $S_n$, with partition
$(n/2+s,n/2-s)$. To see the formula for $m_{n,s}$, consider that the $\Stot^\rmz=s$ eigenspace has dimension $\binom n{n/2-s}$.
Each spin-$s'$ copy contributes exactly one vector to it when $s' \ge s$, so this
dimension equals $\sum_{s' \ge s}m_{n,s'}$. Subtract the same identity at $s+1$.

On $V_s\otimes W_s$ the quadratic Casimir $\Stot^2$ is $s(s+1)\vI$, so, by letting $\vPi_s$ be the projection onto the spin-$s$ block,
\begin{equation}
    \label{eq:Heisenberg-spectrum}
    \Stot^2\vPi_s=s(s+1)\vPi_s,
    \qquad
    \vH_n\vPi_s = E_s \vPi_s = -\frac{s(s+1)}n\vPi_s,
\end{equation} yielding the spectrum $E_s$ of $\vH_n$, and that $\vH_n$ is scalar on each spin-$s$ block. 

The group actions of collective rotations commute with those of permutations, and the representations belonging to distinct $s$-blocks
are inequivalent for each group. Thus a commuting operator has no matrix elements
between different $s$ blocks, and Schur's lemma makes it scalar on the irreducible
factor on which that group acts. Consequently, we may write $\Rn$ and $\Pn$ from \Cref{def:Rn,def:Pn} as
\begin{equation*}
    \Pn
    =\bigoplus_s\CB(V_s)\otimes \vI_{W_s},\qquad
    \Rn
    =\bigoplus_s \vI_{V_s}\otimes\CB(W_s).
\end{equation*}
From the previous argument we also have
\begin{equation*}
    \Pn\cap\Rn=\operatorname{span}\{\vPi_s:s\in\CS_n\}
\end{equation*}

and that \begin{equation}
    \label{eq:commutant-H_n}
    \Pn,\Rn \subseteq \comm{\vH_n},
\end{equation} as shown earlier. Moreover, the full commutant is also block diagonal, $\comm{\vH_n} = \bigoplus_s\CB(V_s\otimes W_s)$, 
and it need not coincide with either symmetry algebra.

\subsection{Bohr decompositions}

In this section we recall the useful facts about Bohr decompositions, as defined in \autoref{eq:Bohr-decomposition}, that \begin{equation}\label{eq:bohr-identities}
	e^{\ri t\vH_n} \vA_\omega e^{-\ri t\vH_n}=e^{\ri t\omega}\vA_\omega,
	\qquad
	\vrho^a \vA_\omega \vrho^{-a}=e^{-a\beta\omega}\vA_\omega, \qquad (\vA_\omega)^\dagger = (\vA^\dagger)_{-\omega}.
\end{equation} These immediately follow from \Cref{eq:Bohr-decomposition,eq:Gibbs-state}.

We also recall a result from \cite[Appendix C]{basso2026SpecGapDaviesMeanFieldHeisenberg}: \begin{fact}
    \label{fact:pauli-selection-rule}
    For every $i,\alpha$ we have \begin{equation*}
        \vPi_{s'}\vsigma^{(i,\alpha)}\vPi_s=0 \text{ if } |s'-s|>1.
    \end{equation*} From this we also have that every Bohr frequency component of a local Pauli operator satisfies
    \begin{equation*}
        |\omega|\le1.
    \end{equation*}
\end{fact}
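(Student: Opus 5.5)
The plan is to reduce the selection rule to the elementary addition of one spin-$1/2$ to the remaining $n-1$ spins, rather than invoking the full Wigner--Eckart theorem. Fix a site $i$ and write $\Hn=\BC^2_{(i)}\otimes\CH_{n-1}$, where $\CH_{n-1}$ is the space of the other sites. Let $\vS'$ denote the total spin of those $n-1$ sites, so that $\Stot=\tfrac12\vsigma^{(i)}+\vS'$. Apply the Schur--Weyl decomposition \eqref{eq:schur-weyl} to $\CH_{n-1}$. This gives the orthogonal decomposition
\[
\CH_{n-1}=\bigoplus_{r\in\CS_{n-1}}\CH_{n-1}^{(r)},
\]
where $\CH_{n-1}^{(r)}$ is the range of the spin-$r$ projection $\vPi'_r$ for $\vS'$.

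The first key step uses the standard Clebsch--Gordan rule: spin $\tfrac12\otimes r$ equals $(r-\tfrac12)\oplus(r+\tfrac12)$, with only the second summand present when $r=0$. It follows that $\BC^2_{(i)}\otimes\CH_{n-1}^{(r)}$ is contained in the range of $\vPi_{r-1/2}+\vPi_{r+1/2}$. Since $\vPi'_r$ commutes with $\Stot^2$, the projections $\vI\otimes\vPi'_r$ and $\vPi_s$ commute. Hence $\vPi_s=\sum_{r\in\{s\pm1/2\}}\vPi_s(\vI\otimes\vPi'_r)$, so the range of $\vPi_s$ lies in $\BC^2\otimes(\CH^{(s-1/2)}_{n-1}\oplus\CH^{(s+1/2)}_{n-1})$.

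The operator $\vsigma^{(i,\alpha)}$ acts only on the first tensor factor, so it commutes with every $\vI\otimes\vPi'_r$. It therefore maps this subspace into itself. By the previous paragraph, that subspace is contained in the range of $\vPi_{s-1}+\vPi_s+\vPi_{s+1}$. Consequently $\vPi_{s'}\vsigma^{(i,\alpha)}\vPi_s=0$ whenever $|s'-s|>1$. The only point requiring care is the commutation of $\vI\otimes\vPi'_r$ with $\vPi_s$. This holds because $\Stot^2=\tfrac34\vI+\vsigma^{(i)}\cdot\vS'+\vS'^2$, and each of these terms commutes with $\vS'^2$ (here $\vsigma^{(i)}\cdot\vS'$ commutes with $\vS'^2$ since $\vS'^2$ is a Casimir). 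I expect this to be the only mild obstacle.

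For the Bohr-frequency bound, recall from \eqref{eq:Heisenberg-spectrum} that $\vH_n$ is scalar on each block. Hence $\vsigma^{(i,\alpha)}=\sum_{s,s'}\vPi_{s'}\vsigma^{(i,\alpha)}\vPi_s$, and the term with indices $(s',s)$ has Bohr frequency $E_{s'}-E_s$. By the selection rule only $s'\in\{s-1,s,s+1\}$ contribute. The case $s'=s$ gives $\omega=0$. The case $s'=s+1$ gives
\[
|\omega|=\frac{(s+1)(s+2)-s(s+1)}{n}=\frac{2s+2}{n}\le1,
\]
since $s+1\le n/2$. The case $s'=s-1$ is symmetric, giving $|\omega|=2s/n\le1$. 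Therefore every nonzero Bohr component satisfies $|\omega|\le1$.
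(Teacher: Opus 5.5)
Your proof is correct, and your Bohr-frequency step is the same as the paper's: energy transfers $0$, $-2(s+1)/n$ and $2s/n$, bounded using $s+1\le n/2$ and $s\le n/2$.

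For the selection rule itself, the paper gives no argument. It imports it from \cite{basso2026SpecGapDaviesMeanFieldHeisenberg} (Appendix~C). That appendix also computes explicit transition coefficients between neighbouring spin blocks, which the paper uses later in \eqref{eq:zero-raising}.

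Your route is self-contained. You split off site $i$ and note that $\vI\otimes\vPi'_r$ commutes with $\vsigma^{(i,\alpha)}$, and also with $\Stot^2=\tfrac34\vI+\vsigma^{(i)}\cdot\vS'+\vS'^2$. You then need only the rule $\tfrac12\otimes r=(r-\tfrac12)\oplus(r+\tfrac12)$. This is more elementary than the usual vector-operator (Wigner--Eckart) argument.

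It also proves slightly more. You show that $\vPi_s$ has range in $\BC^2\otimes\bigl(\CH^{(s-1/2)}_{n-1}\oplus\CH^{(s+1/2)}_{n-1}\bigr)$, and every operator acting on site $i$ alone preserves this subspace. So the selection rule holds for arbitrary single-site operators, not just Paulis.

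Unlike the computation in the reference, your argument does not give the size of the surviving blocks $\vPi_{s\pm1}\vsigma^{(i,\alpha)}\vPi_s$. Those are not needed for this Fact.
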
 That is, single-site Paulis can only change both the total spin quantum number $s$ and the energy $E_s$ by at most 1. The second part of the statement follows from the first because the eigenvalues of $\vH_n$ depend only on $s$, see \Cref{eq:Heisenberg-spectrum}, and the possible energy transfers are
\[
E_s-E_s=0,\qquad E_{s+1}-E_s=-\frac{2(s+1)}n,
\qquad E_{s-1}-E_s=\frac{2s}n.
\]
For an existing $s\to s+1$ transition one has $s+1\le n/2$, and for an existing
$s\to s-1$ transition one has $s\le n/2$.

\subsection{Coherent Lindblad generator}

Here we introduce further symbols which we will use to handle the ``coherent reweighting'' scheme in \Cref{eq:coherent-generator-transition} and the associated Lindbladian generator.

For a function $h$ on the finite Bohr spectrum, we have \begin{equation*}
    h(\ad_{\vH_n})(\vA)=\sum_\omega h(\omega)\vA(\omega);
\end{equation*} here $\ad_{\vX}[\vY] = [\vX,\vY]$ denotes the commutator. With $f(\nu)$ KMS (see \Cref{eq:f-KMS}) and \begin{equation}\label{eq:fsymm-KMS}
    \widetilde{f}(\nu) = e^{\nu / 4} f(\nu), \qquad \widetilde{f}(-\nu) \overset{\text{Eq.~\eqref{eq:f-KMS}}}= \widetilde{f}^*(\nu)
\end{equation} it is helpful to introduce \begin{equation}
    \label{eq:B-definition}
    \begin{gathered}
        \vL^{(i,\alpha)} = f(\beta\,\ad_{\vH_n})(\vsigma^{(i,\alpha)}), \qquad \text{and} \qquad \vB^{(i,\alpha)} = \vrho^{-1/4}\vL^{(i,\alpha)}\vrho^{1/4} =  \widetilde{f}(\beta\,\ad_{\vH_n})(\vsigma^{(i,\alpha)}); \qquad \text{also,}\\
        (\vB^{(i,\alpha)})^\dagger = \vB^{(i,\alpha)} \qquad
        \vB^{(i,\alpha)}(t) = e^{\ri t\vH_n}\vB^{(i,\alpha)} e^{-\ri t\vH_n}.
    \end{gathered}
\end{equation} These are nothing but the jump operators in the coherent Lindbladian, as discussed near \Cref{eq:coherent-generator-transition}, and its corresponding discriminant \eqref{eq:DiscriminantDef}. Explicitly, for $\vX \in \CB(\Hn)$, \begin{equation}\label{eq:coherent-lindblad}
	\begin{aligned}
	    \LC[\vX]
    	&=\sum_{i,\alpha}(\vL^{(i,\alpha)})^\dagger \vX \vL^{(i,\alpha)}
    	-\frac12\left\{(\vL^{(i,\alpha)})^\dagger \vL^{(i,\alpha)},\vX\right\},\qquad
    	\EC(\vX):=\ip{\vX}{-\LC(\vX)}.\\
        \DC[\vX] &= \sum_{i,\alpha}\vB^{(i,\alpha)} \vX \vB^{(i,\alpha)}
    	-\frac12\{(\vL^{(i,\alpha)})^\dagger \vL^{(i,\alpha)},\vX\}
	\end{aligned}
\end{equation} for the present model $\vH_n$ and all single-site Pauli jumps, we gave the Lindbladian in the Heisenberg picture, and we also made explicit the definition of the Dirichlet form $\EC(\vX)$. In passing, we note note that a possible choice for the functions $f(\nu)$ and $\widetilde{f}(\nu)$ is the Glauber weighting scheme \cite{gilyen2024QGeneralGlauberMetropolis}, $f(\nu) = (1 + e^{\nu / 2})^{-1}$ and $\widetilde{f}(\nu) = f(\nu) e^{\nu/4} = [2\cosh(\nu/4)]^{-1}$. Moreover, $\LC(\vI)=0$ trivially, and $\Tr(\vrho\LC(\vX)) = \Tr(\LC^\dagger[\vrho]\vX) = \Tr(\mathbf{0}\vX) = 0$. The first identity is equivalent to trace preservation since $\LC$ is in the Heisenberg picture, whereas for the second identity we need that the HS-adjoint $\LC^\dagger$ is the Schrödinger-picture Lindbladian, whose fixed point is $\vrho$ by its KMS symmetry.

We will make heavy use of the following Lemma, derived in Ref. \cite[Appendix~A.2, Theorem~3, equation~(A15)]{leng2026AccelQGibbsWithoutQWalks}:
\begin{lemma}\label{lem:form}
	For every observable $\vX$,
	\begin{equation}\label{eq:coh-form}
		\EC(\vX)=\sum_{i,\alpha}\int_{\mathbb R}
		\nr{[\vB^{(i,\alpha)}(t),\vX]}^2\,\rmd\mu_\beta(t),
	\end{equation}
	where $d\mu_\beta(t)=dt/[\beta\cosh(2\pi t/\beta)]$ for $\beta>0$.
\end{lemma}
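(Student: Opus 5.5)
The identity is linear-algebraic, so I would verify it by expanding both sides in an eigenbasis of $\vH_n$ and matching the resulting quadratic forms in $\vX$ coefficient by coefficient. (Alternatively one could simply import \cite[Theorem~3]{leng2026AccelQGibbsWithoutQWalks}; the plan below is a self-contained check.) It suffices to treat a single pair $(i,\alpha)$, so I drop the superscripts and write $\vB$, $\vL$, with $\vL=\vrho^{1/4}\vB\vrho^{-1/4}$ and $\vL^\dagger=\vrho^{-1/4}\vB\vrho^{1/4}$, using $\vB^\dagger=\vB$ from \eqref{eq:B-definition}.

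\emph{Step 1: rewrite the left-hand side.} Pass to the discriminant by setting $\vY=\vrho^{1/4}\vX\vrho^{1/4}$, so that $\EC(\vX)=\ipc{\vY}{-\DC[\vY]}$. With $\vL^\dagger\vL=\vrho^{-1/4}\vB\vrho^{1/2}\vB\vrho^{-1/4}$ this gives
\[
\EC(\vX)=\tfrac12\Tr\bigl(\vY^\dagger\{\vL^\dagger\vL,\vY\}\bigr)-\Tr(\vY^\dagger\vB\vY\vB).
\]
Writing $\vX=\sum_{a,b}X_{ab}\ketbra{a}{b}$ and $\vB=\sum_{a,b}B_{ab}\ketbra{a}{b}$ in an energy eigenbasis, both terms become sesquilinear forms in the entries $X_{ab}$. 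Their coefficients are products $B_{\cdot\cdot}B_{\cdot\cdot}$ weighted by factors $e^{\pm\beta\nu/4}$, where the $\nu$ are Bohr frequencies coming from the $\vrho^{\pm1/4}$ and $\vrho^{1/2}$ conjugations via \eqref{eq:bohr-identities}.

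\emph{Step 2: expand the right-hand side.} Expand
\[
\nr{[\vB(t),\vX]}^2=\Tr\bigl(\vrho^{1/2}[\vB(t),\vX]^\dagger\vrho^{1/2}[\vB(t),\vX]\bigr)
\]
into its four terms. These are two ``diagonal'' terms, of the forms $\vX^\dagger\vB(t)\vB(t)$ and $\vB(t)\vB(t)\vX^\dagger$, and two cross terms of the form $\vB(t)\vX^\dagger\vB(t)\vX$. In the eigenbasis each carries a phase $e^{\ri t(\nu_1+\nu_2)}$, where $\nu_1,\nu_2$ are the Bohr frequencies of the two $\vB$ factors. The measure has Fourier transform
\[
\int e^{\ri\omega t}\,\rmd\mu_\beta(t)=\frac{1}{2\cosh(\beta\omega/4)},
\]
as one checks from $\int\rmd t/\cosh(at)=\pi/a$ and the standard transform of $\operatorname{sech}$; in particular the total mass is $\tfrac12$. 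Integrating therefore replaces each phase by $[2\cosh(\beta(\nu_1+\nu_2)/4)]^{-1}$.

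\emph{Step 3: match coefficients.}
\begin{itemize}
\item \emph{Cross terms.} Both cross terms should reproduce $-\Tr(\vY^\dagger\vB\vY\vB)$. The $\vrho^{1/2}$ sandwiches in the norm produce $e^{\pm\beta(\nu_1+\nu_2)/4}$ relative to the $\vY$-variables, and $e^{x}+e^{-x}=2\cosh x$ cancels the denominator.
\item \emph{Diagonal terms.} These should reproduce the anticommutator term. Here I would symmetrize over the exchange $\nu_1\leftrightarrow\nu_2$, using $\vB^\dagger=\vB$, i.e. $B_{ab}=\overline{B_{ba}}$, which is the symmetry $\widetilde f(-\nu)=\widetilde f^*(\nu)$. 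The weights $e^{\beta(\nu_1-\nu_2)/4}$ arising from $\vrho^{-1/4}\vB\vrho^{1/2}\vB\vrho^{-1/4}$ should then combine with the $\vrho^{1/2}$ weights in the same way, again cancelling the $\cosh$.
\end{itemize}

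The main obstacle is this bookkeeping of exponents in the diagonal (anticommutator) term. If the scalar identity does not close term by term, I would fall back on a different order of computation: first integrate out $t$ at the operator level, using $\int\vB(t)\,\cdot\,\vB(t)\,\rmd\mu_\beta$ as a ``$\cosh^{-1}(\beta\,\ad_{\vH_n}/4)$-smoothed'' product, and only then compare with $\DC$ written as a function of the commuting superoperators $\vrho^{1/4}(\cdot)\vrho^{-1/4}$ and $\vrho^{-1/4}(\cdot)\vrho^{1/4}$. Summing the single-pair identity over $(i,\alpha)$ then gives \eqref{eq:coh-form}.
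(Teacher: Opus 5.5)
There is a genuine gap, at the step you flagged yourself. The reduction ``it suffices to treat a single pair $(i,\alpha)$'' is not valid, and the $\cosh$ does not cancel in the diagonal terms.

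Your cross terms are fine pair by pair. After swapping $\nu_1\leftrightarrow\nu_2$ in one of them, both multiply the same trace $\Tr(\vY^\dagger\vsigma_{\nu_1}\vY\vsigma_{\nu_2})$ with factors $e^{\pm\beta(\nu_1+\nu_2)/4}$, and $2\cosh$ cancels.

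The diagonal terms have no such pairing. Write $b_\nu:=\widetilde f(\beta\nu)$ and, from your Step~1, $\vD:=\vL^\dagger\vL=\sum_{\nu_1,\nu_2}b_{\nu_1}b_{\nu_2}e^{\beta(\nu_1-\nu_2)/4}\vsigma_{\nu_1}\vsigma_{\nu_2}$. Then $\int\nr{\vB(t)\vX}^2\,\rmd\mu_\beta=\Tr(\vY^\dagger\vM\vY)$ with
\[
\vM=\sum_{\nu_1,\nu_2}\frac{b_{\nu_1}b_{\nu_2}\,e^{\beta(\nu_1-\nu_2)/4}}{2\cosh\bigl(\beta(\nu_1+\nu_2)/4\bigr)}\,\vsigma_{\nu_1}\vsigma_{\nu_2}
=\frac12\sum_{\xi}\frac{\vD_\xi}{\cosh(\beta\xi/4)}=:\frac12\CC_c[\vD],
\]
because $\vsigma_{\nu_1}\vsigma_{\nu_2}$ has Bohr frequency $\xi=\nu_1+\nu_2$. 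The symmetrisation $\nu_1\leftrightarrow\nu_2$ only shows that $\int\nr{\vX\vB(t)}^2\,\rmd\mu_\beta=\Tr(\vY^\dagger\vY\vM)$ with the same $\vM$; it produces no factor $2\cosh$.

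So the integral equals $\tfrac12\Tr(\vY^\dagger\{\CC_c[\vD],\vY\})-\Tr(\vY^\dagger\vB\vY\vB)$, which is the paper's \Cref{lem:gt-dirichlet}. Testing with rank-one $\vY$ shows this agrees with your Step~1 expression for all $\vY$ only if $\vD_\xi=0$ for every $\xi\neq0$.

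For a single pair this fails. Using $\vsigma^2=\vI$ and \Cref{fact:pauli-selection-rule},
\[
\vPi_{s+1}\vD^{(i,\alpha)}\vPi_s=f(0)\bigl(f(\beta\omega)-f^*(-\beta\omega)\bigr)\,\vPi_{s+1}\vsigma^{(i,\alpha)}\vPi_{s+1}\vsigma^{(i,\alpha)}\vPi_s,\qquad\omega=E_{s+1}-E_s .
\]
The scalar equals $f(0)(e^{-\beta\omega/2}-1)f^*(-\beta\omega)\neq0$, and the operator is already nonzero for $n=3$, $s=\tfrac12$, $\alpha=\rmz$. For the same reason your Step~1 is not even the single-pair form: $\vrho^{1/4}\vD^{(i,\alpha)}\vrho^{-1/4}\neq\vD^{(i,\alpha)}$, so the single-pair generator without coherent term is not KMS-detailed-balanced. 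Your fallback of integrating out $t$ at the operator level cannot help, because the single-pair identity you would be proving is false.

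The missing ingredient is structural, not bookkeeping. You must keep the sum over $\alpha$ (for fixed $i$ this already suffices) and use that $\vD_{\mathrm{tot}}:=\sum_{i,\alpha}(\vL^{(i,\alpha)})^\dagger\vL^{(i,\alpha)}$ is invariant under collective rotations. This holds because the filtered triple $(\vL^{(i,\alpha)})_\alpha$ rotates like the Pauli triple, so $\vD_{\mathrm{tot}}\in\Rn\subseteq\comm{\vH_n}$ (\Cref{lem:zero-coherent-term}). The nonzero-frequency parts of the individual $\vD^{(i,\alpha)}$ then cancel in the sum, giving $\CC_c[\vD_{\mathrm{tot}}]=\vD_{\mathrm{tot}}=\vrho^{1/4}\vD_{\mathrm{tot}}\vrho^{-1/4}$. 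This simultaneously justifies your Step~1 and closes the diagonal matching.

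Equivalently, as in the paper's appendix, prove the per-jump identity for the per-jump generator \emph{including} its coherent term $\ri[\CC_s[\vD],\cdot]$. Its discriminant has $\CC_c[\vD]$ in the anticommutator (\Cref{lem:gt-coherent}), and these coherent terms sum to zero over $(i,\alpha)$.
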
 For the integration measure we also have \begin{equation}\label{eq:fourier}
    \int_{\mathbb R}e^{\ri t\omega}\,\rmd\mu_\beta(t) = \frac1{2\cosh(\beta\omega/4)}%
	\qquad \mu_\beta(\mathbb R)=\frac12.
\end{equation}

In the $\beta = 0$ (infinite temperature) case, we may write \begin{equation*}
    \mu_0=\frac12\delta_0
\end{equation*} as a weak limit. With this, \Cref{lem:form,eq:fourier} automatically hold, and we also have $\vL^{(i,\alpha)}=\vB^{(i,\alpha)}
= f(0) \vsigma^{(i,\alpha)}$ \begin{equation*}
	\EC(\vX)=\frac{f(0)^2}2 \sum_{i,\alpha}\nr{[\vsigma_i^\alpha,\vX]}^2.
\end{equation*} Note that $f(0) = f^*(0) \in \BR$ by \Cref{eq:f-KMS}, so no absolute value is needed. Also, because at $\beta=0$ the jump operators $\vL^{(i,\alpha)} = f(0) \vsigma^{(i,\alpha)}$ are scaled single-site Paulis, every Pauli string $\vP$ of weight $k$ is an eigenvector of $\LC$ with eigenvalue $-4f(0)^2k$; the weight $k$ is the number of sites where the string acts non-trivially. Hence, \begin{equation}\label{eq:beta-zero-gap}
	\gap(-\LC)=4f(0)^2\qquad(\beta=0).
\end{equation}

\section{Comparison of the coherent and Davies generators for operators commuting with the Hamiltonian}\label{sec:davies-vs-coherent-commuting-operators}

For observables commuting with $\vH_n$, the coherent and Davies Dirichlet forms are sums of the same squared commutator norms, with different scalar weights. We first establish this decomposition, then compare the weights.

With the all single-site Pauli operators as before, the Heisenberg-picture Davies generator with Metropolis weights is
\[
\LDM(\vX)=\sum_{i,\alpha,\omega}f^2_{\mathrm M}(\beta \omega)
\left((\vsigma^{(i,\alpha)}_\omega)^\dagger \vX \vsigma^{(i,\alpha)}_\omega
-\frac12\{(\vsigma^{(i,\alpha)}_\omega)^\dagger \vsigma^{(i,\alpha)}_\omega,\vX\}\right).
\]
Its KMS Dirichlet form (\cite[Lemma~2.6, equation~(2.14)]{basso2026SpecGapDaviesMeanFieldHeisenberg}) is \begin{equation}\label{eq:davies-bohr-form}
    \EDM(\vX):=\ip{\vX}{-\LDM(\vX)}
    =\frac12\sum_{i,\alpha,\omega}
    \abs*{\widetilde{f}_{\mathrm M}(\beta \omega)}^2
    \nr{[\vsigma^{(i,\alpha)}_\omega,\vX]}^2
    =\frac12\sum_{i,\alpha,\omega}
    e^{-\beta\abs*{\omega}/2}
    \nr{[\vsigma^{(i,\alpha)}_\omega,\vX]}^2.
\end{equation}

\begin{lemma}[Bohr decomposition on the commutant of $\vH_n$]\label{lem:bohr-diagonalization}
For every $n\ge2$, $\beta\ge0$, and $\vX \in \comm{\vH_n}$,
\begin{equation}\label{eq:coh-bohr-form}
    \EC(\vX)=\frac12\sum_{i,\alpha,\omega}
    \abs*{\widetilde{f}(\beta \omega)}^2
    \nr{[\vsigma^{(i,\alpha)}_\omega,\vX]}^2.
\end{equation}
\end{lemma}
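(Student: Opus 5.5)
Our starting point is \Cref{lem:form}, which writes $\EC(\vX)$ as an integral over $t$ of $\nr{[\vB^{(i,\alpha)}(t),\vX]}^2$. We then expand everything in Bohr components. By \eqref{eq:B-definition} and \eqref{eq:bohr-identities},
\[
\vB^{(i,\alpha)}(t)=\sum_{\omega}\widetilde f(\beta\omega)\,e^{\ri t\omega}\,\vsigma^{(i,\alpha)}_\omega .
\]
Since $\vX\in\comm{\vH_n}$, the commutator respects the Bohr decomposition, i.e.\ $[\vsigma_\omega,\vX]=[\vsigma,\vX]_\omega$, because $\vX$ commutes with the spectral projectors of $\vH_n$. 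Hence
\[
[\vB^{(i,\alpha)}(t),\vX]=\sum_\omega \widetilde f(\beta\omega)e^{\ri t\omega}\vC_\omega,
\qquad \vC_\omega:=[\vsigma^{(i,\alpha)}_\omega,\vX],
\]
where $\vC_\omega$ is itself the $\omega$-Bohr component of $\vC=[\vsigma^{(i,\alpha)},\vX]$.

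The key step is to show that distinct Bohr components are orthogonal in the KMS inner product. Since $\vrho$ is a function of $\vH_n$, we have $\vrho^{1/2}\vC_\omega\vrho^{1/2}=e^{-\beta\omega/2}\,\vrho\,\vC_\omega$ (up to the scalar factors from \eqref{eq:bohr-identities}), which is again an $\omega$-component. The Hilbert--Schmidt inner product of an $\omega$-component with an $\omega'$-component vanishes unless $\omega=\omega'$, because $\Tr(\vA_\omega^\dagger\vB_{\omega'})$ is a sum over blocks $P_{E'}\cdots P_E$ with $E'-E=\omega$ on one side and $E'-E=\omega'$ on the other. The squared norm therefore collapses to a diagonal sum:
\[
\nr{[\vB^{(i,\alpha)}(t),\vX]}^2=\sum_\omega |\widetilde f(\beta\omega)|^2\,\nr{\vC_\omega}^2 ,
\]
and this expression is independent of $t$, since the phases $e^{\ri t\omega}$ have modulus one and the cross terms are gone.

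Integrating over $t$ then only contributes the total mass $\mu_\beta(\mathbb R)=\tfrac12$ from \eqref{eq:fourier}, which gives exactly \eqref{eq:coh-bohr-form}. The case $\beta=0$ follows in the same way, using $\mu_0=\tfrac12\delta_0$.

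The only delicate point is the commutation $[\vsigma_\omega,\vX]=([\vsigma,\vX])_\omega$ for $\vX\in\comm{\vH_n}$, together with the resulting off-diagonal cancellation. Both follow once we write $\vA_\omega=\sum_{E'-E=\omega}P_{E'}\vA P_E$ and use $P_E\vX=\vX P_E$. Once this is in place, the rest is bookkeeping.
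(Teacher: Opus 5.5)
Your proposal is correct and follows essentially the same route as the paper. You expand $\vB^{(i,\alpha)}(t)$ in Bohr components and observe that for $\vX\in\comm{\vH_n}$ each $[\vsigma^{(i,\alpha)}_\omega,\vX]$ stays in the $\omega$-Bohr subspace; the paper phrases this as $[\vH_n,[\vsigma^{(i,\alpha)}_\omega,\vX]]=\omega[\vsigma^{(i,\alpha)}_\omega,\vX]$. You then use KMS-orthogonality of distinct Bohr subspaces to remove the cross terms and the phases, and integrate against $\mu_\beta$, whose total mass is $\tfrac12$.

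There is one harmless sign slip: the identity should read $\vrho^{1/2}\vC_\omega\vrho^{1/2}=e^{-\beta\omega/2}\vC_\omega\vrho=e^{\beta\omega/2}\vrho\vC_\omega$. The argument only uses that this operator is still an $\omega$-component, so nothing breaks.
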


\begin{proof}
If $[\vH_n,\vX]=0$, then
\[
[\vH_n,[\vsigma^{(i,\alpha)}_\omega,\vX]]
=[[\vH_n,\vsigma^{(i,\alpha)}_\omega],\vX]
+[\vsigma^{(i,\alpha)}_\omega,[\vH_n,\vX]]
=\omega[\vsigma^{(i,\alpha)}_\omega,\vX].
\]
Thus the commutator remains in the $\omega$-Bohr subspace. What we need next is that distinct Bohr subspaces are KMS-orthogonal: consider that if
$[\vH_n,\vY_\omega]=\omega \vY_\omega$ and
$[\vH_n,\vZ_\nu]=\nu \vZ_\nu$, then
\[
\ip{\vY_\omega}{\vZ_\nu}
=\sum_{E,E'}\sqrt{p_Ep_{E'}}\,
\Tr\bigl((\vP_E\vY_\omega \vP_{E'})^\dagger \vP_E\vZ_\nu \vP_{E'}\bigr) \propto \delta_{\omega,\nu},
\qquad p_E=Z_{n,\beta}^{-1}e^{-\beta E}.
\]
The proportionality to $\delta_{\omega,\nu}$ is because $\vP_E\vY_\omega \vP_{E'}$ is zero unless $E-E' = \omega$, and $\vP_E\vZ_\nu \vP_{E'}$ is zero unless $E-E' = \nu$, hence the product may only be non-zero if $\omega=\nu$.
Hence
\begin{align*}
	\nr{[\vB^{(i,\alpha)}(t),\vX]}^2
	&=\left\|\sum_\omega \widetilde{f}(\beta \omega) e^{it\omega}
	[\vsigma^{(i,\alpha)}_\omega,\vX]\right\|_{\vrho}^2\\
	&=\sum_\omega \abs*{\widetilde{f}(\beta \omega)}^2
	\nr{[\vsigma^{(i,\alpha)}_\omega,\vX]}^2.
\end{align*}
Plugging this back into \eqref{eq:coh-form} and using
$\mu_\beta(\mathbb R)=1/2$ from \eqref{eq:fourier}
proves \eqref{eq:coh-bohr-form}.
\end{proof}

We are now in a position to prove \Cref{prop:commutant-comparison} from the Introduction by bounding the ratio of the two weights in \Cref{eq:coh-bohr-form,eq:davies-bohr-form}
\commutantComparison*
\begin{proof}
We have $\widetilde{f}(\beta \omega) = e^{\beta \omega / 4} f(\beta \omega) = e^{-\beta \omega / 4} f^*(-\beta \omega) = \widetilde{f}^*(-\beta \omega)$ as per \Cref{eq:f-KMS,eq:fsymm-KMS}. Taking absolute values and squaring, we also have $\abs*{\widetilde{f}(\beta \omega)}^2 = e^{-\beta \abs*{\omega} / 2} \abs*{f(-\beta \abs*{\omega})}^2$, an even function; from this \begin{equation*}
    c_\star e^{-\beta \abs*{\omega} / 2} \leq \abs*{\widetilde{f}(\beta \omega)}^2 \leq e^{-\beta \abs*{\omega} / 2}
\end{equation*} because of the normalization in \Cref{eq:f-KMS}.

Comparing \eqref{eq:coh-bohr-form} with the Davies form \eqref{eq:davies-bohr-form} term by term gives
\begin{equation}\label{eq:davies-comparison}
	c_\star \EDM(\vX) \le \EC(\vX) \le \EDM(\vX)
	\qquad\text{if }\vX \in \comm{\vH_n}.
\end{equation}
In particular, this holds on $\Pn$ and $\Rn$ by \Cref{eq:RnInCHn,eq:PnInCHn}.
\end{proof}

We can almost immediately see now how bound the gap of the Davies generator with general KMS-symmetric weights. \DaviesGeneralf*
\begin{proof}
    Replacing the weights in \Cref{eq:davies-bohr-form}: \begin{equation*}
        \ED(\vX):=\ip{\vX}{-\LD(\vX)}
        =\frac12\sum_{i,\alpha,\omega}
        \abs*{\widetilde{f}(\beta \omega)}^2
        \nr{[\vsigma^{(i,\alpha)}_\omega,\vX]}^2
    \end{equation*} This is of the same form as \Cref{eq:coh-bohr-form} but now $\vX \in \CB(\Hn)$ is not restricted to $\comm{\Hn}$. Hence the term-by-term comparison in the proof of \Cref{prop:commutant-comparison} also applies here with the same constants, giving \begin{equation}
	c_\star \EDM(\vX) \le \ED(\vX) \le \EDM(\vX)
	\qquad\forall \vX \in \CB(\Hn) \qedhere.
    \end{equation}
\end{proof}

\begin{rmk}[Role of the commutation assumption]
The commutation assumption is essential for the diagonalization of the coherent Dirichlet form.  If
$\vX=\sum_\nu \vX_\nu$ is arbitrary, then
\[
[\vH_n,[\vsigma^{(i,\alpha)}_\omega,\vX_\nu]]
=(\omega+\nu)[\vsigma^{(i,\alpha)}_\omega,\vX_\nu],
\]
so the cross terms in \Cref{eq:coh-bohr-form} need not vanish. When $[\vH_n,\vX]=0$, only $\nu=0$ is present, which gives the termwise comparison in \eqref{eq:davies-comparison}.
\end{rmk}

\section{Gap bound on the orthogonal complement of the permutation-invariant sector via \texorpdfstring{$S_n$}{Sn} symmetry}
\label{sec:high-temp}
In this section we prove the high-temperature bound on the complement of the permutation-invariant sector.

\CperpHighT*

By \eqref{eq:PDec}, together with \Cref{prop:commutant-comparison}, this is the only missing input for the gap lower bound in the regime $0\leq\beta\le2$. By~\Cref{cor:reductions}, $\Pn^\perp$ is invariant under $\LC$. This Hilbert--Schmidt complement coincides with the KMS-orthogonal complement of $\Pn$ (see~\Cref{footnote:kms-and-hilbert-coincide}). Since $\LC$ is KMS-self-adjoint and $\ip{\vX}{-\LC[\vX]}=\EC(\vX)\ge0$ for all $\vX \in \CB(\Hn)$ by \Cref{lem:form}, the restriction of $-\LC$ to the invariant subspace $\Pn^\perp$ is KMS-self-adjoint and positive semidefinite.
As $\vI\in\Pn$, which is the unique zero-mode of $\LC$, we have that $-\LC$ is positive definite on the restriction to $\Pn^\perp$, and $\gamma(\LC,\Pn^\perp)$ is simply its smallest eigenvalue. Moreover, since $\vI\in\Pn$, we also have
$\Tr(\vrho\vX)=\ip{\vI}{\vX}=0~\forall \vX\in\Pn^\perp$ and hence $\Var_{\vrho}(\vX)=\nr{\vX}^2$. The variational characterization of $\gamma(\LC,\Pn^\perp)$ then reads,
\[
	\gamma(\LC,\Pn^\perp)=\min_{\substack{\vX\in\Pn^\perp \\ \vX \neq \mathbf{0}}}\frac{\EC(\vX)}{\nr{\vX}^2}.
\]
To prove \Cref{prop:P-complement} it therefore suffices to find a constant $c>0$, independent of $n$, such that
\begin{equation}\label{eq:Cperp-form-bound}
	\EC(\vX)\ \ge\ c\,\nr{\vX}^2\qquad\text{for all }\vX\in\Pn^\perp .
\end{equation}
The following theorem details exactly this bound with a KMS-filter $f$ whose symmetrized filter $\widetilde{f}$ has no zeros on $[-\beta, \beta]$; this is guaranteed by the normalization condition in \Cref{eq:f-KMS}.
\begin{theorem}[Gap bound on the orthogonal complement of the permutation-invariant sector]\label{thm:high-temp}
	Consider the inverse temperature $\beta \geq 0$. Let $f$ be a KMS filter and let $\widetilde{f}$ be its symmetrized counterpart as per \Cref{eq:f-KMS,eq:fsymm-KMS}. Define
	\begin{equation}\label{eq:band-constants}
		m_f\coloneqq\inf_{\labs{\nu}\le\beta}\labs{\widetilde f(\nu)},\qquad
		N_f\coloneqq\sup_{\labs{\nu}\le\beta}\labs{f(\nu)} .
	\end{equation}
	Then for every $n\ge2$ and every $\vX\in\Pn^\perp$,
	\begin{equation}\label{eq:high-temp-main}
		\EC(\vX)\ \ge\ \gamma_f\,\nr{\vX}^2,
		\qquad
		\gamma_f\coloneqq\frac{n}{n-1}\cdot\frac{m_f^4}{54\,N_f^2}\ \ge\ \frac{m_f^4}{54\,N_f^2}.
	\end{equation}
	Consequently, $\gamma(\LC,\Pn^\perp)\ge\gamma_f$.
\end{theorem}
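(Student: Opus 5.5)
The plan is to compare $\EC$ on $\Pn^\perp$ with the Dirichlet form of the \emph{interchange process}, and to use Aldous' spectral gap theorem~\cite{caputo2010AldousSpectralGapConj} for the latter. Since $\vrho\in\Pn$ commutes with every swap $\vU_{(ij)}$, conjugation by $\vU_{(ij)}$ is a KMS isometry. Consider the superoperator $\CK[\vX]:=\sum_{i<j}(\vX-\vU_{(ij)}\vX\vU_{(ij)})$. It is the generator of the interchange process on the complete graph $K_n$, acting via the representation $\pi\mapsto\Delta_{\bm e,\pi}$. Its kernel is exactly $\Pn$. By Aldous' theorem, every nontrivial irreducible constituent has eigenvalue at least the gap of the single-particle random walk on $K_n$, which is $n$. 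Using $\nr{[\vU_{(ij)},\vX]}=\nr{\vX-\vU_{(ij)}\vX\vU_{(ij)}}$, the first step is
\[
\sum_{i<j}\nr{[\vU_{(ij)},\vX]}^2 = 2\ip{\vX}{\CK[\vX]} \ \ge\ 2n\,\nr{\vX}^2\qquad(\vX\in\Pn^\perp).
\]

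Second, I would bound each $\nr{[\vU_{(ij)},\vX]}^2$ by single-site Pauli commutators. By \eqref{eq:UijPauliDec},
\[
[\vU_{(ij)},\vX]=\tfrac12\sum_\alpha\bigl(\vsigma^{(i,\alpha)}[\vsigma^{(j,\alpha)},\vX]+[\vsigma^{(i,\alpha)},\vX]\vsigma^{(j,\alpha)}\bigr).
\]
Cauchy--Schwarz over the six terms then gives a bound by $\tfrac{6}{4}\sum_\alpha(\ldots)$. Left and right multiplication by a Pauli is not a KMS isometry, so I would move $\vrho^{1/4}$ through it. Since $\vrho^{1/4}\vsigma\vrho^{-1/4}$ has Bohr components weighted by $e^{-\beta\omega/4}$ with $|\omega|\le1$ (\Cref{fact:pauli-selection-rule}), this costs a bounded factor. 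Summing over pairs $j\ne i$ produces the factor $n-1$. This is the source of the $\tfrac{n}{n-1}$ in $\gamma_f$: we obtain $2n\nr{\vX}^2\lesssim (n-1)\sum_{i,\alpha}\nr{[\vsigma^{(i,\alpha)},\vX]}^2$ up to constants.

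Third, and this is the step I expect to be the main obstacle, I would bound $\nr{[\vsigma^{(i,\alpha)},\vX]}^2$ by the coherent form $\int\nr{[\vB^{(i,\alpha)}(t),\vX]}^2\rmd\mu_\beta(t)$ from \Cref{lem:form}. Here $\vX$ is arbitrary, not in $\comm{\vH_n}$, so the Bohr-diagonalisation of \Cref{lem:bohr-diagonalization} is unavailable. The structure to exploit is that $\vsigma^{(i,\alpha)}$ has Bohr frequencies in $[-1,1]$, so $\beta\omega\in[-\beta,\beta]$, where $|\widetilde f|\ge m_f>0$. Hence $\vsigma=g(\beta\,\ad_{\vH_n})\vB$ with $g=1/\widetilde f$ on this band.

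My first attempt would avoid a Fourier-multiplier inversion. Instead I would pair against the filtered operator itself, writing $\nr{[\vsigma,\vX]}^2$ as a sesquilinear expression in $[\vB,\vX]$ and $[\vL^\dagger,\vX]$-type terms. I would use $\vL=\vrho^{1/4}\vB\vrho^{-1/4}$ with $|f|\le N_f$ on the band to control the dual factor, and $m_f$ twice on each side, which would explain the $m_f^4/N_f^2$ shape. The time average against $\rmd\mu_\beta$, whose Fourier transform $1/(2\cosh(\beta\omega/4))$ is bounded below on the band by \eqref{eq:fourier}, would then convert the pointwise-in-$t$ commutators into $\EC(\vX)$.

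Collecting constants, namely $3$ from $\alpha$, the $6$-term Cauchy--Schwarz, and the factor $2$, should give $\EC(\vX)\ge \frac{n}{n-1}\frac{m_f^4}{54N_f^2}\nr{\vX}^2$. The consequence $\gamma(\LC,\Pn^\perp)\ge\gamma_f$ then follows from the variational formula above.
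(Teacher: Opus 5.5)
Your Step~1 is the paper's first ingredient: the interchange generator has gap $n$ on $\Pn^\perp$ by Aldous' theorem, and $\nr{[\vU_{(ij)},\vX]}=\nr{\vX-\Delta_{\ve,(ij)}[\vX]}$ because $\vU_{(ij)}$ is a unitary commuting with $\vrho$. Step~2 is fine up to $\beta$-dependent constants, via the KMS H\"older inequality with $\|\vrho^{\pm1/4}\vsigma^{(i,\alpha)}\vrho^{\mp1/4}\|_\infty\le 3e^{\beta/4}$.

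\textbf{The gap is Step~3.} It carries all the difficulty and is only gestured at. For $\vX\notin\comm{\vH_n}$, \Cref{lem:gt-dirichlet} gives
\[
\int_\BR\nr{[\vB^{(i,\alpha)}(t),\vX]}^2\,\rmd\mu_\beta(t)=\sum_{\omega,\omega'}\frac{\widetilde f^*(\beta\omega)\,\widetilde f(\beta\omega')}{2\cosh(\beta(\omega-\omega')/4)}\,G_{\omega\omega'},\qquad \nr{[\vsigma^{(i,\alpha)},\vX]}^2=\sum_{\omega,\omega'}G_{\omega\omega'},
\]
where $G_{\omega\omega'}=\ip{[\vsigma^{(i,\alpha)}_\omega,\vX]}{[\vsigma^{(i,\alpha)}_{\omega'},\vX]}$ has nonzero off-diagonal entries.

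\begin{itemize}
\item That $1/(2\cosh(\beta\omega/4))$ is bounded below on the band says nothing about these cross terms. You would need the kernel matrix to dominate $c$ times the all-ones matrix against such Gram matrices, uniformly in $n$, on the Bohr grid $\{2k/n\}$, which has spacing $2/n$ and fills $[-1,1]$.
\item The natural route writes $\vsigma^{(i,\alpha)}=\int\phi(t)\vB^{(i,\alpha)}(t)\,\rmd t$ with $\hat\phi=1/\widetilde f(\beta\,\cdot)$ on the grid, then applies Cauchy--Schwarz against $\rmd\mu_\beta=\eta\,\rmd t$, where $\eta(t)=1/[\beta\cosh(2\pi t/\beta)]$. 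This costs $\int|\phi|^2/\eta$.
\item That cost is unbounded in $n$ for the filters of interest. Finiteness forces $\hat\phi$ to be analytic in the strip $|\mathrm{Im}\,\omega|<\pi/\beta$ and to vanish at infinity. A bound uniform in $n$ would therefore, by a weak-limit argument, give such an extension of $1/\widetilde f(\beta\omega)$ restricted to $[-1,1]$.
\item That extension is impossible for Metropolis, where the function is $e^{\beta|\omega|/4}$ with a kink at $0$. It is also impossible for Glauber, where it is $2\cosh(\beta\omega/4)$, whose unique analytic continuation does not decay.
\end{itemize}
So Step~3 cannot be obtained this way with an $n$-independent constant. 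Any proof would have to exploit finer structure of the realizable Gram matrices, and you do not supply it. Your alternative (``pairing against the filtered operator'', ``$m_f$ twice on each side'') is not an argument. The constant $\frac{n}{n-1}\frac{m_f^4}{54N_f^2}$ is reverse-engineered rather than derived.

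\textbf{What the paper does instead.} It never compares an unfiltered, non-$\vH_n$-commuting operator with filtered time-averaged ones. It builds from the filtered jumps the pair operator
\[
\vK_{ij}=\sum_\alpha(\vB^\alpha_{ij,+})^2,\qquad \vB^\alpha_{ij,+}=\frac{\vB^{(i,\alpha)}+\vB^{(j,\alpha)}}{\sqrt2},
\]
which is the filtered version of $\sum_\alpha(\vA^\alpha_{ij,+})^2=4\vPi^+_{ij}$.
\begin{itemize}
\item \emph{Upper bound.} Since $\vK_{ij}$ commutes with $\vH_n$, one has $\vK_{ij}=\sum_\alpha\vB^\alpha_{ij,+}(t)^2$ for every $t$. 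The Leibniz rule and KMS H\"older then bound $\nr{[\vK_{ij},\vX]}^2$ by the commutators with $\vB^{(i,\alpha)}(t)$ and $\vB^{(j,\alpha)}(t)$ at each fixed $t$, with modular-conjugate norms controlled by $N_f$. Integrating the $t$-independent left side against $\mu_\beta$ gives $\nr{[\vK_{ij},\vX]}^2\le432N_f^2(\CE_i(\vX)+\CE_j(\vX))$, with no filter inversion.
\item \emph{Lower bound.} $\vK_{ij}$ still separates the swap sectors. It annihilates the antisymmetric sector, because $\vA^\alpha_{ij,+}\vPi^-_{ij}=0$ and $\vPi^-_{ij}$ commutes with $\vH_n$. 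It is at least $4m_f^2$ on the symmetric sector by the fixed-energy comparison, since the $E_s$ are distinct. Hence $\nr{[\vK_{ij},\vX]}^2\ge16m_f^4\nr{\vX^-_{ij}}^2$.
\end{itemize}
Combined with your Step~1, this yields exactly $\gamma_f$.
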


In order to prove~\Cref{thm:high-temp} we will compare $\EC$ to the \emph{interchange Lindbladian} which is defined as
\begin{equation*}
    -\CL_{\mathrm{IP}}[\cdot]\coloneqq \sum_{1 \le i < j \le n}\CI[\cdot] - \Delta_{\ve,(ij)}[\cdot], \quad \Delta_{\ve,(ij)}[\cdot] = \vU_{(ij)}[\cdot]\vU_{(ij)}^\dagger,
\end{equation*}
where $\vU_{(ij)}$ is the swap operator from~\Cref{def:Pn}. By Aldous' spectral gap theorem~\cite{caputo2010AldousSpectralGapConj} we will show that $-\CL_{\mathrm{IP}}$ has a spectral gap $n$ on $\Pn^\perp$ (see~\Cref{sec:interchange-lindbladian}). The more precise statement is the following.
\begin{restatable}[Spectral gap of the interchange Lindbladian]{lemma}{lemmIP}\label{lem:IPCperp}
	The map $-\CL_{\mathrm{IP}}$ is KMS-self-adjoint with kernel $\Pn$, and for every $\vX\in\CB(\Hn)$
	\begin{equation}\label{eq:LIPNormForm}
		\ip{\vX}{-\CL_{\mathrm{IP}}[\vX]}=\frac12\sum_{i<j}\nr{\vX-\Delta_{\ve,(ij)}[\vX]}^2 .
	\end{equation}
	The KMS-orthogonal complement $\Pn^\perp$, which coincides with the Hilbert--Schmidt one (see~\Cref{footnote:kms-and-hilbert-coincide}),
	is invariant under $\CL_{\mathrm{IP}}$, and every eigenvalue of $-\CL_{\mathrm{IP}}$ on $\Pn^\perp$ is at least $n$. Equivalently,
	\begin{equation}\label{eq:IP-gap}
		\ip{\vX}{-\CL_{\mathrm{IP}}[\vX]}\ \ge\ n\,\nr{\vX}^2\qquad\text{for all }\vX\in\Pn^\perp .
	\end{equation}
\end{restatable}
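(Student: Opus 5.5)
The plan is to treat $\CL_{\mathrm{IP}}$ as the generator of the classical interchange process lifted to operators, and to import Aldous' theorem through the Schur--Weyl decomposition. First, the structural claims. Since $\vrho^{\pm1/2}\in\Pn$ commutes with every $\vU_{(ij)}$, the map $\Delta_{\ve,(ij)}$ commutes with left and right multiplication by powers of $\vrho$. Because $\vU_{(ij)}$ is a self-adjoint unitary, $\Delta_{\ve,(ij)}$ is a Hilbert--Schmidt-self-adjoint involution. Hence it is also KMS-self-adjoint and KMS-unitary, so $\nr{\Delta_{\ve,(ij)}[\vX]}=\nr{\vX}$. Expanding $\nr{\vX-\Delta_{\ve,(ij)}[\vX]}^2=2\nr{\vX}^2-2\,\mathrm{Re}\,\ip{\vX}{\Delta_{\ve,(ij)}[\vX]}$, and using that $\ip{\vX}{\Delta_{\ve,(ij)}[\vX]}$ is real by self-adjointness, gives \eqref{eq:LIPNormForm}. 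The kernel is then exactly the set of $\vX$ fixed by all transpositions, which is $\Pn$ by \Cref{def:Pn}. Invariance of $\Pn^\perp$ follows from self-adjointness together with the invariance of $\Pn$, since $\Pn$ is the kernel.

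For the gap I would view $-\CL_{\mathrm{IP}}$ as the image of the group-algebra element $\sum_{i<j}(e-(ij))\in\BC[S_n]$ under the representation $\pi\mapsto\Delta_{\ve,\pi}$ of $S_n$ on $\CB(\Hn)\cong\Hn\otimes\Hn^*$. On each irreducible $S_n$-representation $\lambda$ occurring in this decomposition, the element acts as a scalar. Since $\sum_{i<j}(ij)$ is central, that scalar is $\binom n2\bigl(1-\chi_\lambda((12))/\dim\lambda\bigr)$. Aldous' spectral gap theorem~\cite{caputo2010AldousSpectralGapConj} states that for the Cayley graph generated by all transpositions, the smallest nonzero eigenvalue over all nontrivial irreducibles is attained at the standard representation $(n-1,1)$. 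Its value is $\binom n2\bigl(1-\frac{n-3}{n-1}\bigr)=n$. Any $\vX\in\Pn^\perp$ has no component in the trivial isotypic part, because that part equals the fixed-point space $\Pn$. So every eigenvalue on $\Pn^\perp$ is at least $n$, which is \eqref{eq:IP-gap}.

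Concretely, I would invoke Aldous' theorem directly for the representation $\Delta$. The theorem is a statement about all irreducibles of $S_n$ via the regular representation, so it applies to any unitary representation whose trivial component has been removed. Since the spectrum here equals $\{n-\text{(content sum)}\}$-type character values, no model-specific computation is needed.

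The main obstacle is the bookkeeping that makes this application legitimate. We must check that the representation $\Delta$ is unitary for the KMS inner product, so that the Rayleigh quotient in \eqref{eq:IP-gap} is governed by these eigenvalues. This holds because $\Delta_{\ve,\pi}$ commutes with $\vrho^{1/4}$-conjugation, which intertwines the KMS and Hilbert--Schmidt structures. We must also confirm that the trivial isotypic component is exactly $\Pn$. If one wants to avoid citing Aldous in representation form, the fallback is the explicit character formula for the central element. Since only irreducibles with $\lambda_1\le n-1$ appear in $\Pn^\perp$, the bound $1-\chi_\lambda((12))/\dim\lambda\ge 2/(n-1)$ for these can be checked directly via the Frobenius formula $\chi_\lambda((12))/\dim\lambda=\frac{2}{n(n-1)}\sum_i\bigl[\binom{\lambda_i}{2}-\binom{\lambda_i'}{2}\bigr]$, together with dominance monotonicity.
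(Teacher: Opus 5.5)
Your proof is correct. The structural part (KMS-self-adjointness of $\Delta_{\ve,(ij)}$ because $\vU_{(ij)}$ commutes with $\vrho$, the quadratic-form identity, kernel $\Pn$, invariance of $\Pn^\perp$) is the same as the paper's, but you reach the gap bound by a different route.

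You decompose $\CB(\Hn)$ into KMS-orthogonal isotypic components of the representation $\pi\mapsto\Delta_{\ve,\pi}$. Since the transposition class sum is central, $-\CL_{\mathrm{IP}}$ acts as the scalar $\binom n2\bigl(1-\chi_\lambda((12))/\dim\lambda\bigr)$ on the $\lambda$-component, and the trivial component is exactly $\Pn$.

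The paper instead proves \Cref{lem:spec-laplacian}. There, an eigenoperator $\vY$ of $-\CL_\Sigma$ yields the eigenfunction $f_{\vY}(g)=\langle\vY,\vU(g)\vY\vU(g^{-1})\rangle$ of the Cayley-graph Laplacian with the same eigenvalue, so $\mathrm{Spec}(-\CL_{\mathrm{IP}})\subseteq\mathrm{Spec}(\vL_{\mathrm{IP}})$. Aldous' theorem then identifies the gap of $\vL_{\mathrm{IP}}$ with that of $n\vI-\vJ$, namely $n$.

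The two routes buy different things. The paper's lemma needs no representation theory and applies to arbitrary weighted generating sets, where the Laplacian is not central and Aldous' theorem carries real content. Your route exploits the centrality available for the complete graph and gives the exact eigenvalue on each component. As your fallback notes, it also makes Aldous' theorem superfluous: the Frobenius character ratio with dominance monotonicity (the Diaconis--Shahshahani computation) already gives the bound $n$.

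One small slip: the eigenvalue on the $\lambda$-component is $\binom n2$ minus the content sum of $\lambda$, not $n$ minus it.
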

Consequently we will continue the proof by showing that for some constant $c$, independent of $n$, we have $-\CD_{C} \succeq -\frac{c}n{}\CD_{\mathrm{IP}}$ and the explicit constants are given by the following theorem
\begin{restatable}[Comparison with the interchange Lindbladian]{lemma}{lemmDom}\label{lem:IPDom}
	Consider the setting of~\Cref{thm:high-temp}. Then for every $n\ge2$ and every $\vX\in\CB(\Hn)$,
	\begin{equation}\label{eq:IPDom}
		\EC(\vX)\ \ge\ \frac{m_f^4}{54\,N_f^2\,(n-1)}\,\ip{\vX}{-\CL_{\mathrm{IP}}[\vX]} .
	\end{equation}
\end{restatable}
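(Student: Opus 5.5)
The plan is to reduce the interchange form to single-site Pauli commutators, and then control those by the coherent Dirichlet form of \Cref{lem:form}.

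\textbf{Step 1: the interchange form as Pauli commutators.} Since $\vU_{(ij)}$ is unitary and lies in $\Pn$, it commutes with $\vrho$. Hence right multiplication by $\vU_{(ij)}^\dagger$ is a KMS isometry, and
\[
\nr{\vX-\Delta_{\ve,(ij)}[\vX]}=\nr{[\vX,\vU_{(ij)}]}.
\]
Inserting the Pauli decomposition \eqref{eq:UijPauliDec} and the Leibniz rule gives
\[
[\vU_{(ij)},\vX]=\tfrac12\sum_\alpha\Bigl(\vsigma^{(i,\alpha)}[\vsigma^{(j,\alpha)},\vX]+[\vsigma^{(i,\alpha)},\vX]\vsigma^{(j,\alpha)}\Bigr).
\]
The six terms are combined by Cauchy--Schwarz, $\|\sum_{k=1}^6 a_k\|^2\le 6\sum_k\|a_k\|^2$. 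The remaining factors of $3$ and $2$ in the constant $54$ should come from this bookkeeping and from the $\tfrac12$ in \eqref{eq:LIPNormForm}.

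\textbf{Step 2: removing the outer Pauli factors.} The left or right factor $\vsigma^{(j,\alpha)}$ must be stripped off at a cost independent of $n$. The identity is $\nr{\vA\vY}=\|(\vrho^{1/4}\vA\vrho^{-1/4})\,\vrho^{1/4}\vY\vrho^{1/4}\|_2$, and similarly on the right. By \Cref{fact:pauli-selection-rule} and \eqref{eq:bohr-identities}, $\vrho^{\pm1/4}\vsigma\vrho^{\mp1/4}$ only rescales Bohr components with $|\omega|\le1$, so only $\beta\omega\in[-\beta,\beta]$ is relevant. This is where the band constants \eqref{eq:band-constants} enter.

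Rather than bounding this conjugate crudely, I would rewrite $\vsigma=\bigl(f^{-1}\bigr)(\beta\,\ad_{\vH_n})(\vL)$ and $\vrho^{1/4}\vL\vrho^{-1/4}=(\vB)^\dagger$-type expressions. The conjugation by $\vrho^{\pm1/4}$ is then absorbed into passing between $\vL=f(\beta\ad)\vsigma$ and $\vB=\widetilde f(\beta\ad)\vsigma$. The resulting operator-norm factor is controlled by $N_f$ times $1/m_f$. This is the origin of the ratio $N_f^2/m_f^2$ in the constant, and it leaves only quantities $\nr{[\vsigma^{(k,\alpha)},\vX]}^2$ with $k\in\{i,j\}$.

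\textbf{Step 3: bounding single-site commutators by $\EC$ (main obstacle).} The goal is
\[
\nr{[\vsigma^{(k,\alpha)},\vX]}^2\le m_f^{-2}\int\nr{[\vB^{(k,\alpha)}(t),\vX]}^2\,\rmd\mu_\beta(t)\quad(\text{up to constants}),
\]
for arbitrary $\vX\in\CB(\Hn)$, not necessarily commuting with $\vH_n$. For such $\vX$ the Bohr components do not decouple, as noted in the remark after \Cref{lem:bohr-diagonalization}.

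My first attempt is to invert the filter within the band. I would choose a function $g$ with $g(\beta\omega)\,\widetilde f(\beta\omega)=1$ for all $|\omega|\le1$ and $|g|\le 1/m_f$. Then $\vsigma=\sum_\omega g(\beta\omega)\widetilde f(\beta\omega)\vsigma_\omega$, and I would realize multiplication by $g\cdot 2\cosh(\beta\omega/4)$ as a time average $\int\vB(t)\,h(t)\,\rmd\mu_\beta(t)$ using \eqref{eq:fourier}. Jensen/Cauchy--Schwarz in $t$ then gives the bound, with constant $\int|h|^2\rmd\mu_\beta$.

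The hard part is making this multiplier argument produce the clean constant $m_f^{-2}$ together with the $N_f$ factor from Step 2, rather than a worse dependence on $\beta$. If the time-integral route fails, the fallback is to keep the commutators $[\vB(t),\vX]$ throughout Steps 1--2. That is, I would write $\vU_{(ij)}$ through a twirl over $t$ of $\vB^{(i,\alpha)}(t)\vB^{(j,\alpha)}(t)$ products, which would avoid isolating bare Pauli commutators.

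\textbf{Step 4: summing over pairs.} Each site $k$ appears in exactly $n-1$ pairs $\{i,j\}$. Summing the per-pair bounds over $i<j$ therefore gives
\[
\ip{\vX}{-\CL_{\mathrm{IP}}[\vX]}\le\frac{54N_f^2(n-1)}{m_f^4}\sum_{k,\alpha}\int\nr{[\vB^{(k,\alpha)}(t),\vX]}^2\rmd\mu_\beta(t)=\frac{54N_f^2(n-1)}{m_f^4}\,\EC(\vX)
\]
by \Cref{lem:form}, which is \eqref{eq:IPDom}.
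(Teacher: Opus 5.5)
Your Steps 1, 2 and 4 are fine up to constants (stripping a bare Pauli costs $\|\vrho^{\pm1/4}\vsigma^{(k,\alpha)}\vrho^{\mp1/4}\|_\infty\le3e^{\beta/4}$, which is independent of $n$). The gap is Step 3. You flag it yourself as the main obstacle, but it is the whole content of the lemma, and the filter-inversion route you propose cannot give an $n$-independent constant. Consider any multiplier $m(\omega)=\int_\BR e^{\ri\omega t}h(t)\,\rmd\mu_\beta(t)$ with $\int|h|^2\rmd\mu_\beta\le C$. It satisfies $|m|\le\sqrt{C/2}$ on $\BR$, and it is analytic in the strip $|\mathrm{Im}\,\omega|<\pi/\beta$ with locally uniform bounds depending only on $C$. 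A pointwise bound $|g|\le1/m_f$ on the multiplier says nothing about $\int|h|^2\rmd\mu_\beta$, which is the constant Cauchy--Schwarz actually produces. You need $m(\omega)\widetilde f(\beta\omega)=1$ at every Bohr frequency of $\vsigma^{(k,\alpha)}$. These frequencies ($0$, $2s/n$, $-2(s+1)/n$) fill $[-1,1]$ with spacing $O(1/n)$. If $C$ stayed bounded in $n$, a normal-families limit would be a function that is bounded on $\BR$, analytic in the strip, and equal to $1/\widetilde f(\beta\omega)$ on all of $[-1,1]$. This is already impossible for the Metropolis filter, where $1/\widetilde f_{\mathrm M}(\beta\omega)=e^{\beta|\omega|/4}$ has a kink at $0$. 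It is also impossible for the Glauber filter, whose continuation $2\cosh(\beta\omega/4)$ is unbounded on $\BR$. So your constant must blow up with $n$. This is the interference of Bohr components for $\vX\notin\comm{\vH_n}$ noted after \Cref{lem:bohr-diagonalization}: the smooth kernel $1/(2\cosh(\beta(\omega-\omega')/4))$ cannot be undone at frequency resolution $1/n$. Your fallback does not fix this either. Filtering does not commute with products, so there is no exact formula for $\vU_{(ij)}$ in terms of time-twirled filtered products. You must settle for an operator that only shares its relevant spectral features.

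That is the missing idea, and it is what the paper does; it never isolates bare Pauli commutators. It uses the filtered pair Casimir $\vK_{ij}=\sum_\alpha(\vB^\alpha_{ij,+})^2$, where $\vB^\alpha_{ij,+}=(\vB^{(i,\alpha)}+\vB^{(j,\alpha)})/\sqrt2$. This is the filtered version of $\sum_\alpha(\vA^\alpha_{ij,+})^2=4\vPi^+_{ij}$. It is not equal to $4\vPi^+_{ij}$, but it has the three properties that matter:
\begin{itemize}
\item It commutes with $\vH_n$ and $\vU_{(ij)}$.
\item It annihilates the singlet sector: $\vPi^-_{ij}$ commutes with $\vH_n$, so the identity $\vA^\alpha_{ij,+}\vPi^-_{ij}=0$ survives filtering.
\item It is at least $4m_f^2$ on the symmetric sector, by the fixed-energy identity $\|\vB\psi\|^2=\sum_\omega|\widetilde f(\beta\omega)|^2\|\vA_\omega\psi\|^2$ for $\psi\in\vPi_s\Hn$, which holds because the $E_s$ are distinct.
\end{itemize}
In a common eigenbasis of $\vK_{ij}$, $\vU_{(ij)}$ and $\vrho$, this spectral separation gives $\nr{[\vK_{ij},\vX]}^2\ge16m_f^4\nr{\vX^-_{ij}}^2$ for arbitrary $\vX$.

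For the upper bound, $[\vK_{ij},\vH_n]=0$ gives $\vK_{ij}=\sum_\alpha\vB^\alpha_{ij,+}(t)^2$ for every $t$. The Leibniz rule and the KMS H\"older inequality then bound $\nr{[\vK_{ij},\vX]}^2$ at each fixed $t$ by $216N_f^2\sum_\alpha\bigl(\nr{[\vB^{(i,\alpha)}(t),\vX]}^2+\nr{[\vB^{(j,\alpha)}(t),\vX]}^2\bigr)$, using that the modular conjugates of $\vB^{(i,\alpha)}(t)$ have norm at most $3N_f$. Integrating against $\mu_\beta$, which has mass $\tfrac12$, yields $\nr{[\vK_{ij},\vX]}^2\le432N_f^2(\CE_i(\vX)+\CE_j(\vX))$. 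The time-invariance of $\vK_{ij}$ is what replaces your filter inversion: the $t$-average in \Cref{lem:form} is absorbed for free rather than having to be undone. Your Step 4 then goes through verbatim. The constant $\tfrac{16}{432}\cdot\tfrac12=\tfrac1{54}$ comes from these two bounds and \eqref{eq:LIP-odd}, not from Cauchy--Schwarz bookkeeping.
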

\Cref{thm:high-temp} follows directly from \Cref{lem:IPCperp,lem:IPDom}. We prove \Cref{lem:IPCperp} in \Cref{sec:interchange-lindbladian}, and \Cref{lem:IPDom}, together with the short deduction of \Cref{thm:high-temp}, in \Cref{sec:interchange-reduction}.

\subsection{The interchange Lindbladian}
\label{sec:interchange-lindbladian}
In this subsection we will detail the proof of~\Cref{lem:IPCperp}. The key idea is to reduce the spectral analysis of $\CL_{\mathrm{IP}}$ to the spectral analysis of the corresponding interchange process as in~\cite{harrow2008QExpandersFromCayleyGraphs,bardet2021GroupTransferenceLindbladians}, or as in the following lemma, for which we provide a short proof for completeness.
\begin{lemma}\label{lem:spec-laplacian}
    Consider 
    the Cayley graph of the finite group $G$ corresponding to the set of elements $\Sigma\subseteq G$, and let $\vL_{\Sigma}$ denote the corresponding (unnormalized) Laplacian which acts on a function $f\colon G\to\BC$ as 
    \begin{align}\label{eq:CayLapDef}
        (\vL_{\Sigma}f)(g)
        :=\sum_{s\in\Sigma}w_s\bigl(f(g)-f(g\cdot s)\bigr)\quad\text{ for some positive weights }\quad w\in\BR_+^{\Sigma}.
    \end{align}
    Let $\vU(g)$ be a finite-dimensional unitary representation of $G$, then the corresponding Lindbladian
    \begin{align}\label{eq:CayLindDef}
        -\CL_{\Sigma}[\cdot]:=\sum_{s\in\Sigma}w_s\bigl(\CI[\cdot]-\vU(s)[\cdot]\vU(s^{-1})\bigr)
    \end{align}
    satisfies $\text{Spec}(-\CL_{\Sigma})\subseteq \text{Spec}(\vL_{\Sigma})$.
\end{lemma}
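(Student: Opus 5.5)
The plan is to realize $-\CL_\Sigma$ as a compression, or more precisely a restriction to an invariant subspace, of a direct sum of copies of $\vL_\Sigma$, using the regular representation. The superoperator $\vX\mapsto\vU(s)\vX\vU(s^{-1})$ is the representation $\Delta(g)=\vU(g)\otimes\overline{\vU(g)}$ of $G$ on $\CB(\CH)$ with the Hilbert--Schmidt structure, which is again a finite-dimensional unitary representation. So it suffices to prove the claim for a general unitary representation $\vV$ of $G$ on a finite-dimensional space $\CK$ and the operator $\vA:=\sum_{s}w_s(\vI-\vV(s))$: every eigenvalue of $\vA$ is an eigenvalue of $\vL_\Sigma$. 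We then apply this with $\vV=\Delta$ and $\CK=\CB(\CH)$.

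The key step is an intertwining map. Fix an eigenvector $\vX\neq0$ of $\vA$ with eigenvalue $\lambda$. Define $F\colon G\to\CK$ by $F(g):=\vV(g)\vX$, and choose a vector $\vY\in\CK$ with $\braket{\vY,\vX}\neq0$; taking $\vY=\vX$ works. Then set $f(g):=\braket{\vY,\vV(g)\vX}$. We compute
\[(\vL_\Sigma f)(g)=\sum_s w_s\bigl(\braket{\vY,\vV(g)\vX}-\braket{\vY,\vV(g)\vV(s)\vX}\bigr)=\braket{\vY,\vV(g)\vA\vX}=\lambda f(g).\]
This uses only that $\vV$ is a homomorphism, so that $\vV(gs)=\vV(g)\vV(s)$, and linearity. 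Since $f(e)=\|\vX\|^2\neq0$, $f$ is a nonzero eigenfunction of $\vL_\Sigma$ with eigenvalue $\lambda$. Hence $\lambda\in\mathrm{Spec}(\vL_\Sigma)$.

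The only points to double-check are the convention for the right action $g\cdot s$ and the claim that $\vU(s)[\cdot]\vU(s^{-1})$ is a homomorphism in $s$. The latter holds because $\vU(st)\vX\vU((st)^{-1})=\vU(s)\vU(t)\vX\vU(t^{-1})\vU(s^{-1})$. With the right action $g\cdot s$ appearing in \eqref{eq:CayLapDef}, the computation above matches exactly.

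No unitarity, self-adjointness, or diagonalizability is needed, since we only map eigenvectors to eigenvectors. I expect no real obstacle here; the main care point is this bookkeeping of left versus right multiplication. As a remark, the same argument shows that the eigenvalue multiplicities embed appropriately via the regular representation, but that is not required.
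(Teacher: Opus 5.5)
Your proof is correct and is essentially the paper's own argument. The paper likewise takes the diagonal matrix coefficient $f_{\vY}(g)=\langle \vY,\vU(g)\vY\vU(g^{-1})\rangle$ of an eigenoperator $\vY$, uses the homomorphism property to get $\vL_\Sigma f_{\vY}=\lambda f_{\vY}$, and notes $f_{\vY}(e)=\Tr[\vY^\dagger\vY]>0$. Your abstraction to a general unitary representation on $\CK$ is only a repackaging, and the opening remark about the regular representation is not used by the argument you actually give.
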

\begin{proof}
    Let $\lambda \in \text{Spec}(-\CL_{\Sigma})$ and let $\vY\neq0$ be a corresponding eigenoperator, i.e.,
    \begin{align}\label{eq:CayLindEV}
        -\CL_{\Sigma}[\vY]=\lambda \vY,
    \end{align}
    and associate to $\vY$ the scalar function via the Hilbert-Schmidt inner product as
    \begin{equation}\label{eq:fYDef}
        f_{\vY}(g):=\langle \vY, \vU(g)\vY \vU(g^{-1})\rangle.
    \end{equation}
    This function is nonzero, since for the unit element $e$ of the group $G$ we have
    \[
    f_{\vY}(e)=\langle \vY,\vY\rangle=\Tr[\vY^\dagger\vY]>0.
    \]
    Finally observe that
    \begin{align}
        (\vL_{\Sigma}f_{\vY})(g)
        &=\sum_{s\in\Sigma}w_s(f_{\vY}(g)-f_{\vY}(g\cdot s))\tag{by \eqref{eq:CayLapDef}}\\
        &=\sum_{s\in\Sigma}w_s(
        \langle \vY, \vU(g)\vY \vU(g^{-1})\rangle
        -\langle \vY, \vU(g\cdot s)\vY \vU(s^{-1}\cdot g^{-1})\rangle)\kern-10mm\tag{by \eqref{eq:fYDef}}\\
        &=
        \langle \vY, \vU(g)\Big(\sum_{s\in\Sigma}w_s\big(\vY -\vU(s)\vY \vU(s^{-1})\big) \Big)\vU(g^{-1})\rangle \tag{by linearity}\\
        &=
        \langle \vY, \vU(g)(-\CL_{\Sigma}[\vY])\vU(g^{-1})\rangle \tag{by \eqref{eq:CayLindDef}}\\
        &=\langle \vY, \vU(g)\lambda\vY \vU(g^{-1})\rangle \tag{by \eqref{eq:CayLindEV}}\\&
        =\lambda f_{\vY}(g).\tag*{\qedhere}
    \end{align}
\end{proof}
\begin{proof}[The proof of~\Cref{lem:IPCperp}]
   \emph{Self-adjointness and the quadratic form.} Fix $i<j$ and write $\vU=\vU_{(ij)}$ and $\Delta=\Delta_{\ve,(ij)}$.
	The swap $\vU$ is a self-adjoint unitary. It commutes with the permutation-invariant $\vH_n$, hence with $\vrho$.
	Therefore $\Delta$ is a KMS isometry:
	\[
		\ip{\Delta[\vX]}{\Delta[\vY]}=\Tr\bigl[\vU\vX^\dagger\vU\vrho^{1/2}\vU\vY\vU\vrho^{1/2}\bigr]=\ip{\vX}{\vY}.
	\]
	As $\Delta^2=\CI$, it is also KMS-self-adjoint. Hence $-\CL_{\mathrm{IP}}=\sum_{i<j}(\CI-\Delta_{\ve,(ij)})$ is a real linear combination of
	KMS-self-adjoint maps, and thus KMS-self-adjoint. Moreover,
	\begin{align*}
		\nr{\vX-\Delta[\vX]}^2
		&=\nr{\vX}^2-\ip{\vX}{\Delta[\vX]}-\ip{\Delta[\vX]}{\vX}+\nr{\Delta[\vX]}^2\\
		&=2\nr{\vX}^2-2\ip{\vX}{\Delta[\vX]} \tag{$\Delta$ is a KMS-self-adjoint isometry}\\
		&=2\ip{\vX}{(\CI-\Delta)[\vX]}.
	\end{align*}
	Summing over $i<j$ gives \eqref{eq:LIPNormForm}.
    
    \emph{Kernel and invariance.} By \eqref{eq:LIPNormForm}, $\CL_{\mathrm{IP}}[\vX]=0$ if and only if $\Delta_{\ve,(ij)}[\vX]=\vX$ for all $i<j$.
	Every permutation $\pi\in S_n$ is a product of swaps, so this holds if and only if $\CU_\pi[\vX]=\vX$ for all $\pi$,
	i.e., $\vX\in\Pn$. Since $-\CL_{\mathrm{IP}}$ is KMS-self-adjoint, it has a KMS-orthonormal eigenbasis in which the eigenoperators
	with eigenvalue $0$ span its kernel $\Pn$. The remaining eigenoperators span $\Pn^\perp$, which is therefore invariant, and none of
	them has eigenvalue $0$.
    
    \emph{Gap.} Applying~\Cref{lem:spec-laplacian} with $G=S_n$, the set of all transpositions $\Sigma$, and uniform unit weights, we can reduce our spectral gap analysis to that of the corresponding interchange-process Laplacian:
	\[
	(\vL_{\mathrm{IP}}F)(\pi)
	:=\sum_{i<j}\bigl(F(\pi)-F(\pi(ij))\bigr),
	\qquad F:S_n\to\mathbb C.
	\]
	
	Finally, Aldous' spectral-gap theorem \cite[Theorem~1.1]{caputo2010AldousSpectralGapConj} supplies the required spectral estimate $\gamma\geq n$ by equating the spectral gap of $\vL_{\mathrm{IP}}$ with that of the following (unnormalized) Laplacian of the complete graph $K_n$
	\[
	(\vL_{K_n}f)(i)
	=\sum_{j\neq i}\bigl(f(i)-f(j)\bigr)
	=nf(i)-\sum_{j=1}^n f(j).
	\]
	Let $\vJ$ denote the $n\times n$ all one matrix; using we can conclude
	\begin{align*}
		\gamma=\operatorname{gap}(-\CL_{\mathrm{IP}})
		\ge\operatorname{gap}(\vL_{\mathrm{IP}})
		=\operatorname{gap}(\vL_{K_n})
		=\operatorname{gap}(n\vI-\vJ)
		=n.\tag*{\qedhere}
	\end{align*}
\end{proof}

\subsection{Reducing to the interchange Lindbladian}\label{sec:interchange-reduction}
In this subsection we prove \Cref{lem:IPDom}. Throughout the remaining subsections, $f$ is a general KMS filter in the sense of~\eqref{eq:f-KMS}, with symmetrized filter $\widetilde f(\nu)=e^{\nu/4}f(\nu)$, so that $\widetilde f(-\nu)=\widetilde{f}^*(\nu)$. We write the jumps as
\[
    \vL^{(i,\alpha)}\coloneqq f(\beta\,\ad_{\vH_n})(\vsigma^{(i,\alpha)}),
\]
and their corresponding symmetrized versions
\[
    \vB^{(i,\alpha)}\coloneqq\widetilde f(\beta\,\ad_{\vH_n})(\vsigma^{(i,\alpha)})=\vrho^{-1/4}\vL^{(i,\alpha)}\vrho^{1/4}.
\]

Let $\vPi^\pm_{ij}\coloneqq\frac12(\vI\pm\vU_{(ij)})$ be the projections onto the symmetric and antisymmetric sectors of the swap operator $\vU_{(ij)}$. Since $\Delta_{\ve,(ij)}$ is an involution, every $\vX$ splits uniquely into a swap-even and a swap-odd part, $\vX=\vX^+_{ij}+\vX^-_{ij}$. The odd part decomposes as
\begin{equation}\label{eq:odd-part}
	\vX^-_{ij}\coloneqq\tfrac12\bigl(\vX-\Delta_{\ve,(ij)}[\vX]\bigr)
	=\vPi^+_{ij}\vX\vPi^-_{ij}+\vPi^-_{ij}\vX\vPi^+_{ij},
\end{equation}
and since $\vX-\Delta_{\ve,(ij)}[\vX]=2\vX^-_{ij}$, the identity~\eqref{eq:LIPNormForm} of \Cref{lem:IPCperp} reads
\begin{equation}\label{eq:LIP-odd}
    \ip{\vX}{-\CL_{\mathrm{IP}}[\vX]}=2\sum_{i<j}\nr{\vX^-_{ij}}^2 .
\end{equation}

Recall the sum-of-squares representation from~\Cref{lem:form}
\begin{equation}\label{eq:local-energy}
	\EC(\vX)=\sum_{i=1}^n\CE_i(\vX),
	\qquad
	\CE_i(\vX)\coloneqq\sum_\alpha\int_{\BR}\nr{[\vB^{(i,\alpha)}(t),\vX]}^2\,\rd\mu_\beta(t)\ \geq0 .
\end{equation}
Here $\vB^{(i,\alpha)}(t)=\e^{\ri t\vH_n}\vB^{(i,\alpha)} e^{-\ri t\vH_n}$, and $\mu_\beta$ is the universal measure of \Cref{lem:form},
with $\mu_\beta(\BR)=\frac12$.

For each pair we want to prove $\nr{\vX^-_{ij}}^2\lesssim\CE_i(\vX)+\CE_j(\vX)$, and we do so through an auxiliary operator $\vK_{ij}$ which we will define later
\begin{equation}\label{eq:K-strategy}
	\delta^2\,\nr{\vX^-_{ij}}^2\ \le\ \nr{[\vK_{ij},\vX]}^2\ \le\ c\,\bigl(\CE_i(\vX)+\CE_j(\vX)\bigr).
\end{equation}
The first lower bound holds as soon as $\vK_{ij}$ commutes with $\vU_{(ij)}$ and $\vrho$ and has separated spectra on the two swap sectors, for the following reason.

With respect to $\vI=\vPi^+_{ij}+\vPi^-_{ij}$, an operator commuting with $\vU_{(ij)}$ is block diagonal. Writing
\[
	\vK_{ij}=\begin{pmatrix}\vK_+&0\\0&\vK_-\end{pmatrix},\qquad
	\vX=\begin{pmatrix}\vX_{++}&\vX_{+-}\\\vX_{-+}&\vX_{--}\end{pmatrix},\qquad
	\vX^-_{ij}=\begin{pmatrix}0&\vX_{+-}\\\vX_{-+}&0\end{pmatrix},
\]
we obtain
\[
	[\vK_{ij},\vX]=\begin{pmatrix}[\vK_+,\vX_{++}]&\vK_+\vX_{+-}-\vX_{+-}\vK_-\\\vK_-\vX_{-+}-\vX_{-+}\vK_+&[\vK_-,\vX_{--}]\end{pmatrix}.
\]
Since $\vPi^\pm_{ij}$ commute with $\vrho$, the four blocks of any operator are mutually orthogonal in the KMS inner product. So the diagonal blocks of $[\vK_{ij},\vX]$ can only increase its norm, and $\nr{[\vK_{ij},\vX]}\ge\nr{[\vK_{ij},\vX^-_{ij}]}$. In a common eigenbasis of $\vK_{ij}$, $\vU_{(ij)}$ and $\vrho$, the map $\vX_{+-}\mapsto\vK_+\vX_{+-}-\vX_{+-}\vK_-$ multiplies each matrix element by a difference $\lambda_+-\lambda_-$ of an eigenvalue on the symmetric sector and one on the antisymmetric sector. If all eigenvalues of $\vK_+$ are at least $a$ and all eigenvalues of $\vK_-$ at most  $b<a$, these differences are at least $a-b$, which gives the lower bound in \eqref{eq:K-strategy} with $\delta=a-b$.

The task is therefore to filter an operator with separated sector spectra in such a way that the separation survives with a meaningful $\delta$. In our case the right unfiltered operator is the squared total spin of the pair. With
\[
	\vA^\alpha_{ij,+}\coloneqq\frac{\vsigma^{(i,\alpha)}+\vsigma^{(j,\alpha)}}{\sqrt2},
\]
the identities $(\vsigma^{(i,\alpha)})^2=(\vsigma^{(j,\alpha)})^2=\vI$, $[\vsigma^{(i,\alpha)},\vsigma^{(j,\alpha)}]=0$ and
$\sum_\alpha\vsigma^{(i,\alpha)}\vsigma^{(j,\alpha)}=2\vU_{(ij)}-\vI$ (by the definition of the swap in \Cref{def:Pn}) give
\begin{equation}\label{eq:pair-casimir}
	\sum_\alpha\bigl(\vA^\alpha_{ij,+}\bigr)^2=2\vI+2\vU_{(ij)}=4\,\vPi^+_{ij}.
\end{equation}
Its eigenvalues are $4$ on the symmetric sector and $0$ on the antisymmetric one, which is the singlet of the pair.
Multiplying \eqref{eq:pair-casimir} by $\vPi^-_{ij}$ from both sides and using $\vPi^-_{ij}\vPi^+_{ij}=0$ gives $\sum_\alpha(\vA^\alpha_{ij,+}\vPi^-_{ij})^\dagger(\vA^\alpha_{ij,+}\vPi^-_{ij})=0$. A vanishing sum of positive semidefinite operators has vanishing summands, so
\begin{equation}\label{eq:singlet-annihilation}
	\vA^\alpha_{ij,+}\,\vPi^-_{ij}=0\qquad\text{for every }\alpha .
\end{equation}
Because $\vPi^-_{ij}$ commutes with $\vH_n$, this identity survives filtering; see \Cref{lem:swap-blocks}.
The filter can therefore only distort the eigenvalue $4$ on the symmetric sector, and as we show in~\Cref{lem:swap-blocks}, it never drops below $4m_f^2$.
We therefore set
\begin{equation}\label{eq:K-definition}
	\vB^\alpha_{ij,+}\coloneqq\widetilde f(\beta\,\ad_{\vH_n})\bigl(\vA^\alpha_{ij,+}\bigr)=\frac{\vB^{(i,\alpha)}+\vB^{(j,\alpha)}}{\sqrt2},
	\qquad
	\vK_{ij}\coloneqq\sum_\alpha\bigl(\vB^\alpha_{ij,+}\bigr)^2 .
\end{equation}
Both operators are self-adjoint. Indeed, since the Bohr spectrum is symmetric and using $\widetilde f^*(\nu)=\widetilde f(-\nu)$ and $\vsigma^{(i,\alpha)}(\omega)^\dagger=\vsigma^{(i,\alpha)}(-\omega)$, we get
\[
	(\vB^{(i,\alpha)})^\dagger
	=\sum_{\omega\in B(\vH_n)}\widetilde f^*(\beta\omega)\,\vsigma^{(i,\alpha)}(\omega)^\dagger
	=\sum_{\omega\in B(\vH_n)}\widetilde f(-\beta\omega)\,\vsigma^{(i,\alpha)}(-\omega)
	=\sum_{\omega'\in B(\vH_n)}\widetilde f(\beta\omega')\,\vsigma^{(i,\alpha)}(\omega')
	=\vB^{(i,\alpha)} ,
\]
Hence $\vB^\alpha_{ij,+}$ is self-adjoint, and $\vK_{ij}=\sum_\alpha(\vB^\alpha_{ij,+})^\dagger\vB^\alpha_{ij,+}$ is self-adjoint and positive semidefinite.

The two properties we will utilize in the proofs are:
\begin{restatable}[Lower bound by the swap-odd part]{lemma}{lemmKOdd}\label{lem:K-detects-odd}
	Consider the setting of~\Cref{thm:high-temp}, and the operator $\vK_{ij}\in\CB(\Hn)$ as define in~\eqref{eq:K-definition}. For all sites $i\neq j$ and every $\vX\in\CB(\Hn)$,
	\begin{equation}\label{eq:K-detects-odd}
		\nr{[\vK_{ij},\vX]}^2\ \ge\ 16\,m_f^4\,\nr{\vX^-_{ij}}^2 .
	\end{equation}
\end{restatable}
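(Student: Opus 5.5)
The plan is to apply the block argument given just before the statement, with $\vK_{ij}$ from \eqref{eq:K-definition}. It then suffices to prove three facts:
\begin{enumerate}[label=(\roman*)]
\item $\vK_{ij}$ commutes with both $\vU_{(ij)}$ and $\vrho$, so that $\vK_{ij}$, $\vPi^\pm_{ij}$ and $\vH_n$ have a common eigenbasis;
\item $\vK_-=0$, i.e.\ $\vK_{ij}\vPi^-_{ij}=\vPi^-_{ij}\vK_{ij}=0$;
\item $\vK_+\succeq 4m_f^2\,\vPi^+_{ij}$ on the symmetric sector.
\end{enumerate}
With $a=4m_f^2$ and $b=0$ the separation is $\delta=4m_f^2$, and $\delta^2=16m_f^4$ is exactly \eqref{eq:K-detects-odd}.

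\textbf{Step (i).} The superoperator $\widetilde f(\beta\,\ad_{\vH_n})$ commutes with conjugation by any unitary that commutes with $\vH_n$. The reason is that such a conjugation maps the Bohr component $\vA_\omega$ to the Bohr component $\omega$ of the conjugated operator.
\begin{itemize}
\item Applying this to the swap $\vU_{(ij)}$, which fixes $\vA^\alpha_{ij,+}$, shows that $\vU_{(ij)}$ commutes with every $\vB^\alpha_{ij,+}$, hence with $\vK_{ij}$.
\item Applying it to a collective rotation $\vg^{\otimes n}$ gives $\vg^{\otimes n}\vB^{(i,\alpha)}(\vg^{\otimes n})^\dagger=\sum_{\alpha'}R_{\alpha\alpha'}\vB^{(i,\alpha')}$, with $R\in\mathrm{SO}(3)$ the adjoint-representation matrix. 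Since $R$ is real orthogonal, the sum of squares $\vK_{ij}=\sum_\alpha(\vB^\alpha_{ij,+})^2$ is rotation invariant.
\item Hence $\vK_{ij}\in\Rn\subseteq\comm{\vH_n}$ by \eqref{eq:RnInCHn}, so $\vK_{ij}$ commutes with $\vrho$.
\end{itemize}

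\textbf{Step (ii).} This is the filtered form of \eqref{eq:singlet-annihilation}. Because $\vPi^-_{ij}$ commutes with $\vH_n$, we have $(\vA\vPi^-_{ij})_\omega=\vA_\omega\vPi^-_{ij}$. Therefore $\vB^\alpha_{ij,+}\vPi^-_{ij}=\widetilde f(\beta\,\ad_{\vH_n})(\vA^\alpha_{ij,+}\vPi^-_{ij})=0$. Taking adjoints and using self-adjointness of $\vB^\alpha_{ij,+}$ gives $\vPi^-_{ij}\vB^\alpha_{ij,+}=0$. It follows that $\vK_{ij}\vPi^-_{ij}=\vPi^-_{ij}\vK_{ij}=0$.

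\textbf{Step (iii) — the expected main obstacle.} The filter mixes Bohr components with complex weights, so a naive bound $\|\vB\psi\|\ge m_f\|\vA\psi\|$ fails for a general $\psi$. The plan is to exploit step (i): since $\vK_{ij}$ commutes with $\vH_n$ and $\vPi^+_{ij}$, its quadratic form on $\operatorname{ran}\vPi^+_{ij}$ splits into a direct sum over the joint blocks $\{\vH_n=E\}\cap\operatorname{ran}\vPi^+_{ij}$. It therefore suffices to take $\psi$ an energy eigenvector with $\vPi^+_{ij}\psi=\psi$.

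For such $\psi$, the vectors $(\vA^\alpha_{ij,+})_\omega\psi$ lie in mutually orthogonal eigenspaces of energy $E+\omega$. Hence
\[
\langle\psi,\vK_{ij}\psi\rangle=\sum_\alpha\|\vB^\alpha_{ij,+}\psi\|^2=\sum_{\alpha,\omega}|\widetilde f(\beta\omega)|^2\,\|(\vA^\alpha_{ij,+})_\omega\psi\|^2 .
\]
By \Cref{fact:pauli-selection-rule} only $|\omega|\le1$ occurs, so $|\beta\omega|\le\beta$ and each weight is at least $m_f^2$. Reassembling the components gives
\[
\langle\psi,\vK_{ij}\psi\rangle\ \ge\ m_f^2\sum_\alpha\|\vA^\alpha_{ij,+}\psi\|^2=m_f^2\,\langle\psi,4\vPi^+_{ij}\psi\rangle=4m_f^2\|\psi\|^2,
\]
using \eqref{eq:pair-casimir}.

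\textbf{Conclusion.} In the common eigenbasis from step (i), the off-diagonal blocks of $[\vK_{ij},\vX]$ are $\vK_+\vX_{+-}$ and $-\vX_{-+}\vK_+$. Their matrix elements are multiplied by eigenvalues of $\vK_+$, which are at least $4m_f^2$, while $\vrho^{1/4}$ acts diagonally in that basis. This yields $\nr{[\vK_{ij},\vX]}^2\ge\nr{[\vK_{ij},\vX^-_{ij}]}^2\ge16m_f^4\nr{\vX^-_{ij}}^2$.
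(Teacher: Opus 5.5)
Your proposal is correct and follows essentially the same route as the paper. Both arguments show that $\vK_{ij}$ commutes with $\vU_{(ij)}$ and $\vrho$, that $\vK_{ij}\vPi^-_{ij}=0$ by filtered singlet annihilation, and that $\vK_{ij}\ge 4m_f^2$ on the symmetric sector by Bohr-orthogonality on fixed-energy vectors, and then apply the common-eigenbasis block argument. The only cosmetic difference is that you get rotation invariance of $\vK_{ij}$ from the $\mathrm{SO}(3)$ covariance of the filtered jump triple, whereas \Cref{lem:K-scalar} uses the Levi-Civita commutator computation.
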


\begin{restatable}[Upper bound via the single-site Dirichlet forms]{lemma}{lemmKEnergy}\label{lem:K-energy}
	Consider the setting of~\Cref{thm:high-temp}, and the operator $\vK_{ij}\in\CB(\Hn)$ as defined in~\eqref{eq:K-definition}. For all sites $i\neq j$ and every $\vX\in\CB(\Hn)$
	\begin{equation}\label{eq:K-energy}
		\nr{[\vK_{ij},\vX]}^2\ \le\ 432\,N_f^2\,\bigl(\CE_i(\vX)+\CE_j(\vX)\bigr).
	\end{equation}
\end{restatable}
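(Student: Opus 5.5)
The plan is to bound $[\vK_{ij},\vX]$ by a Leibniz expansion, control the extra factors in operator norm, and then use that $\vK_{ij}$ commutes with $\vH_n$ to pass from time $t=0$ to the time-averaged single-site forms $\CE_i,\CE_j$.

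\emph{Step 1: Leibniz expansion.} Write $\vB_\alpha\coloneqq\vB^\alpha_{ij,+}$. Then
\[
[\vK_{ij},\vX]=\sum_\alpha\bigl(\vB_\alpha\vY_\alpha+\vY_\alpha\vB_\alpha\bigr),\qquad \vY_\alpha\coloneqq[\vB_\alpha,\vX].
\]
The products with $\vB_\alpha$ must be controlled in the KMS norm, where $\vB_\alpha$ does not commute with $\vrho$. By \eqref{eq:B-definition} we have $\vrho^{1/4}\vB^{(i,\alpha)}\vrho^{-1/4}=\vL^{(i,\alpha)}$. Hence
\[
\nr{\vB\vY}=\|\vL\,\vrho^{1/4}\vY\vrho^{1/4}\|_2\le\op{\vL}\,\nr{\vY}.
\]
For the right product we use self-adjointness of $\vB$: $\vrho^{-1/4}\vB\vrho^{1/4}=(\vrho^{1/4}\vB\vrho^{-1/4})^\dagger=\vL^\dagger$. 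This gives $\nr{\vY\vB}\le\op{\vL}\nr{\vY}$.

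\emph{Step 2: operator-norm bound $\op{\vL^{(i,\alpha)}}\le 3N_f$.} Using \Cref{fact:pauli-selection-rule}, split $\vsigma^{(i,\alpha)}=\sum_{d\in\{-1,0,1\}}\sum_s\vPi_{s+d}\vsigma^{(i,\alpha)}\vPi_s$. Within a fixed $d$, the block $\vPi_{s+d}\vsigma^{(i,\alpha)}\vPi_s$ carries the single Bohr frequency $E_{s+d}-E_s$, so filtering multiplies it by $f(\beta(E_{s+d}-E_s))$. Every such frequency lies in $[-1,1]$, so this factor has modulus at most $N_f$.

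Each fixed-$d$ piece is a block-permutation matrix: distinct row blocks and distinct column blocks. Its operator norm is therefore the maximum of the block norms, which is at most $N_f$. Summing over the three values of $d$ gives $\op{\vL^{(i,\alpha)}}\le3N_f$. Consequently the filtered pair operator $(\vL^{(i,\alpha)}+\vL^{(j,\alpha)})/\sqrt2$ has norm at most $3\sqrt2N_f$.

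\emph{Step 3: combine at fixed time.} From Steps 1 and 2,
\[
\nr{[\vK_{ij},\vX]}\le 6\sqrt2N_f\sum_\alpha\nr{[\vB_\alpha,\vX]}.
\]
Cauchy--Schwarz over the three values of $\alpha$, together with $\nr{a+b}^2\le2\nr a^2+2\nr b^2$, gives
\[
\nr{[\vK_{ij},\vX]}^2\le 3\cdot72N_f^2\sum_\alpha\tfrac12\cdot2\bigl(\nr{[\vB^{(i,\alpha)},\vX]}^2+\nr{[\vB^{(j,\alpha)},\vX]}^2\bigr)=216N_f^2\sum_\alpha\bigl(\nr{[\vB^{(i,\alpha)},\vX]}^2+\nr{[\vB^{(j,\alpha)},\vX]}^2\bigr).
\]

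\emph{Step 4: time averaging, the main point.} The forms $\CE_i$ in \eqref{eq:local-energy} integrate over $\vB^{(i,\alpha)}(t)$, not only $t=0$. To pass between them I will show that $\vK_{ij}$ commutes with $\vH_n$.

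The map $\vA\mapsto\widetilde f(\beta\,\ad_{\vH_n})(\vA)$ commutes with global rotations $\Delta_{\vg,\mathrm{id}}$, since $\vH_n\in\Rn$. Hence $(\vB^\alpha_{ij,+})_\alpha$ transforms as a vector operator under $\SU$, exactly like $(\vA^\alpha_{ij,+})_\alpha$. Its sum of squares $\vK_{ij}$ is therefore rotation-invariant, so $\vK_{ij}\in\Rn\subseteq\comm{\vH_n}$ by \eqref{eq:RnInCHn}.

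It follows that $\vK_{ij}=e^{\ri t\vH_n}\vK_{ij}e^{-\ri t\vH_n}=\sum_\alpha(\vB^\alpha_{ij,+}(t))^2$ for every $t$. The operators $\vB(t)$ satisfy the same relations as $\vB$ with $\vL(t)=e^{\ri t\vH_n}\vL e^{-\ri t\vH_n}$, which has the same operator norm. So Steps 1--3 hold verbatim with $\vB^{(i,\alpha)}(t),\vB^{(j,\alpha)}(t)$ in place of $\vB^{(i,\alpha)},\vB^{(j,\alpha)}$, for every $t$.

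Integrating this $t$-dependent bound against $\rmd\mu_\beta$, and using $\mu_\beta(\BR)=\frac12$ from \eqref{eq:fourier}, gives
\[
\tfrac12\nr{[\vK_{ij},\vX]}^2\le216N_f^2\bigl(\CE_i(\vX)+\CE_j(\vX)\bigr),
\]
which is exactly \eqref{eq:K-energy} with constant $432$.

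The only delicate points are the non-commutativity with $\vrho$ in Step 1, handled via $\vL$ and $\vL^\dagger$, and the rotation-invariance argument for $\vK_{ij}$ in Step 4.
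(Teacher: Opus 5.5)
Your proposal is correct and follows essentially the same route as the paper: a Leibniz expansion at fixed time, operator-norm control of the modular conjugates by $3N_f$ via the three spin-shift pieces, time invariance of $\vK_{ij}$ from $[\vK_{ij},\vH_n]=0$, and integration against $\mu_\beta$ with $\mu_\beta(\BR)=\tfrac12$. The only differences are cosmetic. You derive the KMS H\"older step directly by identifying the modular conjugates with $\vL$ and $\vL^\dagger$, where the paper cites the KMS H\"older inequality. You also obtain $[\vK_{ij},\vH_n]=0$ from group-level vector-operator covariance, where the paper uses the Levi-Civita commutator computation.
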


\Cref{lem:K-detects-odd} is proved in \Cref{sec:K-odd} and \Cref{lem:K-energy} in \Cref{sec:K-energy}. Taking them for granted, both the comparison and the theorem follow quickly.
\lemmDom*
\begin{proof}
	Each site occurs in exactly $n-1$ pairs. Hence,
	\begin{align*}
		\EC(\vX) &=\frac1{n-1}\sum_{i<j}\left(\CE_i(\vX)+\CE_j(\vX)\right) \geq \frac1{432\,N_f^2(n-1)}\sum_{i<j}\nr{[\vK_{ij},\vX]}^2 \tag{~\Cref{lem:K-energy}}\\
		&\geq \frac{m_f^4}{27\,N_f^2(n-1)}\sum_{i<j}\nr{\vX^-_{ij}}^2 \tag{\Cref{lem:K-detects-odd}}\\
		&= \frac{m_f^4}{54\,N_f^2(n-1)}\,\ip{\vX}{-\CL_{\mathrm{IP}}[\vX]} . \tag{\Cref{eq:LIP-odd}~\qedhere}
	\end{align*}
\end{proof}

\begin{proof}[Proof of \Cref{thm:high-temp}]
	For $\vX\in\Pn^\perp$, combining \Cref{lem:IPDom} with \eqref{eq:IP-gap} gives
	\[
		\EC(\vX)\ \ge\ \frac{m_f^4}{54\,N_f^2(n-1)}\,\ip{\vX}{-\CL_{\mathrm{IP}}[\vX]}
		\ \ge\ \frac{n\,m_f^4}{54\,N_f^2(n-1)}\,\nr{\vX}^2=\gamma_f\nr{\vX}^2 .
	\]
	The bound $\gamma(\LC,\Pn^\perp)\ge\gamma_f$ then follows from the variational characterization of $\gamma(\LC,\Pn^\perp)$
	given at the beginning of this section.
\end{proof}

\subsection{Symmetries of \texorpdfstring{$\vK_{ij}$}{Kij}}\label{sec:K-symmetry}

The unfiltered pair in~\eqref{eq:pair-casimir} is rotation invariant, swap symmetric and commutes with $\vH_n$. The filter $\widetilde f(\beta\,\ad_{\vH_n})$ preserves every symmetry that commutes with $\vH_n$, so $\vK_{ij}$ inherits all three properties. This makes $\vK_{ij}$ block diagonal with respect to the spin blocks and the two swap sectors, which is what the spectral argument of~\Cref{sec:K-odd} needs.

\begin{lemma}\label{lem:commutation-with-bohr}
	Let $\vX,\vY\in\CB(\Hn)$ satisfy $[\vX,\vH_n]=0=[\vY,\vH_n]$. Then for every function $h\colon B(\vH_n)\to\BC$ and every $\vA\in\CB(\Hn)$,
	\[
		\vX\,h(\ad_{\vH_n})(\vA)\,\vY=h(\ad_{\vH_n})(\vX\vA\vY).
	\]
\end{lemma}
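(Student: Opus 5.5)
The plan is to reduce everything to the Bohr decomposition of $\vA$. By linearity of both sides in $\vA$, and since $h(\ad_{\vH_n})(\vA)=\sum_\omega h(\omega)\vA_\omega$ with $\vA=\sum_\omega \vA_\omega$, it suffices to show that left and right multiplication by $\vX,\vY$ maps each Bohr component to the component with the same frequency:
\[
(\vX\vA\vY)_\omega=\vX\vA_\omega\vY\qquad\text{for every }\omega\in B(\vH_n).
\]
Granting this, we get $h(\ad_{\vH_n})(\vX\vA\vY)=\sum_\omega h(\omega)(\vX\vA\vY)_\omega=\sum_\omega h(\omega)\vX\vA_\omega\vY=\vX\,h(\ad_{\vH_n})(\vA)\,\vY$.

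To prove the displayed identity I would use the spectral projections $\vP_E=\ketbra{\psi_E}{\psi_E}$ summed over each eigenvalue, i.e., write \eqref{eq:Bohr-decomposition} as
\[
\vA_\omega=\sum_{E'-E=\omega}\vP_{E'}\vA\vP_E .
\]
Since $\vX$ and $\vY$ commute with $\vH_n$, they commute with every spectral projection $\vP_E$, which are polynomials in $\vH_n$. Hence
\[
(\vX\vA\vY)_\omega=\sum_{E'-E=\omega}\vP_{E'}\vX\vA\vY\vP_E=\sum_{E'-E=\omega}\vX\vP_{E'}\vA\vP_E\vY=\vX\vA_\omega\vY .
\]

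An alternative route, which avoids projections, uses the first identity in \eqref{eq:bohr-identities}: the operator $\vX\vA_\omega\vY$ satisfies $[\vH_n,\vX\vA_\omega\vY]=\vX[\vH_n,\vA_\omega]\vY=\omega\,\vX\vA_\omega\vY$, so it lies in the $\omega$-eigenspace of $\ad_{\vH_n}$. Uniqueness of the decomposition of $\vX\vA\vY=\sum_\omega\vX\vA_\omega\vY$ into $\ad_{\vH_n}$-eigencomponents then identifies it with $(\vX\vA\vY)_\omega$.

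There is no real obstacle. The only point needing care is that the projections in \eqref{eq:Bohr-decomposition} must be grouped by eigenvalue, i.e., taken as spectral projections, rather than as individual rank-one projectors of a chosen eigenbasis. An operator commuting with $\vH_n$ commutes with the spectral projections, but need not commute with arbitrary eigenbasis projectors inside a degenerate eigenspace. The definition is basis-independent once each sum over eigenvectors of energy $E$ is collected into $\vP_E$, so I would state this explicitly at the start.
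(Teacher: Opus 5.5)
Your proposal is correct and matches the paper's proof: both establish $(\vX\vA\vY)_\omega=\vX\vA_\omega\vY$ by moving $\vX$ and $\vY$ through the spectral projections $\vP_E$ of $\vH_n$, then multiply by $h(\omega)$ and sum over $\omega\in B(\vH_n)$. Your alternative $\ad_{\vH_n}$-eigenspace argument and your remark about grouping eigenvectors into spectral projections are both valid, but the proof does not need them.
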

\begin{proof}
	\begin{align*}
        \vX\vA_\omega\vY &= \vX\left(\sum_{E'-E=\omega}\vP_E'\vA\vP_E\right)\vY\\
        &=\sum_{E'-E=\omega}\vP_{E'}(\vX\vA\vY)\vP_E\tag{$[\vX,\vH] = 0\implies[\vX,\vP_E] = 0$}\\
        &= (\vX\vA\vY)(\omega)
    \end{align*}
    Multiplying by $h(\omega)$ and summing over $\omega\in B(\vH_n)$ gives the claim.
\end{proof}

\begin{lemma}[Symmetries of $\vK_{ij}$]\label{lem:K-scalar}
	Let $\epsilon$ be the Levi-Civita symbol. For all sites $i\neq j$ and all $\alpha,\mu\in\{\rmx,\rmy,\rmz\}$:
	\begin{enumerate}[label=(\alph*), ref=\alph*]
		\item\label{item:StotBComm} $[\Stot^\mu,\vB^{(i,\alpha)}]=\ri\sum_{\gamma}\epsilon_{\mu\alpha\gamma}\,\vB^{(i,\gamma)}$, and hence
		$[\Stot^\mu,\vB^\alpha_{ij,+}]=\ri\sum_{\gamma}\epsilon_{\mu\alpha\gamma}\,\vB^\gamma_{ij,+}$;
		\item\label{item:KStotComm} $[\vK_{ij},\Stot^\mu]=0$;
		\item\label{item:BSwapConj} $\vU_{(ij)}\vB^{(i,\alpha)}\vU_{(ij)}=\vB^{(j,\alpha)}$. Hence $\vU_{(ij)}\vB^\alpha_{ij,+}\vU_{(ij)}=\vB^\alpha_{ij,+}$ and $[\vK_{ij},\vU_{(ij)}]=0$.
	\end{enumerate}
\end{lemma}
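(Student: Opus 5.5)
The plan is to push each unfiltered identity through the filter $\widetilde f(\beta\,\ad_{\vH_n})$. Two facts make this work. First, $\vH_n$ commutes with $\Stot^\mu$ and with $\vU_{(ij)}$. Second, \Cref{lem:commutation-with-bohr} says that left or right multiplication by operators commuting with $\vH_n$ commutes with any function of $\ad_{\vH_n}$. Since the filter is linear, the three claims then reduce to the corresponding statements for the bare Paulis $\vsigma^{(i,\alpha)}$.

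For (\ref{item:StotBComm}), write $[\Stot^\mu,\vB^{(i,\alpha)}]=\Stot^\mu\vB^{(i,\alpha)}-\vB^{(i,\alpha)}\Stot^\mu$. Apply \Cref{lem:commutation-with-bohr} to each term, once with $(\vX,\vY)=(\Stot^\mu,\vI)$ and once with $(\vX,\vY)=(\vI,\Stot^\mu)$. This gives
\[
[\Stot^\mu,\vB^{(i,\alpha)}]=\widetilde f(\beta\,\ad_{\vH_n})\bigl([\Stot^\mu,\vsigma^{(i,\alpha)}]\bigr).
\]
Only site $i$ contributes to the inner commutator, and $[\tfrac12\vsigma^{(i,\mu)},\vsigma^{(i,\alpha)}]=\ri\sum_\gamma\epsilon_{\mu\alpha\gamma}\vsigma^{(i,\gamma)}$. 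Applying the filter linearly gives the first claim. The statement for $\vB^\alpha_{ij,+}$ follows by adding the identities for sites $i$ and $j$ and dividing by $\sqrt2$.

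For (\ref{item:KStotComm}), use the Leibniz rule:
\[
[\Stot^\mu,\vK_{ij}]=\sum_\alpha\bigl([\Stot^\mu,\vB^\alpha]\vB^\alpha+\vB^\alpha[\Stot^\mu,\vB^\alpha]\bigr)=\ri\sum_{\alpha,\gamma}\epsilon_{\mu\alpha\gamma}\bigl(\vB^\gamma\vB^\alpha+\vB^\alpha\vB^\gamma\bigr),
\]
writing $\vB^\alpha=\vB^\alpha_{ij,+}$. The bracket is symmetric in $(\alpha,\gamma)$ and $\epsilon$ is antisymmetric, so the sum vanishes. This is the standard fact that the squared norm of a vector operator is a scalar.

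For (\ref{item:BSwapConj}), apply \Cref{lem:commutation-with-bohr} with $\vX=\vY=\vU_{(ij)}$, which commute with $\vH_n$ by \eqref{eq:PnInCHn}. Together with $\vU_{(ij)}\vsigma^{(i,\alpha)}\vU_{(ij)}=\vsigma^{(j,\alpha)}$, this gives $\vU_{(ij)}\vB^{(i,\alpha)}\vU_{(ij)}=\vB^{(j,\alpha)}$. Since $\vU_{(ij)}$ is an involution, the same conjugation maps $\vB^{(j,\alpha)}$ back to $\vB^{(i,\alpha)}$. Hence $\vB^\alpha_{ij,+}$ is swap invariant, and so is $\vK_{ij}$, since conjugation by $\vU_{(ij)}=\vU_{(ij)}^{-1}$ is an algebra automorphism; equivalently $[\vK_{ij},\vU_{(ij)}]=0$.

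The only point needing care is in (\ref{item:StotBComm}): one must check that $\Stot^\mu$ itself commutes with $\vH_n$, so that \Cref{lem:commutation-with-bohr} applies. It does, because $\Stot^2$ commutes with each component. Everything else is bookkeeping.
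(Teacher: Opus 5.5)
Your proposal is correct and follows the paper's proof essentially line by line. It uses the same applications of \Cref{lem:commutation-with-bohr}, with $(\Stot^\mu,\vI)$, $(\vI,\Stot^\mu)$ and $(\vU_{(ij)},\vU_{(ij)})$, the same Levi-Civita symmetry argument for (b), and the same involution argument for swap invariance of $\vK_{ij}$. The only slip is your justification of $[\vU_{(ij)},\vH_n]=0$: \eqref{eq:PnInCHn} states $\Pn\subseteq\comm{\vH_n}$, which does not imply this. What you need is $\vH_n\in\Pn$, i.e.\ permutation invariance of $\vH_n$, as noted in the proof of \Cref{cor:reductions}.
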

\begin{proof}
	\ref{item:StotBComm}: $\Stot^\mu=\frac12\sum_k\vsigma_k^\mu$ commutes with $\vH_n$. Applying \Cref{lem:commutation-with-bohr}
	once with $(\vX,\vY)=(\Stot^\mu,\vI)$ and once with $(\vX,\vY)=(\vI,\Stot^\mu)$ gives
	\begin{align*}	    
		[\Stot^\mu,\vB^{(i,\alpha)}] &= [\Stot^\mu,\widetilde f(\beta\,\ad_{\vH_n})(\vsigma^{(i,\alpha)})] = \Stot^\mu\widetilde f(\beta\,\ad_{\vH_n})(\vsigma^{(i,\alpha)}) - \widetilde f(\beta\,\ad_{\vH_n})(\vsigma^{(i,\alpha)})\Stot^\mu\\
		&=\widetilde f(\beta\,\ad_{\vH_n})(\Stot^\mu\vsigma^{(i,\alpha)}) - \widetilde f(\beta\,\ad_{\vH_n})(\vsigma^{(i,\alpha)}\Stot^\mu) = \widetilde f(\beta\,\ad_{\vH_n})\left([\Stot^\mu,\vsigma^{(i,\alpha)}]\right) \tag{\Cref{lem:commutation-with-bohr}}\\
		&=\widetilde f(\beta\,\ad_{\vH_n})\left(\tfrac12[\vsigma_i^\mu,\vsigma^{(i,\alpha)}]\right)=\ri\sum_\gamma\epsilon_{\mu\alpha\gamma}\,\vB_i^\gamma .
	\end{align*}
	The statement for $\vB^\alpha_{ij,+}$ follows by linearity.

	\ref{item:KStotComm}: By the Leibniz rule and \ref{item:StotBComm},
	\[
		\Bigl[\Stot^\mu,\sum_\alpha\vB^\alpha_{ij,+}\vB^\alpha_{ij,+}\Bigr]
		=\ri\sum_{\alpha,\gamma}\epsilon_{\mu\alpha\gamma}\bigl(\vB^\gamma_{ij,+}\vB^\alpha_{ij,+}+\vB^\alpha_{ij,+}\vB^\gamma_{ij,+}\bigr)=0,
	\]
	because the bracket is symmetric under $\alpha\leftrightarrow\gamma$, while $\epsilon_{\mu\alpha\gamma}$ is antisymmetric.

	\ref{item:BSwapConj}: We have $\vU_{(ij)}\vsigma^{(i,\alpha)}\vU_{(ij)}=\vsigma^{(j,\alpha)}$ and $[\vU_{(ij)},\vH_n]=0$.
	Then, \Cref{lem:commutation-with-bohr} gives the first claim as
    \begin{align*}
		\vU_{(ij)}\vB^{(i,\alpha)}\vU_{(ij)}
		&=\vU_{(ij)}\,\widetilde f(\beta\,\ad_{\vH_n})(\vsigma^{(i,\alpha)})\,\vU_{(ij)}
		=\widetilde f(\beta\,\ad_{\vH_n})\bigl(\vU_{(ij)}\vsigma^{(i,\alpha)}\vU_{(ij)}\bigr)\tag{\Cref{lem:commutation-with-bohr}}\\
		&=\widetilde f(\beta\,\ad_{\vH_n})(\vsigma^{(j,\alpha)})=\vB^{(j,\alpha)}.
	\end{align*}
    Similarly $\vU_{(ij)}\vB^{(j,\alpha)}\vU_{(ij)}=\vB^{(i,\alpha)}$. Consequently,
	\[
		\vU_{(ij)}\vB^\alpha_{ij,+}\vU_{(ij)}
		=\frac{\vU_{(ij)}\vB^{(i,\alpha)}\vU_{(ij)}+\vU_{(ij)}\vB^{(j,\alpha)}\vU_{(ij)}}{\sqrt2}
		=\frac{\vB^{(j,\alpha)}+\vB^{(i,\alpha)}}{\sqrt2}
		=\vB^\alpha_{ij,+}.
	\]
	Finally,
	\[
		\vU_{(ij)}\vK_{ij}\vU_{(ij)}
		=\sum_\alpha\bigl(\vU_{(ij)}\vB^\alpha_{ij,+}\vU_{(ij)}\bigr)\bigl(\vU_{(ij)}\vB^\alpha_{ij,+}\vU_{(ij)}\bigr)
		=\sum_\alpha\bigl(\vB^\alpha_{ij,+}\bigr)^2=\vK_{ij},
	\]
    which also immediately implies $[\vK_{ij}, \vU_{(ij)}] = 0$ as
    \begin{equation*}
        \underbrace{\vU_{(ij)}\vU_{(ij)}}_{\vI}\vK_{ij}\vU_{(ij)} = \vU_{(ij)}\vK_{ij} \implies[\vK_{ij}, \vU_{(ij)}] = 0 \tag*{\qedhere}.
    \end{equation*}
\end{proof}

\begin{corollary}[Block structure of $\vK_{ij}$]\label{cor:K-blocks}
	For all sites $i\neq j$:
	\begin{enumerate}[label=(\alph*), ref=\alph*]
		\item\label{item:K-comm-relations} $[\vK_{ij},\vH_n]=0$. Consequently $[\vK_{ij},\vPi_s]=0$ for all $s$, and $[\vK_{ij},\vrho]=0$.
		\item\label{item:K-proj} The projections $\vP_{s,\pm}\coloneqq\vPi_s\vPi^\pm_{ij}$ are mutually orthogonal, sum to $\vI$, and
		\begin{equation}\label{eq:K-block-diag}
			\vK_{ij}=\sum_{s}\sum_{\ell=\pm}\vP_{s,\ell}\,\vK_{ij}\,\vP_{s,\ell}.
		\end{equation}
		\item\label{item:K-basis} There is an orthonormal basis $\CK$ of $\Hn$ such that each $v\in\CK$ lies in a single block $\vP_{s,\pm}\Hn$ and is an eigenvector of $\vK_{ij}$. In particular, every $v\in\CK$ is a common eigenvector of $\vK_{ij}$,
		$\vU_{(ij)}$, $\vH_n$ and $\vrho$.
	\end{enumerate}
\end{corollary}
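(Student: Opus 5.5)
We prove the three items in order; each follows almost directly from \Cref{lem:K-scalar} and \Cref{eq:Heisenberg-spectrum}.

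\emph{Item (a).} The key observation is that $\vH_n=-\frac1n\sum_\mu(\Stot^\mu)^2$ is a polynomial in the collective spin operators. By \Cref{lem:K-scalar}\ref{item:KStotComm}, $\vK_{ij}$ commutes with every $\Stot^\mu$, hence with $\Stot^2$ and with $\vH_n$. Since $\Stot^2\vPi_s=s(s+1)\vPi_s$ and the values $s(s+1)$ are distinct for distinct $s\in\CS_n$, each $\vPi_s$ is a polynomial in $\Stot^2$ (Lagrange interpolation on its spectrum). Therefore $[\vK_{ij},\vPi_s]=0$. Likewise $\vrho=Z_{n,\beta}^{-1}\sum_s e^{-\beta E_s}\vPi_s$ is a linear combination of the $\vPi_s$, which gives $[\vK_{ij},\vrho]=0$. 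For $\beta=\infty$, $\vrho$ is a normalized $\vPi_{n/2}$, so the same conclusion holds.

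\emph{Item (b).} The swap $\vU_{(ij)}$ commutes with $\vH_n$ by \eqref{eq:PnInCHn}. Being permutation-symmetric, $\Stot^2$ also commutes with $\vU_{(ij)}$, so each $\vPi_s$ commutes with $\vPi^\pm_{ij}$. Consequently each $\vP_{s,\pm}=\vPi_s\vPi^\pm_{ij}$ is a product of commuting orthogonal projections and is itself an orthogonal projection. Mutual orthogonality follows from $\vPi_s\vPi_{s'}=\delta_{ss'}\vPi_s$ and $\vPi^+_{ij}\vPi^-_{ij}=0$. Completeness follows from $\sum_s\vPi_s=\vI$ and $\vPi^+_{ij}+\vPi^-_{ij}=\vI$. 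By item (a) and \Cref{lem:K-scalar}\ref{item:BSwapConj}, $\vK_{ij}$ commutes with every $\vP_{s,\ell}$. Inserting $\vI=\sum_{s,\ell}\vP_{s,\ell}$ on both sides of $\vK_{ij}$ then kills all off-diagonal terms, since $\vP_{s,\ell}\vK_{ij}\vP_{s',\ell'}=\vK_{ij}\vP_{s,\ell}\vP_{s',\ell'}$. This yields \eqref{eq:K-block-diag}.

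\emph{Item (c).} $\vK_{ij}$ is self-adjoint, as shown after \eqref{eq:K-definition}. By (b) it restricts to a self-adjoint operator on each invariant subspace $\vP_{s,\ell}\Hn$, so we can diagonalize it there with an orthonormal eigenbasis. The union over $(s,\ell)$ is an orthonormal basis $\CK$ of $\Hn$, because the blocks are orthogonal and sum to $\Hn$. Each $v\in\CK$ lies in $\vP_{s,\ell}\Hn$, so it satisfies $\vU_{(ij)}v=\pm v$, $\vH_nv=E_sv$ and $\vrho v=Z_{n,\beta}^{-1}e^{-\beta E_s}v$; it is also an eigenvector of $\vK_{ij}$ by construction.

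No step is a real obstacle. The only points needing care are two: that $\vPi_s$ is a function of $\Stot^2$ (equivalently of $\vH_n$, since $E_s$ is strictly monotone in $s$), and that $\vPi_s$ and $\vPi^\pm_{ij}$ commute. Both are immediate from the Schur--Weyl discussion above.
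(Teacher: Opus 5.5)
Your proof is correct and follows the paper's argument for (a) and (b). In (c) you diagonalize $\vK_{ij}$ separately on each invariant block $\vP_{s,\ell}\Hn$ from (b), whereas the paper takes a common eigenbasis of the commuting self-adjoint family $\vK_{ij},\vU_{(ij)},\vH_n$ and uses distinctness of the energies $E_s$ to place each vector in a single block; the two are equivalent, and yours is slightly more direct once (b) is available. One small citation slip: \eqref{eq:PnInCHn} only shows $\Pn\subseteq\comm{\vH_n}$, and this does not by itself give $[\vU_{(ij)},\vH_n]=0$, because $\vU_{(ij)}$ need not lie in $\Pn$; the correct reason is the permutation invariance of $\Stot^2$ (i.e.\ $\vH_n\in\Pn$), which you state in your very next sentence.
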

\begin{proof}
	\ref{item:K-comm-relations}: By \Cref{lem:K-scalar}\,\ref{item:KStotComm} and the Leibniz rule, $[\vK_{ij},\Stot^2]=\sum_\mu\left([\vK_{ij},\Stot^\mu]\Stot^\mu+\Stot^\mu[\vK_{ij},\Stot^\mu]\right)=0$.
	Since $\vH_n=-\Stot^2/n$, and both $\vPi_s$ and $\vrho$ are functions of $\vH_n$, the claim follows.

	\ref{item:K-proj}: $\vU_{(ij)}$ commutes with $\vH_n$ and hence with each $\vPi_s$. So the $\vP_{s,\pm}$ are products of commuting projections. They are mutually orthogonal and sum to $\vI$ because the projectors $\vPi_s$ and $\vPi^\pm_{ij}$ each do.
	By \ref{item:K-comm-relations} and \Cref{lem:K-scalar}\,\ref{item:BSwapConj}, $\vK_{ij}$ commutes with every $\vP_{s,\ell}$.
	Therefore $\vP_{s,\ell}\vK_{ij}\vP_{s',\ell'}=\vK_{ij}\vP_{s,\ell}\vP_{s',\ell'}=0$ unless $(s,\ell)=(s',\ell')$.
	Inserting $\vI=\sum_{s,\ell}\vP_{s,\ell}$ on both sides of $\vK_{ij}$ gives \eqref{eq:K-block-diag}.

    \ref{item:K-basis}: The operators $\vK_{ij}$ (self-adjoint by the discussion after \eqref{eq:K-definition}), $\vU_{(ij)}$ and $\vH_n$
	are self-adjoint and commute pairwise by \ref{item:K-comm-relations} and \Cref{lem:K-scalar}\,\ref{item:BSwapConj}. Hence they have a
	common orthonormal eigenbasis $\CK$. Let $v\in\CK$. Since $\vU_{(ij)}v=\varepsilon v$ with $\varepsilon\in\{\pm1\}$, we have
	$\vPi^\varepsilon_{ij}v=v$. Since $\vH_nv=Ev$ and the energies $E_s$ are pairwise distinct, $E=E_s$ for exactly one $s$, and
	$\vPi_sv=v$ because $\vPi_s$ is the spectral projection of $\vH_n$ for $E_s$. Hence $v=\vPi_s\vPi^\varepsilon_{ij}v\in\vP_{s,\varepsilon}\Hn$.
\end{proof}

\subsection{Lower bound: \texorpdfstring{$\vK_{ij}$}{Kij} separates the swap sectors}\label{sec:K-odd}

Here we make the eigenvalue argument behind \eqref{eq:K-definition} precise. The zero eigenvalue on the antisymmetric sector is protected by \eqref{eq:singlet-annihilation} as $\vPi^-_{ij}$ commutes with $\vH_n$. For the symmetric sector we use that, on a vector of fixed energy, the filter
multiplies each Bohr component of a local operator by $\widetilde f(\beta\omega)$; this is quantified in \Cref{lem:fixed-energy}.

\begin{lemma}[Fixed-energy comparison]\label{lem:fixed-energy}
	Consider the inverse temperature $\beta \geq 0$ and an arbitrary symmetrized KMS-filter $\widetilde{f}$. Let $\vA\in\CB(\Hn)$ satisfy $B(\vA)\subseteq[-1,1]$, and let $\vB\coloneqq\widetilde f(\beta\,\ad_{\vH_n})(\vA)$. For every $s\in\CS_n$ and every $\psi\in\vPi_s\Hn$,
	\begin{equation}\label{eq:fixed-energy}
		\|\vB\psi\|^2=\sum_{\omega\in B(\vA)}\labs{\widetilde f(\beta\omega)}^2\,\|\vA_\omega\psi\|^2.
	\end{equation}
	In particular,
    \begin{equation*}
        m_f\coloneqq\inf_{\labs{\omega}\leq1}\labs{\widetilde f(\beta\omega)}, \qquad \nrm{\vB\psi}^2\geq m_f^2\,\nrm{\vA\psi}^2
    \end{equation*}
\end{lemma}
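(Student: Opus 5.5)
The plan is to decompose $\vB\psi$ into its Bohr components and use that these components land in mutually orthogonal spin blocks. Write $\vA=\sum_{\omega\in B(\vA)}\vA_\omega$, so by definition
\[
\vB\psi=\sum_{\omega\in B(\vA)}\widetilde f(\beta\omega)\,\vA_\omega\psi .
\]
Since $\psi\in\vPi_s\Hn$ has energy $E_s$, and $\vA_\omega$ maps the $E$-eigenspace into the $(E+\omega)$-eigenspace by \eqref{eq:Bohr-decomposition}, each vector $\vA_\omega\psi$ lies in the eigenspace of $\vH_n$ with eigenvalue $E_s+\omega$. For distinct $\omega$ these are distinct eigenspaces of the self-adjoint $\vH_n$, hence orthogonal. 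Pythagoras then gives \eqref{eq:fixed-energy} immediately.

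For the ``in particular'' part, apply the same orthogonality with $\widetilde f\equiv1$ to get $\|\vA\psi\|^2=\sum_\omega\|\vA_\omega\psi\|^2$. The Bohr frequencies occurring satisfy $|\omega|\le1$ by the assumption $B(\vA)\subseteq[-1,1]$; the local Paulis and their pair combinations $\vA^\alpha_{ij,+}$ satisfy this by \Cref{fact:pauli-selection-rule}. So each weight obeys $|\widetilde f(\beta\omega)|^2\ge\inf_{|\omega|\le1}|\widetilde f(\beta\omega)|^2=m_f^2$, and summing gives $\|\vB\psi\|^2\ge m_f^2\|\vA\psi\|^2$.

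There is no real obstacle. The only points to check carefully are that the sum over $\omega$ ranges over Bohr frequencies that actually occur (zero components contribute nothing), and that the infimum defining $m_f$ here, over $|\omega|\le1$ after scaling by $\beta$, matches the earlier definition $\inf_{|\nu|\le\beta}|\widetilde f(\nu)|$ in \eqref{eq:band-constants} under the substitution $\nu=\beta\omega$. This also covers $\beta=0$, where $\widetilde f(0)=f(0)$ is constant.
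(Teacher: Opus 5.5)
Your proof is correct and follows essentially the same route as the paper: expand $\vB\psi$ into Bohr components, note that distinct frequencies send $\psi$ into mutually orthogonal eigenspaces of $\vH_n$, apply Pythagoras (also with $\widetilde f\equiv1$ to get $\|\vA\psi\|^2$), and bound each weight by $m_f^2$. The only difference is cosmetic. The paper identifies the target eigenspace with a single spin block $\vPi_{t(\omega)}$ using that the energies $E_t$ are distinct, whereas you work directly with the spectral projections of $\vH_n$, which does not need that distinctness.
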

\begin{proof}
	Since the energies $E_t=-t(t+1)/n$ are pairwise distinct, $\vPi_s$ is the spectral projection of $\vH_n$ for the
	eigenvalue $E_s$, and $\vPi_r\psi=\delta_{rs}\psi$. In the Bohr decomposition only $r=s$ survives:
	\[
		\vA_\omega\psi=\sum_{E_t-E_r=\omega}\vPi_t\vA\vPi_r\psi=\sum_{t\colon E_t=E_s+\omega}\vPi_t\vA\psi .
	\]
	Again by distinctness of the energies, at most one $t$ satisfies $E_t=E_s+\omega$. Write $t(\omega)$ for it, so that
	\[
		\vA_\omega\psi=\vPi_{t(\omega)}\vA\psi,
	\]
	with the convention $\vPi_{t(\omega)}\coloneqq0$ if no such $t$ exists. Distinct frequencies $\omega\neq\omega'$ give
	distinct blocks $t(\omega)\neq t(\omega')$, and therefore
	\[
		\ipc{\vA_\omega\psi}{\vA(\omega')\psi}=\ipc{\vA\psi}{\vPi_{t(\omega)}\vPi_{t(\omega')}\vA\psi}=0
		\qquad(\omega\neq\omega').
	\]
	Hence by Pythagoras
	\begin{align*}
		\nrm{\vB\psi}^2
		&=\nrm{\sum_{\omega\in B(\vA)}\widetilde f(\beta\omega)\,\vA_\omega\psi}^2
		=\sum_{\omega,\omega'\in B(\vA)}\widetilde f^*(\beta\omega)\widetilde f(\beta\omega')\,\ipc{\vA_\omega\psi}{\vA(\omega')\psi}\\
		&=\sum_{\omega\in B(\vA)}\labs{\widetilde f(\beta\omega)}^2\,\nrm{\vA_\omega\psi}^2,
	\end{align*}
	which is \eqref{eq:fixed-energy}. Similarly,
	\[
		\nrm{\vA\psi}^2=\nrm{\sum_{\omega\in B(\vA)}\vA_\omega\psi}^2=\sum_{\omega\in B(\vA)}\nrm{\vA_\omega\psi}^2 .
	\]
	Finally, by the assumptions
	\[
		\nrm{\vB\psi}^2\ \ge\ m_f^2\sum_{\omega\in B(\vA)}\|\vA_\omega\psi\|^2=m_f^2\,\|\vA\psi\|^2 .\qedhere
	\]
\end{proof}
We now apply~\Cref{lem:fixed-energy} to the unfiltered operators $\vA^\alpha_{ij,+}$. Since
\begin{equation*}
	\sum_{\alpha\in\{\rmx,\rmy,\rmz\}}\left(\vA^\alpha_{ij,+}\right)^2 = 4\,\vPi^+_{ij}.
\end{equation*}
the lower bound on the filtered operators $\vB^\alpha_{ij,+}$ translates into a lower bound on the quadratic form of $\vK_{ij}$.
\begin{lemma}[Spectrum of $\vK_{ij}$ on the two sectors]\label{lem:swap-blocks}
	Assume the setting of~\Cref{lem:fixed-energy} and let $\vK_{ij}$ be as in~\eqref{eq:K-definition}. For all sites $i\neq j$, we have $\vK_{ij}\vPi^-_{ij}=0$. Moreover, for every $s$ and every $\psi\in\vPi_s\Hn$ with $\vU_{(ij)}\psi=\psi$,
	\[
		\ipc{\psi}{\vK_{ij}\psi} \geq 4m_f^2\,\nrm{\psi}^2 .
	\]
\end{lemma}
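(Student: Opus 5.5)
The plan has two parts, one for each swap sector, and both rely on the filter $\widetilde f(\beta\,\ad_{\vH_n})$ commuting with multiplication by operators that commute with $\vH_n$ (\Cref{lem:commutation-with-bohr}).

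\emph{Antisymmetric sector.} We first show $\vB^\alpha_{ij,+}\vPi^-_{ij}=0$ for every $\alpha$. The projection $\vPi^-_{ij}=\frac12(\vI-\vU_{(ij)})$ commutes with $\vH_n$, so \Cref{lem:commutation-with-bohr} with $(\vX,\vY)=(\vI,\vPi^-_{ij})$ and $h=\widetilde f(\beta\,\cdot)$ gives
\[
\vB^\alpha_{ij,+}\vPi^-_{ij}=\widetilde f(\beta\,\ad_{\vH_n})\bigl(\vA^\alpha_{ij,+}\vPi^-_{ij}\bigr)=0,
\]
where the last equality is \eqref{eq:singlet-annihilation}. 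Hence $\vK_{ij}\vPi^-_{ij}=\sum_\alpha\vB^\alpha_{ij,+}\bigl(\vB^\alpha_{ij,+}\vPi^-_{ij}\bigr)=0$.

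\emph{Symmetric sector.} Fix $s$ and $\psi\in\vPi_s\Hn$ with $\vU_{(ij)}\psi=\psi$. Since each $\vB^\alpha_{ij,+}$ is self-adjoint,
\[
\ipc{\psi}{\vK_{ij}\psi}=\sum_\alpha\nrm{\vB^\alpha_{ij,+}\psi}^2 .
\]
We then apply \Cref{lem:fixed-energy} to each $\vA^\alpha_{ij,+}$ with $\vB=\vB^\alpha_{ij,+}$.
\begin{itemize}
\item The hypothesis $B(\vA^\alpha_{ij,+})\subseteq[-1,1]$ holds because $\vA^\alpha_{ij,+}$ is a linear combination of two single-site Paulis, whose Bohr frequencies lie in $[-1,1]$ by \Cref{fact:pauli-selection-rule}.
\item \Cref{lem:fixed-energy} then gives $\nrm{\vB^\alpha_{ij,+}\psi}^2\ge m_f^2\nrm{\vA^\alpha_{ij,+}\psi}^2$.
\item Summing over $\alpha$ and using \eqref{eq:pair-casimir},
\[
\sum_\alpha\nrm{\vA^\alpha_{ij,+}\psi}^2=\ipc{\psi}{\textstyle\sum_\alpha(\vA^\alpha_{ij,+})^2\psi}=4\ipc{\psi}{\vPi^+_{ij}\psi}=4\nrm{\psi}^2 ,
\]
since $\vPi^+_{ij}\psi=\psi$.
\end{itemize}
Together these give $\ipc{\psi}{\vK_{ij}\psi}\ge 4m_f^2\nrm{\psi}^2$.

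The step that needs care is matching the constant $m_f$. In \Cref{lem:fixed-energy} it is defined as the infimum over $|\omega|\le1$ of $|\widetilde f(\beta\omega)|$, which is the same as the infimum over $|\nu|\le\beta$ in \eqref{eq:band-constants}, so the two agree. We also need $m_f>0$, i.e.\ that the restriction to $|\omega|\le1$ avoids any zeros of $\widetilde f$. This follows from the normalization in \eqref{eq:f-KMS}, since $|\widetilde f(\nu)|^2\ge c_\star e^{-|\nu|/2}$.
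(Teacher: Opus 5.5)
Your proposal is correct and follows the paper's proof essentially step for step. On the antisymmetric sector you use \Cref{lem:commutation-with-bohr} together with \eqref{eq:singlet-annihilation}; on the symmetric sector you use \Cref{lem:fixed-energy} together with \eqref{eq:pair-casimir}. Your extra check that the two definitions of $m_f$ agree and that $m_f>0$ is a harmless addition, and the inequality would in fact hold trivially if $m_f=0$.
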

\begin{proof}
	\emph{Antisymmetric sector.} Since $\vPi^-_{ij}$ commutes with $\vH_n$, \Cref{lem:commutation-with-bohr} and~\eqref{eq:singlet-annihilation} give
	\begin{equation*}
	    \vB^\alpha_{ij,+}\vPi^-_{ij}=\widetilde f(\beta\,\ad_{\vH_n})\left(\vA^\alpha_{ij,+}\vPi^-_{ij}\right)=0.
	\end{equation*}
	Hence $\vK_{ij}\vPi^-_{ij}=\sum_\alpha\vB^\alpha_{ij,+}\left(\vB^\alpha_{ij,+}\vPi^-_{ij}\right)=0$.

	\emph{Symmetric sector.} By \Cref{fact:pauli-selection-rule}, $B(\vA^\alpha_{ij,+})\subseteq B(\vsigma^{(i,\alpha)})\cup B(\vsigma^{(j,\alpha)})\subseteq[-1,1]$, so \Cref{lem:fixed-energy} applies to $\vA^\alpha_{ij,+}$. Using that $\vB^\alpha_{ij,+}$ and $\vA^\alpha_{ij,+}$ are self-adjoint and $\vPi^+_{ij}\psi=\psi$,
	\begin{align*}
		\ipc{\psi}{\vK_{ij}\psi} &=\sum_\alpha\ipc{\vB^\alpha_{ij,+}\psi}{\vB^\alpha_{ij,+}\psi} = \sum_\alpha\nrm{\vB^\alpha_{ij,+}\psi}^2\\
		&\geq m_f^2\sum_\alpha\nrm{\vA^\alpha_{ij,+}\psi}^2 \tag{\Cref{lem:fixed-energy}}\\
		&=m_f^2\,\ipc{\psi}{\sum_\alpha\left(\vA^\alpha_{ij,+}\right)^2\psi} =4m_f^2\,\ipc{\psi}{\vPi^+_{ij}\psi} \tag{By~\eqref{eq:pair-casimir}}\\
		&=4m_f^2\nrm{\psi}^2 .\tag*{\qedhere}
	\end{align*}
\end{proof}
The previous lemma concerns vectors with a fixed spin and swap parity. We now translate those bounds directly into bounds on the eigenvalues of $\vK_{ij}$.
\begin{corollary}[Separation of the swap sectors]\label{cor:K-separation}
	Let $\CK$ be the basis of \Cref{cor:K-blocks}\,\ref{item:K-basis}, write $\vK_{ij}v=\lambda(v)v$, and let
	$\CK_\pm\coloneqq\{v\in\CK:\vU_{(ij)}v=\pm v\}$. Then
	\begin{equation}\label{eq:K-eigenval-sector-bound}
		\lambda(v)\ge4m_f^2\quad(v\in\CK_+),
		\qquad
		\lambda(v)=0\quad(v\in\CK_-).
	\end{equation}
\end{corollary}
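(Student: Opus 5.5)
The plan is to read off both claims from \Cref{cor:K-blocks} and \Cref{lem:swap-blocks}, treating the two sectors separately. Fix $v\in\CK$. By \Cref{cor:K-blocks}\,\ref{item:K-basis}, $v$ is an eigenvector of $\vK_{ij}$ and lies in a single block $\vP_{s,\varepsilon}\Hn$ for some $s\in\CS_n$ and $\varepsilon\in\{\pm\}$. In particular $\vPi_s v=v$, and $\vU_{(ij)}v=\varepsilon v$. Hence the sets $\CK_+$ and $\CK_-$ partition $\CK$, and each $v\in\CK_\pm$ is a unit vector with fixed spin $s$ and swap parity $\pm$. These are exactly the hypotheses needed to apply \Cref{lem:swap-blocks}.

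\emph{Antisymmetric sector.} Take $v\in\CK_-$. Then $\vPi^-_{ij}v=\tfrac12(v-\vU_{(ij)}v)=v$. \Cref{lem:swap-blocks} gives $\vK_{ij}\vPi^-_{ij}=0$, so
\[
\lambda(v)\,v=\vK_{ij}v=\vK_{ij}\vPi^-_{ij}v=0 .
\]
Since $v\neq0$, this forces $\lambda(v)=0$.

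\emph{Symmetric sector.} Take $v\in\CK_+$, so $v\in\vPi_s\Hn$ and $\vU_{(ij)}v=v$. Because $v$ is a unit vector,
\[
\lambda(v)=\ipc{v}{\vK_{ij}v}\ \ge\ 4m_f^2\,\nrm{v}^2=4m_f^2 ,
\]
where the inequality is \Cref{lem:swap-blocks}. We also note that $\lambda(v)$ is real, since $\vK_{ij}$ is self-adjoint (indeed positive semidefinite).

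There is no real obstacle here; the only point needing care is the bookkeeping for $m_f$. \Cref{lem:fixed-energy} defines $m_f$ as the infimum over $\labs{\omega}\le1$ of $\labs{\widetilde f(\beta\omega)}$. This equals the infimum of $\labs{\widetilde f(\nu)}$ over $\labs{\nu}\le\beta$, which is the definition in \eqref{eq:band-constants}, so the two constants coincide. Finally, the basis $\CK$ from \Cref{cor:K-blocks} diagonalizes $\vK_{ij}$ and $\vU_{(ij)}$ simultaneously, so the eigenvalue bounds \eqref{eq:K-eigenval-sector-bound} hold for every $v\in\CK$, as claimed.
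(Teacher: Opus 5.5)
Your proposal is correct and follows essentially the same argument as the paper. You use \Cref{cor:K-blocks}\,\ref{item:K-basis} to place each $v\in\CK$ in a single block $\vP_{s,\pm}\Hn$, and then apply \Cref{lem:swap-blocks} in each sector: $\vK_{ij}v=\vK_{ij}\vPi^-_{ij}v=0$ on $\CK_-$, and $\lambda(v)=\ipc{v}{\vK_{ij}v}\ge4m_f^2$ on $\CK_+$. Your remark that the two definitions of $m_f$ coincide is also correct.
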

\begin{proof}
	By \Cref{cor:K-blocks}\,\ref{item:K-basis}, each $v\in\CK$ lies in a single subspace $\vP_{s,\pm}\Hn=\vPi_s\vPi^\pm_{ij}\Hn$. For $v\in\CK_+$, \Cref{lem:swap-blocks} gives $\lambda(v)=\ipc{v}{\vK_{ij}v}\ge4m_f^2$, since $\nrm{v}=1$, while for $v\in\CK_-$, it gives $\vK_{ij}v=\vK_{ij}\vPi^-_{ij}v=0$.
\end{proof}

Thus the spectra of $\vK_{ij}$ on the two swap sectors are separated by at least $4m_f^2$. We now use this separation to prove
\Cref{lem:K-detects-odd}, which we restate for convenience. 
\lemmKOdd*
\begin{proof}
	Let $\CK$ be the common eigenbasis of \Cref{cor:K-blocks}\,\ref{item:K-basis}, with
	$\vK_{ij}v=\lambda(v)v$, $\vU_{(ij)}v=\varepsilon(v)v$ ($\varepsilon(v)\in\{\pm1\}$) and $\vrho v=p(v)v$ ($p(v)>0$).
	Put $X_{vw}\coloneqq\ipc{v}{\vX w}$ for $v,w\in\CK$. By the self-adjoint property of $\vK_{ij}$ and $\vU_{(ij)}$ we can show that
	\[
		\ipc{v}{[\vK_{ij},\vX]w}=\left(\lambda(v)-\lambda(w)\right)X_{vw},
		\qquad
		\ipc{v}{\vX^-_{ij}w}=\tfrac12\left(1-\varepsilon(v)\varepsilon(w)\right)X_{vw}.
	\]
	In this basis $\nr{\vY}^2=\sum_{v,w\in\CK}\sqrt{p(v)p(w)}\,\labs{\ipc{v}{\vY w}}^2$, hence
	\[
		\nr{[\vK_{ij},\vX]}^2=\sum_{v,w\in\CK}\sqrt{p(v)p(w)}\,\labs{\lambda(v)-\lambda(w)}^2\labs{X_{vw}}^2,
		\qquad
		\nr{\vX^-_{ij}}^2=\sum_{\varepsilon(v)\neq\varepsilon(w)}\sqrt{p(v)p(w)}\,\labs{X_{vw}}^2 .
	\]
	If $\varepsilon(v)\neq\varepsilon(w)$, one of $v,w$ lies in $\CK_+$ and the other in $\CK_-$, so $\labs{\lambda(v)-\lambda(w)}\geq4m_f^2$ by \eqref{eq:K-eigenval-sector-bound}. Dropping all terms with $\varepsilon(v)=\varepsilon(w)$ from the first sum gives~\eqref{eq:K-detects-odd}.
\end{proof}

\subsection{Upper bound: the Dirichlet form controls \texorpdfstring{$[\vK_{ij},\vX]$}{[Kij,X]}}\label{sec:K-energy}
In this subsection we prove \Cref{lem:K-energy}, the upper bound in \eqref{eq:K-strategy}. The single-site Dirichlet forms $\CE_i(\vX)$ and $\CE_j(\vX)$ from~\eqref{eq:local-energy} integrate the commutators of $\vX$ with the time-evolved operators $\vB^{(i,\alpha)}(t)$ and $\vB^{(j,\alpha)}(t)$. Since $\vK_{ij}$ is a sum of squares of the operators $\vB^\alpha_{ij,+}=(\vB^{(i,\alpha)}+\vB^{(j,\alpha)})/\sqrt2$, the Leibniz rule writes
$[\vK_{ij},\vX]$ in terms of exactly such commutators, each multiplied by one further filtered operator. The proof therefore splits into two parts. We first pull the additional filtered factors out of the KMS norm, and then we bound each term individually.

For the first ingredient we use the following H\"older-type inequality, in which the price of multiplying by $\vA$ is the spectral norm of the modular conjugate $\vrho^{\pm1/4}\vA\vrho^{\mp1/4}$.
\begin{fact}[KMS H\"older inequality {\cite[Lemma~II.2]{bergamaschi2025FastMixingSpinChainsAllTemps}}]\label{fact:kms-holder}
	Let $\vrho$ be a full-rank state, and for $\vY\in\CB(\CH)$ let $\nr{\vY}^2\coloneqq\Tr\bigl[\vY^\dagger\vrho^{1/2}\vY\vrho^{1/2}\bigr]$.
	Then for all $\vA,\vY\in\CB(\CH)$,
	\[
		\nr{\vA\vY}\le\bigl\|\vrho^{1/4}\vA\vrho^{-1/4}\bigr\|_\infty\nr{\vY},
		\qquad
		\nr{\vY\vA}\le\bigl\|\vrho^{-1/4}\vA\vrho^{1/4}\bigr\|_\infty\nr{\vY}.
	\]
\end{fact}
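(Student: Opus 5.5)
The plan is to reduce everything to a single factorization of the KMS norm. Write $\vrho^{1/4}(\vA\vY)\vrho^{1/4}=(\vrho^{1/4}\vA\vrho^{-1/4})(\vrho^{1/4}\vY\vrho^{1/4})$. Then, directly from the definition $\nr{\vZ}=\|\vrho^{1/4}\vZ\vrho^{1/4}\|_2$ stated in the introduction, we have
\[
\nr{\vA\vY}=\bigl\|(\vrho^{1/4}\vA\vrho^{-1/4})\,(\vrho^{1/4}\vY\vrho^{1/4})\bigr\|_2 .
\]
The first step is to check that the norm in the statement, $\Tr[\vY^\dagger\vrho^{1/2}\vY\vrho^{1/2}]$, coincides with $\|\vrho^{1/4}\vY\vrho^{1/4}\|_2^2$. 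This follows by cyclicity of the trace: $\Tr[\vrho^{1/4}\vY^\dagger\vrho^{1/2}\vY\vrho^{1/4}]=\Tr[\vY^\dagger\vrho^{1/2}\vY\vrho^{1/2}]$. Full rank of $\vrho$ is what makes $\vrho^{-1/4}$ well defined, so the factorization is legitimate.

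The second step is the standard Hilbert--Schmidt ideal inequality $\|\vM\vN\|_2\le\|\vM\|_\infty\|\vN\|_2$. I would justify it by writing $\|\vM\vN\|_2^2=\Tr[\vN^\dagger\vM^\dagger\vM\vN]$ and using $\vM^\dagger\vM\preceq\|\vM\|_\infty^2\vI$, which gives
\[
\Tr[\vN^\dagger\vM^\dagger\vM\vN]\le\|\vM\|_\infty^2\Tr[\vN^\dagger\vN].
\]
Applying this with $\vM=\vrho^{1/4}\vA\vrho^{-1/4}$ and $\vN=\vrho^{1/4}\vY\vrho^{1/4}$ yields the first inequality.

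The second inequality follows symmetrically. Write $\vrho^{1/4}\vY\vA\vrho^{1/4}=(\vrho^{1/4}\vY\vrho^{1/4})(\vrho^{-1/4}\vA\vrho^{1/4})$, and use the right-multiplication version $\|\vN\vM\|_2\le\|\vN\|_2\|\vM\|_\infty$. This version can be obtained by taking adjoints, since $\|\vZ\|_2=\|\vZ^\dagger\|_2$ and $\|\vM^\dagger\|_\infty=\|\vM\|_\infty$.

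There is no real obstacle here. The only point needing care is getting the placement of the exponents $\pm1/4$ right in each factorization, so that they match the conjugates appearing in the statement.
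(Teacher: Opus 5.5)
Your argument is correct: rewriting $\nr{\vZ}=\|\vrho^{1/4}\vZ\vrho^{1/4}\|_2$ by cyclicity, inserting $\vrho^{-1/4}\vrho^{1/4}$ next to $\vA$, and applying $\|\vM\vZ\|_2\le\|\vM\|_\infty\|\vZ\|_2$ (and its right-multiplication version) yields both inequalities with exactly the stated conjugates. The paper gives no proof of its own and imports the statement from \cite[Lemma~II.2]{bergamaschi2025FastMixingSpinChainsAllTemps}; your factorization is the standard argument for that lemma.
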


For the second ingredient, we need these modular conjugates for $\vA=\vB^{(i,\alpha)}(t)$. Conjugating by $\vrho^{\pm1/4}$ multiplies
each Bohr component of $\vB^{(i,\alpha)}$ by $\e^{\mp\beta\omega/4}$, which turns the symmetrized filter $\widetilde f$ back into the
filter $f$. Hence the modular conjugates are Paulis reweighted by $f$, and their norms are controlled by $N_f$. \Cref{fact:pauli-selection-rule} then ensures that only three such reweighted pieces occur.
\begin{lemma}[Modular conjugates of the filtered operators]\label{lemma:rho-conjugated-B-bound}
    Let $\beta\geq0$, let $f$ be a KMS filter with symmetrized filter $\widetilde{f}$, and let
	$N_f\coloneqq\sup_{\labs{\omega}\leq1}\labs{f(\beta\,\omega)}$ as in~\eqref{eq:band-constants}.
	For every site $i$, Pauli direction $\alpha$ and time $t\in\BR$,
	\[
		\max\left\{\nrm{\vrho^{1/4}\vB^{(i,\alpha)}(t)\vrho^{-1/4}}_\infty,\ \nrm{\vrho^{-1/4}\vB^{(i,\alpha)}(t)\vrho^{1/4}}_\infty\right\}\le3N_f .
	\]
	Consequently, the modular conjugates of $\vB^\alpha_{ij,+}(t)=(\vB^{(i,\alpha)}(t)+\vB^{(j,\alpha)}(t))/\sqrt2$ have spectral norm at most $3\sqrt2\,N_f$..
\end{lemma}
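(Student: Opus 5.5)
The plan is to compute the modular conjugates explicitly in the Bohr decomposition and then bound each Bohr piece separately. Write $\vB^{(i,\alpha)}=\sum_{\omega\in B(\vsigma^{(i,\alpha)})}\widetilde f(\beta\omega)\,\vsigma^{(i,\alpha)}_\omega$. By \eqref{eq:bohr-identities}, time evolution multiplies each component by $e^{\ri t\omega}$, and conjugation $\vrho^{a}(\cdot)\vrho^{-a}$ multiplies it by $e^{-a\beta\omega}$.

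For $a=1/4$ this gives
\[
\vrho^{1/4}\vB^{(i,\alpha)}(t)\vrho^{-1/4}=\sum_\omega e^{\ri t\omega}e^{-\beta\omega/4}\widetilde f(\beta\omega)\,\vsigma^{(i,\alpha)}_\omega=\sum_\omega e^{\ri t\omega}f(\beta\omega)\,\vsigma^{(i,\alpha)}_\omega ,
\]
using $\widetilde f(\nu)=e^{\nu/4}f(\nu)$. For $a=-1/4$ the coefficients become $e^{\beta\omega/2}f(\beta\omega)$. By the KMS condition \eqref{eq:f-KMS}, $e^{\nu/2}f(\nu)=f^*(-\nu)$, so the coefficient is $f^*(-\beta\omega)$, with modulus $|f(-\beta\omega)|$. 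In both cases every coefficient has modulus at most $\sup_{|\omega|\le1}|f(\beta\omega)|=N_f$. Here I use \Cref{fact:pauli-selection-rule}, which gives that every relevant $\omega$ satisfies $|\omega|\le1$, and that $-\omega$ also lies in $[-1,1]$.

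Next I bound the number of Bohr components. The Bohr frequency $\omega$ is determined by the pair $(s,s')$, and by \Cref{fact:pauli-selection-rule} we have $s'-s\in\{-1,0,1\}$. So I group the sum by $d=s'-s$ instead of by $\omega$:
\[
\vsigma^{(i,\alpha)}_{(d)}:=\sum_s\vPi_{s+d}\,\vsigma^{(i,\alpha)}\vPi_s .
\]
Each $\vsigma^{(i,\alpha)}_{(d)}$ is block-shifted with blocks $\vPi_{s+d}\vsigma\vPi_s$. These blocks have orthogonal domains and orthogonal ranges, so the operator norm is the maximum over blocks, hence at most $\|\vsigma^{(i,\alpha)}\|_\infty=1$.

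The conjugated operator equals $\sum_{d\in\{-1,0,1\}}\sum_s c_{s,d}\,\vPi_{s+d}\vsigma\vPi_s$, where $c_{s,d}$ is the scalar coefficient attached to $\omega=E_{s+d}-E_s$, with $|c_{s,d}|\le N_f$. The same block-orthogonality argument bounds each $d$-piece by $\max_s|c_{s,d}|\,\|\vPi_{s+d}\vsigma\vPi_s\|_\infty\le N_f$. The triangle inequality over the three values of $d$ then gives $3N_f$.

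This regrouping is the main subtlety. Summing naively over all distinct Bohr frequencies $\omega$ would give a bound growing with $n$, since the frequencies depend on $s$. Grouping by the shift $d$ with scalar-weighted blocks avoids this.

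The consequence for $\vB^\alpha_{ij,+}(t)$ follows by linearity of conjugation and the triangle inequality: $(3N_f+3N_f)/\sqrt2=3\sqrt2\,N_f$.
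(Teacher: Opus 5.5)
Your proof is correct and follows essentially the same route as the paper. You compute the conjugated coefficients as $f(\beta\omega)$ or $f^*(-\beta\omega)$, regroup the Bohr expansion by the spin shift $s'-s\in\{-1,0,1\}$ so that each piece has orthogonal blocks and norm at most $N_f$, and conclude with the triangle inequality. The only cosmetic difference is that you absorb the phases $e^{\ri t\omega}$ into the coefficients, whereas the paper first reduces to $t=0$ using unitary invariance of the spectral norm.
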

\begin{proof}
	\emph{Reduction to $t=0$.} Since $\vrho$ commutes with $\e^{\ri t\vH_n}$, we have
	$\vrho^{c}\vB^{(i,\alpha)}(t)\vrho^{-c}=\e^{\ri t\vH_n}\left(\vrho^{c}\vB^{(i,\alpha)}\vrho^{-c}\right)e^{-\ri t\vH_n}$,
	and the spectral norm is unitarily invariant.

	\emph{Bohr decomposition.} Write $\vsigma=\vsigma^{(i,\alpha)}$, $\vB=\vB^{(i,\alpha)}$ and $c=\pm\frac14$.
	By \eqref{eq:bohr-identities}, $\vrho^{c}\vsigma_\omega\vrho^{-c}=\e^{-c\beta\omega}\vsigma_\omega$. Hence
	\[
		\vrho^{c}\vB\vrho^{-c}=\sum_{\omega\in B(\vH_n)}g_c(\omega)\,\vsigma_\omega,
		\qquad
		g_c(\omega)\coloneqq e^{-c\beta\omega}\widetilde f(\beta\omega)=
		\begin{cases}
			f(\beta\omega), & c=\tfrac14,\\[2pt]
			f^*(-\beta\omega), & c=-\tfrac14 .
		\end{cases}
	\]
	The second case uses $e^{\nu/4}\widetilde f(\nu)=\bigl(e^{\nu/4}\widetilde f(-\nu)\bigr)^*=f^*(-\nu)$. In both cases $\labs{g_c(\omega)}\leq N_f$ for $\labs{\omega}\le1$.

	Expanding $\vsigma_\omega=\sum_{E_{s'}-E_s=\omega}\vPi_{s'}\vsigma\vPi_s$ and using \Cref{fact:pauli-selection-rule} ($s'-s\in\{-1,0,1\}$),
	\begin{equation}\label{eq:rhoconjugatedB}
		\vrho^{c}\vB\vrho^{-c}=\vT_{-1}+\vT_0+\vT_1,
		\qquad
		\vT_k\coloneqq\sum_{s\in\CS_n}d_{s,k}\,\vPi_{s+k}\vsigma\vPi_s,
		\qquad
		d_{s,k}\coloneqq g_c\bigl(E_{s+k}-E_s\bigr).
	\end{equation}
	Here $\vPi_{s+k}\coloneqq0$ whenever $s+k\notin\CS_n$, and $\labs{d_{s,k}}\le N_f$, because the frequency $E_{s+k}-E_s$ lies in the band.

	\emph{Bounding each term.} Fix $k$ and $\psi\in\Hn$. The vectors $\vPi_{s+k}\vsigma\vPi_s\psi$ lie in mutually orthogonal subspaces for different $s$.
	Since $\vsigma$ is unitary,
	\[
		\nrm{\vT_k\psi}^2=\sum_s\labs{d_{s,k}}^2\nrm{\vPi_{s+k}\vsigma\vPi_s\psi}^2
		\leq N_f^2\sum_s\nrm{\vsigma\vPi_s\psi}^2=N_f^2\sum_s\nrm{\vPi_s\psi}^2=N_f^2\nrm{\psi}^2 .
	\]
	Thus $\nrm{\vT_k}_\infty\leq N_f$, and the triangle inequality in \eqref{eq:rhoconjugatedB} gives the bound $3N_f$.
	The statement for $\vB^\alpha_{ij,+}(t)$ follows from the triangle inequality.
\end{proof}

With these ingredients at hand, the upper bound follows in three steps. We first rewrite $\vK_{ij}$ in terms of time-evolved operators,
then bound the commutator at each fixed time. Finally, we integrate over time to recover the single-site Dirichlet forms.
\lemmKEnergy*
\begin{proof}
    \emph{Time invariance.} By \Cref{cor:K-blocks}\,\ref{item:K-comm-relations}, $[\vK_{ij},\vH_n]=0$, so
	$\vK_{ij}(t)\coloneqq\e^{\ri t\vH_n}\vK_{ij}\e^{-\ri t\vH_n}=\vK_{ij}$ for every $t\in\BR$.
	Expressing $\vK_{ij}(t)$ from its definition~\eqref{eq:K-definition}, we also obtain
	\begin{align*}
		\vK_{ij}(t)
		&=\e^{\ri t\vH_n}\Bigl(\sum_{\alpha\in\{\rmx,\rmy,\rmz\}}\vB^\alpha_{ij,+}\vB^\alpha_{ij,+}\Bigr)\e^{-\ri t\vH_n}\\
		&=\sum_{\alpha\in\{\rmx,\rmy,\rmz\}}\bigl(\e^{\ri t\vH_n}\vB^\alpha_{ij,+}\e^{-\ri t\vH_n}\bigr)\bigl(\e^{\ri t\vH_n}\vB^\alpha_{ij,+}\e^{-\ri t\vH_n}\bigr)\\
		&=\sum_{\alpha\in\{\rmx,\rmy,\rmz\}}\vB^\alpha_{ij,+}(t)^2.
	\end{align*}
	Hence $\vK_{ij}=\sum_\alpha\vB^\alpha_{ij,+}(t)^2$ for every $t\in\BR$.

    \emph{Pointwise bound.} Fix $t\in\BR$ and write $\vK^\alpha_{ij}(t)\coloneqq\vB^\alpha_{ij,+}(t)^2$, so that $\vK_{ij}=\sum_\alpha\vK^\alpha_{ij}(t)$.
	Abbreviate $\vB_+\coloneqq\vB^\alpha_{ij,+}(t)$ and set
	\[
		a_\alpha(t)\coloneqq\nr{[\vB^{(i,\alpha)}(t),\vX]},\qquad b_\alpha(t)\coloneqq\nr{[\vB^{(j,\alpha)}(t),\vX]},
		\qquad\text{so that}\qquad
		\nr{[\vB_+,\vX]}\leq\frac{a_\alpha(t)+b_\alpha(t)}{\sqrt2}.
	\]
	Utilizing the Leibniz rule for each $\alpha$ we obtain
	\begin{align}
		\nr{[\vK^\alpha_{ij}(t),\vX]}
		&=\nr{\vB_+[\vB_+,\vX]+[\vB_+,\vX]\vB_+}\notag\\
		&\le\bigl(\|\vrho^{1/4}\vB_+\vrho^{-1/4}\|_\infty+\|\vrho^{-1/4}\vB_+\vrho^{1/4}\|_\infty\bigr)\,\nr{[\vB_+,\vX]}
		\tag{triangle ineq.\ and \Cref{fact:kms-holder}}\\
		&\le2\cdot\frac{3N_f+3N_f}{\sqrt2}\cdot\frac{a_\alpha(t)+b_\alpha(t)}{\sqrt2}
		\tag{\Cref{lemma:rho-conjugated-B-bound}}\\
		&=6N_f\,(a_\alpha(t)+b_\alpha(t)) \label{eq:Kij-alpha-bound}.
	\end{align}
	By the triangle inequality and Cauchy--Schwarz, this further implies
	\begin{align*}
		\nr{[\vK_{ij},\vX]}^2
		&\le\Bigl(\sum_{\alpha\in\{\rmx,\rmy,\rmz\}}\nr{[\vK^\alpha_{ij}(t),\vX]}\Bigr)^2
		\le3\sum_{\alpha\in\{\rmx,\rmy,\rmz\}}\nr{[\vK^\alpha_{ij}(t),\vX]}^2\\
		&\le108\,N_f^2\sum_{\alpha\in\{\rmx,\rmy,\rmz\}}(a_\alpha(t)+b_\alpha(t))^2 \tag{by~\eqref{eq:Kij-alpha-bound}}\\
		&\le216\,N_f^2\sum_{\alpha\in\{\rmx,\rmy,\rmz\}}\left(a_\alpha^2(t)+b_\alpha^2(t)\right). \tag{Cauchy--Schwarz on $(a_\alpha(t)+b_\alpha(t))^2$}
	\end{align*}

	\emph{Integration.} The left-hand side does not depend on $t$, and $\mu_\beta(\BR)=\frac12$. Integrating both sides of the previous bound against $\rd\mu_\beta(t)$ and using \eqref{eq:local-energy}, i.e.,
	$\int_\BR\sum_\alpha\left(a_\alpha(t)^2+b_\alpha(t)^2\right)\,\rd\mu_\beta(t)=\CE_i(\vX)+\CE_j(\vX)$
    \begin{equation*}
        \frac12\nr{[\vK_{ij},\vX]}^2\le216N_f^2\left(\CE_i(\vX)+\CE_j(\vX)\right). \tag*{\qedhere}
    \end{equation*}
\end{proof}

\section{Low temperature gap bound via \texorpdfstring{$\SU$}{SU(2)} symmetry}
In this section we prove the low-temperature bound on the complement of the collective-spin-rotation-invariant sector.

\SUtwoComplement*

By the same argument as in~\Cref{sec:high-temp}, it suffices to prove
\begin{equation*}
    \EC(\vX) \geq \frac{c}{n}\nr{\vX}^2, \qquad \vX\in\Rn^\perp,
\end{equation*}
with $c$ independent of $n$.
The argument has two steps: the coherent Dirichlet form \eqref{eq:coh-form} is lower-bounded by collective spin commutators, and the latter is lower-bounded by the norm on $\Rn^\perp$.

The following two lemmas hold for every $n\ge2$ and $\beta\ge0$.
\begin{restatable}[Collective comparison]{lemma}{lemmCollective}\label{lem:collective-comparison}
	For every $\vX\in\CB(\Hn)$,
	\begin{equation}\label{eq:collective}
		\EC(\vX)\ \ge\ \frac{2f(0)^2}{n}\sum_{\alpha\in\{\rmx,\rmy,\rmz\}}\nr{[\Stot^\alpha,\vX]}^2 .
	\end{equation}
\end{restatable}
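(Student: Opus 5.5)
The plan is to start from the sum-of-squares representation of \Cref{lem:form} and observe that the filtered single-site jumps sum exactly to a multiple of the collective spin. Fix $\alpha$. By linearity of $\widetilde f(\beta\,\ad_{\vH_n})$,
\[
\sum_{i=1}^n \vB^{(i,\alpha)}=\widetilde f(\beta\,\ad_{\vH_n})\Bigl(\sum_i\vsigma^{(i,\alpha)}\Bigr)=2\,\widetilde f(\beta\,\ad_{\vH_n})(\Stot^\alpha).
\]
Since $[\Stot^\alpha,\vH_n]=0$, the operator $\Stot^\alpha$ has only the $\omega=0$ Bohr component. Hence the right-hand side equals $2\widetilde f(0)\Stot^\alpha=2f(0)\Stot^\alpha$, using $\widetilde f(0)=f(0)\in\BR$ by \eqref{eq:f-KMS}. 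Conjugating by $e^{\ri t\vH_n}$ leaves $\Stot^\alpha$ invariant, so for every $t\in\BR$
\[
\sum_{i=1}^n\bigl[\vB^{(i,\alpha)}(t),\vX\bigr]=2f(0)\,[\Stot^\alpha,\vX].
\]

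Next I apply the triangle inequality in the KMS norm together with Cauchy--Schwarz, $\nr{\sum_{i=1}^n\vY_i}^2\le n\sum_{i=1}^n\nr{\vY_i}^2$. This gives, pointwise in $t$,
\[
\sum_{i=1}^n\nr{[\vB^{(i,\alpha)}(t),\vX]}^2\ \ge\ \frac{4f(0)^2}{n}\,\nr{[\Stot^\alpha,\vX]}^2 .
\]
The right-hand side does not depend on $t$. I then integrate against $\rd\mu_\beta(t)$, whose total mass is $\mu_\beta(\BR)=\frac12$ by \eqref{eq:fourier}, and sum over $\alpha\in\{\rmx,\rmy,\rmz\}$. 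By \eqref{eq:coh-form} the left-hand side becomes $\EC(\vX)$, which yields \eqref{eq:collective} with the constant $\frac12\cdot\frac{4f(0)^2}{n}=\frac{2f(0)^2}{n}$.

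The case $\beta=0$ is covered by the same computation with $\mu_0=\frac12\delta_0$. There is no real obstacle in this argument. The only points needing care are that the Bohr filter acts on $\Stot^\alpha$ purely through its zero frequency, and that no cross terms between the $\vB^{(i,\alpha)}$ are needed: the lossy Cauchy--Schwarz step is exactly what produces the $1/n$ factor.
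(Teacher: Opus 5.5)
Your proof is correct and is essentially the paper's argument. The paper extends the uniform vector $(1/\sqrt n,\dots,1/\sqrt n)$ to a unitary, applies Parseval (\Cref{lem:unitary-mixing}) and keeps only the uniform component; this is Bessel's inequality and so the same as your Cauchy--Schwarz bound $\nr{\sum_i\vY_i}^2\le n\sum_i\nr{\vY_i}^2$, while your remaining steps (the identity $\sum_i\vB^{(i,\alpha)}=2f(0)\Stot^\alpha$, its time-independence, and integrating against $\mu_\beta$ of mass $\frac12$) match the paper's.
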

\begin{restatable}[Casimir spectral bound]{lemma}{lemmCasimir}\label{lem:collective-casimir}
    Define the Casimir operator of the collective conjugation action by
    \[
        \mathcal C_{\mathrm{ad}}[\cdot]
        :=\sum_{\alpha=x,y,z}[\Stot^\alpha,[\Stot^\alpha,\cdot]].
    \]
	The superoperator $\mathcal C_{\mathrm{ad}}$ is KMS-self-adjoint and positive semidefinite, with
	\begin{equation}\label{eq:casimir-form}
		\ip{\vX}{\mathcal C_{\mathrm{ad}}[\vX]}=\sum_{\alpha\in\{\rmx,\rmy,\rmz\}}\nr{[\Stot^\alpha,\vX]}^2
		\qquad\text{for all }\vX\in\CB(\Hn),
	\end{equation}
	and $\ker\mathcal C_{\mathrm{ad}}=\Rn$. Every eigenvalue of $\mathcal C_{\mathrm{ad}}$ lies in $\{\ell(\ell+1)\colon\ell\in\BZ_{\ge0}\}$. In particular, every eigenvalue of $\mathcal C_{\mathrm{ad}}$ on the invariant subspace $\Rn^\perp$ is at least $2$, i.e.,
	\begin{equation}\label{eq:collective-casimir-bound}
		\sum_{\alpha\in\{\rmx,\rmy,\rmz\}}\nr{[\Stot^\alpha,\vX]}^2\ \ge\ 2\,\nr{\vX}^2
		\qquad\text{for all }\vX\in\Rn^\perp .
	\end{equation}
\end{restatable}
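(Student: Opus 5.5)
We prove the four claims in order: KMS-self-adjointness and the quadratic form, then the kernel, then the eigenvalue spectrum, and finally the bound on $\Rn^\perp$.

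\emph{Self-adjointness and the quadratic form.} Write $\ad_{\vS}[\cdot]=[\vS,\cdot]$ for $\vS=\Stot^\alpha$. Since $\Stot^\alpha$ is self-adjoint and commutes with $\vH_n$, it commutes with $\vrho^{1/2}$. Cyclicity of the trace then gives
\[
\ip{\vX}{[\vS,\vY]}=\Tr\bigl(\vrho^{1/2}\vX^\dagger\vrho^{1/2}(\vS\vY-\vY\vS)\bigr)=\Tr\bigl(\vrho^{1/2}[\vX^\dagger,\vS]\vrho^{1/2}\vY\bigr)=\ip{[\vS,\vX]}{\vY},
\]
using $[\vX^\dagger,\vS]=[\vS,\vX]^\dagger$. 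So each $\ad_{\Stot^\alpha}$ is KMS-self-adjoint. Hence $\Cad=\sum_\alpha\ad_{\Stot^\alpha}^2$ is KMS-self-adjoint, and
\[
\ip{\vX}{\Cad[\vX]}=\sum_\alpha\ip{[\Stot^\alpha,\vX]}{[\Stot^\alpha,\vX]}=\sum_\alpha\nr{[\Stot^\alpha,\vX]}^2,
\]
which is \eqref{eq:casimir-form} and shows that $\Cad$ is positive semidefinite.

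\emph{Kernel.} Since $\Cad$ is positive semidefinite and KMS-self-adjoint, $\Cad[\vX]=0$ holds iff the quadratic form vanishes. By \eqref{eq:casimir-form} this holds iff all three commutators $[\Stot^\alpha,\vX]$ vanish, i.e.\ iff $\vX\in\Rn$ (\Cref{def:Rn}).

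\emph{Spectrum.} The map $\vg\mapsto\Delta_{\vg,\mathrm{id}}$ is a unitary representation of $\SU$ on $\CB(\Hn)$ with the Hilbert--Schmidt inner product. Its infinitesimal generators are $\ri\,\ad_{\Stot^\alpha}$, which satisfy the $\mathfrak{su}(2)$ commutation relations by the Jacobi identity. Therefore $\Cad$ is the quadratic Casimir of this representation, and on each irreducible component of spin $\ell$ it acts as the scalar $\ell(\ell+1)$.

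The remaining point, which is the only real obstacle, is that every occurring $\ell$ is an integer. We will show this in two ways.
\begin{itemize}
\item By \eqref{eq:schur-weyl}, $\CB(\Hn)\cong\Hn\otimes\overline{\Hn}$ decomposes into copies of $V_s\otimes\overline{V_{s'}}$ with $s,s'\in\CS_n$. All elements of $\CS_n$ have the same parity class: all integers or all half-integers. Hence $s-s'\in\BZ$, and Clebsch--Gordan gives $\ell\in\{|s-s'|,\dots,s+s'\}\subseteq\BZ_{\ge0}$.
\item Alternatively, $\ad_{\Stot^z}$ has eigenvalues $m-m'$, where $m,m'$ are eigenvalues of $\Stot^z$ and hence all lie in $\frac n2+\BZ$. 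So these eigenvalues are integers, which excludes half-integer $\ell$, since a spin-$\ell$ irrep contains weight $\ell$.
\end{itemize}
Thus every eigenvalue of $\Cad$ lies in $\{\ell(\ell+1)\colon\ell\in\BZ_{\ge0}\}$, and every nonzero eigenvalue is at least $2$.

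\emph{Invariance and the bound.} Because $\Cad$ is KMS-self-adjoint, it has a KMS-orthonormal eigenbasis. Its zero eigenvectors span $\ker\Cad=\Rn$, so the remaining eigenvectors span the KMS-orthogonal complement of $\Rn$. By \Cref{footnote:kms-and-hilbert-coincide} this complement is the Hilbert--Schmidt complement $\Rn^\perp$, which is therefore invariant under $\Cad$. All eigenvalues of $\Cad$ on $\Rn^\perp$ are nonzero, hence at least $2$. Combined with \eqref{eq:casimir-form}, this gives \eqref{eq:collective-casimir-bound}.
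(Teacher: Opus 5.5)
Your proposal is correct and follows essentially the same route as the paper: KMS-self-adjointness of each $[\Stot^\alpha,\cdot]$ from $[\Stot^\alpha,\vrho^{1/2}]=0$ and trace cyclicity, the kernel read off from the quadratic form, integrality of $\ell$ from the Schur--Weyl decomposition of $\CB(\Hn)$ combined with Clebsch--Gordan, and the bound on $\Rn^\perp$ from the orthogonal eigen-decomposition. Your second integrality argument, via the integer eigenvalues $m-m'$ of $[\Stot^z,\cdot]$, is a valid and slightly more elementary alternative that the paper does not use.
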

\begin{proof}[Proof of \Cref{prop:R-complement}]
	The KMS condition~\eqref{eq:f-KMS} at $\nu=0$ reads $f(0)=f^*(0)$, so $f(0)\in\BR$; moreover, it is independent of $n$. For $\vX\in\Rn^\perp$,
	\begin{align*}
		\EC(\vX)
		&\ge\frac{2f(0)^2}{n}\sum_{\alpha\in\{\rmx,\rmy,\rmz\}}\nr{[\Stot^\alpha,\vX]}^2 \tag{\Cref{lem:collective-comparison}}\\
        &\ge\frac{2f(0)^2}{n}\cdot2\,\nr{\vX}^2=\frac{4 f(0)^2}{n}\,\nr{\vX}^2 \tag*{\qedhere}
	\end{align*}
\end{proof}

\subsection{Collective comparison}\label{sec:collective-comparison}
The proof of~\Cref{lem:collective-comparison} combines the sum of squares representation of the Dirichlet form with two aditional elementary observations.

The first is that a unitary change of basis among the jump operators does not change the sum of the squared norms of their commutators with a fixed operator.
\begin{lemma}[Unitary mixing of the jumps]\label{lem:unitary-mixing}
	Let $\vY_1,\dots,\vY_n\in\CB(\Hn)$ and let $\vU=(u_{ai})_{a,i}\in\mathrm U(n)$ be a unitary matrix with rows indexed by $a\in\{0,1,\dots,n-1\}$ and columns by $i\in[n]$. Then
	\begin{equation}\label{eq:parseval}
		\sum_{a=0}^{n-1}\nr{\sum_{i=1}^n u_{ai}\vY_i}^2=\sum_{i=1}^n\nr{\vY_i}^2 .
	\end{equation}
\end{lemma}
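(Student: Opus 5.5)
The identity \eqref{eq:parseval} is the operator-valued form of the fact that a unitary matrix preserves the Euclidean norm of a vector. We will establish it by expanding the left-hand side in the KMS inner product and using unitarity of $\vU$ to collapse the double sum.

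First, we use only that $\ip{\cdot}{\cdot}$ is a sesquilinear form: conjugate-linear in the first argument and linear in the second, as is clear from $\ip{\vX}{\vY}=\Tr(\vrho^{1/2}\vX^\dagger\vrho^{1/2}\vY)$. For each fixed row $a$ we expand
\[
\nr{\sum_{i} u_{ai}\vY_i}^2=\sum_{i,k}\overline{u_{ai}}\,u_{ak}\,\ip{\vY_i}{\vY_k}.
\]
Summing over $a$ and exchanging the finite sums gives
\[
\sum_a\nr{\sum_i u_{ai}\vY_i}^2=\sum_{i,k}\Bigl(\sum_a \overline{u_{ai}}u_{ak}\Bigr)\ip{\vY_i}{\vY_k}.
\]

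Second, the inner coefficient is the $(i,k)$ entry of $\vU^\dagger\vU$. Since $\vU$ is unitary, this entry equals $\delta_{ik}$. The double sum therefore reduces to $\sum_i\ip{\vY_i}{\vY_i}=\sum_i\nr{\vY_i}^2$, which is \eqref{eq:parseval}.

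The only point needing care is the index convention. Rows of $\vU$ are labelled by $a\in\{0,\dots,n-1\}$ and columns by $i\in[n]$, and we need the column-orthonormality relation $\vU^\dagger\vU=\vI$ rather than $\vU\vU^\dagger=\vI$. For a square unitary both relations hold, so there is no real obstacle. The same computation applies verbatim to any positive semidefinite sesquilinear form, so the lemma holds equally for the Hilbert--Schmidt norm.

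In the subsequent application, we will take $\vY_i=[\vB^{(i,\alpha)}(t),\vX]$ and let $\vU$ be the discrete Fourier matrix. Its $a=0$ row, $u_{0i}=1/\sqrt n$, produces the collective operator $n^{-1/2}\sum_i\vY_i$. Dropping the other rows then gives the lower bound needed for \Cref{lem:collective-comparison}.
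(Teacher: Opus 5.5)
Your proof is correct and is the paper's argument: expand each squared KMS norm by sesquilinearity, exchange the finite sums, and collapse the coefficient $\sum_a \overline{u_{ai}}u_{ak}=(\vU^\dagger\vU)_{ik}=\delta_{ik}$. Your remark that the discrete Fourier matrix can be used in the later application is also fine, since the paper only needs some unitary whose $a=0$ row is the uniform vector.
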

\begin{proof}
	Expanding the squared norms by sesquilinearity of the KMS inner product,
	\begin{align*}
		\sum_{a=0}^{n-1}\nr{\sum_{i=1}^n u_{ai}\vY_i}^2
		&=\sum_{a=0}^{n-1}\sum_{i,j=1}^n\overline{u_{ai}}\,u_{aj}\,\ip{\vY_i}{\vY_j}
		=\sum_{i,j=1}^n(\vU^\dagger\vU)_{ij}\,\ip{\vY_i}{\vY_j}\\
		&=\sum_{i,j=1}^n\delta_{ij}\,\ip{\vY_i}{\vY_j}
		=\sum_{i=1}^n\nr{\vY_i}^2 . \tag*{\qedhere}
	\end{align*}
\end{proof}
The second is that the uniform superposition of the single-site jumps is a collective spin operator, on which both the filter and the time evolution act trivially.
\begin{lemma}[The collective jump]\label{lem:collective-jump}
	For every $\alpha\in\{\rmx,\rmy,\rmz\}$,
	\begin{equation}\label{eq:collective-jump}
		\sum_{i=1}^n\vB^{(i,\alpha)}=2f(0)\,\Stot^\alpha ,
		\qquad\text{and}\qquad
		e^{\ri t\vH_n}\Stot^\alpha e^{-\ri t\vH_n}=\Stot^\alpha\quad\text{for all }t\in\BR .
	\end{equation}
\end{lemma}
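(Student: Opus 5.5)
The plan is to work in the Bohr decomposition and show that the filter $\widetilde f(\beta\,\ad_{\vH_n})$ acts on the collective operator $2\Stot^\alpha=\sum_i\vsigma^{(i,\alpha)}$ only through its zero-frequency value. First, by linearity of $\widetilde f(\beta\,\ad_{\vH_n})$,
\[
\sum_{i=1}^n\vB^{(i,\alpha)}=\widetilde f(\beta\,\ad_{\vH_n})\Bigl(\sum_{i=1}^n\vsigma^{(i,\alpha)}\Bigr)=2\,\widetilde f(\beta\,\ad_{\vH_n})(\Stot^\alpha).
\]

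Next we identify the Bohr decomposition of $\Stot^\alpha$. Since $\Stot^\alpha$ commutes with $\Stot^2$, it commutes with $\vH_n=-\Stot^2/n$. This was already used in \eqref{eq:RnInCHn} and in the proof of \Cref{lem:K-scalar}. Hence $\Stot^\alpha$ commutes with every spectral projection $\ketbra{\psi_E}{\psi_E}$, so in \eqref{eq:Bohr-decomposition} every term with $E'\neq E$ vanishes. Thus $(\Stot^\alpha)_\omega=0$ for $\omega\ne0$ and $(\Stot^\alpha)_0=\Stot^\alpha$.

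Applying the definition $h(\ad_{\vH_n})(\vA)=\sum_\omega h(\omega)\vA_\omega$ gives $\widetilde f(\beta\,\ad_{\vH_n})(\Stot^\alpha)=\widetilde f(0)\,\Stot^\alpha$. From \eqref{eq:fsymm-KMS} we have $\widetilde f(0)=e^{0}f(0)=f(0)$, which is real by \eqref{eq:f-KMS}. This yields the first identity, $\sum_i\vB^{(i,\alpha)}=2f(0)\,\Stot^\alpha$.

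For the second identity we use $[\vH_n,\Stot^\alpha]=0$ again. It gives $e^{\ri t\vH_n}\Stot^\alpha=\Stot^\alpha e^{\ri t\vH_n}$, and therefore $e^{\ri t\vH_n}\Stot^\alpha e^{-\ri t\vH_n}=\Stot^\alpha$. Equivalently, this is the first identity of \eqref{eq:bohr-identities} applied with only $\omega=0$ present.

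There is no real obstacle here. The only point needing care is that commutation with $\vH_n$ forces all Bohr components at $\omega\neq0$ to vanish. This follows directly from the block structure in \eqref{eq:Heisenberg-spectrum}, since $\vH_n$ is a scalar on each spin-$s$ block and $\Stot^\alpha$ preserves these blocks.
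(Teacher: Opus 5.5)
Your proof is correct and follows essentially the same route as the paper's: $[\Stot^\alpha,\vH_n]=0$ yields both the time-invariance and the vanishing of all Bohr components of $\Stot^\alpha$ at nonzero frequency. Linearity of $\widetilde f(\beta\,\ad_{\vH_n})$ together with $\widetilde f(0)=f(0)$ then gives the first identity.
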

\begin{proof}
	Since $\vH_n=-\frac1n\sum_\mu(\Stot^\mu)^2$ and $[\Stot^\alpha,\Stot^2]=0$, we have $[\Stot^\alpha,\vH_n]=0$, which gives the second identity. It also implies $\vP_{E'}\Stot^\alpha\vP_E=0$ for $E\neq E'$, i.e., the Bohr decomposition~\eqref{eq:Bohr-decomposition} of $\Stot^\alpha$ has only the zero-frequency component $\Stot^\alpha(0)=\Stot^\alpha$. Hence $h(\ad_{\vH_n})(\Stot^\alpha)=h(0)\Stot^\alpha$ for every function $h$ on the Bohr spectrum. By linearity of $\widetilde f(\beta\ad_{\vH_n})$ and $\sum_i\vsigma^{(i,\alpha)}=2\Stot^\alpha$,
	\[
		\sum_{i=1}^n\vB^{(i,\alpha)}
		=\widetilde f(\beta\ad_{\vH_n})\Bigl(\sum_{i=1}^n\vsigma^{(i,\alpha)}\Bigr)
		=\widetilde f(\beta\ad_{\vH_n})(2\Stot^\alpha)
		=2\widetilde f(0)\,\Stot^\alpha
		=2f(0)\,\Stot^\alpha,
	\]
	where in the last step $\widetilde f(0)=e^{0}f(0)=f(0)$ by~\eqref{eq:fsymm-KMS}.
\end{proof}

\lemmCollective*
\begin{proof}
	Choose a unitary $\vU=(u_{ai})\in\mathrm U(n)$ whose row $a=0$ is the uniform vector, $u_{0i}=1/\sqrt n$ for all $i\in[n]$ (extend the unit vector $(1/\sqrt n,\dots,1/\sqrt n)$ to an orthonormal basis of $\BC^n$), and define the mixed jumps
	\[
		\vC^{(a,\alpha)}\coloneqq\sum_{i=1}^n u_{ai}\,\vB^{(i,\alpha)},\qquad
		\vC^{(a,\alpha)}(t)\coloneqq e^{\ri t\vH_n}\vC^{(a,\alpha)} e^{-\ri t\vH_n}=\sum_{i=1}^n u_{ai}\,\vB^{(i,\alpha)}(t),
		\qquad 0\le a\le n-1 .
	\]
	By \Cref{lem:collective-jump}, the $a=0$ jump is a time-independent collective spin operator:
	\begin{equation}\label{eq:C0}
		\vC_0^\alpha=\frac1{\sqrt n}\sum_{i=1}^n\vB^{(i,\alpha)}=\frac{2f(0)}{\sqrt n}\,\Stot^\alpha,
		\qquad
		\vC_0^\alpha(t)=\vC_0^\alpha\quad\text{for all }t\in\BR .
	\end{equation}
	Fix $t\in\BR$ and $\alpha$, and apply \Cref{lem:unitary-mixing} with $\vY_i\coloneqq[\vB^{(i,\alpha)}(t),\vX]$; by linearity of the commutator $\sum_iu_{ai}\vY_i=[\vC^{(a,\alpha)}(t),\vX]$, hence
	\begin{equation}\label{eq:parseval-commutators}
		\sum_{i=1}^n\nr{[\vB^{(i,\alpha)}(t),\vX]}^2
		=\sum_{a=0}^{n-1}\nr{[\vC^{(a,\alpha)}(t),\vX]}^2
		\ \ge\ \nr{[\vC_0^\alpha(t),\vX]}^2
		=\frac{4f(0)^2}{n}\,\nr{[\Stot^\alpha,\vX]}^2,
	\end{equation}
	The right-hand side is independent of $t$. 
	This now gives for the Dirichlet form\begin{align*}
		\EC(\vX)
		&=\sum_{\alpha\in\{\rmx,\rmy,\rmz\}}\int_\BR\sum_{i=1}^n\nr{[\vB^{(i,\alpha)}(t),\vX]}^2\,\rd\mu_\beta(t) \tag{by~\Cref{lem:gt-dirichlet}}\\
		&\ge\frac{4f(0)^2}{n}\sum_{\alpha\in\{\rmx,\rmy,\rmz\}}\nr{[\Stot^\alpha,\vX]}^2\int_\BR\rd\mu_\beta(t) \tag{by~\eqref{eq:parseval-commutators}}\\
		&=\frac{2f(0)^2}{n}\sum_{\alpha\in\{\rmx,\rmy,\rmz\}}\nr{[\Stot^\alpha,\vX]}^2 . \tag{$\mu_\beta(\BR)=\frac12$~\qedhere}
	\end{align*}
\end{proof}

\subsection{Casimir spectral bound}

\begin{proof}[Proof of Lemma~\ref{lem:collective-casimir}]
The conjugation action $\Delta_{\vg}[\vX]=\vg^{\otimes n} \vX \left(\vg^{\otimes n}\right)^\dagger$ is KMS-unitary because
$[\vg^{\otimes n},\vrho]=0$. Omitting the factor $-\ri$, its infinitesimal generators are
\[
\CS^\alpha[\cdot]=[\Stot^\alpha,\cdot].
\]
Indeed,
$\left.\frac{d}{dt}\right|_{t=0}
e^{-\ri t \Stot^\alpha}\vX e^{\ri t \Stot^\alpha}=-\ri \CS^\alpha[\vX]~\forall \vX \in \CB(\CH_n)$.
Jacobi's identity gives $[\CS^\rmx,\CS^\rmy]=\ri\CS^\rmz$ and the cyclic versions, and $\CS^\alpha$ are both KMS- and Hilbert-Schmidt-self-adjoint. Indeed, since $[\vrho^{1/2},\Stot^\alpha]=0$,
\begin{align*}
    \ip{\vX}{[\Stot^\alpha,\vY]}
	=\Tr\bigl(\vrho^{1/2}(\vX^\dagger\Stot^\alpha
	-\Stot^\alpha \vX^\dagger)\vrho^{1/2}\vY\bigr)
	=\ip{[\Stot^\alpha,\vX]}{\vY}.
\end{align*}
The Casimir operator of the adjoint action is\footnote{In standard physics language the $\CS^\alpha$ and $\CC_{\mathrm{ad}}$ behave exactly as angular momentum components and the total squared angular momentum, but now they are superoperators acting on the ``state space'' $\CB(\CH_n)$. Since these operators are self-adjoint both wrt. the KMS- and Hilbert-Schmidt inner products, whichever is more convenient may be chosen.}
\begin{equation*}
    \CC_{\mathrm{ad}}[\cdot] = \sum_{\alpha=x,y,z}[\Stot^\alpha,[\Stot^\alpha,\cdot]] = \sum_{\alpha=\rmx,\rmy,\rmz} (\CS^\alpha)^2[\cdot],
\end{equation*}
We introduce the form \begin{equation*}
    \ip{\vX}{\CC_{\mathrm{ad}}\vX}=\sum_\alpha\nr{[\Stot^\alpha,\vX]}^2 \geq 0,
\end{equation*} which makes it transparent that $\CC_{\mathrm{ad}}$ is positive-semidefinite, and its kernel consists of exactly those $\vX$ that commute with every
$\Stot^\alpha$. Such operators commute with the corresponding exponentials, hence with
every collective rotation, and therefore \begin{equation}
    \label{eq:Cad-kernel}
    \ker \CC_{\mathrm{ad}}=\Rn.
\end{equation}
We may also write \begin{equation*}
    \CS^\pm := \CS^x\pm i\CS^y, \qquad \CC_{\mathrm{ad}}=\CS^-\CS^+ + \CS^z(\CS^z+\CI), \qquad [\CC_{\mathrm{ad}},\CS^\pm]=[\CC_{\mathrm{ad}},\CS^z]=0
\end{equation*} by the earlier arguments.

In standard physics language the $\CS^\alpha$ and $\Cad$ behave exactly as angular momentum components and the total squared angular momentum, but now they are superoperators acting on the ``state space'' $\CB(\CH_n)$. Since these operators are self-adjoint both wrt. the KMS- and Hilbert-Schmidt inner products, whichever is more convenient may be chosen. This observation suggests that $\Cad$ and $\CS^\rmz$ have a common eigenbasis with eigenvalues $\ell (\ell + 1) \geq 0$ (since $\Cad$ is positive semi-definite) and $M=-\ell, -\ell + 1, \ldots, \ell$, and the operator space $\CB(\CH_n)$ splits into KMS-orthogonal (and, simultaneously, Hilbert-Schmidt orthogonal) sectors labeled by $\ell,M$. Indeed, a Schur-Weyl decomposition analogous to \Cref{eq:schur-weyl} can also be applied to $\CB(\CH_n)$: \begin{equation*}
    \CB(\Hn)\cong
    \bigoplus_{s,t}(V_s\otimes V_t^*)\otimes\operatorname{Hom}(W_t,W_s),
    \qquad
    V_s\otimes V_t^*\cong\bigoplus_{\ell=|s-t|}^{s+t}V_\ell,
\end{equation*} with the second identity being the Clebsch--Gordan rule \cite[Appendix~C.1, Theorem~C.1]{hall2015LieGroupsAlgebrasReps}. Essentially, the operator space $\CB(\Hn)$ splits into spin-$\ell$ blocks resulting from the addition of a spin-$s$ and a spin-$t$, where $s,t \in \CS_n$ from \eqref{eq:schur-weyl}, and on each block $\Cad=\ell(\ell+1)\CI$. Since $s$ and $t$ are integer or half-integer simultaneously, $\ell$ is integer. We may also write \begin{equation*}
    \CS^\pm := \CS^x\pm i\CS^y, \qquad \Cad=\CS^-\CS^+ + \CS^z(\CS^z+\CI), \qquad [\Cad,\CS^\pm]=[\Cad,\CS^z]=0
\end{equation*} by the earlier arguments. The symbols $\CS^\pm$ behave as standard angular momentum ladder operators, stepping the $\CS^\rmz$ eigenvalue by one. For example, if $\CS^\rmz \vX = \ell \mathbf{\vX}$, then $\CS^+ \vX = \mathbf{0}$ and $\CS^\rmz \CS^- \vX = (\ell - 1) \CS^- \vX$, and so on.

By \Cref{eq:Cad-kernel} the $\ell=0$ eigenspace of $\Cad$ is exactly $\Rn$, and $\ell$ are non-negative integers, so any $\vX \in \Rn^\perp$ belongs to a $\ell \geq 1$ block. As discussed above, since $\Cad$ is
KMS-self-adjoint, eigenspaces corresponding to the distinct eigenvalues $\ell(\ell+1)$ are
KMS-orthogonal, thus if $\vX \in \Rn^\perp$, then $\vX=\sum_{\ell\ge1}\vX_\ell$ and
\begin{equation*}
  \ip{\vX}{\Cad\vX}
=\sum_{\ell\ge1}\ell(\ell+1)\nr{\vX_\ell}^2
\ge2\sum_{\ell\ge1}\nr{\vX_\ell}^2=2\nr{\vX}^2.  \qedhere
\end{equation*}

\end{proof}

\section{The critical case \texorpdfstring{$\beta=2$}{beta=2}}\label{sec:critical}

Throughout this section $\beta=2$, $n\ge2$, and \begin{equation*}
    \varrho(s):=\Tr(\vrho\vPi_s)
    =\frac{(2s+1)m_{n,s}e^{2s(s+1)/n}}
    {Z_{n,\beta}}
    =\frac{(2s+1)m_{n,s}e^{2s(s+1)/n}}
    {\displaystyle\sum_{t\in\CS_n}(2t+1)m_{n,t}e^{2t(t+1)/n}},
    \qquad
    \sum_{s\in\CS_n} \varrho(s) = 1.
\end{equation*} Here $\varrho(s)$ is the probability distribution of measurement outcomes $s$ of measuring the total spin $\sum_\alpha (\Stot^\alpha)^2$, not a permutation.
We work with the Davies--Metropolis generator $\LDM$ and its
Dirichlet form $\EDM$, following
\cite{basso2026SpecGapDaviesMeanFieldHeisenberg}.
All superoperators below are understood in
the Heisenberg picture. We use
\[
    \EDM(\vX):=\ip{\vX}{-\LDM[\vX]},
    \qquad
    \EC(\vX):=\ip{\vX}{-\LC[\vX]}.
\]
The corresponding estimates for the general $f$-weighted Davies
generator are obtained by comparison with the Metropolis rates in
Remark~\ref{rem:critical-general-f} at the end of this section.
All constants $c,C>0$ are independent of $n$.
We first prove the critical lower bound for the Davies--Metropolis generator.
We then deduce the corresponding coherent bound by comparing the two
forms on $\Pn$ and using Proposition~\ref{prop:P-complement} on
$\Pn^\perp$.

\criticalBound*

The main step is a Davies--Metropolis form bound on $\Pn$.
A spectral reduction leaves two families: scalar observables and
observables of the form $\Stot^z\sum_s o_s\vPi_s$.
For the second family, the substitution $g_s=so_s$ reduces the norm
estimate to bounds on $\Var_\varrho(g)$ and $|\mathbb E_\varrho g|^2$.
The scalar Poincar\'e inequality controls the variance, while the
estimate $\mathbb E_\varrho s^2=\Theta(n^{3/2})$ controls the mean.

\subsection{Reduction to two families of observables}

Write $\CS_n^+=\{s\in\CS_n:s>0\}$, and define
\begin{equation}\label{eq:critical-test-families}
    \vY_f:=\sum_{s\in\CS_n}f_s\vPi_s,
    \qquad
    \vX_o:=\Stot^z\sum_{s\in\CS_n^+}o_s\vPi_s.
\end{equation}
The coefficients may be complex. When $0\in\CS_n$, set $o_0=0$.
For scalar functions on $\CS_n$, let
\[
    \mathcal E_0(f):=\EDM(\vY_f),
    \qquad
    \Var_\varrho(f):=\mathbb E_\varrho|f-\mathbb E_\varrho f|^2.
\]
Since $\vrho$ is scalar on each spin block,
$\Var_{\vrho}(\vY_f)=\Var_\varrho(f)$.

\begin{fact}[Spectral reduction]\label{fact:critical-davies-ranks}
Let $\lambda>0$. If
\[
    \mathcal E_0(f)\ge\lambda\Var_\varrho(f)
    \quad\text{for every }f,
    \qquad
    \EDM(\vX_o)\ge\lambda\nr{\vX_o}^2
    \quad\text{for every }o,
\]
then
\[
    \EDM(\vX)\ge\lambda\Var_{\vrho}(\vX)
    \qquad\text{for every }\vX\in\Pn.
\]
\end{fact}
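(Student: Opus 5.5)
The plan is to diagonalize $-\LDM$ on $\Pn$ according to the adjoint $\SU$ action, and to show that every eigenvalue on $\Pn$ other than the trivial one is witnessed by an operator from one of the two families in \eqref{eq:critical-test-families}, or is dominated by such a witness.

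\emph{Step 1: KMS-orthogonal invariant sectors.} By \Cref{cor:reductions}, $\Pn$ is invariant under $\LDM$. By \eqref{eq:intertwining}, $\LDM$ commutes with every $\Delta_{\vg,\mathrm{id}}$, and hence with the generators $\CS^\alpha=[\Stot^\alpha,\cdot\,]$ and with $\Cad$ from \Cref{lem:collective-casimir}. Since $\Pn=\bigoplus_s\CB(V_s)\otimes\vI_{W_s}$ is block diagonal in $s$, Clebsch--Gordan gives $\CB(V_s)\cong\bigoplus_{\ell=0}^{2s}V_\ell$. Therefore $\Pn$ splits into the joint eigenspaces $\Pn^{(\ell,M)}$ of $\Cad$ and $\CS^\rmz$. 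These spaces are mutually KMS-orthogonal (\Cref{lem:collective-casimir} and footnote~\ref{footnote:kms-and-hilbert-coincide}) and each is $\LDM$-invariant. Consequently both $\nr{\cdot}^2$ and $\EDM$ are additive over the sectors, and it suffices to prove $\EDM(\vX)\ge\lambda\Var_{\vrho}(\vX)$ separately on each $\Pn^{(\ell,M)}$.

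\emph{Step 2: reduction in $M$ and identification of the low sectors.} The ladder operators $\CS^\pm$ commute with $\LDM$ and map $\Pn^{(\ell,M)}$ isomorphically onto $\Pn^{(\ell,M\pm1)}$ for $|M|<\ell$. Since $\CS^\pm$ also commute with $\Cad$, the spectrum of $-\LDM$ on $\Pn^{(\ell,M)}$ does not depend on $M$. It therefore suffices to treat $M=0$.
\begin{itemize}
\item For $\ell=0$, the sector is $\Pn\cap\Rn=\operatorname{span}\{\vPi_s\}$, i.e.\ exactly the operators $\vY_f$. Here the first hypothesis applies verbatim, because $\Var_{\vrho}(\vY_f)=\Var_\varrho(f)$.
\item For $\ell=1$, $M=0$, each block $\CB(V_s)$ with $s>0$ contains exactly one copy, namely $\Stot^\rmz\vPi_s$ (Wigner--Eckart). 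So the sector is $\{\vX_o\}$. Since $\Tr(\vrho\vX_o)=0$ by \eqref{eq:magnetization-variance} applied blockwise, we have $\Var_{\vrho}=\nr{\cdot}^2$ there, and the second hypothesis applies.
\end{itemize}

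\emph{Step 3: sectors with $\ell\ge2$ (main obstacle).} All of these are orthogonal to $\vI$, so what is needed is $\EDM\ge\lambda\nr{\cdot}^2$ on them.

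The first approach is to invoke the rank analysis of \cite{basso2026SpecGapDaviesMeanFieldHeisenberg}. On $\Pn$, that analysis reduces $\LDM$ on each rank-$\ell$ tensor sector to a birth--death-type operator on $\{s:2s\ge\ell\}$, with rates given by squared reduced Clebsch--Gordan coefficients. It shows that the bottom of the spectrum of this operator is nondecreasing in $\ell$ for $\ell\ge1$. Then the hypothesis for $\ell=1$ propagates to all $\ell\ge2$.

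If that monotonicity is not directly available, the fallback is an explicit Bohr-form comparison based on \eqref{eq:davies-bohr-form}. For $\vX=\sum_s\vT^{(\ell)}_s$ with $\vT^{(\ell)}_s\in\CB(V_s)$ of rank $\ell$, the terms $\nr{[\vsigma^{(i,\alpha)}_\omega,\vX]}^2$ split into an ``$s$-hopping'' part, analogous to the $\ell=1$ computation, plus a same-block part. The same-block part only grows with $\ell$, since it is controlled by the $\ell(\ell+1)$ factor of $\Cad$. The aim is to dominate the rank-$\ell$ form termwise by the rank-$1$ form with the same reduced coefficients, which would give the bound $\lambda$ again.

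Controlling the ratios of the $6j$/Clebsch--Gordan factors uniformly in $s$ and $n$ is the step I expect to be delicate.
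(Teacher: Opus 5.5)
Your proposal follows the paper's own argument. You split $\Pn$ into KMS-orthogonal, $\LDM$-reducing sectors labelled by rank $\ell$ and magnetic label $M$, use the ladder operators to make the spectrum $M$-independent, identify $\ell=0$ with the $\vY_f$ and $(\ell,M)=(1,0)$ with the $\vX_o$, and import from \cite[Lemma~F.8]{basso2026SpecGapDaviesMeanFieldHeisenberg} that the bottom eigenvalue is monotone in $\ell\ge1$.

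The one point the paper adds, which you should state since the fact is used at $\beta=2$, is that this monotonicity argument holds at every finite $\beta$: the anticommutator contribution is rank independent, and the transition terms are temperature-independent twirl coefficients times positive rates. Your Bohr-form fallback is not needed.
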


This follows from
\cite[Observation~3.16, Lemma~F.5, and Proposition~F.7]
{basso2026SpecGapDaviesMeanFieldHeisenberg}.
Under collective rotations, $\Pn$ is a KMS-orthogonal sum of spaces
of tensor ranks $\ell=0,\ldots,n$, each reducing the Davies--Metropolis generator.
The scalar observables $\vY_f$ form the rank-zero space.
For each positive rank, the restrictions of $-\LDM$ to the magnetic
components have the same spectrum, and their smallest eigenvalue is
nondecreasing with the rank. The proof of \cite[Lemma~F.8, Eqs.~(F.20)--(F.23)]
{basso2026SpecGapDaviesMeanFieldHeisenberg}
applies for every finite $\beta\ge0$: the anticommutator contribution
is rank independent, while the transition terms are
temperature-independent Pauli-twirl coefficients multiplied by
positive rates.
The component with rank one and magnetic label zero is exactly the
space of observables $\vX_o$
\cite[Eq.~(3.43) and Section~F.3]
{basso2026SpecGapDaviesMeanFieldHeisenberg}.
Thus a lower bound on this component controls every positive rank.
These components are orthogonal to the identity, so their variance
equals their squared norm. The form and variance split over the
reducing spaces, which proves the stated reduction.

For the coefficient estimates, set
\[
    \mathcal G_n:=\{s\in\CS_n^+:s-1\in\CS_n^+\},
\]
and define
\begin{equation}\label{eq:critical-B-definition}
    \mathfrak B(o):=
    \sum_{s\in\CS_n^+}\varrho(s)|o_s|^2
    +\sum_{s\in\mathcal G_n}\varrho(s)s^2|o_s-o_{s-1}|^2.
\end{equation}

\begin{lemma}[Coefficient estimates]\label{lem:critical-rank-input}
For every $o$,
\begin{equation}\label{eq:critical-rank-norm}
    \Tr(\vrho\vX_o)=0,
    \qquad
    \nr{\vX_o}^2
    =\frac13\sum_{s\in\CS_n^+}\varrho(s)s(s+1)|o_s|^2
    \le\sum_{s\in\CS_n^+}\varrho(s)s^2|o_s|^2,
\end{equation}
and
\begin{equation}\label{eq:critical-rank-energy}
    \EDM(\vX_o)\ge cn\,\mathfrak B(o).
\end{equation}
\end{lemma}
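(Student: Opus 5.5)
The first two claims in \eqref{eq:critical-rank-norm} are direct computations in the Schur--Weyl basis. The third, \eqref{eq:critical-rank-energy}, will come from an explicit block-by-block evaluation of the Davies--Metropolis form.

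\emph{Trace and norm.} Since $\vrho=\sum_s p_s\vPi_s$ with $p_s=e^{2s(s+1)/n}/Z_{n,\beta}$, and $\Stot^z$ commutes with every $\vPi_s$, I would write $\Tr(\vrho\vX_o)=\sum_s p_s o_s\Tr(\Stot^z\vPi_s)$. On $V_s\otimes W_s$ the magnetic eigenvalues are $-s,\dots,s$, each with multiplicity $m_{n,s}$, so this trace vanishes. For the norm, $\vX_o$ commutes with $\vrho$, hence
\[
\nr{\vX_o}^2=\Tr(\vrho\,\vX_o^\dagger\vX_o)=\sum_s p_s|o_s|^2\Tr\bigl((\Stot^z)^2\vPi_s\bigr).
\]
Using $\Tr((\Stot^z)^2\vPi_s)=m_{n,s}\sum_{m=-s}^{s}m^2=m_{n,s}(2s+1)s(s+1)/3$ and the definition of $\varrho(s)$ gives the identity. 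The inequality then follows from $s+1\le 3s$ for $s\ge 1/2$.

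\emph{Reduction of the energy to unweighted Bohr commutators.} I start from the Bohr form \eqref{eq:davies-bohr-form}. By \Cref{fact:pauli-selection-rule} every Bohr frequency satisfies $|\omega|\le1$, so at $\beta=2$ each weight obeys $e^{-\beta|\omega|/2}\ge e^{-1}$. Therefore
\[
\EDM(\vX_o)\ge\frac{1}{2e}\sum_{i,\alpha,\omega}\nr{[\vsigma^{(i,\alpha)}_\omega,\vX_o]}^2 .
\]
By \Cref{fact:pauli-selection-rule}, $\vsigma_\omega$ is a sum of block pieces $\vPi_{s'}\vsigma\vPi_s$ with $s'-s\in\{0,\pm1\}$. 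For each such piece,
\[
[\vPi_{s'}\vsigma\vPi_s,\vX_o]=(o_{s'}-o_s)\,\Stot^z\vPi_{s'}\vsigma\vPi_s+o_s\,\vPi_{s'}[\Stot^z,\vsigma]\vPi_s ,
\]
and the pieces belonging to different pairs $(s',s)$ are KMS-orthogonal. The KMS weight of the $(s',s)$ piece is $\sqrt{p_sp_{s'}}$, and $p_{s\pm1}/p_s=e^{O(1)}$ uniformly at $\beta=2$ because $|E_{s\pm1}-E_s|\le1$. So up to constants I may weight every piece by $p_s$, or by $p_{s-1}$, whichever is convenient.

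\emph{Evaluating the sums over $i$ and $\alpha$.} Here I use Pauli-twirl and Clebsch--Gordan identities, in the form worked out in \cite[Section~F.3, Eqs.~(F.20)--(F.23)]{basso2026SpecGapDaviesMeanFieldHeisenberg}, to compute
\[
\sum_{i,\alpha}\Tr\bigl(\vPi_s(\cdot)^\dagger\vPi_{s'}(\cdot)\bigr)
\]
for the two operator types $\Stot^z\vPi_{s'}\vsigma\vPi_s$ and $\vPi_{s'}[\Stot^z,\vsigma]\vPi_s$. This is a Wigner--Eckart computation, since $\vX_o$ is the $M=0$ component of a rank-one tensor.

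The target is the following two bounds.
\begin{itemize}
\item[(a)] The local, diagonal-in-$o$ part should contribute at least $c\,n\sum_s\varrho(s)|o_s|^2$. The reason is that $\sum_{i,\alpha}\|[\vsigma^{(i,\alpha)},\Stot^z\vPi_s]\|_2^2$ is of order $n\,\Tr(\vPi_s)$, by the same computation that gives $\sum_{i,\alpha}[\vsigma^{(i,\alpha)},[\vsigma^{(i,\alpha)},\cdot]]$ spectrum $4k$ at $\beta=0$.
\item[(b)] The $s\to s-1$ difference part should contribute at least $c\,n\,\varrho(s)s^2|o_s-o_{s-1}|^2$. The reason is that $(\Stot^z)^2$ contributes a factor of order $s^2$, while $\sum_{i,\alpha}\|\vPi_{s-1}\vsigma\vPi_s\|_2^2$ is a constant fraction of $n\,\Tr\vPi_s$ for the $s$ that matter.
\end{itemize}

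\emph{Main obstacle: the cross terms.} The hardest step is controlling the cross terms between the difference part and the local part inside each $s\to s\pm1$ piece; the local part alone is never a problem. My first attempt is to write the full $2\times2$ quadratic form in $(o_s,o_{s-1})$ for each pair using the exact twirl coefficients, and to check that it dominates
\[
c\,n\,\varrho(s)\bigl(|o_s|^2+s^2|o_s-o_{s-1}|^2\bigr)
\]
after adding the $\omega=0$ diagonal contribution from (a). If the cross terms are not dominated, my fallback is to use $|a+b|^2\ge\tfrac12|a|^2-|b|^2$ only on the difference pieces, and to absorb the resulting loss into the $\omega=0$ term.

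The last point to check is the boundary blocks where $\vPi_{s-1}\vsigma\vPi_s$ has small weight, for instance $s$ near $n/2$. There I expect $\varrho(s)$ itself to be exponentially small at $\beta=2$, so those blocks can be discarded or handled crudely.
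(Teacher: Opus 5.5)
Your trace and norm computation is correct and coincides with the paper's. The gap is \eqref{eq:critical-rank-energy}, which is the actual content of the lemma, and your proposal does not prove it. You list targets (a) and (b), call the cross terms the ``main obstacle,'' and defer every quantitative input to twirl coefficients you never compute. (In the paper, the equations you cite support the rank-monotonicity statement of Lemma~F.8. The energy bound is imported from \cite[Lemma~F.16, Eqs.~(F.55)--(F.61)]{basso2026SpecGapDaviesMeanFieldHeisenberg} for the $s\ge1$ terms, and $o_{1/2}$ for odd $n$ is handled separately, on the subspace where spins $2,\dots,n$ have total spin zero.)

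What you do write does not hold up. The justification of (a) is wrong: $\Stot^z\vPi_s$ is not a low-weight operator. Indeed $\sum_{i,\alpha}\|[\vsigma^{(i,\alpha)},\Stot^z\vPi_s]\|_2^2$ is of order $s^2n\Tr\vPi_s$ for large $s$, dominated by the off-diagonal blocks, so the $\beta=0$ Pauli-weight spectrum says nothing here. The fallback also fails quantitatively. For $s\ge\frac12$ one has $\sum_{i,\alpha}\vPi_s\vsigma^{(i,\alpha)}\vPi_s\vsigma^{(i,\alpha)}\vPi_s=(n+2)\vPi_s$, while the channels $s\to s\pm1$ together carry $2(n-1)\vPi_s$. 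Only $\alpha\in\{\rmx,\rmy\}$ enter $[\Stot^z,\vsigma]$, so the $\omega=0$ term supplies $\tfrac23(n+2)\Tr\vPi_s|o_s|^2$. Your $-\|b\|^2$, with $b$ the local part $o_s\vPi_{s'}[\Stot^z,\vsigma]\vPi_s$ of the $s'=s\pm1$ pieces, removes $\tfrac43(n-1)\Tr\vPi_s|o_s|^2$ up to Gibbs factors $e^{\pm O(s/n)}$. That is about twice what is available.

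To close the argument you need the exact $2\times2$ edge form. At minimum you need these channel weights together with a split $(1-\epsilon)\|a\|^2-(\epsilon^{-1}-1)\|b\|^2$, with $\epsilon$ tuned close to $1$. For $\vT=\vPi_{s-1}\vsigma^{(i,\alpha)}\vPi_s$, set $\delta=o_s-o_{s-1}$ and $b_s=\frac{2(2s-1)(n/2+s+1)}{2s+1}$. Wigner--Eckart then gives
\[
\sum_{i,\alpha}\|o_s\vT\Stot^z-o_{s-1}\Stot^z\vT\|_2^2=\frac{b_s\Tr\vPi_s}{3}\Bigl[2|o_s|^2+2(s-1)\Re(\bar o_s\delta)+s(s-1)|\delta|^2\Bigr]\ \ge\ \frac{b_s\Tr\vPi_s}{6}\,s(s-1)|\delta|^2 .
\]
The $\omega=0$ part of $\EDM(\vX_o)$ contributes exactly $\frac{n+2}{3}\sum_s\varrho(s)|o_s|^2$. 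Combining the two gives \eqref{eq:critical-rank-energy}, including $s=\frac12$ for odd $n$, more directly than the paper does. None of this is in your proposal, however.

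Your boundary remark is also wrong. Since $\varrho(s)$ weights both sides of the inequality, its smallness never lets you discard a block. Moreover, the delicate endpoint is $s=\frac12$ for odd $n$, where there is no lower edge. It is not $s\approx n/2$, where $b_s\approx 2(n-1)$ is large.
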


\begin{proof}
The observable $\vX_o$ commutes with $\vrho$, so its KMS norm satisfies
$\nr{\vX_o}^2=\Tr(\vrho\vX_o^\dagger\vX_o)$.
On a spin-$s$ block, its eigenvalues are $mo_s$, with
$m=-s,\ldots,s$, each repeated $m_{n,s}$ times. The identities
\[
    \sum_{m=-s}^s m=0,
    \qquad
    \frac1{2s+1}\sum_{m=-s}^s m^2=\frac{s(s+1)}3
\]
give the trace and norm formulas. The last inequality follows from
$s(s+1)/3\le s^2$ for $s\ge\frac12$.
This also agrees with \cite[Lemma~F.15]
{basso2026SpecGapDaviesMeanFieldHeisenberg}.

For the energy, retain the nonnegative contributions with $s\ge1$
in the proof of \cite[Lemma~F.16, Eqs.~(F.55)--(F.61)]
{basso2026SpecGapDaviesMeanFieldHeisenberg}. They give
\begin{equation}\label{eq:critical-energy-away-endpoint}
    \EDM(\vX_o)\ge cn\left(
    \sum_{\substack{s\in\CS_n\\s\ge1}}\varrho(s)|o_s|^2
    +\sum_{s\in\mathcal G_n}\varrho(s)(s^2-1)|o_s-o_{s-1}|^2
    \right).
\end{equation}
The prefactor in \cite[Eq.~(F.55)]
{basso2026SpecGapDaviesMeanFieldHeisenberg}
is $e^{-c_1\beta}$ with universal $c_1$, and Eqs.~(F.58)--(F.61)
are temperature independent; hence $c>0$ is uniform in $n$ at $\beta=2$.
Here and below, $\vS_i^\alpha:=\vsigma_i^\alpha/2$.
Our Pauli jumps are twice these spin operators, so the Davies--Metropolis form
is four times the form in that normalization; this factor is absorbed
in $c$.
Every $s\in\mathcal G_n$ satisfies $s\ge\frac32$, and hence
$s^2-1\ge\frac59s^2$.
For even $n$, \eqref{eq:critical-energy-away-endpoint} therefore proves
\eqref{eq:critical-rank-energy}.

For odd $n$, it remains to control $o_{1/2}$.
We first make the local-gradient comparison explicit.
If $\vZ_\omega=[\vS_i^\alpha(\omega),\vX_o]$, then
$[\vH_n,\vZ_\omega]=\omega\vZ_\omega$, because
$[\vH_n,\vX_o]=0$. Consequently,
\[
    \nr{\vZ_\omega}^2
    =e^{-\beta\omega/2}\Tr(\vrho\vZ_\omega^\dagger\vZ_\omega).
\]
The Davies--Metropolis divergence form thus gives, at $\beta=2$,
\[
    \EDM(\vX_o)=2\sum_{i,\alpha,\omega}f_{\mathrm M}^2(2 \omega)
    \Tr\!\left(\vrho
    [\vS_i^\alpha(\omega),\vX_o]^\dagger
    [\vS_i^\alpha(\omega),\vX_o]\right).
\]
By Fact~\ref{fact:pauli-selection-rule}, each relevant frequency satisfies
$|\omega|\le1$, so $f_{\mathrm M}^2(2 \omega)\ge e^{-2}$.
Distinct Bohr subspaces are orthogonal also for the inner product
$\Tr(\vrho\vY^\dagger\vZ)$. Summing their squared norms therefore yields
\begin{equation}\label{eq:critical-local-gradient}
    \EDM(\vX_o)\ge2e^{-2}
    \sum_{i,\alpha}\Tr\!\left(
    \vrho[\vS_i^\alpha,\vX_o]^\dagger
    [\vS_i^\alpha,\vX_o]\right).
\end{equation}

Separate the first spin from the remaining $n-1$ spins, and restrict
to the subspace on which the latter have total spin zero.
This subspace lies in the spin-$\frac12$ block and has dimension
$2m_{n-1,0}$, while the full block has dimension $2m_{n,1/2}$.
Since $\vrho$ is scalar on this block, the subspace has Gibbs weight
$r_n\varrho(\tfrac12)$, where \eqref{eq:multiplicity} gives
\[
    r_n:=\frac{m_{n-1,0}}{m_{n,1/2}}
    =\frac{n+3}{4n}\ge\frac14.
\]
The subspace is invariant under $\vS_1^\alpha$ and $\vX_o$.
On it, $\Stot^\alpha=\vS_1^\alpha$ and
$\vX_o=o_{1/2}\vS_1^z$. The spin-$\frac12$ commutation relations give
\[
    \sum_{\alpha=x,y}
    [\vS_1^\alpha,\vX_o]^\dagger[\vS_1^\alpha,\vX_o]
    =|o_{1/2}|^2\bigl((\vS_1^x)^2+(\vS_1^y)^2\bigr)
    =\frac{|o_{1/2}|^2}{2}\vI.
\]
The projection onto this subspace commutes with $\vrho$.
Keeping its contribution to the positive weighted trace therefore gives
\[
    \sum_\alpha\Tr\!\left(
    \vrho[\vS_1^\alpha,\vX_o]^\dagger[\vS_1^\alpha,\vX_o]\right)
    \ge\frac{r_n}{2}\varrho(\tfrac12)|o_{1/2}|^2.
\]
The full trace sum on the left is the same at every site, since
$\vrho$ and $\vX_o$ are permutation invariant. Thus
\eqref{eq:critical-local-gradient} implies
\[
    \EDM(\vX_o)\ge e^{-2}n r_n\varrho(\tfrac12)|o_{1/2}|^2
    \ge\frac{e^{-2}}4n\varrho(\tfrac12)|o_{1/2}|^2.
\]
Averaging this lower bound with
\eqref{eq:critical-energy-away-endpoint}, and decreasing the constant,
proves \eqref{eq:critical-rank-energy} for odd $n$ as well.
\end{proof}

\subsection{Two scalar estimates}

The first input is the critical Poincar\'e inequality for the
scalar spin chain. We also record an upper bound on its edge weights,
which will allow us to apply this inequality to $g_s=so_s$.

\begin{fact}[Scalar variance bound]\label{lem:critical-scalar}
For every $f\colon\CS_n\to\mathbb C$,
\begin{equation}\label{eq:critical-scalar}
    \Var_\varrho(f)\le Cn^{1/2}\mathcal E_0(f).
\end{equation}
Moreover,
\begin{equation}\label{eq:critical-scalar-edge-form}
    \mathcal E_0(f)=
    \sum_{\substack{s\in\CS_n\\s-1\in\CS_n}}
    a_s|f_s-f_{s-1}|^2,
    \qquad 0<a_s\le3n\varrho(s).
\end{equation}
\end{fact}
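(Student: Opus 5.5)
The plan is to prove the two parts of the statement in reverse order. First I derive the edge representation \eqref{eq:critical-scalar-edge-form} from the Bohr form \eqref{eq:davies-bohr-form}. Then I treat $\mathcal E_0$ as the Dirichlet form of a birth--death chain on $\CS_n$ with stationary law $\varrho$, and obtain \eqref{eq:critical-scalar} from a Hardy/Muckenhoupt-type criterion for such chains combined with Laplace asymptotics of $\varrho$ at $\beta=2$.

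\emph{Edge form.} Since $\vY_f$ commutes with $\vH_n$, the commutator $[\vsigma^{(i,\alpha)}_\omega,\vY_f]$ has blocks $\vPi_{s'}\vsigma^{(i,\alpha)}\vPi_s(f_s-f_{s'})$. Here $s'$ is the unique block with $E_{s'}-E_s=\omega$, and by \Cref{fact:pauli-selection-rule} we have $|s'-s|\le1$. The $s'=s$ blocks vanish. By the KMS orthogonality of Bohr subspaces, each edge $\{s-1,s\}$ then contributes $|f_s-f_{s-1}|^2$ times a positive weight, which defines $a_s$. That weight involves $e^{-\beta|\omega|/2}\sqrt{p_sp_{s-1}}=\min(p_s,p_{s-1})$, where $p_t$ is the Gibbs weight per state in block $t$. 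It also involves the traces $\Tr(\vPi_s\vsigma\vPi_{s-1}\vsigma\vPi_s)$ for both orientations of the edge. Combining the two orientations, bounding $\min(p_s,p_{s-1})\le p_s$, and using
\[
\sum_{i,\alpha}\vPi_s\vsigma^{(i,\alpha)}\vPi_{s-1}\vsigma^{(i,\alpha)}\vPi_s\le\sum_{i,\alpha}\vPi_s(\vsigma^{(i,\alpha)})^2\vPi_s=3n\vPi_s
\]
gives $a_s\le 3n\,p_s\Tr\vPi_s=3n\varrho(s)$. Positivity of $a_s$ follows because some single-site Pauli connects adjacent spin blocks; the trace identities in \cite{basso2026SpecGapDaviesMeanFieldHeisenberg} give this explicitly.

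\emph{Lower bound on the rates.} For the Poincaré part I will use the explicit transition traces to get the matching bound $a_s\ge c\,n\,\varrho(s)$ on the bulk range $1\le s\le n/4$, say. I expect this to come from the Clebsch--Gordan coefficients for adding one spin-$\tfrac12$, which are of order one for $1\ll s\ll n$, together with $e^{-\beta|\omega|/2}\ge e^{-2}$. Near $s\sim n/2$ I only need the weaker statement that the rates stay positive, because $\varrho$ is exponentially small there.

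\emph{Poincaré inequality.} At $\beta=2$ the quadratic term in the large-deviation exponent cancels. Writing $\varrho(s)\propto(2s+1)m_{n,s}e^{2s(s+1)/n}$ and applying Stirling's formula gives
\[
\varrho(s)\asymp \frac{s^2}{Z}\exp\!\Bigl(-\frac{c s^4}{n^3}\bigl(1+o(1)\bigr)\Bigr).
\]
So $\varrho$ is concentrated on $s\asymp n^{3/4}$, with quartic tails and only a polynomial vanishing near $0$. Choose a median $s_*\asymp n^{3/4}$. The birth--death Hardy inequality (Miclo's criterion) bounds the Poincaré constant by $4\max(B_+,B_-)$, where
\[
B_+=\sup_{k>s_*}\Bigl(\sum_{s_*<s\le k}\frac{1}{a_s}\Bigr)\Bigl(\sum_{s\ge k}\varrho(s)\Bigr),
\]
and $B_-$ is defined analogously on the left of $s_*$. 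Using $a_s\gtrsim n\varrho(s)$, the first factor is at most about $(k-s_*)\,n^{-1}\max_{s\le k}\varrho(s)^{-1}$. Comparing this with the quartic tail of $\varrho$ gives $B_\pm\lesssim n^{3/4}\cdot n^{3/4}/n=n^{1/2}$. This is the heuristic that the chain diffuses at rate $\asymp n$ across a window of width $n^{3/4}$.

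\emph{Main obstacle.} The hard step is making the Laplace asymptotics uniform enough for the Hardy sums. Concretely, I must check that $\sum_{s\ge k}\varrho(s)\big/\varrho(k')$ is controlled for all $k'\in[s_*,k]$, including deep tails $k\gg n^{3/4}$ and the left side near $s=0$, where $\varrho(s)\sim s^2$. For the right side I will compare $\varrho(s+1)/\varrho(s)$ with $\exp(-cs^3/n^3)$ and sum the resulting geometric-type tail. For the left side I will use $\sum_{s\le k}\varrho(s)\lesssim k\varrho(k)$, valid for $k\lesssim n^{3/4}$, so that $B_-\lesssim\sup_k k^2/n\lesssim n^{1/2}$. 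Alternatively, this whole step may simply be imported from the scalar analysis in \cite{basso2026SpecGapDaviesMeanFieldHeisenberg}.
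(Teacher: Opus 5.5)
Your proposal is correct. The edge representation is essentially the paper's computation with different bookkeeping:
\begin{itemize}
\item The paper writes $a_s=\varrho(s)q(s,s-1)$ using reversibility of the scalar chain. It then bounds $q(s,s-1)\le3n$ via $\sum_\omega(\vA_\omega)^\dagger\vA_\omega=\vI$ and $f_{\mathrm M}^2\le1$.
\item You read off $a_s=\min(p_{s-1},p_s)\sum_{i,\alpha}\Tr(\vPi_s\vsigma^{(i,\alpha)}\vPi_{s-1}\vsigma^{(i,\alpha)}\vPi_s)$ directly from \eqref{eq:davies-bohr-form}. You then use $\vPi_s\vsigma^{(i,\alpha)}\vPi_{s-1}\vsigma^{(i,\alpha)}\vPi_s\le\vPi_s$.
\end{itemize}
At $\beta=2$ both expressions equal $p_{s-1}\sum_{i,\alpha}\Tr(\cdots)$.

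The genuine difference is \eqref{eq:critical-scalar}. The paper does not prove it; it imports it from \cite[Theorem~3.9]{basso2026SpecGapDaviesMeanFieldHeisenberg}. Your main route instead reproves it with Miclo's Hardy criterion, and the scaling you sketch goes through:
\begin{itemize}
\item One has $\varrho(s)\asymp(2s+1)^2n^{-9/4}e^{-nF(2s/n)}$ with $F(x)\asymp x^4$ on $[0,\tfrac12]$, by the estimates in the proof of \Cref{lem:critical-moment}.
\item The window $s_*<k\lesssim n^{3/4}$ contributes $(k-s_*)n^{-1/4}\lesssim n^{1/2}$ to $B_+$.
\item Beyond that window, the exact ratio formula gives $\varrho(s+1)/\varrho(s)\le e^{-cs^3/n^3}$ for $s\ge Cn^{3/4}$. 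This keeps $B_+=O(n^{1/2})$, via your $n^2/k^2$ heuristic.
\item On the left, $B_-\lesssim\sup_k k^2/n\lesssim n^{1/2}$.
\end{itemize}
Your route is self-contained and explains the exponent: diffusion at rate $n$ across a window of width $n^{3/4}$. Because Miclo's criterion is two-sided, it also yields the matching upper bound on the scalar gap. It costs uniform Laplace asymptotics and explicit rate lower bounds over the whole range. The asymptotics come from the same Stirling input \cite[Lemma~D.4]{basso2026SpecGapDaviesMeanFieldHeisenberg} that the paper uses anyway. The paper's citation buys brevity.

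One step needs tightening. In the Hardy sums it is $1/a_s$ that enters, and only ratios of $\varrho$ appear in the products. So near $s\approx n/2$, mere positivity of the rates is not enough: a positive but tiny $a_s$ would make $B_+$ large however small $\varrho$ is there. You need a quantitative lower bound, and the explicit raising coefficient \eqref{eq:zero-raising} supplies it. Since $p_{s-1}<p_s$, your formula becomes $a_s=\varrho(s-1)\,\frac{2(2s+1)(n/2-s+1)}{2s-1}\ge2\varrho(s-1)$. This single identity gives three things:
\begin{itemize}
\item positivity of every $a_s$;
\item the bulk bound $a_s\gtrsim n\varrho(s)$ for $s\le n/4$;
\item combined with $\varrho(s)/\varrho(s-1)\le e/3+o(1)$ for $2s/n\ge\tfrac12$, an $O(1)$ contribution of the deep tail to $B_+$.
\end{itemize}
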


The variance bound is \cite[Theorem~3.9, $\beta=2$]
{basso2026SpecGapDaviesMeanFieldHeisenberg}, with the constant adjusted
for our Pauli normalization. The scalar chain is reversible with
stationary law $\pi$ and has positive rates between adjacent spins
\cite[Proposition~3.10]{basso2026SpecGapDaviesMeanFieldHeisenberg}.
Writing $q(s,t)$ for the rate from $s$ to $t$, reversibility gives
$a_s=\varrho(s)q(s,s-1)=\varrho(s-1)q(s-1,s)$.
The two directed terms for each edge combine into the single term
in \eqref{eq:critical-scalar-edge-form}.

For each local Pauli operator $\vA$, the spectral projections of
$\vH_n$ give
\[
    \sum_\omega(\vA_\omega)^\dagger\vA_\omega
    =\sum_s\vPi_s\vA^\dagger\vA\vPi_s=\vI.
\]
Since $\gamma_2\le1$, the total transition intensity from a spin block
due to one such operator is at most one: this follows by taking the
normalized trace on that block of
$\sum_\omega f_{\mathrm M}^2(2 \omega)(\vA_\omega)^\dagger\vA_\omega\le\vI$.
The rate of leaving the block is no larger.
There are $3n$ local Pauli operators, so $q(s,s-1)\le3n$.

The second input determines the scale of the spin under $\pi$.

\begin{lemma}[Critical second moment]\label{lem:critical-moment}
For every $n\ge2$,
\begin{equation}\label{eq:critical-spin-moment}
    \mathbb E_\varrho s^2=\Theta(n^{3/2}),
    \qquad
    \mathbb E_\varrho s(s+1)=\Theta(n^{3/2}).
\end{equation}
\end{lemma}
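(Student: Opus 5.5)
Since $s(s+1)$ and $s^2$ differ by $s\le n/2$, and $\mathbb E_\varrho s\le (\mathbb E_\varrho s^2)^{1/2}$, the two statements in \eqref{eq:critical-spin-moment} are equivalent up to lower-order terms. It therefore suffices to prove $\mathbb E_\varrho s^2=\Theta(n^{3/2})$.

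\textbf{Step 1: an explicit weight.} Using \eqref{eq:multiplicity}, we have
\[
\varrho(s)\propto w(s):=\frac{(2s+1)^2}{n/2+s+1}\binom{n}{n/2-s}e^{2s(s+1)/n}.
\]
Stirling's formula (or the local CLT for the binomial) gives, uniformly for $s\le n/2$,
\[
\binom{n}{n/2-s}=2^n\sqrt{\tfrac{2}{\pi n}}\,\exp\!\Bigl(-n\,\phi(2s/n)\Bigr)\bigl(1+o(1)\bigr),
\]
with $\phi(x)=\tfrac{1+x}{2}\log(1+x)+\tfrac{1-x}{2}\log(1-x)-\tfrac{x^2}{2}\cdot 0+\ldots$. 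More precisely,
\[
n\bigl[\tfrac{1+x}{2}\log(1+x)+\tfrac{1-x}{2}\log(1-x)\bigr]=\tfrac{2s^2}{n}+\tfrac{4s^4}{3n^3}+O\!\left(\tfrac{s^6}{n^5}\right),
\]
where $x=2s/n$. The Gaussian term $2s^2/n$ cancels exactly against the factor $e^{2s^2/n}$; this cancellation is precisely what makes $\beta=2$ critical. The remaining factor $e^{2s/n}$ is $\Theta(1)$. For $s\le n/4$, say, we thus get
\[
w(s)\asymp 2^n n^{-3/2}\, s^2\, e^{-\frac{4s^4}{3n^3}}\,\Theta(1),
\]
using $n/2+s+1=\Theta(n)$ and $(2s+1)^2\asymp s^2+1$.

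\textbf{Step 2: the bulk sums.} With the substitution $s=n^{3/4}u$, Riemann-sum comparison (the spacing in $s$ is $1$, which is much finer than the scale $n^{3/4}$) gives
\[
\sum_{s\le n/4} s^{k}e^{-\frac{4s^4}{3n^3}}\asymp n^{3(k+1)/4}\int_0^\infty u^k e^{-4u^4/3}\,\rmd u
\]
for $k=2,4$. Hence, restricted to the bulk $s\le n/4$,
\[
\mathbb E_\varrho s^2\asymp \frac{n^{15/4}}{n^{9/4}}=n^{3/2}.
\]
Only two-sided constant bounds are needed, so it is enough to use explicit inequalities $c\,y^4\le n\psi(y)-2s^2/n\le C\,y^4$, with $y=2s/n$, valid for $y\le 1/2$, instead of a full asymptotic expansion. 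These follow from the Taylor series $\psi(y)=\sum_{k\ge1}y^{2k}/(2k(2k-1))$, whose coefficients are all positive.

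\textbf{Step 3: the tail (main obstacle).} It remains to show that the range $s>n/4$ contributes negligibly to both the numerator and the normalizer. On this range the exponent is
\[
-n\bigl[\psi(2s/n)-2s^2/n^2\bigr]\le -c'n,
\]
since $\psi(y)-y^2/2\ge c y^4$ for all $y\in[0,1]$ by the positivity of the Taylor coefficients. The binomial itself is bounded by $2^n$ without any Stirling error, using the entropy bound $\binom{n}{k}\le e^{nH(k/n)}$. The polynomial prefactors are at most $\mathrm{poly}(n)$, so the tail is $2^n e^{-c'n}\mathrm{poly}(n)$. This is exponentially smaller than the bulk normalizer, which is of order $2^n n^{-3/2}n^{9/4}$.

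The main care is needed in using uniform Stirling bounds for $\binom{n}{n/2-s}$ with explicit constants, both in the bulk and at small $s$, including $s=0$ and $s=\tfrac12$, where $(2s+1)^2$ must be used rather than $s^2$. Both the lower and the upper bound follow because every comparison involves constants independent of $n$.
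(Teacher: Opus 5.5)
Your proposal is correct and follows essentially the same route as the paper: Stirling in the bulk, exact cancellation of the quadratic term at $\beta=2$ leaving the quartic free energy $n(\psi(x)-x^2/2)$ sandwiched between multiples of $s^4/n^3$, the $s\sim n^{3/4}$ rescaling with a Riemann-sum comparison giving sums of order $2^n n^{3/4}$ and $2^n n^{9/4}$, and an exponentially small tail for $x>1/2$ via the positive Taylor coefficients of $\psi$ and the entropy bound. Two slips are harmless. First, the Stirling prefactor should be $\sqrt{2/(\pi n(1-x^2))}$, so the $1+o(1)$ form is not uniform up to $s=n/2$, but you only use it for $s\le n/4$. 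Second, the quartic sandwich should read $c\,n y^4\le n\psi(y)-2s^2/n\le C\,n y^4$; this two-sided bound is what actually replaces your claim $w(s)\asymp 2^n n^{-3/2}s^2e^{-4s^4/(3n^3)}\,\Theta(1)$, which fails for $s\gg n^{5/6}$.
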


\begin{proof}
Set $x_s=2s/n$ and
\[
    F(x):=-\frac{x^2}{2}-H_b\!\left(\frac{1+x}{2}\right)+\log2,
    \qquad H_b(p):=-p\log p-(1-p)\log(1-p),
\]
where logarithms are natural and $0\log0:=0$.
The entropy expansion cancels the quadratic term at criticality:
\begin{equation}\label{eq:critical-quartic-free-energy}
    F(x)=\sum_{j\ge2}\frac{x^{2j}}{(2j)(2j-1)}
    \qquad(0\le x\le1).
\end{equation}
At $x=1$, the identity follows by continuity and convergence of the
series. Its first term gives the lower bound below. For $x\le1/2$,
bounding every denominator by $12$ gives
$F(x)\le x^4/[12(1-x^2)]\le x^4/9$. Thus
\begin{equation}\label{eq:critical-free-energy-bounds}
    F(x)\ge\frac{x^4}{12}\quad(0\le x\le1),
    \qquad F(x)\le\frac{x^4}{9}\quad(0\le x\le\tfrac12).
\end{equation}

Define
\[
    w_s:=\frac{(x_s+n^{-1})^2e^{x_s}}
    {(1-x_s^2+4n^{-1})^{1/2}}e^{-nF(x_s)},
    \qquad \mathcal Z_n:=\sum_{s\in\CS_n}w_s.
\]
By the uniform Stirling bound
\cite[Lemma~D.4]{basso2026SpecGapDaviesMeanFieldHeisenberg}, the
unnormalized Gibbs weight
$(2s+1)m_{n,s}e^{2s(s+1)/n}$ lies between constant multiples of
$2^n n^{1/2}w_s$, uniformly in $n\ge2$ and $s\in\CS_n$.
Summing these two bounds and then normalizing gives
\begin{equation}\label{eq:critical-weight-comparison}
    c\frac{w_s}{\mathcal Z_n}\le\varrho(s)
    \le C\frac{w_s}{\mathcal Z_n}.
\end{equation}
It therefore suffices to estimate the two sums
$\sum_s s^kw_s$ for $k\in\{0,2\}$, with $s^0:=1$.
The case $k=0$ gives the normalizing sum, and their ratio gives
the second moment.

We first bound both sums from above on $x_s\le1/2$.
On this interval, the factor
$e^{x_s}(1-x_s^2+4n^{-1})^{-1/2}$ is bounded above and below by
positive constants independent of $n$.
By \eqref{eq:critical-free-energy-bounds}, $nF(x_s)$ lies between
$nx_s^4/12$ and $nx_s^4/9$. The following rescaling turns these
two bounds into functions independent of $n$:
\[
    y_s=n^{1/4}x_s,
    \qquad s=\frac12n^{3/4}y_s,
    \qquad \delta_n=2n^{-3/4}.
\]
The $y_s$ lie on a grid of spacing $\delta_n$, starting at $0$ for
even $n$ and at $\delta_n/2$ for odd $n$.
Since
$x_s+n^{-1}=n^{-1/4}(y_s+n^{-3/4})\le n^{-1/4}(y_s+1)$,
we obtain
\begin{equation}\label{eq:critical-rescaled-summand}
    s^kw_s\le Cn^{(3k-2)/4}h_k(y_s),
    \qquad h_k(y):=y^k(y+1)^2e^{-y^4/12}.
\end{equation}
Here $h_0(0)=1$ and $h_2(0)=0$.
The functions $h_k$ no longer depend on $n$, and both are integrable
and bounded on $[0,\infty)$.

To pass from the discrete sum to an integral, we use their
unimodality. For $y>0$,
\[
    \frac{d}{dy}\log h_k(y)=\frac{k}{y}+\frac2{y+1}-\frac{y^3}{3},
    \qquad
    \frac{d^2}{dy^2}\log h_k(y)
    =-\frac{k}{y^2}-\frac2{(y+1)^2}-y^2<0.
\]
The first derivative is positive near zero and negative for large
$y$, so $h_k$ increases up to a unique maximum and then decreases.
On the increasing side, a grid value times $\delta_n$ is bounded by
the integral over the interval immediately to its right, provided
that interval stays on the increasing side. On the decreasing side,
we instead use the interval immediately to its left.
These intervals have disjoint interiors. The remaining grid points
lie within distance less than $\delta_n$ of the maximum, so there
are at most two of them. Each contributes at most $\sup h_k$.
Consequently, for either grid offset,
\[
    \sum_{x_s\le1/2}h_k(y_s)
    \le\delta_n^{-1}\int_0^\infty h_k(y)\,dy
       +2\sup_{y\ge0}h_k(y).
\]
All comparison intervals lie in $[0,\infty)$. Replacing their union
by this half-line only increases the integral, since $h_k$ is
nonnegative.
Since the integral and the supremum are fixed finite constants,
\eqref{eq:critical-rescaled-summand} yields
\begin{equation}\label{eq:critical-inner-moment-upper}
    \sum_{x_s\le1/2}s^kw_s
    \le Cn^{(3k-2)/4}(n^{3/4}+1)
    \le Cn^{(3k+1)/4}.
\end{equation}
Thus the extra factor $n^{3/4}$ comes precisely from the density
$\delta_n^{-1}$ of the rescaled grid.

For the remaining spins, $x_s>1/2$, the free energy is bounded
away from zero: $F(x_s)\ge1/192$.
Also
$(1-x_s^2+4n^{-1})^{-1/2}\le\sqrt n/2$,
while $e^{x_s}$ and $(x_s+n^{-1})^2$ are uniformly bounded above.
There are at most $n$ spin values and $s\le n/2$, so
\[
    \sum_{x_s>1/2}s^kw_s
    \le Cn^{k+3/2}e^{-n/192}
    \le Cn^{(3k+1)/4}.
\]
The last inequality holds uniformly in $n\ge2$, because an
exponentially decaying factor bounds every fixed polynomial factor.
Together with \eqref{eq:critical-inner-moment-upper}, this proves
the required upper bounds on the full sums.

For matching lower bounds, it is enough to retain the spins for
which $1\le y_s\le2$. For $n\ge256$, their set
\[
    I_n:=\{s\in\CS_n:n^{-1/4}\le x_s\le2n^{-1/4}\}
\]
lies in $x_s\le1/2$. In the spin variable, this is an interval of
length $\frac12n^{3/4}$ on a grid of spacing one. Hence
\[
    |I_n|\ge\frac12n^{3/4}-1\ge\frac14n^{3/4}.
\]
For every $s\in I_n$,
$nF(x_s)\le16/9$, $x_s+n^{-1}\ge n^{-1/4}$, and
$s\ge\frac12n^{3/4}$. The remaining prefactor in $w_s$ is bounded
below on this interval. Therefore $w_s\ge cn^{-1/2}$, and
\[
    \sum_s s^kw_s\ge\sum_{s\in I_n}s^kw_s
    \ge cn^{(3k+1)/4}
    \qquad(k\in\{0,2\}).
\]
We have proved
\[
    \mathcal Z_n=\Theta(n^{1/4}),
    \qquad \sum_s s^2w_s=\Theta(n^{7/4}).
\]
Taking their ratio in \eqref{eq:critical-weight-comparison} gives
$\mathbb E_\varrho s^2=\Theta(n^{3/2})$.
The constants can be adjusted to cover $2\le n<256$: there are only
finitely many such $n$, and each second moment is strictly positive.
Finally, $s^2\le s(s+1)\le3s^2$ for $s\ge1/2$, and both quantities
vanish at $s=0$. This proves the second assertion.
\end{proof}

\begin{corollary}
    For every $n\ge2$, $\ipc{\Stot^z}{\Stot^z}_{\vrho}=\Var_{\vrho}(\Stot^z)=\Theta(n^{\frac32})$.
\end{corollary}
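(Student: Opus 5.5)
The plan is to reduce the corollary to the second-moment estimate of \Cref{lem:critical-moment}, via the block structure of $\vrho$. The equality $\ipc{\Stot^z}{\Stot^z}_{\vrho}=\Var_{\vrho}(\Stot^z)$ follows directly from \eqref{eq:magnetization-variance}. Since $\Tr(\vrho\Stot^z)=0$, the definition \eqref{eq:VarDef} gives $\Var_{\vrho}(\Stot^z)=\nr{\Stot^z}^2$, and $\nr{\cdot}^2$ is by definition $\ipc{\cdot}{\cdot}_{\vrho}$.

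For the scaling, I will use that $\Stot^z$ commutes with $\vrho$, because $\vrho$ is a function of $\vH_n$ and hence of $\Stot^2$. This gives
\[
\nr{\Stot^z}^2=\Tr(\vrho^{1/2}\Stot^z\vrho^{1/2}\Stot^z)=\Tr\bigl(\vrho(\Stot^z)^2\bigr).
\]
Next I decompose this trace over the spin blocks $\vPi_s$. On $V_s\otimes W_s$ the state $\vrho$ is the scalar $\varrho(s)/((2s+1)m_{n,s})$, while $(\Stot^z)^2$ has eigenvalues $m^2$ for $m=-s,\dots,s$, each with multiplicity $m_{n,s}$. Using $\frac{1}{2s+1}\sum_{m=-s}^s m^2=\frac{s(s+1)}{3}$, exactly as in the proof of \Cref{lem:critical-rank-input} with $o_s\equiv1$, I obtain
\[
\nr{\Stot^z}^2=\frac13\sum_{s\in\CS_n}\varrho(s)s(s+1)=\frac13\mathbb E_\varrho s(s+1).
\]
An alternative route uses rotation invariance: $\Tr(\vrho(\Stot^\alpha)^2)$ is the same for each $\alpha$, so $\Tr(\vrho(\Stot^z)^2)=\frac13\Tr(\vrho\Stot^2)$, which gives the same formula.

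The final step applies \Cref{lem:critical-moment}, which states $\mathbb E_\varrho s(s+1)=\Theta(n^{3/2})$ for all $n\ge2$, and this yields the claim. All of the analytic content (the Stirling comparison and the quartic free energy at $\beta=2$) is already contained in that lemma. The only point that needs care is the block-trace identity, in particular checking that the normalization of $\varrho(s)$ includes the full dimension $(2s+1)m_{n,s}$, so that dividing by it is correct.
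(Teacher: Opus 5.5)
Your proposal is correct and follows essentially the same route as the paper, which writes $\Var_{\vrho}(\Stot^z)=\frac13\sum_s\varrho(s)s(s+1)$ and invokes \Cref{lem:critical-moment}. You also spell out the block-trace computation and the zero-mean identity that the paper's one-line proof leaves implicit. That computation is the norm formula of \Cref{lem:critical-rank-input} with $o_s\equiv1$, since $\Stot^z\vPi_0=0$.
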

\begin{proof}
     We have $\Var_{\vrho}(\Stot^z)=\frac13\sum_s\varrho(s)s(s+1)=\Theta(n^{\frac32})$ by \autoref{lem:critical-moment}.
\end{proof}

\subsection{Transferring the scalar estimate}

Let $\CP_P$ be the permutation average,
\begin{equation}\label{eq:critical-permutation-projection}
    \CP_P\vX:=\frac1{n!}\sum_{\tau\in S_n}
    \vU_\tau\vX\vU_\tau^\dagger,
    \qquad
    \vX_P:=\CP_P\vX,\qquad
    \vX_\perp:=(\CI-\CP_P)\vX.
\end{equation}
Since permutations commute with $\vrho$, $\CP_P$ is the
KMS-orthogonal projection onto $\Pn$ and preserves the Gibbs
expectation. Both superoperators are permutation-covariant and
KMS-self-adjoint, so
\begin{equation}\label{eq:form-orthogonal-split}
    \mathcal E_*(\vX)
    =\mathcal E_*(\vX_P)+\mathcal E_*(\vX_\perp),
    \qquad *\in\{\mathrm D,\mathrm{coh}\},
\end{equation}
and orthogonality gives
\begin{equation}\label{eq:variance-orthogonal-split}
    \Var_{\vrho}(\vX)
    =\Var_{\vrho}(\vX_P)+\nr{\vX_\perp}^2.
\end{equation}

\begin{proof}[Proof of Proposition~\ref{prop:critical-bound}]
The scalar family satisfies the required Davies--Metropolis bound by
\eqref{eq:critical-scalar}.
For the family $\vX_o$, put $g_s=so_s$ for $s>0$ and $g_0=0$ when
zero is an admissible spin. By \eqref{eq:critical-rank-norm},
\begin{equation}\label{eq:critical-variance-mean}
    \nr{\vX_o}^2\le\mathbb E_\varrho|g|^2
    =\Var_\varrho(g)+|\mathbb E_\varrho g|^2.
\end{equation}
For every $s\in\mathcal G_n$,
\begin{equation}\label{eq:critical-g-gradient}
    |g_s-g_{s-1}|^2
    =|(s-1)(o_s-o_{s-1})+o_s|^2
    \le2s^2|o_s-o_{s-1}|^2+2|o_s|^2.
\end{equation}
For even $n$, there is one additional edge, $0\leftrightarrow1$,
where $|g_1-g_0|^2=|o_1|^2$.
Thus \eqref{eq:critical-scalar-edge-form} implies
\[
    \mathcal E_0(g)\le6n\mathfrak B(o)\le C\EDM(\vX_o),
\]
where the second inequality is \eqref{eq:critical-rank-energy}.
Applying the scalar Poincar\'e inequality gives
\[
    \Var_\varrho(g)\le Cn^{1/2}\mathcal E_0(g)
    \le Cn^{1/2}\EDM(\vX_o).
\]

Although $\vX_o$ is centered, $g$ need not be.
By Cauchy--Schwarz, \eqref{eq:critical-spin-moment}, and the first
term of $\mathfrak B(o)$,
\begin{equation}\label{eq:critical-g-mean}
    |\mathbb E_\varrho g|^2
    \le\mathbb E_\varrho s^2\sum_{s\in\CS_n^+}\varrho(s)|o_s|^2
    \le Cn^{3/2}\mathfrak B(o)
    \le Cn^{1/2}\EDM(\vX_o).
\end{equation}
Combining the variance and mean bounds in
\eqref{eq:critical-variance-mean} yields
$\EDM(\vX_o)\ge cn^{-1/2}\nr{\vX_o}^2$.
Taking the smaller constant for the two families,
Fact~\ref{fact:critical-davies-ranks} now gives
\begin{equation}\label{eq:critical-C-target}
    \EDM(\vX)\ge c_Pn^{-1/2}\Var_{\vrho}(\vX)
    \qquad(\vX\in\Pn),
\end{equation}
where $c_P>0$ is independent of $n$.

The uniform Davies--Metropolis lower bound on $\Pn^\perp$ from
\cite[Theorem~3.5]{basso2026SpecGapDaviesMeanFieldHeisenberg},
together with \eqref{eq:form-orthogonal-split} and
\eqref{eq:variance-orthogonal-split}, extends this estimate to all
observables. The variational characterization of the gap gives
the gap bound in Proposition~\ref{prop:critical-bound} for the
Davies--Metropolis generator.
\end{proof}

Finally, \eqref{eq:davies-comparison} transfers
\eqref{eq:critical-C-target} to the coherent form on $\Pn$.
Combining it with Proposition~\ref{prop:P-complement} on $\Pn^\perp$
and the same orthogonal splittings gives
\begin{equation}\label{eq:critical-coherent-lower}
    \EC(\vX)\ge cn^{-1/2}\Var_{\vrho}(\vX)
    \qquad(\vX\in\CB(\Hn)).
\end{equation}
Consequently, $\gap(-\LC)\ge cn^{-1/2}$.

\begin{rmk}[General Davies weights]\label{rem:critical-general-f}
For every fixed continuous $f\colon\BR\to\BC$ satisfying
\eqref{eq:f-KMS}, $|f|\le1$, and
$\inf_{\nu\le0}|f(\nu)|^2>0$, the corresponding Davies generator satisfies
\[
    \gamma(\LDM^{(f)})=\Theta(n^{-1/2})
    \qquad(\beta=2).
\]
\end{rmk}

\begin{proof}
Write $\EDM^{(f)}$ for the KMS Dirichlet form of $\LDM^{(f)}$.
At $\beta=2$, set $q_f(\omega):=|f(2\omega)|^2$ and
$q_{\mathrm M}(\omega):=\min\{1,e^{-2\omega}\}$.
By \eqref{eq:f-KMS}, $|f|\le1$, and
$c_\star:=\inf_{\nu\le0}|f(\nu)|^2>0$,
\[
    q_f(\omega)=q_{\mathrm M}(\omega)|f(-2|\omega|)|^2,
    \qquad
    c_\star q_{\mathrm M}(\omega)
    \le q_f(\omega)\le q_{\mathrm M}(\omega).
\]
The generators $\LDM^{(f)}-c_\star\LDM$ and
$\LDM-\LDM^{(f)}$ have, respectively, rates
$q_f-c_\star q_{\mathrm M}$ and $q_{\mathrm M}-q_f$.
Both rate functions are nonnegative and satisfy
$q(\omega)=e^{-2\omega}q(-\omega)$.
Thus both differences are KMS-detailed-balanced Lindblad generators
with respect to the same Gibbs state $\vrho$, and their Dirichlet
forms are nonnegative. Consequently,
\[
    c_\star\EDM(\vX)\le\EDM^{(f)}(\vX)\le\EDM(\vX)
    \qquad(\vX\in\CB(\Hn)).
\]
The lower comparison and the Davies--Metropolis bound proved above give
\[
    \EDM^{(f)}(\vX)
    \ge c_\star c\,n^{-1/2}\Var_{\vrho}(\vX)
    \qquad(\vX\in\CB(\Hn)),
\]
so $\gamma(\LDM^{(f)})=\Omega(n^{-1/2})$ by the variational
characterization of the gap.
For the upper bound, Lemma~\ref{lem:critical-moment} gives
$\Var_{\vrho}(\Stot^z)=\Theta(n^{3/2})$, while
$\EDM(\Stot^z)=O(n)$ by
\cite[Lemma~3.22]{basso2026SpecGapDaviesMeanFieldHeisenberg},
with the constant adjusted for our Pauli normalization. Hence
\[
    \gamma(\LDM^{(f)})
    \le\frac{\EDM^{(f)}(\Stot^z)}{\Var_{\vrho}(\Stot^z)}
    \le\frac{\EDM(\Stot^z)}{\Var_{\vrho}(\Stot^z)}
    =O(n^{-1/2}).
\]
All constants are independent of $n$, with the lower-bound constant
allowed to depend on $f$ through $c_\star$.
\end{proof}

\section{The zero-temperature case}\label{sec:zero-temperature}

Let $f\colon\BR\to\BC$ be continuous and satisfy
\begin{equation}\label{eq:zero-temperature-filter}
    f(u)=e^{-u/2}f^*(-u),\qquad |f(u)|\le1,\qquad
    \inf_{u\le0}|f(u)|^2>0.
\end{equation}
Assume that $f(0)=c_0$ and $\lim_{u\to-\infty}f(u)=c_-$, where
$0<c_0\le c_-\le1$ are fixed independently of $n$. Put $\kappa=4c_0^2$.
For $f_\beta(\nu)=f(\beta\nu)$, the KMS condition gives
\[
    f_\infty(\nu):=\lim_{\beta\to\infty}f_\beta(\nu)
       =\begin{cases}
          c_-,&\nu<0,\\
          c_0,&\nu=0,\\
          0,&\nu>0.
        \end{cases}
\]
All limits are taken at fixed $n$, and all generators use the summed
single-site Pauli family $\{\vsigma^{(i,\alpha)}\}_{i,\alpha}$.
By \eqref{eq:Heisenberg-spectrum} and \eqref{eq:multiplicity}, the ground
space is the spin-$J$ representation with multiplicity one, where $J=n/2$.
Thus $\vrho_\infty:=\lim_{\beta\to\infty}\vrho_{n,\beta}=\vPi_J/(n+1)$.

For a jump operator $\vA$, write
\[
    \CL_{\vA}^\dagger[\vX]
       :=\vA\vX\vA^\dagger-\frac12\{\vA^\dagger\vA,\vX\}.
\]
Since increasing $s$ lowers the energy, \Cref{fact:pauli-selection-rule}
gives
\begin{align}
    \vL_\infty^{(i,\alpha)}
       &=c_-\sum_{s<J}\vPi_{s+1}\vsigma^{(i,\alpha)}\vPi_s
          +c_0\sum_s\vPi_s\vsigma^{(i,\alpha)}\vPi_s,
       \label{eq:zero-coherent-jumps}\\
    \CL_{\mathrm C,\infty}^\dagger
       &=\sum_{i,\alpha}\CL_{\vL_\infty^{(i,\alpha)}}^\dagger,\qquad
    \CL_{\mathrm D,\infty}^\dagger
       =\sum_{i,\alpha}\left(
          c_-^2\sum_{\nu<0}\CL_{\vsigma_\nu^{(i,\alpha)}}^\dagger
          +c_0^2\CL_{\vsigma_0^{(i,\alpha)}}^\dagger\right).
       \label{eq:zero-davies-generator}
\end{align}
Spin sums run through $\CS_n$, with inadmissible projections set to zero.
These are norm limits of the finite-temperature generators: the Bohr
spectrum is finite, and the coherent correction vanishes by
\Cref{lem:zero-coherent-term}.
The choices $f(u)=(1+e^{u/2})^{-1}$, $(1+e^u)^{-1/2}$, and
$\min\{1,e^{-u/2}\}$ give $(c_-,c_0)=(1,1/2)$,
$(1,1/\sqrt2)$, and $(1,1)$, respectively; these include coherent
Glauber, Davies--Glauber, and Metropolis sampling.

Fix $\sigma>0$ independently of $n$ and $\beta$, and denote by
$\CL_{\mathrm F,\sigma,\infty}$ the norm limit of
$\CL_{\mathrm F,\sigma,\beta}$ as $\beta\to\infty$.
Its existence is established in \Cref{sec:zero-oft}.

For any of these generators, let
\begin{align*}
    \CT_t&=e^{t\CL^\dagger},\qquad
    d_{\CL}(t):=\sup_{\bm{\eta}\text{ a state}}
       \frac12\|\CT_t\bm{\eta}-\vrho_\infty\|_1,\\
    t_{\mathrm{mix},\CL}(\varepsilon)
       &:=\inf\{t\ge0:d_{\CL}(t)\le\varepsilon\}.
\end{align*}

\begin{theorem}[Zero-temperature gap and mixing]\label{thm:zero-gap}
For $n\ge2$ and
$\CL\in\{\CL_{\mathrm C,\infty},\CL_{\mathrm D,\infty},
         \CL_{\mathrm F,\sigma,\infty}\}$,
the unique stationary state is $\vrho_\infty$, and
\[
    \gap(-\CL)=\Theta(n^{-1}),\qquad
    t_{\mathrm{mix},\CL}(\varepsilon)
       =\Theta\!\left(n\log\frac1\varepsilon\right)
       \quad(0<\varepsilon\le1/4).
\]
For the coherent and Davies generators, the implicit constants depend
only on $c_0$, uniformly over $c_-\in[c_0,1]$.
For the OFT generator, they depend only on $f$ and $\sigma$.
\end{theorem}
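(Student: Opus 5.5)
The plan is to use the block structure forced by $\beta=\infty$: the filter $f_\infty$ kills every jump that lowers $s$. \Cref{eq:zero-coherent-jumps} then shows that $\CL^\dagger$ maps operators supported on $\bigoplus_{s\ge s_0}\vPi_s$ to operators supported on the same blocks. With respect to the filtration by the minimal spin label, $\CL^\dagger$ is therefore block upper triangular. Its spectrum is the union of two pieces:
\begin{itemize}
\item the spectrum of the \emph{ground-block} generator acting on $\vPi_J\CB(\Hn)\vPi_J$;
\item the \emph{transient} spectrum, which governs the decay of the weight on $\vPi_s$ for $s<J$, together with the coherences between different blocks.
\end{itemize}
Uniqueness of $\vrho_\infty$ follows in the same way. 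Every block $s<J$ leaks into $s+1$ at a positive rate, so all population is eventually absorbed in the $s=J$ block. Inside that block, the kernel is $\Rn$ restricted to $\vPi_J$, which is spanned by $\vPi_J$.

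\textbf{Ground block.} I will compute the ground-block generator exactly. On the symmetric subspace, permutation invariance and $\sum_i\vsigma^{(i,\alpha)}=2\Stot^\alpha$ give
\[
\vPi_J\vsigma^{(i,\alpha)}\vPi_J=\tfrac2n\Stot^\alpha\vPi_J .
\]
Because $f_\infty(\nu)=0$ for $\nu>0$, the anticommutator term restricted to $\vPi_J$ only sees $c_0^2\,\vPi_J\vsigma\vPi_J\vsigma\vPi_J$. Summing over $i$, the restricted generator becomes
\[
-\frac{2c_0^2}{n}\,\Cad=-\frac{\kappa}{2n}\,\Cad ,
\]
acting on $\CB(\vPi_J\Hn)$. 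By \Cref{lem:collective-casimir} its eigenvalues are $\frac{\kappa}{2n}\ell(\ell+1)$, which yields a gap of exactly $\kappa/n$ on this block.

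\textbf{Transient part.} I will show that every transient eigenvalue has real part at least $c$ (in fact $\Omega(1)$) for the coherent and Davies generators. The diagonal entries of the triangular form are the escape rates
\[
\sum_{i,\alpha}c_-^2\,\Tr\bigl(\vPi_s\vsigma^{(i,\alpha)}\vPi_{s+1}\vsigma^{(i,\alpha)}\vPi_s\bigr)/\dim,
\]
combined with the within-block Casimir terms. I plan to bound the upward transition rate out of spin $s<J$ by $\Omega(1)$ uniformly, using the Clebsch--Gordan coefficients for adding one spin-$\tfrac12$, namely $\Pr(s\to s+1)=\frac{s+1}{2s+1}\cdot\frac{n/2-s}{\,\cdot\,}$-type factors multiplied by $n$. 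Coherences $\vPi_s\vX\vPi_{s'}$ with $s\neq s'$ decay at the average of the two rates. Combining the two pieces gives $\gap(-\CL)=\kappa/n=\Theta(n^{-1})$. The matching upper bound is witnessed directly by $\Stot^z\vPi_J$, which has $\ell=1$.

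\textbf{Mixing time.} For the lower bound I take $\bm\eta$ to be the coherent state $\ket{J,J}$. By the Casimir computation, $\Tr(\CT_t\bm\eta\,\Stot^z)=J e^{-\kappa t/n}$, and since $\nrm{\Stot^z}_\infty=J$ this gives $t_{\mathrm{mix}}\ge c\,n\log(1/\varepsilon)$. For the upper bound, a naive $\chi^2$ argument loses a factor $\log(n+1)$; I will avoid it as follows. The ground-block semigroup $e^{-t\kappa\Cad/(2n)}$ equals the average $\int\Delta_{\vg}\,\rmd\mu_{t\kappa/n}(\vg)$ over the Brownian heat kernel on $\SU$. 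Trace distance to $\vrho_\infty$ is then bounded by $\mathrm{TV}(\mu_{t\kappa/n},\mathrm{Haar})\le Ce^{-2\kappa t/n}$, independently of $J$. For the transient part, I will show that the absorption time of the birth chain $s\mapsto s+1$ has exponential tails on scale $O(n)$; a drift or Lyapunov bound on $J-s$ should suffice. The OFT case $\CL_{\mathrm F,\sigma,\infty}$ then follows from the gap-transfer corollary \Cref{cor:gt-sectors}, together with the norm-limit construction.

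\textbf{Main obstacle.} I expect the transient analysis to be the main difficulty. The escape-rate bound must hold uniformly in $s$ and must also handle the cross-block coherence terms of the coherent generator, so that the absorption phase does not contribute an extra $\log n$ factor to the mixing time. My first attempt will be to dominate it by the classical Davies rate chain on $s$ via \Cref{prop:commutant-comparison}.
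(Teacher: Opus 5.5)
\textbf{The gap: the OFT case.} Your treatment of the OFT generator does not work. \Cref{cor:gt-sectors}, like \Cref{thm:gapTransfer}, is a finite-$\beta$ statement about KMS Dirichlet forms relative to a full-rank $\vrho$. Each of its bounds vanishes as $\beta\to\infty$ at fixed $\sigma$: the factor $1-\Phi(\max\{\sigma/2,\beta/(2\sigma)\})$ in \eqref{eq:gt-Rperp} and \eqref{eq:gt-sym}, and the factor $\beta e^{-\beta-2\beta^2/\sigma^2}$ in \eqref{eq:gt-Pperp}, all tend to $0$. Passing to the norm limit therefore gives only $\gap(-\CL_{\mathrm F,\sigma,\infty})\ge0$. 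Even a non-degenerate gap lower bound would not prove the theorem for OFT, which also asserts the mixing time and the matching gap upper bound. At $\beta=\infty$ the state $\vrho_\infty$ is rank deficient, so neither the KMS variational principle nor a gap-to-mixing conversion is available.

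The missing step is to identify the limiting generator itself. Expand the OFT transition map in Bohr frequencies. The pair $(\nu,\mu)$ then carries the weight $e^{-\beta^2(\nu-\mu)^2/(8\sigma^2)}\,\mathbb{E}\,|f|^2\bigl(\beta(\nu+\mu)/2+\sigma Z+\sigma^2/2\bigr)$ with $Z\sim N(0,1)$. In the limit the off-diagonal pairs vanish, and the diagonal ones tend to $c_-^2$, $b_\sigma:=\mathbb{E}\,|f|^2(\sigma Z+\sigma^2/2)$ and $0$ for $\nu<0$, $\nu=0$, $\nu>0$ respectively. So the limit is the zero-temperature Davies generator with $c_0^2$ replaced by $b_\sigma$. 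The Gaussian smoothing at zero frequency survives because $\beta\omega$ is the natural variable. Your ground-block, triangular and cooling analysis then applies verbatim with $\kappa=4b_\sigma$.

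\textbf{Coherent and Davies generators.} Here your route is sound but differs from the paper's, which avoids the triangular structure altogether. The paper splits $\CL^\dagger=\kappa\CG+\CL_{\mathrm R}^\dagger$, with $\CL_{\mathrm R}^\dagger$ commuting with $\CG$ and $\CP_R$, so that $\CT_t=e^{t\CL_{\mathrm R}^\dagger}e^{\kappa t\CG}$ on all of $\CB(\Hn)$. It then combines three ingredients: the character bound $\|\CP_\ell\|_{1\to1}\le2\ell+1$, cooling of $\vK=\sum_s(J-s)\vPi_s$, and the identity $\frac12\|\CP_R[\bm\tau]-\vrho_\infty\|_1=1-\Tr(\vPi_J\bm\tau)$. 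The gap is read off from the resulting exponential decay. Your spectral argument reaches the gap more directly: the diagonal block on $\vPi_r\CB(\Hn)\vPi_c$ is a dissipative map minus $\frac{c_-^2}{2}(a_r+a_c)\CI$, with $a_s\ge2(J-s)$ by \eqref{eq:zero-raising}.

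Two details need fixing.
\begin{itemize}
\item The heat-kernel rate is $e^{-\kappa t/n}$, coming from the $\ell=1$ mode; since only integer $\ell$ occur, you should work on $\mathrm{SO}(3)$. Your $e^{-2\kappa t/n}$ would contradict your own lower bound $\frac12e^{-\kappa t/n}$.
\item When you glue the absorption phase to the ground-block phase for a general initial state, you must bound the coherences between $\vPi_J$ and its complement at the intermediate time. Gentle measurement gives $O(\sqrt{1-p})$; the commuting split makes this step unnecessary.
\end{itemize}
Finally, \Cref{prop:commutant-comparison} is not needed. The spin populations already form an exact pure-birth chain with rates $c_-^2a_s$ for both generators.
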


We first consider $\CL\in\{\CL_{\mathrm C,\infty},\CL_{\mathrm D,\infty}\}$.
The OFT case is treated in \Cref{sec:zero-oft}.

\subsection{Collective rotations}

Let $\CP_R$ be the Hilbert--Schmidt orthogonal projection onto $\Rn$,
given by Haar averaging:
\begin{equation}\label{eq:zero-twirl}
    \CP_R[\vX]:=\int_{\SU}\Delta_{\vg,\mathrm{id}}[\vX]\,\rmd\vg,
    \qquad \int_{\SU}\rmd\vg=1.
\end{equation}
The collective part of the generator can be extracted from the jumps.
Since $\sum_i\vsigma^{(i,\alpha)}=2\Stot^\alpha$ has only zero Bohr
frequency, $\sum_i\vL_\infty^{(i,\alpha)}=2c_0\Stot^\alpha$ and
$\sum_i\vsigma_0^{(i,\alpha)}=2\Stot^\alpha$. Hence the centered jumps
\[
    \widetilde{\vL}^{(i,\alpha)}
       :=\vL_\infty^{(i,\alpha)}-\frac{2c_0}{n}\Stot^\alpha,\qquad
    \vZ^{(i,\alpha)}:=\vsigma_0^{(i,\alpha)}-\frac2n\Stot^\alpha
\]
sum to zero over $i$. Expanding the dissipators cancels the cross terms
and gives
\begin{equation}\label{eq:zero-collective-split}
    \CL^\dagger=\kappa\CG+\CL_{\mathrm R}^\dagger,\qquad
    \CG[\vX]:=-\frac1{2n}\sum_\alpha
       [\Stot^\alpha,[\Stot^\alpha,\vX]],
\end{equation}
where $\CL_{\mathrm R}^\dagger$ is the Lindblad generator with jumps
$\widetilde{\vL}^{(i,\alpha)}$ in the coherent case, and
$c_-\vsigma_\nu^{(i,\alpha)}$ ($\nu<0$), $c_0\vZ^{(i,\alpha)}$ in the
Davies case.
Each triple transforms under rotations by the same orthogonal matrix
as the Pauli triple. Thus $\CL_{\mathrm R}^\dagger$ commutes with
collective conjugation, with $\CP_R$, and, by differentiation, with
$\CG$. Since $\CG\CP_R=0$, trace-norm contraction yields
\begin{equation}\label{eq:zero-angular-comparison}
    \CT_t=e^{t\CL_{\mathrm R}^\dagger}e^{\kappa t\CG},\qquad
    \frac12\|\CT_t\bm{\eta}-\CP_R[\CT_t\bm{\eta}]\|_1
       \le\frac12\|(e^{\kappa t\CG}-\CP_R)[\bm{\eta}]\|_1.
\end{equation}

On the symmetric ground space all compressed single-site Paulis agree,
so
\begin{equation}\label{eq:zero-ground-jumps}
    \vPi_J\vsigma^{(i,\alpha)}\vPi_J=\frac2n\Stot^\alpha\vPi_J,
    \qquad
    \widetilde{\vL}^{(i,\alpha)}\vPi_J
       =\vZ^{(i,\alpha)}\vPi_J=0,\qquad
    \vsigma_\nu^{(i,\alpha)}\vPi_J=0\quad(\nu<0).
\end{equation}
Every remainder jump annihilates the ground space. The restriction of
$\CL^\dagger$ to ground-supported operators is therefore $\kappa\CG$,
and $\vrho_\infty$ is stationary.

\subsection{Cooling through the spin sectors}

Set $\vK:=\sum_s(J-s)\vPi_s$ and $M=\lfloor n/2\rfloor$, so that
$\vI-\vPi_J\le\vK\le M\vI$.
The transition coefficient computed in
\cite[Eqs.~(C.12), (C.27)]{basso2026SpecGapDaviesMeanFieldHeisenberg}
gives, for $s<J$,
\begin{equation}\label{eq:zero-raising}
    \sum_{i,\alpha}\vPi_s\vsigma^{(i,\alpha)}
       \vPi_{s+1}\vsigma^{(i,\alpha)}\vPi_s
       =a_s\vPi_s,\qquad
    a_s:=\frac{2(2s+3)(J-s)}{2s+1}\ge2(J-s).
\end{equation}
Here the coefficient is four times that in the reference, whose
couplings are $\vsigma^{(i,\alpha)}/2$.
By rotation covariance, $\CL[\vK]\in\Rn$ and is therefore block
diagonal in $s$. For either generator, compressing the observable
dissipators to the $s$-block gives
\[
    \vPi_s\CL[\vK]\vPi_s
       =-c_-^2\sum_{i,\alpha}
          \vPi_s\vsigma^{(i,\alpha)}
          \vPi_{s+1}\vsigma^{(i,\alpha)}\vPi_s.
\]
Thus \eqref{eq:zero-raising}, with $a_J=0$, yields
\begin{equation}\label{eq:zero-drift}
    \CL[\vK]=-c_-^2\sum_s a_s\vPi_s\le-2c_-^2\vK.
\end{equation}
Consequently, for every state $\bm{\eta}$,
\begin{equation}\label{eq:zero-cooling}
    \Tr(\vK\CT_t\bm{\eta})
       \le e^{-2c_-^2t}\Tr(\vK\bm{\eta})
       \le Me^{-2c_-^2t}.
\end{equation}

For a state $\bm{\tau}$, the ground block of $\CP_R[\bm{\tau}]$
is $p\vrho_\infty$, where $p=\Tr(\vPi_J\bm{\tau})$; its remaining mass
$1-p$ has orthogonal support. Hence
\[
    \frac12\|\CP_R[\bm{\tau}]-\vrho_\infty\|_1
       =1-p\le\Tr(\vK\bm{\tau}),
\]
and the triangle inequality gives
\begin{equation}\label{eq:zero-distance-split}
    \frac12\|\bm{\tau}-\vrho_\infty\|_1
       \le\frac12\|\bm{\tau}-\CP_R[\bm{\tau}]\|_1
          +\Tr(\vK\bm{\tau}).
\end{equation}

\subsection{Mixing and the spectral gap}

\begin{proof}[Proof of \Cref{thm:zero-gap} for the coherent and Davies generators]
By \Cref{lem:collective-casimir}, $\CG$ has eigenvalue
$-\ell(\ell+1)/(2n)$ on each spin-$\ell$ component of the operator
space, where $\ell\in\{0,\ldots,n\}$.
Let $\CP_\ell$ project onto all copies of spin $\ell$, and let
$\chi_\ell$ be the character of $V_\ell$. Schur orthogonality gives
\[
    \CP_\ell[\vX]=(2\ell+1)\int_{\SU}
       \overline{\chi_\ell(\vg)}\Delta_{\vg,\mathrm{id}}[\vX]\,\rmd\vg.
\]
Moreover, $\int_{\SU}|\chi_\ell(\vg)|^2\,\rmd\vg=1$ and $\CP_0=\CP_R$.
Thus $\|\CP_\ell[\vX]\|_1\le(2\ell+1)\|\vX\|_1$ by unitary
invariance and Cauchy--Schwarz. For $u\ge n$ and $x=e^{-u/n}$,
\begin{align}
    \frac12\|(e^{u\CG}-\CP_R)[\bm{\eta}]\|_1
       &\le\frac12\sum_{\ell=1}^n(2\ell+1)
          e^{-\ell(\ell+1)u/(2n)}\nonumber\\
       &\le\frac12\sum_{\ell\ge1}(2\ell+1)x^\ell
        =\frac{x(3-x)}{2(1-x)^2}\le4x.
        \label{eq:zero-angular-decay}
\end{align}
The last inequality uses $x\le e^{-1}$.
Combining \eqref{eq:zero-angular-comparison},
\eqref{eq:zero-cooling}, and \eqref{eq:zero-distance-split} yields
\begin{equation}\label{eq:zero-mixing-upper}
    d_{\CL}(t)\le4e^{-\kappa t/n}+Me^{-2c_-^2t}
       \le5e^{-\kappa t/n}\qquad(t\ge n/\kappa),
\end{equation}
where the second inequality follows from
\[
    Me^{-(2c_-^2-\kappa/n)t}
       \le\frac n2 e^{1-n/2}\le1
       \qquad(t\ge n/\kappa),
\]
using $c_-\ge c_0$ and $n\ge2$. Convergence to $\vrho_\infty$
from every state proves uniqueness.

For the lower bound take
$\bm{\eta}_0=|\uparrow\rangle\langle\uparrow|^{\otimes n}$ and
$\vA=\vPi_J\Stot^\rmz\vPi_J/J$. The ground-space restriction is
$\kappa\CG$, which is Hilbert--Schmidt self-adjoint, and the spin
commutation relations give
\[
    \CG[\vA]=-\vA/n,\qquad
    \Tr(\vA\bm{\eta}_0)=1,\qquad
    \Tr(\vA\vrho_\infty)=0,\qquad \op{\vA}=1.
\]
Trace-norm duality therefore gives
\begin{equation}\label{eq:zero-mixing-lower}
    d_{\CL}(t)\ge\frac12
       |\Tr(\vA(\CT_t\bm{\eta}_0-\vrho_\infty))|
       =\frac12e^{-\kappa t/n}.
\end{equation}
Solving \eqref{eq:zero-mixing-upper} and \eqref{eq:zero-mixing-lower}
gives the claimed mixing time for $0<\varepsilon\le1/4$.

Finally, decomposing traceless operators into Hermitian parts and then
positive and negative parts extends \eqref{eq:zero-mixing-upper} to
an exponential norm bound on the traceless subspace. Every nonzero
eigenvalue of $\CL^\dagger$ has real part at most $-\kappa/n$, and
$\CL^\dagger[\vA]=-(\kappa/n)\vA$ attains this bound.
The finite-temperature generators have real spectra by KMS detailed
balance; norm convergence and continuity of the roots of their
characteristic polynomials preserve this property in the limit.
Thus $\gap(-\CL)=\kappa/n$.
\end{proof}

\subsection{Reduction of the OFT generator}
\label{sec:zero-oft}

\begin{proof}[Completion of the proof of \Cref{thm:zero-gap}]
At zero Bohr frequency, the rescaling $u=\beta\omega$ makes the
OFT rate independent of $\beta$. Its Gaussian averaging therefore
survives the zero-temperature limit, despite the vanishing
frequency width $\sigma/\beta$.

Put $h=|f|^2$ and
\[
    b_\sigma:=\mathbb E\,h\!\left(\sigma Z+\frac{\sigma^2}{2}\right),
    \qquad Z\sim N(0,1).
\]
Continuity and $f(0)>0$ imply $b_\sigma>0$.
Expanding \eqref{eq:oft-S-transform} in Bohr frequencies and completing
the square gives the coefficient
\[
    K_{\beta,\sigma}(\nu,\mu)
      =e^{-\beta^2(\nu-\mu)^2/(8\sigma^2)}
       \mathbb E\,h\!\left(
          \frac{\beta(\nu+\mu)}2+\sigma Z+\frac{\sigma^2}{2}
       \right).
\]
By dominated convergence, its off-diagonal entries tend to zero,
while the diagonal entries tend to $c_-^2$, $b_\sigma$, and $0$
for $\nu<0$, $\nu=0$, and $\nu>0$, respectively.
The Bohr spectrum is finite, so convergence holds in operator norm:
\begin{equation}\label{eq:zero-oft-limit}
    \CL_{\mathrm F,\sigma,\beta}^\dagger
    \longrightarrow
    \CL_{\mathrm F,\sigma,\infty}^\dagger
    =
    \sum_{i,\alpha}\left(
       c_-^2\sum_{\nu<0}\CL_{\vsigma_\nu^{(i,\alpha)}}^\dagger
       +b_\sigma\CL_{\vsigma_0^{(i,\alpha)}}^\dagger
    \right).
\end{equation}
This is \eqref{eq:zero-davies-generator} with $c_0^2$ replaced by
$b_\sigma$.

Write $\CL=\CL_{\mathrm F,\sigma,\infty}$.
The collective decomposition \eqref{eq:zero-collective-split}
holds with $\kappa=4b_\sigma$, and \eqref{eq:zero-cooling} is unchanged.
The first inequality in \eqref{eq:zero-mixing-upper} and the argument
for \eqref{eq:zero-mixing-lower} therefore give
\begin{align*}
    d_{\CL}(t)
       &\le4e^{-4b_\sigma t/n}+Me^{-2c_-^2t}
       &&\left(t\ge\frac{n}{4b_\sigma}\right),\\
    d_{\CL}(t)
       &\ge\frac12e^{-4b_\sigma t/n}
       &&(t\ge0).
\end{align*}
Set $m_\sigma:=\min\{c_-^2,b_\sigma\}$.
The upper bound is at most $5e^{-4m_\sigma t/n}$ for
$t\ge n/(4m_\sigma)$, since
\[
    Me^{-(2c_-^2-4m_\sigma/n)t}
       \le\frac n2e^{1-n/2}\le1.
\]
These bounds prove uniqueness and the claimed mixing time.
As in the preceding spectral argument, they also give
\[
    \frac{4m_\sigma}{n}
       \le\gap(-\CL)
       \le\frac{4b_\sigma}{n}.
\]
\end{proof}

Dividing the summed generators by $n$ divides their gaps by $n$
and multiplies their mixing times by $n$.

\subsection*{Acknowledgements and AI disclosure}

Research was carried out in part during a REU (research for undergraduates) project at the Rényi Institute; the study group was supervised by the last three authors, where this problem was stated as a potential research question to attack. LR kept working on this problem after the end of the REU program and developed the key ideas of the proof for the case of the coherent Lindblad generator; we are grateful for insightful discussions with the other members of the study group during the REU program: Anita Takács, Botond Sipos, Eszter Szabó and Martin Merencsics.
LLMs were used to help transfer the results from the coherent Lindbladians to the Operator Fourier Transform based ones, to develop the proof sketch in the zero temperature case, to find the elegant proof of \Cref{lem:spec-laplacian}, to find relevant references, to draft smalls sections of the text and to provide feedback on the writing.
LR and CCs are supported by the European Research Council (ERC) via the Starting grant QuanThermal (101222179). AG is supported by the Lendület ``Momentum'' program of the Hungarian Academy of Sciences under grant agreement no.~LP2025-8/2025. 

\section*{Nomenclature}\label{sec:recap_notation}
\addcontentsline{toc}{section}{Nomenclature}

This section recapitulates notation; the reader may skim through it and return as needed.
We write scalars, functions, and vectors in normal font, operators (matrices) in bold font~$\vO$, superoperators in calligraphic font~$\CT$, and super-superoperators in blackboard bold~$\BS$.
Calligraphic letters are also used for the Hilbert space $\CH_n$, the operator space $\CB(\CH_n)$, and Dirichlet forms $\CE$.
The imaginary unit $\ri$ and spatial index $\rmx,\rmy,\rmz$ are typeset in Roman font.
All Lindbladians are written in the Heisenberg picture; their Hilbert--Schmidt adjoints are the Schrödinger-picture generators.
Asymptotic notation $\bigO{\cdot},\bigOm{\cdot},\bigTh{\cdot}$ refers to $n\to\infty$ with all other parameters ($\beta$, $f$, $\sigma$) fixed.

\begingroup
\allowdisplaybreaks

\newcommand{\nomdescr}[1]{\parbox[t]{0.8\textwidth}{\raggedright #1}}

\paragraph{Abbreviations.}
\begin{alignat*}{2}
	\text{KMS}& &\qquad& \nomdescr{Kubo--Martin--Schwinger; the KMS inner product $\ip{\cdot}{\cdot}$ and KMS detailed balance (self-adjointness with respect to it)}\\
	\text{GNS}& && \nomdescr{Gelfand--Naimark--Segal; GNS detailed balance (self-adjointness with respect to $\Tr(\vrho\vX^\dagger\vY)$), satisfied by the Davies generator}\\
	\text{HS}& && \nomdescr{Hilbert--Schmidt (inner product, adjoint, orthogonal complement)}\\
	\text{OFT}& && \nomdescr{operator Fourier transform, \eqref{eq:OFT}}\\
	\text{irrep}& && \nomdescr{irreducible representation}\\
	\mathrm{D},\ \mathrm{DM}& && \nomdescr{subscripts for Davies and Davies--Metropolis, e.g., $\LD$, $\LDM$}\\
	\mathrm{C}& && \nomdescr{subscript for the ``coherent'' construction, e.g., $\LC$, $\EC$}\\
	\mathrm{F},\sigma& && \nomdescr{subscript for the OFT construction with width $\sigma$, e.g., $\LFs$}\\
	\mathrm{IP}& && \nomdescr{subscript for the interchange process, e.g., $\CL_{\mathrm{IP}}$ in \autoref{sec:high-temp}}
\end{alignat*}

\renewcommand{\nomdescr}[1]{\parbox[t]{0.525\textwidth}{\raggedright #1}}

\paragraph{Model and Gibbs state.}
\begin{alignat*}{2}
	n&\ge 2,\quad [n]:=\{1,\ldots,n\} &\qquad& \nomdescr{number of qubits (spins) and the set of sites}\\
	\CH_n&:=(\BC^2)^{\otimes n},\quad \CB(\CH_n) && \nomdescr{$n$-qubit Hilbert space and the operators acting on it}\\
	\vsigma^{(i,\alpha)}& && \nomdescr{Pauli-$\alpha$ operator on qubit $i\in[n]$, $\alpha\in\{\rmx,\rmy,\rmz\}$}\\
	\Stot^\alpha&:=\frac12\sum_{i=1}^n\vsigma^{(i,\alpha)},\quad \Stot^2:=\sum_{\alpha}(\Stot^\alpha)^2 && \nomdescr{collective spin operators and the quadratic Casimir}\\
	\vH_n&:=-\frac1n\sum_{\alpha\in\{\rmx,\rmy,\rmz\}}(\Stot^\alpha)^2 && \nomdescr{mean-field Heisenberg Hamiltonian, \Cref{eq:mean-field-heisenberg-model}}\\
	\beta&\in[0,\infty] && \nomdescr{inverse temperature
    }\\
	\vrho&=\vrho_{n,\beta}:=\frac{e^{-\beta\vH_n}}{Z_{n,\beta}},\quad Z_{n,\beta}:=\Tr(e^{-\beta\vH_n}) && \nomdescr{Gibbs state and partition function, \Cref{eq:Gibbs-state}}\\
	\vM_z&:=\Stot^z && \nomdescr{total magnetization, test observable for gap upper bounds}
\end{alignat*}

\renewcommand{\nomdescr}[1]{\parbox[t]{0.75\textwidth}{\raggedright #1}}

\paragraph{Spin sectors.}
\begin{alignat*}{2}
	s&\in\CS_n && \nomdescr{total-spin quantum numbers, $\CS_n=\{0,1,\ldots,\frac n2\}$ or $\{\frac12,\frac32,\ldots,\frac n2\}$ for even or odd $n$ (not to be confused with $S_n$), \Cref{eq:schur-weyl}}\\
	\CH_n&\cong\bigoplus_{s\in\CS_n}V_s\otimes W_s &\qquad& \nomdescr{Schur--Weyl decomposition; $V_s$ is the spin-$s$ irrep of $\SU$, $W_s$ an irrep of $S_n$}\\
	m_{n,s}&:=\dim W_s && \nomdescr{multiplicity of spin $s$, \Cref{eq:multiplicity}}\\
	\vPi_s& && \nomdescr{projection onto the spin-$s$ block $V_s\otimes W_s$}\\
	E_s&:=-\frac{s(s+1)}{n} && \nomdescr{eigenvalue of $\vH_n$ on the spin-$s$ block, \Cref{eq:Heisenberg-spectrum}}\\
    \varrho(s)&:=\Tr(\vrho\vPi_s) && \nomdescr{distribution of the total spin $s$ under the Gibbs state}
\end{alignat*}

\renewcommand{\nomdescr}[1]{\parbox[t]{0.6\textwidth}{\raggedright #1}}

\paragraph{Bohr decomposition and operator Fourier transform.}
\begin{alignat*}{2}
	\vH_n&=\sum_{E\in\mathrm{spec}(\vH_n)}E\,\vP_E &\qquad& \nomdescr{spectral decomposition; $\vP_E$ is the projection onto the (highly degenerate) eigenspace of $E$}\\
	\nu,\omega\in B(\vH_n)&:=\mathrm{spec}(\vH_n)-\mathrm{spec}(\vH_n) && \nomdescr{Bohr frequencies; $B(\vH)$ for a general Hamiltonian $\vH$}\\
	\vA_\nu&:=\sum_{E'-E=\nu}\vP_{E'}\vA\vP_E && \nomdescr{component of $\vA$ at Bohr frequency $\nu$, \Cref{eq:Bohr-decomposition}}\\
	\vA(t)&:=e^{\ri t\vH_n}\vA e^{-\ri t\vH_n} && \nomdescr{Heisenberg time evolution of $\vA$}\\
	\ad_{\vX}[\vY]&:=[\vX,\vY] && \nomdescr{commutator superoperator}\\
	h(\beta\,\ad_{\vH_n})[\vA]&:=\sum_{\nu\in B(\vH_n)}h(\beta\nu)\vA_\nu && \nomdescr{reweighting of the Bohr components by a function $h$}\\
	\widehat{\vA}(\omega)&:=\sum_{\nu\in B(\vH_n)}\hat g_\sigma(\omega-\nu)\vA_\nu && \nomdescr{operator Fourier transform (OFT) of $\vA$ at label $\omega\in\BR$, \Cref{eq:OFT}}\\
	g_\sigma,\ \hat g_\sigma& && \nomdescr{Gaussian time window and its Fourier transform, \Cref{eq:OFT}}\\
	\sigma&>0 && \nomdescr{OFT width; $\sigma/\beta$ is the energy resolution (not to be confused with the Pauli operators $\vsigma$)}
\end{alignat*}

\renewcommand{\nomdescr}[1]{\parbox[t]{0.48\textwidth}{\raggedright #1}}

\paragraph{Inner products, norms, and adjoints.}
\begin{alignat*}{2}
	\ipc{\vX}{\vY}&:=\Tr(\vX^\dagger\vY),\quad \|\vX\|_2^2:=\ipc{\vX}{\vX} &\qquad& \nomdescr{Hilbert--Schmidt inner product and norm}\\
	\ip{\vX}{\vY}&:=\Tr(\vrho^{1/2}\vX^\dagger\vrho^{1/2}\vY),\quad \nr{\vX}^2:=\ip{\vX}{\vX} && \nomdescr{KMS inner product and norm}\\
	\|\vX\|_p&:=(\Tr|\vX|^p)^{1/p},\quad \|\vX\|:=\|\vX\|_\infty && \nomdescr{Schatten $p$-norm; operator norm}\\
	\Var_{\vrho}(\vX)&:=\nr{\vX-\Tr(\vrho\vX)\vI}^2 && \nomdescr{variance of $\vX$ in the Gibbs state, \Cref{eq:VarDef}}\\
	\vX^\dagger,\ z^*& && \nomdescr{Hermitian conjugate of an operator; complex conjugate of a number or a function, $f^*(\nu):=f(\nu)^*$}\\
	\CM^\dagger& && \nomdescr{Hilbert--Schmidt adjoint of a superoperator $\CM$}\\
	O^\perp& && \nomdescr{Hilbert--Schmidt orthogonal complement of a subspace $O\subseteq\CB(\CH_n)$; it coincides with the KMS one for $O\in\{\Pn,\Rn\}$}\\
	\vI,\ \CI& && \nomdescr{identity operator; identity superoperator}
\end{alignat*}

\renewcommand{\nomdescr}[1]{\parbox[t]{0.525\textwidth}{\raggedright #1}}

\paragraph{Filters and filtered jumps.}
\begin{alignat*}{2}
	f\colon\BR&\to\BC &\qquad& \nomdescr{continuous filter with $f(\nu)=e^{-\nu/2}f^*(-\nu)$ (KMS condition) and $|f|\le1$, \Cref{eq:f-KMS}}\\
	c_\star&:=\inf_{\nu\le0}|f(\nu)|^2>0 && \nomdescr{normalization constant of the filter, \Cref{eq:f-KMS}}\\
	f_M(\nu)&:=\min\{1,e^{-\nu/2}\} && \nomdescr{Metropolis filter, \Cref{eq:Metropolis}}\\
	\widetilde f(\nu)&:=e^{\nu/4}f(\nu) && \nomdescr{symmetrized filter, $\widetilde f(-\nu)=\widetilde f^*(\nu)$, \Cref{eq:fsymm-KMS}}\\
	\vL^{(i,\alpha)}&:=f(\beta\,\ad_{\vH_n})[\vsigma^{(i,\alpha)}] && \nomdescr{jump operators of $\LC$, \Cref{eq:B-definition}}\\
	\vB^{(i,\alpha)}&:=\vrho^{-1/4}\vL^{(i,\alpha)}\vrho^{1/4}=\widetilde f(\beta\,\ad_{\vH_n})[\vsigma^{(i,\alpha)}] && \nomdescr{self-adjoint symmetrized jumps, \Cref{eq:B-definition}}\\
	\rmd\mu_\beta(t)&:=\frac{\rmd t}{\beta\cosh(2\pi t/\beta)},\quad \mu_0:=\frac12\delta_0 && \nomdescr{measure in the sum-of-squares form of $\EC$, \autoref{lem:form}; $\mu_\beta(\BR)=\frac12$}
\end{alignat*}

\renewcommand{\nomdescr}[1]{\parbox[t]{0.575\textwidth}{\raggedright #1}}

\paragraph{Lindbladians, discriminants, and Dirichlet forms.}
\begin{alignat*}{2}
	\CT[\cdot]&:=\sum_{i,\alpha}\vsigma^{(i,\alpha)}[\cdot]\vsigma^{(i,\alpha)} &\qquad& \nomdescr{``transition part'' quantum channel of single-site Paulis}\\
	\CT'&:=\BS^{(f)}\llbracket\CT\rrbracket && \nomdescr{KMS-detailed-balanced completely positive map obtained \textit{via} one of the super-superoperator $\BS^{(f)}_{\mathrm{D}}$, $\BS^{(f)}_{\mathrm{C}}$, $\BS^{(f)}_{\mathrm{F},\sigma}$ for the three constructions}\\
	\CL[\cdot]&:=\CT'[\cdot]-\frac12\{\CT'[\vI],\cdot\} && \nomdescr{KMS-detailed-balanced Lindbladian (Heisenberg picture), \Cref{eq:KMS-Lindbladian-general-form}; $\CL^\dagger$ is the Schrödinger-picture generator}\\
	&\TD^{(i,\alpha)},\ \LD^{(i,\alpha)},\ \LD,\ \LDM && \nomdescr{Davies transition map \eqref{eq:davies-generator-transition} and Lindbladians \eqref{eq:LindbladTermsDef}; $\LDM$ uses the Metropolis filter $f_M$}\\
	&\TC^{(i,\alpha)},\ \LC^{(i,\alpha)},\ \LC && \nomdescr{``coherent'' transition map \eqref{eq:coherent-generator-transition} and Lindbladians}\\
	&\TFs^{(i,\alpha)},\ \LFs^{(i,\alpha)},\ \LFs && \nomdescr{OFT transition map \eqref{eq:oft-S-transform} and Lindbladians}\\
	\CL&\in\{\LC,\LD,\LFs\} && \nomdescr{generic $\CL$: the statement holds for each construction}\\
	\CD[\cdot]&:=\vrho^{1/4}\CL[\vrho^{-1/4}\cdot\vrho^{-1/4}]\vrho^{1/4} = \CD^\dagger[\cdot] && \nomdescr{discriminant (the Hilbert--Schmidt self-adjointness is the KMS condition), \Cref{eq:DiscriminantDef}; $\DC,\DD,\DFs$ and $\DC^{(i,\alpha)},\DD^{(i,\alpha)},\DFs^{(i,\alpha)}$ analogously}\\
	\CE(\vX)&:=\ip{\vX}{-\CL[\vX]} && \nomdescr{Dirichlet form, \Cref{eq:Dirichlet-def}; $\EC,\ED,\EDM,\EF$ for $\LC,\LD,\LDM,\LFs$}
\end{alignat*}

\renewcommand{\nomdescr}[1]{\parbox[t]{0.75\textwidth}{\raggedright #1}}

\paragraph{Spectral gaps.}
\begin{alignat*}{2}
	\gamma(\CL)&:=\gap(-\CL) &\qquad& \nomdescr{spectral gap of $-\CL$,
    \Cref{eq:gap-rayleigh-quotient}}\\
	\gamma(\CL,O)& && \nomdescr{spectral gap of $-\CL$ restricted to an invariant subspace $O = \CL[O]$
    }
\end{alignat*}

\renewcommand{\nomdescr}[1]{\parbox[t]{0.575\textwidth}{\raggedright #1}}

\paragraph{Symmetries.}
\begin{alignat*}{2}
	\vg&\in\SU,\ \pi\in S_n &\qquad& \nomdescr{collective rotations and site permutations; $\bm{e}$ and $\mathrm{id}$ denote the respective identity elements}\\
	\vU_\pi& && \nomdescr{unitary permuting the sites: 
    site $i$ is moved to site $\pi(i)$
    }\\
	\vU_{(ij)}&=\frac12\Big(\vI+\sum_{\alpha}\vsigma^{(i,\alpha)}\vsigma^{(j,\alpha)}\Big) && \nomdescr{swap of sites $i$ and $j$, \Cref{eq:UijPauliDec}}\\
	\Delta_{\vg,\pi}[\cdot]&:=\vg^{\otimes n}\vU_\pi[\cdot](\vg^{\otimes n}\vU_\pi)^\dagger && \nomdescr{conjugation by the symmetry $(\vg,\pi)$
    }\\
	&\comm{O},\ \comm{\vO} && \nomdescr{commutant of a set $O$ of operators, resp.\ of a single operator $\vO$, \Cref{eq:CommDef}}\\
	\Rn&:=\comm{\{\Stot^x,\Stot^y,\Stot^z\}} && \nomdescr{collective-spin-rotation-invariant algebra, \autoref{def:Rn}}\\
	\Pn&:=\comm{\{\vU_{(ij)}\colon 1\le i<j\le n\}} && \nomdescr{permutation-invariant algebra, \autoref{def:Pn}}\\
	&\CP_R,\ \CP_P && \nomdescr{orthogonal projections onto $\Rn$ and $\Pn$ (Haar and permutation averages), \Cref{eq:zero-twirl} and \Cref{eq:critical-permutation-projection}}\\
	\Cad[\cdot]&:=\sum_{\alpha\in\{x,y,z\}}[\Stot^\alpha,[\Stot^\alpha,\cdot]] && \nomdescr{Casimir superoperator of the collective conjugation action, \autoref{lem:collective-casimir}}
\end{alignat*}

\renewcommand{\nomdescr}[1]{\parbox[t]{0.9\textwidth}{\raggedright #1}}

\paragraph{Miscellaneous.}
\begin{alignat*}{2}
	\Phi& &\qquad& \nomdescr{distribution function of the standard normal distribution}
\end{alignat*}
\endgroup

\addcontentsline{toc}{section}{References}
\bibliographystyle{alphaUrlePrint.bst}
\bibliography{ref,qc_gily}

\addtocontents{toc}{\protect\let\protect\numberline\protect\appendixnumberline}
\appendix

\section{Symmetries of the jumps: coherent terms are zero}\label{apx:coherent-term-zero}

All three Lindbladian constructions of the introduction share the same structure. 
In this appendix,
we highlight the similarity in 
the %
structure
of the Lindbladians
by introducing a parameter $\lambda$ ranging over a measure space $(\Lambda,\mu)$.
This allows defining
filtered jumps $\vL^{(i,\alpha)}(\lambda)$ and the completely positive map as follows:
\begin{equation}\label{eq:unified-CP}
    \CT'[\vX] = \int_\Lambda \sum_{i,\alpha} \vL^{(i,\alpha)}(\lambda)^\dagger\, \vX\, \vL^{(i,\alpha)}(\lambda)\,\rmd\mu(\lambda),
    \qquad
    \vD := \CT'[\vI],
\end{equation}
where
\begin{itemize}
    \item ``coherent'' construction: $\Lambda$ is a single point and $\vL^{(i,\alpha)}=\sum_{\nu\in B(\vH_n)}f(\beta\nu)\vsigma^{(i,\alpha)}_\nu$, so that $\CT'=\sum_{i,\alpha}\TC^{(i,\alpha)}$ as in \Cref{eq:coherent-generator-transition};
    \item Davies construction: $\Lambda=B(\vH_n)$ with counting measure and $\vL^{(i,\alpha)}(\nu)=f(\beta\nu)\vsigma^{(i,\alpha)}_\nu$, so that $\CT'=\sum_{i,\alpha}\TD^{(i,\alpha)}$ as in \Cref{eq:davies-generator-transition};
    \item OFT construction: $\Lambda=\BR$ with Lebesgue measure and $\vL^{(i,\alpha)}(\omega)=f\big(\beta\omega+\tfrac{\sigma^2}{2}\big)\widehat{\vsigma^{(i,\alpha)}}(\omega)$, with $\widehat{\vsigma^{(i,\alpha)}}(\omega)=\sum_{\nu\in B(\vH_n)}\hat g_\sigma(\omega-\nu)\vsigma^{(i,\alpha)}_\nu$ as in \eqref{eq:OFT}, so that $\CT'=\sum_{i,\alpha}\TFs^{(i,\alpha)}$ as in \Cref{eq:oft-S-transform}.
\end{itemize}
As before, $\vA_\nu$ denotes the Bohr frequency component \eqref{eq:Bohr-decomposition}; writing $\vP_E$ for the spectral projection of $\vH_n$ onto the eigenvalue $E$, we have $\vA_\nu=\sum_{E'-E=\nu}\vP_{E'}\vA\vP_E$.
In every case $\vL^{(i,\alpha)}(\lambda)=\sum_{\nu\in B(\vH_n)} c_\lambda(\nu)\,\vsigma^{(i,\alpha)}_\nu$ with coefficients $c_\lambda(\nu)\in\BC$ that do not depend on $a$. In general, the KMS-detailed-balanced Lindbladian built from $\CT'$ reads (in the Heisenberg picture)
\begin{equation}\label{eq:L-with-coherent-term}
    \CL[\cdot] = \CT'[\cdot]-\frac12\{\vD,\cdot\} + \ri[\vC,\cdot].
\end{equation}
Here, by \cite[Lemma 1]{gilyen2024QGeneralGlauberMetropolis}, the coherent term for any of the constructions is \begin{equation*}
    \vC = \sum_{\nu \in B(\Hn)} \frac\ri2 \tanh\left(\frac{\beta\nu}{4}\right) \vD_\nu,
\end{equation*} a linear combination of the Bohr components $\vD_\nu$ with $\nu \neq 0$ (the weight
$\tanh(\beta\nu/4)$
vanishes at $\nu=0$).

Below we explicitly show that $\vC=\mathbf{0}$, and that a similar line of argument proves that the Lindbladian generators $\CL$ commute with the conjugation action $\Delta_{\vg,\pi}$, i.e., $\CL \circ \Delta_{\vg,\pi} =\Delta_{\vg,\pi} \circ \CL$.
Throughout,
let
\begin{equation*}
    \begin{aligned}
        \Delta_{\vg,\pi}[\vX] = \vW\vX\vW^\dagger \qquad \text{where} \qquad \vW = \vg^{\otimes n}\vU_\pi \qquad \text{and} \qquad \vg \in \SU,~\pi \in S_n;
    \end{aligned}
\end{equation*}
the permutation $\pi$ maps index $i$ to $\pi(i)$.

\begin{lemma}[The filtered jumps rotate orthogonally]\label{lem:jump-covariance}
    Let $\vO(\vg)\in\mathrm{SO}(3)$ be defined by $\vg\vsigma^\alpha\vg^\dagger=\sum_\gamma O_{\alpha\gamma}(\vg)\vsigma^\gamma$, where $\vsigma^\alpha$ ($\alpha\in\{\rmx,\rmy,\rmz\}$) are the single-qubit Pauli matrices, and let
    \begin{equation*}
        R_{(i,\alpha),(j,\gamma)} := \delta_{j,\pi(i)}\,O_{\alpha\gamma}(\vg).
    \end{equation*}
    Then $\vR$ is a real orthogonal $3n\times 3n$ matrix, and for every $a$ and $\lambda\in\Lambda$,
    \begin{equation}\label{eq:jump-covariance}
        \vW\vL^{(i,\alpha)}(\lambda)\vW^\dagger = \sum_b R_{ab}\,\vL^{b}(\lambda),
        \qquad
        \vW\vL^{(i,\alpha)}(\lambda)^\dagger\vW^\dagger = \sum_b R_{ab}\,\vL^{b}(\lambda)^\dagger.
    \end{equation}
\end{lemma}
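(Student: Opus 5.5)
Since $\vg$ and $\vU_\pi$ both commute with $\vH_n$, the whole statement reduces to the covariance of the bare Pauli jumps plus linearity of the Bohr-component map. The plan has three steps.

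\emph{Orthogonality of $\vR$.} The map $\vg\mapsto\vO(\vg)$ is the adjoint representation of $\SU$ on the real span of the Pauli matrices. That span is preserved because $\vg\vsigma^\alpha\vg^\dagger$ is Hermitian and traceless. The representation is orthogonal for the inner product $\frac12\Tr(\vA\vB)$. Hence $\vO(\vg)$ is real orthogonal, and in fact lies in $\mathrm{SO}(3)$ by connectedness of $\SU$. The matrix $\vR$ is a permutation matrix on the site index (via $\pi$) tensored with $\vO(\vg)$. Therefore
\[
\vR\vR^\rmT=\vP_\pi\vP_\pi^\rmT\otimes\vO\vO^\rmT=\vI,
\]
so $\vR$ is orthogonal.

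\emph{Covariance of the unfiltered Paulis.} We have $\vU_\pi\vsigma^{(i,\alpha)}\vU_\pi^\dagger=\vsigma^{(\pi(i),\alpha)}$ by the definition of $\vU_\pi$, and $\vg^{\otimes n}\vsigma^{(j,\alpha)}(\vg^{\otimes n})^\dagger=\sum_\gamma O_{\alpha\gamma}\vsigma^{(j,\gamma)}$. Combining these gives
\[
\vW\vsigma^{(i,\alpha)}\vW^\dagger=\sum_\gamma O_{\alpha\gamma}(\vg)\,\vsigma^{(\pi(i),\gamma)}=\sum_b R_{(i,\alpha),b}\,\vsigma^{b}.
\]

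\emph{Passing to the filtered jumps.} $\vW$ commutes with $\vH_n$, since $\vH_n$ is $\SU\times S_n$ invariant, so $\vW$ commutes with every spectral projection $\vP_E$. Lemma~\ref{lem:commutation-with-bohr}, or the direct computation $\vW\vA_\nu\vW^\dagger=\sum_{E'-E=\nu}\vP_{E'}\vW\vA\vW^\dagger\vP_E$, shows that $\vA\mapsto\vA_\nu$ commutes with conjugation by $\vW$. Each $\vL^{(i,\alpha)}(\lambda)=\sum_\nu c_\lambda(\nu)\vsigma^{(i,\alpha)}_\nu$ has coefficients $c_\lambda(\nu)$ independent of the index $a=(i,\alpha)$. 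It is therefore obtained from $\vsigma^{(i,\alpha)}$ by a fixed linear map that commutes with $\vW$-conjugation, and linearity yields the first identity in \eqref{eq:jump-covariance}. For the second identity, take adjoints of the first: $\vW\vL^{a}(\lambda)^\dagger\vW^\dagger=(\vW\vL^a(\lambda)\vW^\dagger)^\dagger=\sum_b\overline{R_{ab}}\,\vL^b(\lambda)^\dagger$, and this equals the claim because $\vR$ is real.

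I expect no real obstacle. The only point needing care is to check, for each of the three constructions (coherent, Davies and OFT), that the coefficients $c_\lambda(\nu)$ indeed do not depend on $(i,\alpha)$. This can be read off from the definitions listed just before the lemma.
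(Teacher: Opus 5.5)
Your proposal is correct and follows essentially the same route as the paper's proof. Both establish $[\vW,\vH_n]=0$ so that conjugation by $\vW$ commutes with taking Bohr components, then use the covariance $\vW\vsigma^{(i,\alpha)}\vW^\dagger=\sum_\gamma O_{\alpha\gamma}(\vg)\vsigma^{(\pi(i),\gamma)}$ and the $(i,\alpha)$-independence of the coefficients $c_\lambda(\nu)$ for linearity, and get the adjoint identity from reality of $\vR$. Your explicit justification of orthogonality (the adjoint representation is orthogonal for $\tfrac12\Tr$, and $\vR=\vP_\pi\otimes\vO(\vg)$) just makes explicit what the paper cites from a textbook.
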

\begin{proof}
    \emph{Transformation of Bohr components.} The transformation law $\vg\vsigma^\alpha\vg^\dagger=\sum_\gamma O_{\alpha\gamma}(\vg)\vsigma^\gamma$ is found in standard quantum mechanics texts, e.g., \cite[Chapter 3]{Sakurai2020ModernQuantumMechanics}; importantly, $\vO(\vg) \in \mathrm{SO}(3)$, an orthogonal matrix.
    Then $\vg^{\otimes n}\Stot^\alpha(\vg^{\otimes n})^\dagger=\sum_\gamma O_{\alpha\gamma}(\vg)\Stot^\gamma$, and $\sum_\alpha(\Stot^\alpha)^2$ is rotation invariant; it is also permutation invariant, as each $\Stot^\alpha$ is a sum over all single-site Paulis. Hence, by \eqref{eq:mean-field-heisenberg-model}, $[\vW,\vH_n]=0$, so $\vW$ commutes with every spectral projection $\vP_E$ of $\vH_n$, and, by \eqref{eq:Bohr-decomposition},
    \begin{equation*}
        \vW\vA_\nu\vW^\dagger=(\vW\vA\vW^\dagger)_\nu\qquad\text{for all }\vA\in\CB(\CH_n),\ \nu\in B(\vH_n).
    \end{equation*}
    
    \emph{Transformation of single-site Paulis.} Since $\vg^{\otimes n}$ acts as $\vg$ on every site and $\vU_\pi$ moves site $i$ to site $\pi(i)$,
    \begin{equation*}
        \vW\vsigma^{(i,\alpha)}\vW^\dagger=\sum_\gamma O_{\alpha\gamma}(\vg)\,\vsigma^{(\pi(i),\gamma)}=\sum_{j,\gamma} R_{(i,\alpha),(j,\gamma)}\,\vsigma^{(j,\gamma)}.
    \end{equation*}
    $\vR$ is a matrix representing permutation of site indices and rotation of spatial directions, hence real orthogonal: \begin{equation*}
        (\vR^{\rmT} \vR)_{(j,\gamma),(k,\delta)} = \delta_{jk} \delta_{\gamma,\delta}
    \end{equation*}
    
    \emph{Combine.} Combining the two identities above, $\vW\vsigma^{(i,\alpha)}_\nu\vW^\dagger=\sum_{j,\gamma} R_{(i,\alpha),(j,\gamma)}\vsigma^{(j,\gamma)}_\nu$. Since $\vL^{(i,\alpha)}(\lambda)=\sum_\nu c_\lambda(\nu)\vsigma^{(i,\alpha)}_\nu$ with coefficients independent of $(i,\alpha)$, linearity gives the first identity in \eqref{eq:jump-covariance}; the second follows by taking adjoints, as $\vR$ is real.
\end{proof}

\begin{lemma}[Coherent terms are zero]\label{lem:zero-coherent-term}
    For the mean-field Heisenberg model $\vH_n$ with single-site Pauli jumps, and for each of the ``coherent'', Davies and OFT constructions, $\vD=\CT'[\vI]$ satisfies $[\vD,\vH_n]=0$. Consequently $\vC=0$ in \eqref{eq:L-with-coherent-term}, and $\CL$ is of the form \eqref{eq:KMS-Lindbladian-general-form}.
\end{lemma}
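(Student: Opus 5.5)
The plan is to show that $\vD=\CT'[\vI]$ is invariant under every conjugation $\Delta_{\vg,\pi}$, deduce from this that $\vD$ lies in $\Rn\cap\Pn$, and conclude that $\vD$ commutes with $\vH_n$.

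\textbf{Step 1: invariance of $\vD$.} Start from the unified form \eqref{eq:unified-CP}, $\vD=\int_\Lambda\sum_{a}\vL^{a}(\lambda)^\dagger\vL^{a}(\lambda)\,\rmd\mu(\lambda)$, where $a=(i,\alpha)$. Conjugate by $\vW=\vg^{\otimes n}\vU_\pi$ and insert $\vW^\dagger\vW=\vI$ between the two factors. By \Cref{lem:jump-covariance},
\[
\vW\vD\vW^\dagger=\int_\Lambda\sum_{a}\sum_{b,c}R_{ab}R_{ac}\,\vL^{b}(\lambda)^\dagger\vL^{c}(\lambda)\,\rmd\mu(\lambda).
\]
Since $\vR$ is real orthogonal, $\sum_a R_{ab}R_{ac}=(\vR^{\rmT}\vR)_{bc}=\delta_{bc}$. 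Hence $\vW\vD\vW^\dagger=\vD$ for every $\vg\in\SU$ and every $\pi\in S_n$.

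\textbf{Step 2: $\vD$ lies in $\Rn\cap\Pn$.} Taking $\pi=\mathrm{id}$, $\vD$ commutes with every $\vg^{\otimes n}$. Differentiating along the one-parameter subgroups $e^{-\ri t\vsigma^\alpha/2}$ shows that $\vD$ commutes with each $\Stot^\alpha$, so $\vD\in\Rn$. By \eqref{eq:RnInCHn}, this already gives $[\vD,\vH_n]=0$; one could equally use $\vg=\bm e$ and \eqref{eq:PnInCHn}.

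\textbf{Step 3: the coherent term vanishes.} Because $\vD$ commutes with $\vH_n$, it equals its own zero-frequency Bohr component, so $\vD_\nu=0$ for every $\nu\neq0$. The coherent term $\vC$ is a combination of the $\vD_\nu$ with weights $\tanh(\beta\nu/4)$, and this weight vanishes at $\nu=0$, so $\vC=0$. Therefore $\CL$ reduces to the form \eqref{eq:KMS-Lindbladian-general-form}.

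The only point that needs care is the OFT case. There we need the interchange of the integral over $\Lambda$ with conjugation by $\vW$, and that the coefficients $c_\lambda(\nu)$ do not depend on $(i,\alpha)$. Both hold: the Bohr spectrum is finite and the Gaussian weights are integrable. The remaining steps are routine.
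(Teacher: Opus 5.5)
Your proposal is correct and follows the paper's proof essentially step by step. You get the invariance $\vW\vD\vW^\dagger=\vD$ from \Cref{lem:jump-covariance} and $\vR^{\rmT}\vR=\vI$, differentiate along collective rotations to obtain $\vD\in\Rn\subseteq\comm{\vH_n}$, and conclude $\vD_\nu=0$ for $\nu\neq0$, which forces $\vC=0$. Your aside that permutation invariance alone would also suffice (via $\Pn\subseteq\comm{\vH_n}$) is also valid, complementing the paper's remark that the $\SU$ part alone is enough.
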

\begin{proof}
    \emph{$\vD$ is invariant under permutations and global rotations.} By \autoref{lem:jump-covariance}, for every $\lambda$,
    \begin{equation*}
        \vW\Big(\sum_{i,\alpha}\vL^{(i,\alpha)}(\lambda)^\dagger\vL^{(i,\alpha)}(\lambda)\Big)\vW^\dagger
        =\sum_{j,\gamma,k,\delta}\big(\vR^{\rmT}\vR\big)_{(j,\gamma),(k,\delta)}\,\vL^{(j,\gamma)}(\lambda)^\dagger\vL^{(k,\delta)}(\lambda)
        =\sum_{j,\gamma}\vL^{(j,\gamma)}(\lambda)^\dagger\vL^{(j,\gamma)}(\lambda).
    \end{equation*}
    Integrating against the measure $\mu$ in \Cref{eq:unified-CP} gives $\vW\vD\vW^\dagger=\vD$ for all $(\vg,\pi)$.

    \emph{$\vD$ commutes with $\vH_n$.} Since $\vD$ is invariant under all global rotations and permutations, it must commute with the global rotation generators $\Stot^\alpha$ and hence with $\Hn$. In more detail, take $\pi=\mathrm{id}$ and $\vg=\exp\!\big(-\ri t\tfrac12\sum_\alpha n_\alpha\vsigma^\alpha\big)$ for a real unit vector $\vn$, so that $\vg^{\otimes n}=\exp\!\big(-\ri t\sum_\alpha n_\alpha\Stot^\alpha\big)$. Differentiating $\vW\vD\vW^\dagger=\vD$ at $t=0$ gives $[\sum_\alpha n_\alpha\Stot^\alpha,\vD]=0$ for every $\vn$, hence $[\Stot^\alpha,\vD]=0~\forall\alpha\in\{\rmx,\rmy,\rmz\}$, i.e., $\vD\in\Rn$ (\Cref{def:Rn}). By \eqref{eq:RnInCHn}, $[\vD,\vH_n]=0$.

    \emph{$\vC$ vanishes.} Since $[\vD,\vH_n]=0$, we have $\vP_{E'}\vD\vP_E=0$ whenever $E'\neq E$, i.e., $\vD_\nu=0$ for all $\nu\neq0$. As $\vC$ is a linear combination of these components, $\vC=0$.
\end{proof}

\begin{corollary}[Covariance of the Lindbladians]\label{cor:covariance}
    For $\CL\in\{\LC,\LD,\LFs\}$ and every $(\vg,\pi)\in\SU\times S_n$,
    \begin{equation*}
        \CL\circ\Delta_{\vg,\pi}=\Delta_{\vg,\pi}\circ\CL
        \qquad\text{and}\qquad
        \CL^\dagger\circ\Delta_{\vg,\pi}=\Delta_{\vg,\pi}\circ\CL^\dagger.
    \end{equation*}
    This is the extension of \cite[Theorem 3.2, Remark B.3]{basso2026SpecGapDaviesMeanFieldHeisenberg} to all Lindbladians $\CL$ considered in this work.
\end{corollary}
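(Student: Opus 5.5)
The plan is to compose the two conjugation identities for each jump operator. By \Cref{lem:zero-coherent-term}, each $\CL$ has the form $\CL[\vX]=\int_\Lambda\sum_{a}\big(\vL^{a}(\lambda)^\dagger\vX\vL^{a}(\lambda)\big)\rmd\mu(\lambda)-\frac12\{\vD,\vX\}$, with $a=(i,\alpha)$ and no Hamiltonian part. We therefore verify covariance separately for the transition part $\CT'$ and for the anticommutator part.

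\textbf{Transition part.} Fix $(\vg,\pi)$, write $\vW=\vg^{\otimes n}\vU_\pi$, and let $\vR$ be the real orthogonal matrix of \Cref{lem:jump-covariance}. Then
\[
\Delta_{\vg,\pi}\big[\CT'[\vX]\big]=\int_\Lambda\sum_a(\vW\vL^a(\lambda)^\dagger\vW^\dagger)\,\vW\vX\vW^\dagger\,(\vW\vL^a(\lambda)\vW^\dagger)\,\rmd\mu(\lambda).
\]
Substituting \eqref{eq:jump-covariance} turns the summand into $\sum_{b,c}(\vR^{\rmT}\vR)_{bc}\,\vL^b(\lambda)^\dagger\,\Delta_{\vg,\pi}[\vX]\,\vL^c(\lambda)$. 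Since $\vR^{\rmT}\vR=\vI$, this equals $\CT'[\Delta_{\vg,\pi}[\vX]]$.

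\textbf{Anticommutator part.} The proof of \Cref{lem:zero-coherent-term} already gives $\vW\vD\vW^\dagger=\vD$. Hence $\vW\{\vD,\vX\}\vW^\dagger=\{\vD,\vW\vX\vW^\dagger\}$.

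Adding the two parts gives $\Delta_{\vg,\pi}\circ\CL=\CL\circ\Delta_{\vg,\pi}$.

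\textbf{Adjoint.} $\Delta_{\vg,\pi}$ is Hilbert--Schmidt unitary, with HS adjoint $\Delta_{\vg,\pi}^{-1}=\Delta_{\vg^{-1},\pi^{-1}}$. Apply the identity just proved to the group element $(\vg^{-1},\pi^{-1})$ and take HS adjoints. This gives $\CL^\dagger\circ\Delta_{\vg,\pi}=\Delta_{\vg,\pi}\circ\CL^\dagger$.

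\textbf{Technical points.} The only subtle step is that the coefficients $c_\lambda(\nu)$ are independent of $(i,\alpha)$. This is what lets the rotation pass through the filter, and it is already built into \Cref{lem:jump-covariance}. The OFT integral and the Davies sum need no special care: the Bohr spectrum is finite and the integrand is dominated by a Gaussian, so linearity and exchanging the finite sums with the integral are justified.
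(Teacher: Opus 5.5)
Your proposal is correct and follows essentially the same route as the paper. You use \Cref{lem:zero-coherent-term} to drop the Hamiltonian part, pass $\Delta_{\vg,\pi}$ through the transition part via \Cref{lem:jump-covariance} and $\vR^{\rmT}\vR=\vI$, use $\vW\vD\vW^\dagger=\vD$ for the anticommutator, and get the $\CL^\dagger$ identity from Hilbert--Schmidt unitarity of $\Delta_{\vg,\pi}$; applying the identity at the inverse element $(\vg^{-1},\pi^{-1})$ is equivalent to the paper's direct adjoint of the first identity.
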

\begin{proof}
    By \autoref{lem:zero-coherent-term}, $\CL[\vX]=\CT'[\vX]-\frac12\{\vD,\vX\}$. By \autoref{lem:jump-covariance}, for every $\lambda$ and $\vX \in \CB(\Hn)$,
    \begin{equation*}
        \begin{aligned}
            \vW\Big(\sum_{i,\alpha}\vL^{(i,\alpha)}(\lambda)^\dagger\vX\vL^{(i,\alpha)}(\lambda)\Big)\vW^\dagger
            &=\sum_{j,\gamma,k,\delta}\big(\vR^{\rmT} \vR\big)_{(j,\gamma),(k,\delta)}\,\vL^{(j,\gamma)}(\lambda)^\dagger\,\Delta_{\vg,\pi}[\vX]\,\vL^{(k,\delta)}(\lambda)\\
            &=\sum_{j,\gamma}\vL^{(j,\gamma)}(\lambda)^\dagger\,\Delta_{\vg,\pi}[\vX]\,\vL^{(j,\gamma)}(\lambda),
        \end{aligned}
    \end{equation*}
    so integrating over $\lambda$ gives $\Delta_{\vg,\pi}\circ\CT'=\CT'\circ\Delta_{\vg,\pi}$. Since $\vW\vD\vW^\dagger=\vD$ (proof of \autoref{lem:zero-coherent-term}), also $\Delta_{\vg,\pi}[\{\vD,\vX\}]=\{\vD,\Delta_{\vg,\pi}[\vX]\}$. This proves the first identity. As $\Delta_{\vg,\pi}$ is unitary for the Hilbert--Schmidt inner product $\ipc{\cdot}{\cdot}$, $\Delta_{\vg,\pi}^\dagger=\Delta_{\vg,\pi}^{-1}$, and taking adjoints of the first identity gives the second.
\end{proof}

\begin{rmk}
    Only the $\SU$ part of the symmetry is needed for \autoref{lem:zero-coherent-term}. For the Davies construction the statement is immediate, since each $(\vsigma^{(i,\alpha)}_\nu)^\dagger\vsigma^{(i,\alpha)}_\nu$ already commutes with $\vH_n$.
\end{rmk}
\providecommand{\vA}{\boldsymbol{A}}
\providecommand{\CC}{\mathcal{C}}
\providecommand{\vL}{\boldsymbol{L}}
\providecommand{\vM}{\boldsymbol{M}}
\providecommand{\vD}{\boldsymbol{D}}
\providecommand{\vO}{\boldsymbol{O}}
\providecommand{\vP}{\boldsymbol{P}}
\providecommand{\CE}{\mathcal{E}}
\providecommand{\Filt}{\mathfrak{F}}
\providecommand{\gap}{\lambda_{\mathrm{gap}}}
\providecommand{\Var}{\mathrm{Var}}

\section{Transferring spectral gaps from coherent to OFT Lindbladians}\label{sec:gap-transfer}

In this appendix we show how a spectral gap lower bound for the coherent Lindbladian transfers to the OFT Lindbladian. The rough idea is to use the integral representation of~\cite[Lemma~VI.1]{bergamaschi2025FastMixingSpinChainsAllTemps} to show that the Dirichlet form of the OFT Lindbladian with Metropolis weight decomposes as an average of Dirichlet forms of coherent Lindbladians over an explicit family of KMS filters. Any lower bound on these coherent forms that holds uniformly over the family, on a subspace invariant under the OFT Lindbladian, therefore passes to the OFT Dirichlet form. Since we do not assume any symmetry of the Hamiltonian here, we keep the coherent (Hamiltonian) parts of the Lindbladians.

\subsection{Setting}

Throughout $0<\beta<\infty$ and $\sigma>0$. For a self-adjoint jump $\vK=\vK^\dagger$ let $\Omega_{\vK}:=\{\nu\in B(\vH)\colon\vK_\nu\neq0\}$; we have
\begin{equation}\label{eq:gt-bohr}
	\vK_\nu^\dagger=\vK_{-\nu},\qquad \vrho^{s}\vK_\nu\vrho^{-s}=e^{-s\beta\nu}\vK_\nu\qquad(s\in\BR).
\end{equation}

\paragraph{Filters.}
A \emph{KMS filter} is a continuous $h\colon\BR\to\BC$ satisfying \eqref{eq:f-KMS}, i.e.\
$h(x)=\e^{-x/2}h^*(-x)$. For $c_\star\in(0,1]$ let
\[
	\Filt(c_\star):=\Big\{h\ \text{KMS filter}\colon\ |h|\le1,\ \inf_{x\le0}|h(x)|^2\ge c_\star\Big\}.
\]
Note that $h$ is a KMS filter if and only if $k(x):=\e^{x/4}h(x)$ is continuous and satisfies
$k(-x)=k^*(x)$.

\paragraph{Lindbladians with coherent term.}
Define the Bohr-frequency reweighting superoperators
\begin{equation}\label{eq:gt-CC}
	\CC_s[\vX]:=\frac{\ri}{2}\sum_{\xi\in B(\vH)}\tanh\!\Big(\frac{\beta\xi}{4}\Big)\vX_\xi,\qquad
	\CC_c[\vX]:=\sum_{\xi\in B(\vH)}\frac{\vX_\xi}{\cosh(\beta\xi/4)} .
\end{equation}
Since $(\vX_\xi)^\dagger=(\vX^\dagger)_{-\xi}$, $\tanh$ is odd and $\cosh$ is even, both maps send Hermitian
operators to Hermitian operators. For a completely positive map $\CT$ the corresponding KMS-detailed balanced Lindbladian in the Heisenberg picture is~\cite{chen2023ExactQGibbsSampler,gilyen2024QGeneralGlauberMetropolis,ding2024GibbsSamplingViaKMS}
\begin{equation}\label{eq:gt-L}
	\CL_{\CT}[\cdot]:=\CT[\cdot]-\tfrac12\{\CT[\vI],\cdot\}+\ri\big[\CC_s[\CT[\vI]],\cdot\big],
\end{equation}
where $\CC_s[\CT[\vI]]$ is the coherent term of the KMS-detailed-balanced Lindbladians as described in
\cite{chen2023ExactQGibbsSampler,gilyen2024QGeneralGlauberMetropolis}. If $\CT[\vI]$ commutes with
$\vH$, then $\CC_s[\CT[\vI]]=0$ and \eqref{eq:gt-L} reduces to \eqref{eq:KMS-Lindbladian-general-form}.

For a self-adjoint jump $\vK$ and a KMS filter $h$ let $\vL_h:=\sum_\nu h(\beta\nu)\vK_\nu$ and define the
\emph{coherent Lindbladian} $\LC^{\vK,h}:=\CL_{\CT}$ with $\CT[\vX]=\vL_h^\dagger\vX\vL_h$, and
$\vD_h:=\vL_h^\dagger\vL_h$. For a family $\{\vK^{(a)}\}_{a\in\mathcal A}$ of self-adjoint jumps let
$\LC^{h}:=\sum_{a}\LC^{\vK^{(a)},h}$; for $\{\vK^{(a)}\}=\{\vsigma^{(i,\alpha)}\}$ this is $\LC$ of the
main text with $f$ replaced by $h$.

The \emph{OFT Lindbladian} is $\LFs^{\vK}:=\CL_{\TFs^{\vK}}$ with $\TFs^{\vK}$ as in
\eqref{eq:oft-S-transform}, and $\LFs:=\sum_a\LFs^{\vK^{(a)}}$. In this section $f$ is the Metropolis
filter \eqref{eq:Metropolis}.
Then $\LFs^{\vK}$ is the Lindbladian of \cite{chen2023ExactQGibbsSampler}, and its
Schr\"odinger-picture adjoint is the Lindbladian $\CL_a$ of
\cite[Eq.~(2.19)]{bergamaschi2025FastMixingSpinChainsAllTemps} with Metropolis weight and energy width $\sigma/\beta$.

\paragraph{Dirichlet form and spectral gap.}
For a Lindbladian $\CL$ let $\CE_{\CL}(\vO):=\ip{\vO}{-\CL[\vO]}$ and
\[
	\gap(\CL):=\inf\Big\{\frac{\CE_{\CL}(\vO)}{\Var(\vO)}\colon \Var(\vO)>0\Big\}.
\]
For KMS-detailed-balanced
$\CL$ this is the spectral gap of $-\CL$. We use $\CE_{\mathrm C}^{h}:=\CE_{\LC^{h}}$ and
$\CE_{\mathrm F}:=\CE_{\LFs}$.

\subsection{Coherent Lindbladians and their Dirichlet forms}

In \eqref{eq:coherent-lindblad} and \Cref{lem:form}, the single-site Pauli
jumps are summed for the mean-field Hamiltonian, and the total coherent
correction vanishes. For a general Hamiltonian and a single self-adjoint
jump, $\vD_h$ need not commute with $\vH$; the formulas below retain the
correction in \eqref{eq:gt-L}.

\begin{lemma}[cf.\ {\cite[Section~4.2.1]{gilyen2024QGeneralGlauberMetropolis}}]\label{lem:gt-coherent}
    Let $\vK=\vK^\dagger$, let $h$ be a KMS filter and $k(x):=\e^{x/4}h(x)$.
    \begin{enumerate}[label=(\roman*)]
        \item\label{it:Symm}
        $\tilde\vL_h:=\vrho^{-1/4}\vL_h\vrho^{1/4}
        =\sum_\nu k(\beta\nu)\vK_\nu$ is Hermitian.
        \item\label{it:DisK}
        The discriminant \eqref{eq:DiscriminantDef} of $\LC^{\vK,h}$ is
        \[
            \DC^{\vK,h}[\vX]
            =\tilde\vL_h\vX\tilde\vL_h
             -\tfrac12\big\{\CC_c[\vD_h],\vX\big\}.
        \]
        Hence $\DC^{\vK,h}$ is Hilbert--Schmidt self-adjoint, and
        $\LC^{\vK,h}$ is KMS detailed balanced.
        \item\label{it:Bohrf}
        $\LC^{\vK,h}$ depends on $h$ only through $h|_{\beta\Omega_{\vK}}$,
        and $\LC^{\vK,th}=t^2\LC^{\vK,h}$ for $t\in\BR$.
    \end{enumerate}
\end{lemma}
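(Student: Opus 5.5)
The plan is to prove all three items by direct computation in the Bohr decomposition, using only the identities \eqref{eq:gt-bohr}.

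\emph{Item \ref{it:Symm}.} By \eqref{eq:gt-bohr} with $s=-\tfrac14$, conjugation multiplies each component: $\vrho^{-1/4}\vK_\nu\vrho^{1/4}=e^{\beta\nu/4}\vK_\nu$. Hence
\[
\tilde\vL_h=\sum_\nu e^{\beta\nu/4}h(\beta\nu)\vK_\nu=\sum_\nu k(\beta\nu)\vK_\nu .
\]
Taking adjoints and using $\vK_\nu^\dagger=\vK_{-\nu}$ gives $\tilde\vL_h^\dagger=\sum_\nu k^*(\beta\nu)\vK_{-\nu}$. The KMS property $k^*(x)=k(-x)$ turns this into $\sum_\nu k(-\beta\nu)\vK_{-\nu}$. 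The Bohr set $\Omega_{\vK}$ is symmetric because $\vK$ is Hermitian, so relabelling $\nu\mapsto-\nu$ returns $\tilde\vL_h$.

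\emph{Item \ref{it:DisK}.} I would insert \eqref{eq:gt-L} into \eqref{eq:DiscriminantDef} and treat the three pieces separately.

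\textbf{Transition part.} It becomes $(\vrho^{1/4}\vL_h^\dagger\vrho^{-1/4})\,\vX\,(\vrho^{-1/4}\vL_h\vrho^{1/4})=\tilde\vL_h^\dagger\vX\tilde\vL_h$. By \ref{it:Symm} this equals $\tilde\vL_h\vX\tilde\vL_h$.

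\textbf{Dissipative and coherent parts.} Write $\vD_h=\sum_\xi(\vD_h)_\xi$ and set $c_\xi:=\tanh(\beta\xi/4)$.
\begin{itemize}
\item The dissipative part contributes $-\tfrac12\vrho^{1/4}\vD_h\vrho^{-1/4}\vX-\tfrac12\vX\vrho^{-1/4}\vD_h\vrho^{1/4}$. By \eqref{eq:gt-bohr}, the $\xi$-component picks up the factor $e^{-\beta\xi/4}$ on the left and $e^{\beta\xi/4}$ on the right.
\item The coherent term $\ri[\CC_s[\vD_h],\cdot]$ gives, on the left, the coefficient $\ri\cdot\tfrac{\ri}{2}c_\xi e^{-\beta\xi/4}=-\tfrac12c_\xi e^{-\beta\xi/4}$.
\item On the right it gives $+\tfrac12c_\xi e^{\beta\xi/4}$.
\end{itemize}

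\textbf{Collecting coefficients.} The left coefficient is $-\tfrac12e^{-\beta\xi/4}(1+c_\xi)$ and the right one is $-\tfrac12e^{\beta\xi/4}(1-c_\xi)$. The elementary identity $e^{\mp x}(1\pm\tanh x)=1/\cosh x$ shows both equal $-\tfrac12/\cosh(\beta\xi/4)$. This yields exactly $-\tfrac12\{\CC_c[\vD_h],\vX\}$.

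\textbf{Consequences.} Since $\vD_h$ is Hermitian and $\CC_c$ preserves Hermiticity, $\CC_c[\vD_h]$ is Hermitian. Together with the Hermiticity of $\tilde\vL_h$, this makes both terms of $\DC^{\vK,h}$ Hilbert--Schmidt self-adjoint. Self-adjointness of the discriminant is, by definition \eqref{eq:DiscriminantDef}, equivalent to KMS detailed balance.

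\emph{Item \ref{it:Bohrf}.} Since $\vK_\nu=0$ for $\nu\notin\Omega_{\vK}$, the operator $\vL_h$ depends only on $h|_{\beta\Omega_{\vK}}$. Every ingredient of \eqref{eq:gt-L} — namely $\CT$, $\vD_h=\CT[\vI]$ and $\CC_s[\vD_h]$ — is built from $\vL_h$, so $\LC^{\vK,h}$ has the same dependence. For scaling, note that $\vL_{th}=t\vL_h$ with $t$ real. Then $\CT$, $\vD_h$ and the coherent term are each quadratic in $\vL_h$ and pick up $t^2$ (also $th$ remains a KMS filter).

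\emph{Main obstacle.} The only step needing care is the sign and coefficient bookkeeping in \ref{it:DisK}. This is where the coherent correction exactly cancels the non-symmetric modular factors $e^{\mp\beta\xi/4}$ in the anticommutator. I would verify it componentwise via the $\tanh$/$\cosh$ identity as above.
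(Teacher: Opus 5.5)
Your proposal is correct and follows the paper's proof: conjugate by $\vrho^{\pm1/4}$, identify the gain term as $\tilde\vL_h\vX\tilde\vL_h$, and combine the Bohr-componentwise left and right coefficients of the anticommutator and coherent term via $e^{\mp z}(1\pm\tanh z)=1/\cosh z$, with (iii) following from $\vL_{th}=t\vL_h$. You also make explicit two points the paper leaves implicit: that $\CC_c[\vD_h]$ is Hermitian, and that Hilbert--Schmidt self-adjointness of the discriminant is equivalent to KMS detailed balance.
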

\begin{proof}
    Part~\ref{it:Symm} follows from \eqref{eq:gt-bohr} and
    $k(-x)=k^*(x)$. Under the similarity transformation
    $\vX=\vrho^{1/4}\vO\vrho^{1/4}$, the gain term in
    \eqref{eq:gt-L} becomes $\tilde\vL_h\vX\tilde\vL_h$.
    For a Bohr component $(\vD_h)_\xi$, setting $z=\beta\xi/4$,
    the left and right multiplication coefficients become
    $-\tfrac12\e^{\mp z}(1\pm\tanh z)=-1/(2\cosh z)$,
    proving \ref{it:DisK}. Part~\ref{it:Bohrf} follows from the
    definition of $\vL_h$ and $\vL_{th}=t\vL_h$.
\end{proof}

\begin{lemma}[Dirichlet form of coherent Lindbladians]\label{lem:gt-dirichlet}
    Under the assumptions and with the notation of \Cref{lem:gt-coherent}, let
    $\tilde\vL_h(t):=\e^{\ri\vH t}\tilde\vL_h\e^{-\ri\vH t}$ and
    $\eta(t):=1/[\beta\cosh(2\pi t/\beta)]$, so that
    $\eta(t)\,\rmd t=\rmd\mu_\beta(t)$ as in \eqref{eq:fourier}.
    Then for every $\vO$,
    \[
    \begin{aligned}
        \CE_{\LC^{\vK,h}}(\vO)
        &=\int_\BR\eta(t)\,\nr{[\tilde\vL_h(t),\vO]}^2\,\rmd t\\
        &=\sum_{\nu,\nu'\in\Omega_{\vK}}
          \frac{k^*(\beta\nu)\,k(\beta\nu')}
               {2\cosh(\beta(\nu-\nu')/4)}
          \,\ip{[\vK_\nu,\vO]}{[\vK_{\nu'},\vO]}.
    \end{aligned}
    \]
    In particular, $\CE_{\LC^{\vK,h}}\ge0$.
\end{lemma}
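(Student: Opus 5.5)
The first equality will follow from the second by a Fourier identity, so the real work is the second one. I will prove the second (Bohr-sum) expression directly from the discriminant of \Cref{lem:gt-coherent}\,\ref{it:DisK}, and then recover the time-integral form from \eqref{eq:fourier}.

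\emph{Step 1: reduce to a Hilbert--Schmidt form.} Write $\vX:=\vrho^{1/4}\vO\vrho^{1/4}$. Since $\nr{\cdot}$ is the Hilbert--Schmidt norm after this conjugation, the definition \eqref{eq:DiscriminantDef} gives
\[
\CE_{\LC^{\vK,h}}(\vO)=\ipc{\vX}{-\DC^{\vK,h}[\vX]}=\tfrac12\ipc{\vX}{\{\CC_c[\vD_h],\vX\}}-\ipc{\vX}{\tilde\vL_h\vX\tilde\vL_h}.
\]
Both $\tilde\vL_h=\sum_\nu k(\beta\nu)\vK_\nu$ and $\vD_h$ are expanded in Bohr components. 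For the latter, $\vD_h=\vL_h^\dagger\vL_h=\vrho^{1/4}\tilde\vL_h\vrho^{-1/2}\tilde\vL_h\vrho^{1/4}$, and by \eqref{eq:gt-bohr} this equals
\[
\sum_{\nu,\nu'}k^*(\beta\nu)k(\beta\nu')\,e^{\beta(\nu'-\nu)/4}\,\vK_\nu^\dagger\vK_{\nu'}.
\]
The product $\vK_\nu^\dagger\vK_{\nu'}$ has Bohr frequency $\xi=\nu'-\nu$. Consequently $\CC_c$ multiplies each term by $1/\cosh(\beta(\nu'-\nu)/4)$.

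\emph{Step 2: expand the right-hand side.} Expand $\ip{[\vK_\nu,\vO]}{[\vK_{\nu'},\vO]}=\Tr(\vrho^{1/2}[\vK_\nu,\vO]^\dagger\vrho^{1/2}[\vK_{\nu'},\vO])$ into four terms. There are two \emph{loss} terms, of the form $\vO^\dagger\vK_\nu^\dagger\,\cdot\,\vK_{\nu'}\vO$ and $\vK_\nu^\dagger\vO^\dagger\,\cdot\,\vO\vK_{\nu'}$, and two \emph{gain} cross terms. In each term I move the $\vrho^{\pm1/4}$ factors through the $\vK$'s using \eqref{eq:gt-bohr} and rewrite everything in terms of $\vX$.

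For the gain terms this produces factors $e^{\pm\beta(\nu-\nu')/4}$. Summed, these give $2\cosh(\beta(\nu-\nu')/4)$, which cancels the kernel exactly and yields $-\ipc{\vX}{\tilde\vL_h\vX\tilde\vL_h}$; here I use that $\tilde\vL_h$ is Hermitian (\Cref{lem:gt-coherent}\,\ref{it:Symm}).

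For the loss terms the kernel $1/(2\cosh)$ survives. Combined with the factor $e^{\beta(\nu'-\nu)/4}$ from Step~1, each loss term matches the corresponding half of $\tfrac12\ipc{\vX}{\{\CC_c[\vD_h],\vX\}}$.

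\emph{Step 3: the time integral.} Expand $\tilde\vL_h(t)=\sum_\nu k(\beta\nu)e^{\ri t\nu}\vK_\nu$. Then $\nr{[\tilde\vL_h(t),\vO]}^2$ is the double sum with the extra factor $e^{\ri t(\nu'-\nu)}$, and integrating against $\eta$ via \eqref{eq:fourier} gives exactly the kernel $1/[2\cosh(\beta(\nu-\nu')/4)]$. The time-integral form then shows $\CE\ge0$ immediately.

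\emph{Main obstacle.} The only delicate point is the bookkeeping of the exponential factors in Step~2. Specifically, one must check that the loss terms, where $\vrho^{\pm1/4}$ sits on the same side of $\vK_\nu^\dagger\vK_{\nu'}$, reproduce precisely $\CC_c$ rather than a $\tanh$-type weight. The $\tanh$ coherent correction in \eqref{eq:gt-L} is what converts the naive one-sided factors into the symmetric $1/\cosh$, and this is already encoded in \Cref{lem:gt-coherent}\,\ref{it:DisK}. I would therefore verify that step on a single pair $(\nu,\nu')$ first.
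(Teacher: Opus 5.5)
Your route is the paper's: pass to $\vX=\vrho^{1/4}\vO\vrho^{1/4}$, expand $\ip{[\vK_\nu,\vO]}{[\vK_{\nu'},\vO]}$ into gain and loss terms, match these against the discriminant of \Cref{lem:gt-coherent}\,(ii), and get the time-integral form from \eqref{eq:fourier}. Step~3 is fine. However, the exponential bookkeeping you single out as the delicate point is wrong as written, and with your formula the loss-term matching fails.

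Since $\vL_h=\vrho^{1/4}\tilde\vL_h\vrho^{-1/4}$, one has $\vD_h=\vL_h^\dagger\vL_h=\vrho^{-1/4}\tilde\vL_h\vrho^{1/2}\tilde\vL_h\vrho^{-1/4}$. The operator you wrote, $\vrho^{1/4}\tilde\vL_h\vrho^{-1/2}\tilde\vL_h\vrho^{1/4}$, is $\vL_h\vL_h^\dagger$. Directly from $\vL_h=\sum_\nu e^{-\beta\nu/4}k(\beta\nu)\vK_\nu$ one gets
\[
\vD_h=\sum_{\nu,\nu'}e^{-\beta(\nu+\nu')/4}\,k^*(\beta\nu)\,k(\beta\nu')\,\vK_\nu^\dagger\vK_{\nu'},
\]
not the weight $e^{\beta(\nu'-\nu)/4}$. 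The diagonal terms already differ: your weight on $\vK_\nu^\dagger\vK_\nu$ is $|k(\beta\nu)|^2$ instead of $|h(\beta\nu)|^2=e^{-\beta\nu/2}|k(\beta\nu)|^2$. The first loss term carries exactly the correct factor, $\Tr(\vrho^{1/2}\vO^\dagger\vK_\nu^\dagger\vrho^{1/2}\vK_{\nu'}\vO)=e^{-\beta(\nu+\nu')/4}\Tr(\vX^\dagger\vK_\nu^\dagger\vK_{\nu'}\vX)$. After weighting by $w_{\nu\nu'}:=k^*(\beta\nu)k(\beta\nu')/[2\cosh(\beta(\nu-\nu')/4)]$ and summing, it reproduces $\tfrac12\Tr(\vX^\dagger\CC_c[\vD_h]\vX)$ only with the corrected $\vD_h$.

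The second thing you leave implicit is that the terms do not pair up termwise in $(\nu,\nu')$. The two gain terms are $-e^{\beta(\nu'-\nu)/4}\Tr(\vX^\dagger\vK_\nu^\dagger\vX\vK_{\nu'})$ and $-e^{\beta(\nu-\nu')/4}\Tr(\vX^\dagger\vK_{\nu'}\vX\vK_\nu^\dagger)$, which are different traces. Likewise, the second loss term is $e^{\beta(\nu+\nu')/4}\Tr(\vX^\dagger\vX\vK_{\nu'}\vK_\nu^\dagger)$, which involves $\vK_{\nu'}\vK_\nu^\dagger$ rather than $\vK_\nu^\dagger\vK_{\nu'}$. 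To produce the factor $2\cosh(\beta(\nu-\nu')/4)$ that cancels the kernel, and to identify the second loss sum with $\tfrac12\Tr(\vX^\dagger\vX\,\CC_c[\vD_h])$, you must relabel $(\nu,\nu')\mapsto(-\nu',-\nu)$ in those sums. This uses $\vK_\nu^\dagger=\vK_{-\nu}$ and the invariance $w_{-\nu',-\nu}=w_{\nu\nu'}$, which is exactly where $k(-x)=k^*(x)$ enters. The Hermiticity of $\tilde\vL_h$ that you cite is the right ingredient, but it has to be deployed in this way. With the corrected $\vD_h$ and this relabeling, your argument becomes the paper's proof.
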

\begin{proof}
    Write $k_\nu:=k(\beta\nu)$ and
    $w_{\nu\nu'}:=k_\nu^*k_{\nu'}/(2\cosh(\beta(\nu-\nu')/4))$;
    all double sums below run over $\Omega_{\vK}^2$.
    Expanding $\tilde\vL_h(t)=\sum_\nu\e^{\ri\nu t}k_\nu\vK_\nu$
    and using \eqref{eq:fourier} identifies the integral with the double sum.
    To identify this sum with the Dirichlet form, let
    $\vX:=\vrho^{1/4}\vO\vrho^{1/4}$. By \eqref{eq:gt-bohr},
    \[
        \vrho^{1/4}[\vK_\nu,\vO]\vrho^{1/4}
        =\e^{-\beta\nu/4}\vK_\nu\vX
         -\e^{\beta\nu/4}\vX\vK_\nu,
    \]
    hence
    \begin{align*}
        \sum_{\nu,\nu'}w_{\nu\nu'}
        \ip{[\vK_\nu,\vO]}{[\vK_{\nu'},\vO]}
        ={}&\Tr(\vX^\dagger\vM_1\vX)
           +\Tr(\vX^\dagger\vX\vM_2)\\
        &-\sum_{\nu,\nu'}w_{\nu\nu'}\Big(
          \e^{\beta(\nu'-\nu)/4}
          \Tr(\vX^\dagger\vK_\nu^\dagger\vX\vK_{\nu'})
          +\e^{\beta(\nu-\nu')/4}
          \Tr(\vX^\dagger\vK_{\nu'}\vX\vK_\nu^\dagger)\Big),
    \end{align*}
    where
    \[
        \vM_1:=\sum_{\nu,\nu'}w_{\nu\nu'}
          \e^{-\beta(\nu+\nu')/4}\vK_\nu^\dagger\vK_{\nu'},
        \qquad
        \vM_2:=\sum_{\nu,\nu'}w_{\nu\nu'}
          \e^{\beta(\nu+\nu')/4}\vK_{\nu'}\vK_\nu^\dagger.
    \]
    Since $k(-x)=k^*(x)$, we have $w_{-\nu',-\nu}=w_{\nu\nu'}$.
    The relabelling $(\nu,\nu')\mapsto(-\nu',-\nu)$ in $\vM_2$
    therefore gives $\vM_2=\vM_1$. Applying it to the second mixed term
    combines the two mixed terms into
    \[
        \sum_{\nu,\nu'}w_{\nu\nu'}\,
        2\cosh\!\big(\beta(\nu-\nu')/4\big)
        \Tr(\vX^\dagger\vK_\nu^\dagger\vX\vK_{\nu'})
        =\Tr(\vX^\dagger\tilde\vL_h\vX\tilde\vL_h).
    \]
    Moreover,
    $\vD_h=\sum_{\nu,\nu'}\e^{-\beta(\nu+\nu')/4}
    k_\nu^*k_{\nu'}\vK_\nu^\dagger\vK_{\nu'}$,
    and $\vK_\nu^\dagger\vK_{\nu'}$ has Bohr frequency $\nu'-\nu$,
    so $\vM_1=\tfrac12\CC_c[\vD_h]$.
    The expanded expression is thus
    $\ipc{\vX}{-\DC^{\vK,h}[\vX]}
    =\CE_{\LC^{\vK,h}}(\vO)$ by
    \Cref{lem:gt-coherent}\ref{it:DisK}.
\end{proof}
The corresponding OFT identity is given in
\cite[Lemma~C.2]{rouze2024EffThermalizationGibbsSamp}.

\subsection{Gap transfer}

\begin{theorem}[Linear gap transfer]\label{thm:gapTransfer}
	Let $f_M$ be the Metropolis filter \eqref{eq:Metropolis}, $\sigma>0$, and let $\{\vK^{(a)}\}_{a\in A}$ be
	self-adjoint jumps are such that $\vK^{(a)}_\nu=0$ for all $\nu> \Lambda/\beta$ and $a\in A$. Let $\LFs$ be the OFT Lindbladian~\eqref{eq:oft-S-transform} built from these jumps with filter $f_M$ and width parameter $\sigma$, and for a KMS filter $h$ let $\LC^{h}$ be the coherent Lindbladian built from the same jumps with filter $h$. With $\hat g_\sigma$ the Gaussian of \eqref{eq:OFT}, define for $\omega\in\BR$ the functions
	\begin{equation}\label{eq:gt-homega}
		k_{\omega,\pm}(x):=\tfrac{1}{\sqrt2}\Big(\e^{\pm\ri\pi/4}\,\hat g_\sigma\big(\omega-\tfrac x\beta\big)+\e^{\mp\ri\pi/4}\,\hat g_\sigma\big(\omega+\tfrac x\beta\big)\Big),
		\qquad
		h_{\omega,\pm}(x):=\e^{-x/4}\,k_{\omega,\pm}(x),
	\end{equation}
	and the weight
	\begin{equation}\label{eq:gt-w}
		w(\omega):=\e^{-\sigma^2/8}\,|f_M(\beta\omega)|^2\,\e^{\beta\omega/2}=\e^{-\sigma^2/8}\,\e^{-\beta|\omega|/2}.
	\end{equation}
	Each $h_{\omega,\pm}$ is a KMS filter and $w$ is even. Furthermore, let $\Phi$ denote the standard normal distribution function and let $V\subseteq\CB(\Hn)$ be a subspace invariant under $\LFs$, and let $\omega_{\mathrm{trn}}\in(0,\infty]$. If for some $\kappa>0$
	\begin{equation}\label{eq:unifCohBound}
		\CE^{h}_{\mathrm C}(\vO) \geq \kappa\,\min_{x\in[-\Lambda,0]}|h(x)|^2\,\Var(\vO)
		\qquad\text{for all }\vO\in V\text{ and all }h\in\{h_{\omega,\pm}\colon|\omega|\le\omega_{\mathrm{trn}}\},
	\end{equation}
    \begin{equation}\label{eq:gt-concl}
		\gamma(\LFs,V)\ \ge\ \kappa\,c_{\mathrm{eff}}(\omega_{\mathrm{trn}}),
		\qquad
		c_{\mathrm{eff}}(\omega_{\mathrm{trn}}):=\int_{|\omega|\le\omega_{\mathrm{trn}}}w(\omega)\,\min_{x\in[-\Lambda,0]}|h_{\omega,+}(x)|^2\,\rmd\omega ,
	\end{equation}
	and
	\begin{equation}\label{eq:gt-ceff}
		c_{\mathrm{eff}}(\infty) \geq 1-\Phi\Big(\max\Big\{\frac{\sigma}{2},\frac{\Lambda}{2\sigma}\Big\}\Big),
		\qquad
		c_{\mathrm{eff}}(\omega_{\mathrm{trn}}) \geq \frac{\omega_{\mathrm{trn}}\,\beta}{\sigma\sqrt{2\pi}}\,
		\exp\!\Big(-\frac{(\max\{\Lambda,\beta\omega_{\mathrm{trn}}\}+\sigma^2/2)^2}{2\sigma^2}\Big)\qquad(\omega_{\mathrm{trn}}<\infty).
	\end{equation}
\end{theorem}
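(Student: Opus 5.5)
The plan is to prove an exact \emph{mixture identity}
\[
\CE_{\mathrm F}(\vO)=\int_{\BR} w(\omega)\,\tfrac12\bigl(\CE^{h_{\omega,+}}_{\mathrm C}(\vO)+\CE^{h_{\omega,-}}_{\mathrm C}(\vO)\bigr)\,\rmd\omega,
\]
possibly with a different harmless constant in front. The hypothesis \eqref{eq:unifCohBound} is then integrated against the nonnegative weight, and the two elementary lower bounds \eqref{eq:gt-ceff} on $c_{\mathrm{eff}}$ are checked at the end. The KMS property of $h_{\omega,\pm}$ is immediate from the remark after the definition of $\Filt(c_\star)$. Indeed, $k_{\omega,\pm}$ is continuous, and since $\hat g_\sigma$ is real, $k_{\omega,\pm}(-x)=\frac1{\sqrt2}\bigl(\e^{\pm\ri\pi/4}\hat g_\sigma(\omega+\tfrac x\beta)+\e^{\mp\ri\pi/4}\hat g_\sigma(\omega-\tfrac x\beta)\bigr)=k_{\omega,\pm}^*(x)$. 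Evenness of $w$ follows from $|f_M(\nu)|^2\e^{\nu/2}=\e^{-|\nu|/2}$.

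\emph{Step 1 (OFT form in the Bohr basis).} Using the OFT analogue of \Cref{lem:gt-dirichlet} from \cite[Lemma~C.2]{rouze2024EffThermalizationGibbsSamp}, I write, jump by jump,
\[
\CE_{\LFs^{\vK}}(\vO)=\sum_{\nu,\nu'}\frac{W(\nu,\nu')}{2\cosh(\beta(\nu-\nu')/4)}\,\ip{[\vK_\nu,\vO]}{[\vK_{\nu'},\vO]},
\]
where
\[
W(\nu,\nu')=\int|f_M(\beta\omega+\tfrac{\sigma^2}2)|^2\,\hat g_\sigma(\omega-\nu)\hat g_\sigma(\omega-\nu')\,\e^{\beta(\nu+\nu')/4}\,\rmd\omega.
\]
Completing the square in the product of the two Gaussians and substituting $\omega\mapsto\omega-\sigma^2/(2\beta)$ should absorb the shift $\sigma^2/2$ together with the factor $\e^{\beta(\nu+\nu')/4}$. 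The result I aim for is
\[
W(\nu,\nu')=\int w(\omega)\,\hat g_\sigma(\omega-\nu)\hat g_\sigma(\omega-\nu')\,\rmd\omega,
\]
with the prefactor $\e^{-\sigma^2/8}$ in $w$ produced by this completion of the square. This computation is the main obstacle: the bookkeeping of the Gaussian shift, the Metropolis weight and the coherent term must reproduce exactly the $1/(2\cosh)$ kernel of \Cref{lem:gt-dirichlet}. I would first verify it on a single pair $(\nu,\nu')$ by direct Gaussian integration, and if needed follow \cite[Lemma~VI.1]{bergamaschi2025FastMixingSpinChainsAllTemps} step by step.

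\emph{Step 2 (matching coherent forms).} By \Cref{lem:gt-dirichlet}, the coherent form for $h_{\omega,\pm}$ has pair coefficient $k_{\omega,\pm}^*(\beta\nu)k_{\omega,\pm}(\beta\nu')/(2\cosh(\beta(\nu-\nu')/4))$. Summing over the two signs, the cross terms carry the phases $\e^{\pm\ri\pi/2}$ and cancel. This leaves
\[
\hat g_\sigma(\omega-\nu)\hat g_\sigma(\omega-\nu')+\hat g_\sigma(\omega+\nu)\hat g_\sigma(\omega+\nu').
\]
Integrating against the even weight $w$ and substituting $\omega\mapsto-\omega$ in the second product gives $2W(\nu,\nu')$, which proves the mixture identity after summing over the jumps.

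\emph{Step 3 (conclusion).} Since $\hat g_\sigma$ is real, $k_{\omega,-}=k_{\omega,+}^*$, so $|h_{\omega,-}|=|h_{\omega,+}|$. For $\vO\in V$, I drop the $\omega$ with $|\omega|>\omega_{\mathrm{trn}}$ using $\CE^{h}_{\mathrm C}\ge0$ from \Cref{lem:gt-dirichlet}, and apply \eqref{eq:unifCohBound}. This gives $\CE_{\mathrm F}(\vO)\ge\kappa\,c_{\mathrm{eff}}(\omega_{\mathrm{trn}})\Var(\vO)$. Because $V$ is $\LFs$-invariant and $\LFs$ is KMS-self-adjoint, the variational characterization then yields \eqref{eq:gt-concl}.

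For \eqref{eq:gt-ceff}, the vanishing of the cross term gives
\[
|h_{\omega,+}(x)|^2=\tfrac12\,\e^{-x/2}\bigl(\hat g_\sigma(\omega-\tfrac x\beta)^2+\hat g_\sigma(\omega+\tfrac x\beta)^2\bigr).
\]
I bound the minimum over $x\in[-\Lambda,0]$ by the worse endpoint, namely the Gaussian evaluated at distance $|\omega|+\Lambda/\beta$. For $\omega_{\mathrm{trn}}<\infty$, the integrand is then bounded below by its value at $|\omega|=\omega_{\mathrm{trn}}$, multiplied by the interval length $2\omega_{\mathrm{trn}}$. For $\omega_{\mathrm{trn}}=\infty$, combining $w$ with the Gaussian gives a Gaussian integral in $\omega$ whose mean is shifted by $\sigma^2/(2\beta)$ or $\Lambda/\beta$. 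The result is a normal tail $1-\Phi(\max\{\sigma/2,\Lambda/(2\sigma)\})$, which I would verify by a routine change of variables.
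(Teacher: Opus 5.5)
Your Steps 1--3 are sound. They are the paper's argument written in Bohr coordinates. The paper imports the $(\omega,t)$-integral representation of the OFT form and symmetrizes at the operator level via $\vA_\pm=(\e^{\pm\ri\pi/4}\vY+\e^{\mp\ri\pi/4}\vY^\dagger)/\sqrt2$. You instead match the pair coefficients $\sum_\pm k_{\omega,\pm}^*(\beta\nu)k_{\omega,\pm}(\beta\nu')$ directly. The completion of the square you flag does work out: with $u=\beta\omega$, $\bar u=\beta(\nu+\nu')/2$ and $u=v-\sigma^2/2$, one gets $\e^{\bar u/2}\int|f_M(u+\sigma^2/2)|^2\e^{-(u-\bar u)^2/(2\sigma^2)}\rmd u=\e^{-\sigma^2/8}\int\e^{-|v|/2}\e^{-(v-\bar u)^2/(2\sigma^2)}\rmd v$.

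The gap is in the last step, the bounds \eqref{eq:gt-ceff}. You bound both Gaussians in $|h_{\omega,+}(x)|^2$ by their value at distance $|\omega|+\Lambda/\beta$, together with $\e^{-x/2}\ge1$. This yields only $\min_{x\in[-\Lambda,0]}|h_{\omega,+}(x)|^2\ge\hat g_\sigma(0)^2\e^{-(\beta|\omega|+\Lambda)^2/(2\sigma^2)}$, which is too lossy.
\begin{itemize}
\item For $\omega_{\mathrm{trn}}=\infty$, this integrates to exactly $2\e^{\Lambda/2}\bigl(1-\Phi(\Lambda/\sigma+\sigma/2)\bigr)$. At $\Lambda=\sigma=1$ that is about $0.22$, below the claimed $1-\Phi(1/2)\approx0.31$. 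For large $\Lambda/\sigma$ it decays like $\e^{-\Lambda^2/(2\sigma^2)}$ instead of $\e^{-\Lambda^2/(8\sigma^2)}$.
\item For $\omega_{\mathrm{trn}}<\infty$, your exponent contains $(\Lambda+\beta\omega_{\mathrm{trn}})^2$ instead of $\max\{\Lambda,\beta\omega_{\mathrm{trn}}\}^2$. At $\Lambda=\beta\omega_{\mathrm{trn}}=\sigma=1$ you get about $0.14$ against the claimed $0.32$, in units of $\omega_{\mathrm{trn}}\beta/(\sigma\sqrt{2\pi})$.
\end{itemize}
So no change of variables can recover the stated constants from your intermediate bound.

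The missing idea is to keep only the Gaussian whose centre lies on the same side as $x\in[-\Lambda,0]$. Then the two distances combine as a maximum rather than a sum. Concretely, first absorb $w(\omega)$ and $\e^{-x/2}$ by completing the square:
\[
w(\omega)\,|h_{\omega,\pm}(x)|^2=\frac{\hat g_\sigma(0)^2}{2}\Bigl[\min\{1,\e^{-\beta\omega}\}\,\e^{-\frac{(x-\beta\omega+\sigma^2/2)^2}{2\sigma^2}}+\min\{1,\e^{\beta\omega}\}\,\e^{-\frac{(x+\beta\omega+\sigma^2/2)^2}{2\sigma^2}}\Bigr].
\]
For $\omega\le0$ keep only the first term, which has prefactor $1$ and is centred at $x=\beta\omega-\sigma^2/2\le0$. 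For $\omega\ge0$ keep only the second term, by symmetry.
\begin{itemize}
\item For finite $\omega_{\mathrm{trn}}$, this gives $x-\beta\omega\in[-\Lambda,\beta\omega_{\mathrm{trn}}]$. Hence $|x-\beta\omega+\sigma^2/2|\le\max\{\Lambda,\beta\omega_{\mathrm{trn}}\}+\sigma^2/2$, which is exactly the stated exponent.
\item For $\omega_{\mathrm{trn}}=\infty$, substitute $y=\beta\omega-\sigma^2/2\le-\sigma^2/2$. Bound the Gaussian by its value at the endpoint of $[-\Lambda,0]$ farther from $y$, i.e.\ by $\e^{-\max\{|y|,|y+\Lambda|\}^2/(2\sigma^2)}$. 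If $\Lambda\le\sigma^2$, this maximum equals $|y|$ on the whole range, giving $1-\Phi(\sigma/2)$. If $\Lambda\ge\sigma^2$, restrict to $y\le-\Lambda/2$, where the maximum is again $|y|$, giving $1-\Phi(\Lambda/(2\sigma))$.
\end{itemize}
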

\begin{proof}
	\emph{Step 1 (Dirichlet form of the OFT Lindbladian).} For each jump $\vK=\vK^{(a)}$,
	\cite[Lemma~VI.1]{bergamaschi2025FastMixingSpinChainsAllTemps} 
    applied with energy width $\sigma/\beta$ gives
	\[
	\CE_{\LFs^{\vK}}(\vO)=\iint_{\BR^2}\eta(t)\,w(\omega)\,\nr{[\widehat{\vK}(\omega,t),\vO]}^2\,\rmd t\,\rmd\omega,
	\qquad w(\omega):=\e^{-\sigma^2/8}\e^{-\beta|\omega|/2},
	\]
	where $\widehat{\vK}(\omega,t):=\e^{\ri\vH t}\widehat{\vK}(\omega)\e^{-\ri\vH t}$ and $\eta$ is as in \Cref{lem:gt-dirichlet}.

	\emph{Step 2 (symmetrization).} Since $\vK=\vK^\dagger$ and $\hat g_\sigma$ is real and even, \eqref{eq:OFT} gives
	$\widehat{\vK}(-\omega,t)=\widehat{\vK}(\omega,t)^\dagger$, and $w$ is even. For any operator $\vY$ and
	$\vA_\pm:=\frac{1}{\sqrt2}\big(\e^{\pm\ri\pi/4}\vY+\e^{\mp\ri\pi/4}\vY^\dagger\big)$ the cross terms cancel and
	$\nr{[\vY,\vO]}^2+\nr{[\vY^\dagger,\vO]}^2=\sum_\pm\nr{[\vA_\pm,\vO]}^2$. Applying this to $\vY=\widehat{\vK}(\omega,t)$
	and using that $w$ is even,
	\[
	\CE_{\LFs^{\vK}}(\vO)=\frac12\int_\BR w(\omega)\sum_{\pm}\int_\BR\eta(t)\,\nr{[\vA_{\omega,\pm}(t),\vO]}^2\,\rmd t\,\rmd\omega,
	\qquad
	\vA_{\omega,\pm}:=\sum_\nu k_{\omega,\pm}(\beta\nu)\vK_\nu ,
	\]
	with $k_{\omega,\pm}$ as in \eqref{eq:gt-homega} and $\vA_{\omega,\pm}(t):=\e^{\ri\vH t}\vA_{\omega,\pm}\e^{-\ri\vH t}$.

	\emph{Step 3 (each piece is a coherent Dirichlet form).} We have $k_{\omega,\pm}(-x)=k_{\omega,\pm}^*(x)$, so
	$h_{\omega,\pm}(x):=\e^{-x/4}k_{\omega,\pm}(x)$ is a KMS filter and $\vA_{\omega,\pm}=\tilde\vL_{h_{\omega,\pm}}$
	(Lemma~\ref{lem:gt-coherent}\ref{it:Symm}). By Lemma~\ref{lem:gt-dirichlet}, the inner $t$-integral in Step~2 equals
	$\CE_{\LC^{\vK,h_{\omega,\pm}}}(\vO)$. Since $h_{\omega,\pm}$ does not depend on the jump, summing over $a$ gives
	\begin{equation}\label{eq:gt-mixture-EF}
        \CE_{\mathrm F}(\vO)=\frac12\int_\BR w(\omega)\sum_\pm\CE^{h_{\omega,\pm}}_{\mathrm C}(\vO)\,\rmd\omega .
    \end{equation}

	\emph{Step 4 (conclusion).}
    A direct calculation of $|k_{\omega,\pm}|^2$ gives
    \begin{equation}
        \label{eq:gt-k2}
        \labs{k_{\omega,\pm}}^2 = \frac{1}{2}\left(\hat{g}_\sigma\left(\omega - \frac{x}{\beta}\right)^2 + \hat{g}_\sigma\left(\omega + \frac{x}{\beta}\right)^2\right),
    \end{equation}
    so
	$|h_{\omega,\pm}(x)|^2=\frac{\e^{-x/2}}{2}\big(\hat g_\sigma(\omega-\frac x\beta)^2+\hat g_\sigma(\omega+\frac x\beta)^2\big)$ for both signs.
	Using $\hat g_\sigma(\omega-\frac x\beta)=\hat g_\sigma(0)\e^{-(\beta\omega-x)^2/(4\sigma^2)}$ and completing the square,
	\begin{equation}\label{eq:gt-wh2}
	w(\omega)\,|h_{\omega,\pm}(x)|^2=\frac{\hat g_\sigma(0)^2}{2}\Big[\min\{1,\e^{-\beta\omega}\}\,\e^{-\frac{(x-\beta\omega+\sigma^2/2)^2}{2\sigma^2}}
	+\min\{1,\e^{\beta\omega}\}\,\e^{-\frac{(x+\beta\omega+\sigma^2/2)^2}{2\sigma^2}}\Big].
	\end{equation}
    Every term in \eqref{eq:gt-mixture-EF} is nonnegative, since $w\ge0$ and $\CE^{h}_{\mathrm C}\ge0$ for every KMS
	filter $h$ (\Cref{lem:gt-dirichlet}). Hence for $\vO\in V$ we may drop the contributions with $|\omega|>\omega_{\mathrm{trn}}$ and
	insert \eqref{eq:unifCohBound} for the remaining $\omega$
	\begin{align*}
	    \CE_{\mathrm F}(\vO)\ &\ge\ \frac12\int_{|\omega|\le\omega_{\mathrm{trn}}}w(\omega)\sum_\pm\CE^{h_{\omega,\pm}}_{\mathrm C}(\vO)\,\rmd\omega \\
        &\ge\ \kappa\,\Var(\vO)\int_{|\omega|\le\omega_{\mathrm{trn}}}w(\omega)\min_{x\in[-\Lambda,0]}|h_{\omega,+}(x)|^2\,\rmd\omega \tag{By~\eqref{eq:unifCohBound} and since $|h_{\omega,+}(x)| = |h_{\omega,-}(x)|$}\\
        &=\kappa\,c_{\mathrm{eff}}(\omega_{\mathrm{trn}})\,\Var(\vO).
	\end{align*}
    It remains to prove \eqref{eq:gt-ceff}; recall $\hat g_\sigma(0)^2=\beta/(\sigma\sqrt{2\pi})$.

	\emph{Case $\omega_{\mathrm{trn}}=\infty$.} For $\omega\leq0$ keep only the first term of \eqref{eq:gt-wh2}, a Gaussian in $x$
	centred at $y:=\beta\omega-\sigma^2/2\le-\sigma^2/2$, whose minimum over $x\in[-\Lambda,0]$ is attained at the endpoint
	farther from $y$; for $\omega\ge0$ keep only the second term, which contributes the same after $\omega\mapsto-\omega$. Thus
	\[
	c_{\mathrm{eff}}(\infty)\ \ge\ \frac{1}{\sigma\sqrt{2\pi}}\int_{-\infty}^{-\sigma^2/2}\e^{-\max\{|y|,|y+\Lambda|\}^2/(2\sigma^2)}\,\rmd y
	\ \ge\ 1-\Phi\Big(\max\Big\{\frac\sigma2,\frac{\Lambda}{2\sigma}\Big\}\Big),
	\]
	where for $\Lambda\le\sigma^2$ the maximum equals $|y|$ on the whole range of integration, and for $\Lambda\ge\sigma^2$ we
	restrict the integral to $y\le-\Lambda/2$.

	\emph{Case $\omega_{\mathrm{trn}}<\infty$.} For $\omega\in[-\omega_{\mathrm{trn}},0]$ keep only the first term of~\eqref{eq:gt-wh2}, whose
	prefactor is $\min\{1,\e^{-\beta\omega}\}=1$. For $x\in[-\Lambda,0]$ we have $x-\beta\omega\in[-\Lambda,\beta\omega_{\mathrm{trn}}]$,
	hence $|x-\beta\omega+\sigma^2/2|\leq\max\{\Lambda,\beta\omega_{\mathrm{trn}}\}+\sigma^2/2$, and therefore
	\[
    	w(\omega)\min_{x\in[-\Lambda,0]}|h_{\omega,+}(x)|^2\ \geq\ \frac{\hat g_\sigma(0)^2}{2}\,
    	\exp\!\Big(-\frac{(\max\{\Lambda,\beta\omega_{\mathrm{trn}}\}+\sigma^2/2)^2}{2\sigma^2}\Big)\qquad(\omega\in[-\omega_{\mathrm{trn}},0]).
	\]
	For $\omega\in[0,\omega_{\mathrm{trn}}]$ the second term of \eqref{eq:gt-wh2} gives the same bound. Integrating over
	$\omega\in[-\omega_{\mathrm{trn}},\omega_{\mathrm{trn}}]$ and applying the derived bounds gives~\eqref{eq:gt-ceff}.
\end{proof}

\begin{corollary}[Sector-wise transfer of the coherent bounds]\label{cor:gt-sectors}
	Let $\beta>0$, $\sigma>0$, and let $\LFs$ be the OFT Lindbladian with Metropolis filter and the single-site Pauli jumps, so that $\Lambda=\beta$. Then for all $n\ge2$:
	\begin{align}
		\gamma(\LFs,\Rn^\perp)\ &\ge\ \frac{4}{n}\Big[1-\Phi\Big(\max\Big\{\frac\sigma2,\frac{\beta}{2\sigma}\Big\}\Big)\Big],\label{eq:gt-Rperp}\\
		\gamma(\LFs,\Pn^\perp)\ &\ge\ \frac{\beta}{54\,\sigma\sqrt{2\pi}}\;\exp\!\Big(-\beta-\frac{2\beta^2}{\sigma^2}-\frac{(\beta+\sigma^2/2)^2}{2\sigma^2}\Big),\label{eq:gt-Pperp}\\
		\gamma(\LFs,V)\ &\ge\ \Big[1-\Phi\Big(\max\Big\{\frac\sigma2,\frac{\beta}{2\sigma}\Big\}\Big)\Big]\,\gamma(\LC,V)\qquad(V\in\{\Pn,\Rn\}).\label{eq:gt-sym}
	\end{align}
\end{corollary}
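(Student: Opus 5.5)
The plan is to derive all three bounds from \Cref{thm:gapTransfer}, applied to the single-site Pauli jumps. By \Cref{fact:pauli-selection-rule} every Bohr frequency of $\vsigma^{(i,\alpha)}$ satisfies $|\nu|\le1$, so we may take $\Lambda=\beta$. The subspaces $\Rn^\perp$, $\Pn^\perp$, $\Pn$ and $\Rn$ are invariant under $\LFs$ by \Cref{cor:reductions}, which applies since the OFT construction is covered by \Cref{cor:covariance}. On $\Rn^\perp$ and $\Pn^\perp$ we have $\Var(\vO)=\nr{\vO}^2$, because $\vI$ lies in the complementary sector.

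In each case the only work is to verify the uniform coherent hypothesis \eqref{eq:unifCohBound} for the family $h_{\omega,\pm}$. The earlier coherent results were proved for filters normalised by $|f|\le1$, so the first step is to check which constants survive for a general KMS filter $h$, using \Cref{lem:gt-coherent}\ref{it:Bohrf} to rescale where needed.

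\textbf{The sector $\Rn^\perp$.} The proofs of \Cref{lem:collective-comparison} and \Cref{lem:collective-jump} use only the value $h(0)\in\BR$, never $|h|\le1$. Combined with \Cref{lem:collective-casimir}, they give $\CE^h_{\mathrm C}(\vO)\ge \frac{4h(0)^2}{n}\nr{\vO}^2$ on $\Rn^\perp$. Since $0\in[-\beta,0]$, we have $h(0)^2\ge\min_{x\in[-\beta,0]}|h(x)|^2$. Hence \eqref{eq:unifCohBound} holds with $\kappa=4/n$ and $\omega_{\mathrm{trn}}=\infty$, and \eqref{eq:gt-ceff} yields \eqref{eq:gt-Rperp}.

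\textbf{The sectors $\Pn$ and $\Rn$.} These lie in $\comm{\vH_n}$, so \Cref{lem:bohr-diagonalization} applies verbatim for any KMS filter $h$. The diagonal weights are
\[
|\tilde h(\beta\omega)|^2=e^{-\beta|\omega|/2}\,|h(-\beta|\omega|)|^2\ \ge\ \min_{x\in[-\beta,0]}|h(x)|^2\,e^{-\beta|\omega|/2}.
\]
Therefore $\CE^h_{\mathrm C}\ge \min_{[-\beta,0]}|h|^2\,\EDM$. By \Cref{prop:commutant-comparison}, $\EDM\ge\EC\ge\gamma(\LC,V)\Var$ on $V$. This is \eqref{eq:unifCohBound} with $\kappa=\gamma(\LC,V)$ and $\omega_{\mathrm{trn}}=\infty$, which gives \eqref{eq:gt-sym}.

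\textbf{The sector $\Pn^\perp$ (main obstacle).} Here \Cref{thm:high-temp} gives the constant $m_h^4/(54N_h^2)$. This is quartic over quadratic in $h$, not linear in $\min|h|^2$ as \eqref{eq:unifCohBound} requires. I would therefore work with the truncated family $|\omega|\le\omega_{\mathrm{trn}}=1$, where the filters are explicitly comparable.

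By \eqref{eq:gt-k2}, for $|x|\le\beta$ and $|\omega|\le1$ the Gaussian arguments satisfy $|\beta\omega\mp x|\le2\beta$. This gives the two-sided estimate
\[
\hat g_\sigma(0)^2e^{-2\beta^2/\sigma^2}\ \le\ |k_{\omega,\pm}(x)|^2\ \le\ \hat g_\sigma(0)^2 .
\]
It implies $N_h^2\le e^{\beta/2}\hat g_\sigma(0)^2$ and $m_h^2\ge \hat g_\sigma(0)^2e^{-2\beta^2/\sigma^2}$. It also gives $\min_{[-\beta,0]}|h|^2\le e^{\beta/2}\hat g_\sigma(0)^2$, and together these show
\[
\frac{m_h^4}{54N_h^2}\ \ge\ \frac{e^{-\beta}e^{-2\beta^2/\sigma^2}}{54}\,\min_{[-\beta,0]}|h|^2 ,
\]
up to tracking the exact $e^{\pm\beta/2}$ factors, which I would do carefully. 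This is \eqref{eq:unifCohBound} with $\kappa=e^{-\beta-2\beta^2/\sigma^2}/54$. Inserting the finite-$\omega_{\mathrm{trn}}$ bound of \eqref{eq:gt-ceff} with $\omega_{\mathrm{trn}}=1$ then produces the prefactor $\beta/(\sigma\sqrt{2\pi})$ and the factor $e^{-(\beta+\sigma^2/2)^2/(2\sigma^2)}$, giving \eqref{eq:gt-Pperp}. The delicate point is the bookkeeping of the $e^{\pm\beta/4}$ conversions between $h$ and $\tilde h=k$ in $m_h$ and $N_h$, so that the exponents match the stated constant.
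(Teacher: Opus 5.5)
Your overall plan is the paper's: apply \Cref{thm:gapTransfer} sector by sector, with $\kappa=4/n,\ \omega_{\mathrm{trn}}=\infty$ on $\Rn^\perp$, with $\kappa=\gamma(\LC,V),\ \omega_{\mathrm{trn}}=\infty$ on $V\in\{\Pn,\Rn\}$, and with $\omega_{\mathrm{trn}}=1$ on $\Pn^\perp$. On $V\in\{\Pn,\Rn\}$ you use the Bohr-diagonal form of \Cref{lem:bohr-diagonalization} and route the comparison through $\EDM$; the paper compares directly with $|\widetilde f|^2\le e^{-\beta|\nu|/2}$, but this is the same comparison. The $\Rn^\perp$, $\Pn$ and $\Rn$ cases are correct as written.

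\textbf{Gap in the $\Pn^\perp$ constant.} Your three bounds are $m_h^2\ge\hat g_\sigma(0)^2e^{-2\beta^2/\sigma^2}$, $N_h^2\le e^{\beta/2}\hat g_\sigma(0)^2$ and $\min_{[-\beta,0]}|h|^2\le e^{\beta/2}\hat g_\sigma(0)^2$. Combined, they only give
\[
\frac{m_h^4}{N_h^2}\ \ge\ \hat g_\sigma(0)^2e^{-\beta/2-4\beta^2/\sigma^2}\ \ge\ e^{-\beta-4\beta^2/\sigma^2}\min_{x\in[-\beta,0]}|h(x)|^2 .
\]
The Gaussian lower bound enters twice, once for each factor $m_h^2$ in $m_h^4$, while $\min|h|^2$ is compared to $\hat g_\sigma(0)^2$ only from above. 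The result is \eqref{eq:gt-Pperp} with $4\beta^2/\sigma^2$ in place of $2\beta^2/\sigma^2$. That is still $n$-independent, but it is not the stated bound. The problem is not in the $e^{\pm\beta/2}$ bookkeeping you flagged, so tracking those factors carefully cannot recover the missing $e^{-2\beta^2/\sigma^2}$.

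\textbf{Missing observation.} Compare one factor $m_h^2$ with $\min_{[-\beta,0]}|h|^2$ directly rather than through $\hat g_\sigma(0)$. Since $|\tilde h(-x)|=|\tilde h(x)|$,
\[
m_h^2=\inf_{x\in[-\beta,0]}e^{x/2}|h(x)|^2\ge e^{-\beta/2}\min_{x\in[-\beta,0]}|h(x)|^2 .
\]
Use the two-sided Gaussian estimate only once, for the ratio
\[
R_h:=\sup_{|x|\le\beta}|\tilde h(x)|^2\Big/\inf_{|x|\le\beta}|\tilde h(x)|^2\le e^{2\beta^2/\sigma^2},
\]
which gives $N_h^2\le e^{\beta/2}R_h\,m_h^2$. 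Then
\[
\frac{m_h^4}{54N_h^2}\ \ge\ \frac{e^{-\beta/2}}{54R_h}\,m_h^2\ \ge\ \frac{e^{-\beta-2\beta^2/\sigma^2}}{54}\min_{x\in[-\beta,0]}|h(x)|^2 ,
\]
which is exactly the paper's $\kappa$.

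Relatedly, $m_h^4/N_h^2$ is homogeneous of degree two in $h$, just like $\min|h|^2$. Truncating to $|\omega|\le1$ is needed because $R_h$ is uniformly bounded only there, not because the degrees mismatch.
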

\begin{proof}
	In each case we verify the hypothesis~\eqref{eq:unifCohBound} of~\Cref{thm:gapTransfer} on the stated sector $V$, with a suitable $\kappa$ and truncation $\omega_{\mathrm{trn}}$, and read off the conclusion from~\eqref{eq:gt-ceff}. The sectors are invariant under $\LFs$ and all $\LC$ by \Cref{cor:reductions}, and on $\Pn^\perp,\Rn^\perp$ we have $\Var(\vO)=\nr{\vO}^2$ since $\vI\in\Pn\cap\Rn$.

	\emph{The sector $\Rn^\perp$ ($\kappa=4/n$, $\omega_{\mathrm{trn}}=\infty$):} By \Cref{prop:R-complement}, applied with an arbitrary KMS filter $h$, $\EC^{h}(\vO)\ge\frac{4h(0)^2}{n}\nr{\vO}^2$ for $\vO\in\Rn^\perp$. Since $0\in[-\beta,0]$, $h(0)^2\ge\min_{x\in[-\beta,0]}|h(x)|^2$, so \eqref{eq:unifCohBound} holds for every KMS filter with $\kappa=4/n$. The first bound in \eqref{eq:gt-ceff} gives \eqref{eq:gt-Rperp}.

	\emph{The sector $\Pn^\perp$ ($\kappa=\e^{-\beta-2\beta^2/\sigma^2}/54$, $\omega_{\mathrm{trn}}=1$):} By \Cref{thm:high-temp}, applied with an arbitrary KMS filter $h$, $\EC^{h}(\vO)\ge\frac{m_h^4}{54N_h^2}\nr{\vO}^2$ for $\vO\in\Pn^\perp$, with $m_h=\inf_{|x|\le\beta}|\tilde h(x)|$, $N_h=\sup_{|x|\le\beta}|h(x)|$. Let
    \begin{equation*}
        R_h\coloneqq\frac{\sup_{|x|\le\beta}|\tilde h(x)|^2}{\inf_{|x|\le\beta}|\tilde h(x)|^2},
    \end{equation*}
    since $|h(x)|^2=\e^{-x/2}|\tilde h(x)|^2$ and $|\tilde h(-x)|=|\tilde h(x)|$ we can further show that
	\[
		N_h^2\le \e^{\beta/2}\sup_{|x|\le\beta}|\tilde h(x)|^2=\e^{\beta/2}R_h\,m_h^2,
		\qquad
		m_h^2=\inf_{x\in[-\beta,0]}\e^{x/2}|h(x)|^2\ge \e^{-\beta/2}\min_{x\in[-\beta,0]}|h(x)|^2,
	\]
	hence
    \begin{equation}\label{eq:kappa-required-bound}
        \frac{m_h^4}{54N_h^2}\ge\frac{\e^{-\beta}}{54R_h}\min_{x\in[-\beta,0]}|h(x)|^2.
    \end{equation}
    For the filters $h_{\omega,\pm}$ of~\eqref{eq:gt-homega}, the symmetrized filter is $k_{\omega,\pm}$, and by \eqref{eq:gt-k2},
	\begin{equation*}
	    |k_{\omega,\pm}(x)|^2=\frac{\hat g_\sigma(0)^2}{2}\Big(\e^{-\frac{(\beta\omega-x)^2}{2\sigma^2}}+\e^{-\frac{(\beta\omega+x)^2}{2\sigma^2}}\Big). \tag{Since $\hat g_\sigma(u)^2=\hat g_\sigma(0)^2\e^{-\beta^2u^2/(2\sigma^2)}$}
	\end{equation*}
	For $|\omega|\le1$ and $|x|\le\beta$ we have $|\beta\omega\mp x|\le2\beta$, so both exponents lie in $[0,2\beta^2/\sigma^2]$; hence
	$\sup_{|x|\le\beta}|k_{\omega,\pm}(x)|^2\le\hat g_\sigma(0)^2$ and $\inf_{|x|\le\beta}|k_{\omega,\pm}(x)|^2\ge\hat g_\sigma(0)^2\e^{-2\beta^2/\sigma^2}$, i.e.\
	$R_{h_{\omega,\pm}}\le \e^{2\beta^2/\sigma^2}$. Consequently from~\eqref{eq:kappa-required-bound} it follows that~\eqref{eq:unifCohBound} holds on $\{h_{\omega,\pm}\colon|\omega|\le1\}$ with
    \begin{equation*}
        \kappa=\frac{\e^{-\beta-2\beta^2\sigma^2}}{54}.
    \end{equation*}
    The second bound in~\eqref{eq:gt-ceff} with $\Lambda=\beta$ and $\omega_{\mathrm{trn}}=1$ gives
    \begin{equation*}
        c_{\mathrm{eff}}(1)\ge\frac{\beta}{\sigma\sqrt{2\pi}}\,\e^{-(\beta+\sigma^2/2)^2/(2\sigma^2)},
    \end{equation*}
    and the product $\kappa\,c_{\mathrm{eff}}(1)$ is \eqref{eq:gt-Pperp}.

    \emph{The sectors $\Pn,\Rn$ ($\kappa=\gamma(\LC,V)$, $\omega_{\mathrm{trn}}=\infty$):}
	Let $h$ be a KMS filter and $c_h:=\min_{x\in[-\beta,0]}|h(x)|^2$, and let $f$ be the filter of the main text, a KMS filter with
	$|f|\le1$. For any KMS filter $g$ the condition \eqref{eq:f-KMS} gives $|g(x)|^2=e^{-x}|g(-x)|^2$, hence
	$|\widetilde g(x)|^2=e^{x/2}|g(x)|^2=e^{-|x|/2}\,|g(-|x|)|^2$ for all $x\in\BR$. For $|\nu|\le1$ the point $-\beta|\nu|$ lies in
	$[-\beta,0]$, so applying this to $g=h$ and to $g=f$ (with $|f|\le1$) yields
	\begin{equation}\label{eq:gt-weight-comparison}
		|\widetilde h(\beta\nu)|^2=e^{-\beta|\nu|/2}\,|h(-\beta|\nu|)|^2\ \ge\ c_h\,e^{-\beta|\nu|/2}
		\ \ge\ c_h\,e^{-\beta|\nu|/2}\,|f(-\beta|\nu|)|^2=c_h\,|\widetilde f(\beta\nu)|^2\qquad(|\nu|\le1).
	\end{equation}
	Now let $\vX\in\comm{\vH_n}$. \Cref{lem:bohr-diagonalization} holds for every KMS filter, so
	\[
		\EC^{h}(\vX)=\frac12\sum_{i,\alpha}\sum_{\nu}|\widetilde h(\beta\nu)|^2\,\nr{[\vsigma^{(i,\alpha)}_\nu,\vX]}^2,
		\qquad
		\EC(\vX)=\frac12\sum_{i,\alpha}\sum_{\nu}|\widetilde f(\beta\nu)|^2\,\nr{[\vsigma^{(i,\alpha)}_\nu,\vX]}^2,
	\]
	where by \Cref{fact:pauli-selection-rule} only $|\nu|\le1$ contributes. The summands $\nr{[\vsigma^{(i,\alpha)}_\nu,\vX]}^2\ge0$
	are the same in both sums and do not depend on the filter, so by~\eqref{eq:gt-weight-comparison},
	\[
		\EC^{h}(\vX)=\frac12\sum_{i,\alpha,\nu}|\widetilde h(\beta\nu)|^2\nr{[\vsigma^{(i,\alpha)}_\nu,\vX]}^2
		\ \ge\ c_h\cdot\frac12\sum_{i,\alpha,\nu}|\widetilde f(\beta\nu)|^2\nr{[\vsigma^{(i,\alpha)}_\nu,\vX]}^2
		=c_h\,\EC(\vX)\qquad(\vX\in\comm{\vH_n}).
	\]
	Finally, $V\subseteq\comm{\vH_n}$ by \Cref{eq:RnInCHn,eq:PnInCHn}, and the variational characterization of the sector gap
	(as in \eqref{eq:gap-rayleigh-quotient}, restricted to the invariant sector $V$) reads $\EC(\vO)\ge\gamma(\LC,V)\Var(\vO)$ for all
	$\vO\in V$. Hence for every $\vO\in V$
	\[
		\EC^{h}(\vO)\ \ge\ c_h\,\EC(\vO)\ \ge\ c_h\,\gamma(\LC,V)\,\Var(\vO),
	\]
	which is \eqref{eq:unifCohBound} with $\kappa=\gamma(\LC,V)$ and $\omega_{\mathrm{trn}}=\infty$. The first bound in \eqref{eq:gt-ceff} gives
	\eqref{eq:gt-sym}.
\end{proof}

\end{document}